\def\todo#1{\ifthenelse{\boolean{@todoon}}{\marginpar{\textit{#1}}}{}}
\newtheorem{lemma}{Lemma}
\newtheorem{theorem}[lemma]{Theorem}
\newtheorem{definition}{Definition}
\newtheorem{remark}{Remark}
\newtheorem{corollary}[lemma]{Corollary}
\newtheorem{example}{Example}
\newtheorem{conjecture}{Conjecture}
\DeclareMathOperator*{\argmax}{\arg\,\max}
\newcommand{\Neighbor}{\mathcal{N}}
\newcommand{\Interval}{\mathbb{I}}
\newcommand{\DependencyGraph}{G}
\newcommand{\BipartiteGraph}{H}
\newcommand{\EventSet}{\mathcal{A}}
\newcommand{\EventSetB}{\mathcal{B}}
\newcommand{\Event}{A}
\newcommand{\EventB}{B}
\newcommand{\VariableSet}{\mathcal{X}}
\newcommand{\VariableSetY}{\mathcal{Y}}
\newcommand{\Variable}{X}
\newcommand{\VariableY}{Y}
\newcommand{\Neg}[1]{\overline{#1}}
\renewcommand{\vec}[1]{\mathbf{#1}}
\renewcommand{\Pr}{\mathbb{P}}
\newcommand{\DiscreteDimen}{d}
\newcommand{\DiscreteVec}{\mathbf{d}}
\newcommand{\Interior}{\mathcal{I}}
\newcommand{\Mup}{MUP}
\newcommand{\Int}{INT}
\newcommand{\EventVariable}{event-variable graph}
\newcommand{\InducedDependencyGraph}{base graph}
\def\LongVersion{}
\def\LongVersionEnd{}
\long\def\ShortVersion#1\ShortVersionEnd{}
\def\ShortVersion{}
\def\ShortVersionEnd{}
\long\def\LongVersion#1\LongVersionEnd{}
\def\GenerateSingleColumn{}
\def\DoubleColumn{}
\def\DoubleColumnEnd{}
\long\def\SingleColumn#1\SingleColumnEnd{}
\def\SingleColumn{}
\def\SingleColumnEnd{}
\long\def\DoubleColumn#1\DoubleColumnEnd{}
\begin{document}

\title{Variable Version Lov\'{a}sz Local Lemma: Beyond Shearer's Bound\thanks{Part of the work has been published at FOCS2017}}

\author{
Kun He\thanks{
Institute of Computing Technology,
Chinese Academy of Sciences. University of Chinese Academy of Sciences. Beijing, China.
Email:\texttt{hekun@ict.ac.cn}
}
\and
Liang Li\thanks{
Department of Artificial Intelligence, 
Ant Financial Services Group, China. 
Email:\texttt{liangli.ll@antfin.com}.
}
\and
Xingwu Liu\thanks{Correspondence author} \thanks{
Institute of Computing Technology,
Chinese Academy of Sciences. University of Chinese Academy of Sciences. Beijing, China.
Email:\texttt{liuxingwu@ict.ac.cn}. 
}
\and
Yuyi Wang\thanks{
Disco Group, ETH Z\"{u}rich, Switzerland. 
Email:\texttt{yuwang@ethz.ch}.
}
\and
Mingji Xia\thanks{
State Key Laboratory of Computer Science, Institute of Software,
Chinese Academy of Sciences. University of Chinese Academy of Sciences. Beijing, China.
Email:\texttt{mingji@ios.ac.cn}
}
}

\date{\today}

\maketitle
\maketitle
\begin{abstract}
A tight criterion under which the abstract version Lov\'{a}sz Local Lemma (abstract-LLL) holds was given by Shearer \cite{shearer1985problem} decades ago. 
However, little is known about that of the variable version LLL (variable-LLL) where events are generated by independent random variables, though this model of events is applicable to
almost all applications of LLL. We introduce a necessary and sufficient criterion for variable-LLL, in terms of the probabilities of the events and the event-variable graph specifying the dependency among the events. Based on this new criterion, we obtain boundaries for two families of event-variable graphs, namely, cyclic and treelike bigraphs. 
These are the first two non-trivial cases where the variable-LLL boundary is fully determined. As a byproduct, we also provide a universal constructive method to find a set of events whose union has the maximum probability, given the probability vector and the event-variable graph. 

Though it is \#P-hard in general to determine variable-LLL boundaries, we can to some extent decide whether a gap exists between a variable-LLL boundary and the corresponding abstract-LLL boundary. 
In particular, we show that the gap existence can be decided without solving Shearer's conditions or checking our variable-LLL criterion. 
Equipped with this powerful theorem, we show that there is no gap if the \InducedDependencyGraph ~of the event-variable graph is a tree, while gap appears if the \InducedDependencyGraph ~has an induced cycle of length at least $4$. The problem is almost completely solved except when the \InducedDependencyGraph ~ has only $3$-cliques, in which case we also get partial solutions. 

A set of reduction rules are established that facilitate to infer gap existence of an event-variable graph from known ones. As an application, various event-variable graphs, in particular combinatorial ones, are shown to be gapful/gapless. 
\end{abstract}

\LongVersion
\newpage
\LongVersionEnd

\section{Introduction}\footnote{Accepted by FOCS 2017}
Lov\'{a}sz Local Lemma, or LLL for short, is one of the most important probabilistic methods that has numerous applications since proposed in 1975 by Erd{\H{o}}s and Lov{\'a}sz \cite{erdos1975problems}. 
Basically, LLL aims at finding conditions under which any given set $\EventSet$ of bad events in a probability space can be avoided simultaneously, namely $\Pr(\cap_{\Event\in\EventSet}\Neg{\Event}) >0$. 
In the most general setting, the dependency among $\EventSet$ is characterized by an undirected graph $\DependencyGraph=([n],E)$, called a dependency graph of $\EventSet$, which satisfies that for any vertex $i$, $\Event_i$ is independent of $\{\Event_j: j\neq i, j\notin \Neighbor(i)\}$, where $\Neighbor(i)$ stands for the neighborhood of $i$ in $\DependencyGraph$. In this context, finding the conditions on $\EventSet$ is reduced to the fundamental challenge: Given a graph $\DependencyGraph$, determine its abstract interior $\mathcal{I}_a(\DependencyGraph)$ which is the set of vectors $\vec{p}$ such that $\Pr\left(\cap_{\Event\in \EventSet} \Neg{\Event} \right)>0$ for any event set $\EventSet$ with dependency graph $\DependencyGraph$ and probability vector $\vec{p}$. \textit{Local} solutions to this problem are collectively called abstract-LLL.
The most frequently used abstract-LLL is as follows:
\begin{theorem}[\cite{spencer1977asymptotic}]\label{thm:asymmetric}
Given a graph $\DependencyGraph=([n],E)$ and a vector $\vec{p}\in (0,1)^n$, if there exist real numbers $x_1,...,x_n \in (0,1)$ such that $p_i \leq x_i \prod_{j\in  \Neighbor(i)} (1-x_j)$ for any $i\in[n]$, then $\vec{p}\in \mathcal{I}_a(\DependencyGraph)$.
\end{theorem}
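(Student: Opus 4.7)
The plan is to reduce the theorem to a single inductive claim about conditional probabilities and then chain it together. Concretely, I will prove by induction on $|S|$ that for every $i \in [n]$ and every $S \subseteq [n]\setminus\{i\}$ with $\Pr(\cap_{j\in S}\Neg{\Event_j}) > 0$,
\[
\Pr\Big(\Event_i \,\Big|\, \bigcap_{j\in S}\Neg{\Event_j}\Big) \leq x_i.
\]
Once this is established, the theorem follows from the chain rule identity
\[
\Pr\Big(\bigcap_{i=1}^n \Neg{\Event_i}\Big) = \prod_{i=1}^n \Pr\Big(\Neg{\Event_i} \,\Big|\, \bigcap_{j<i}\Neg{\Event_j}\Big) \geq \prod_{i=1}^n (1-x_i) > 0,
\]
together with a routine argument that every prefix event has strictly positive probability (so the conditionings are well-defined), which follows inductively from the same bound.

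For the base case $S = \emptyset$, the hypothesis $p_i \leq x_i \prod_{j \in \Neighbor(i)}(1-x_j) \leq x_i$ gives the conclusion directly. For the inductive step, the key idea is to split $S$ according to the dependency graph: let $S_1 = S \cap \Neighbor(i)$ and $S_2 = S \setminus \Neighbor(i)$, and write
\[
\Pr\Big(\Event_i \,\Big|\, \bigcap_{j\in S}\Neg{\Event_j}\Big) = \frac{\Pr\big(\Event_i \cap \bigcap_{j\in S_1}\Neg{\Event_j} \,\big|\, \bigcap_{j\in S_2}\Neg{\Event_j}\big)}{\Pr\big(\bigcap_{j\in S_1}\Neg{\Event_j} \,\big|\, \bigcap_{j\in S_2}\Neg{\Event_j}\big)}.
\]
The numerator will be upper-bounded by $\Pr(\Event_i \mid \bigcap_{j\in S_2}\Neg{\Event_j}) = \Pr(\Event_i) = p_i$, where the equality uses the defining property of the dependency graph: $\Event_i$ is independent of the collection $\{\Event_j : j \in S_2\}$ because $S_2 \cap (\Neighbor(i) \cup \{i\}) = \emptyset$. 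For the denominator, I will enumerate $S_1 = \{j_1,\ldots,j_k\}$ and factor the probability as a telescoping product of conditional probabilities $\Pr(\Neg{\Event_{j_\ell}} \mid \bigcap_{j\in S_2 \cup \{j_1,\ldots,j_{\ell-1}\}}\Neg{\Event_j})$; applying the inductive hypothesis to each factor (which is valid since the conditioning set is strictly smaller than $S$) yields a lower bound of $\prod_{j\in S_1}(1-x_j) \geq \prod_{j\in \Neighbor(i)}(1-x_j)$.

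Combining the two bounds gives
\[
\Pr\Big(\Event_i \,\Big|\, \bigcap_{j\in S}\Neg{\Event_j}\Big) \leq \frac{p_i}{\prod_{j\in \Neighbor(i)}(1-x_j)} \leq x_i,
\]
by the hypothesis of the theorem. The only subtlety I anticipate is ensuring every conditional probability that appears is well-defined, i.e., that the conditioning event has positive probability; this is handled by the same induction, since the lower bound $\prod (1-x_j) > 0$ on each telescoping factor guarantees positivity at every stage. No step here is technically hard; the main conceptual point is the clean split of $S$ into a dependent part $S_1$ and an independent part $S_2$, which lets the graph-theoretic independence exactly cancel the dependent neighbors in the denominator.
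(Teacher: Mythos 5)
Your proof is correct and is precisely the classical inductive argument for the asymmetric local lemma (the induction on $|S|$ showing $\Pr(\Event_i \mid \cap_{j\in S}\Neg{\Event_j}) \leq x_i$ via the split $S = S_1 \cup S_2$, followed by the chain rule); the paper states this theorem as a citation to \cite{spencer1977asymptotic} and gives no proof of its own, so there is nothing to contrast with. The only point worth being explicit about when writing it up is that the positivity of every conditioning event and the conditional bound must be proved by a single simultaneous induction, which you have correctly anticipated.
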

An exact characterization of $\mathcal{I}_a(\DependencyGraph)$ was presented by Shearer \cite{shearer1985problem} over 30 years ago. 
\begin{theorem}[\cite{shearer1985problem}]\label{thm:shearer}
Given a graph $\DependencyGraph=([n],E)$ and a vector $\vec{p}\in (0,1)^n$, $\vec{p}\in \mathcal{I}_a(\DependencyGraph)$ if and only if for any $S\in Ind(\DependencyGraph)$, $\sum_{T\supseteq S, T\in Ind(\DependencyGraph)} (-1)^{|T|-|S|} \prod_{i\in T} p_i>0$, where $Ind(\DependencyGraph)$ is the collection of independent sets of $\DependencyGraph$.
\end{theorem}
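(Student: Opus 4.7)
The plan is to analyze the Shearer polynomial
$$q_S(G,\vec{p}) := \sum_{T\supseteq S,\, T\in Ind(G)} (-1)^{|T|-|S|} \prod_{i\in T} p_i$$
for every $S\in Ind(G)$ and to establish the biconditional by a mixed algebraic--probabilistic induction. The starting algebraic input is a deletion recursion: for any vertex $k$ of $G$,
$$q_\emptyset(G,\vec{p}) \;=\; q_\emptyset(G-k,\vec{p}_{-k}) \;-\; p_k\, q_\emptyset\bigl(G-N[k],\,\vec{p}_{-N[k]}\bigr),$$
with $N[k]=\mathcal{N}(k)\cup\{k\}$, obtained by partitioning $Ind(G)$ according to whether $k\in T$. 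Analogous recursions hold for general $S$. These identities mirror the probabilistic decomposition $\Pr(\cdot)=\Pr(\cdot\cap\bar{A}_k)+\Pr(\cdot\cap A_k)$ and will align the inductive step on both sides.

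For the sufficiency direction, assume $q_S(G,\vec{p})>0$ for every $S\in Ind(G)$. I would prove by downward induction on $|S|$ the sharper quantitative bound: for any event system $\{A_1,\ldots,A_n\}$ with dependency graph $G$ and $\Pr(A_i)\le p_i$,
$$\Pr\!\left(\bigcap_{j\notin S}\bar{A}_j \;\Big|\; \bigcap_{i\in S} A_i\right) \;\geq\; \frac{q_S(G,\vec{p})}{\prod_{i\in S}p_i},$$
interpreting the left-hand side suitably when the conditioning event has probability zero. Setting $S=\emptyset$ yields $\Pr(\bigcap_j\bar{A}_j)\geq q_\emptyset>0$, hence $\vec{p}\in\mathcal{I}_a(G)$. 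The induction step picks a vertex $k\notin S\cup\mathcal{N}(S)$, decomposes the conditional probability via $A_k$ versus $\bar{A}_k$, uses independence of $A_k$ from all non-neighbor events to factor the probabilistic side, and matches the algebraic recursion above term by term.

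For the necessity direction, I would exhibit an extremal event system realizing $\Pr(\bigcap_j\bar{A}_j) = q_\emptyset(G,\vec{p})$. The natural construction samples a random independent set $I\subseteq[n]$ with $\Pr(I=S)\propto q_S(G,\vec{p})\prod_{i\in S}p_i$ and declares $A_i := \{i\in I\}$; a direct calculation using the recursions verifies $\Pr(A_i)=p_i$, that the dependency graph equals $G$ (possibly after a small perturbation to rule out accidental independences), and that $\Pr(\bigcap_j\bar{A}_j)$ equals precisely the mass on $S=\emptyset$, namely $q_\emptyset$. Consequently, if some $q_S\le 0$, a parametric continuity argument along the segment from a known-good probability vector to $\vec{p}$ locates a boundary point of $\mathcal{I}_a(G)$ where $q_\emptyset=0$, forcing $\vec{p}\notin\mathcal{I}_a(G)$.

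The hard part will be the necessity direction: the signed weights $q_S(G,\vec{p})\prod_{i\in S}p_i$ form an honest probability measure only in the sub-region where every $q_S\ge 0$, so realizing the extremal construction in full generality requires either a limit argument from an open dense set where all $q_S>0$, or a direct identification of a minimal offending $S_\star$ from which an event system with $\Pr(\bigcap\bar{A}_j)=0$ is extracted. A secondary technical point is ensuring the dependency graph of the constructed events is exactly $G$ rather than a proper subgraph, which I would address by mixing with an auxiliary independent noise on the missing edges.
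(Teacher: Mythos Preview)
The paper does not prove Theorem~\ref{thm:shearer}; it is quoted as background from Shearer's 1985 paper, so there is no proof in the paper to compare against.

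On the merits of your proposal: the overall architecture (deletion recursion, sufficiency by induction, necessity via the extremal random-independent-set construction, continuity to the boundary) is indeed Shearer's. However, your stated inductive hypothesis for sufficiency is false as written. Take $S$ to be a \emph{maximal} independent set; then the only $T\in Ind(G)$ with $T\supseteq S$ is $T=S$ itself, so $q_S=\prod_{i\in S}p_i$, and your claim becomes
\[
\Pr\Bigl(\bigcap_{j\notin S}\bar A_j \;\Big|\; \bigcap_{i\in S}A_i\Bigr)\;\ge\;1,
\]
i.e.\ that whenever all events in $S$ occur, every event outside $S$ fails. This is true for Shearer's extremal exclusive events, but false for general event systems with dependency graph $G$: already on a single edge $\{1,2\}$ with $S=\{1\}$, nothing forces $A_1\cap A_2=\emptyset$. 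Since the events indexed by an independent set are mutually independent, the conditioning event has probability $\prod_{i\in S}p_i>0$, so the escape clause about zero-probability conditioning does not save you.

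The induction that actually works (as in Shearer, or the Scott--Sokal exposition) runs over subsets $S\subseteq[n]$ and compares $\Pr\bigl(\bigcap_{i\in S}\bar A_i\bigr)$ with the polynomial $q_\emptyset(G[S],\vec p|_S)$ of the \emph{induced subgraph}, via the ratio identity
\[
\frac{q_\emptyset(G[S])}{q_\emptyset(G[S\setminus\{k\}])} \;=\; 1 - p_k\,\frac{q_\emptyset(G[S\setminus N[k]])}{q_\emptyset(G[S\setminus\{k\}])},
\]
matched term-by-term against the probabilistic decomposition you describe. Your necessity sketch, and your identification of its delicate point (the signed weights define a genuine probability measure only inside the Shearer region, so one needs a boundary argument), are correct.
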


As in Theorem \ref{thm:asymmetric} and Theorem \ref{thm:shearer}, only dependency graphs and probabilities of events are involved in abstract-LLL. 
However, dependency graphs can only capture which events are dependent (more precisely, which events are independent), 
but not how they are dependent. 

A nice model of richer dependency structures is the variable-generated system $\EventSet$ of events, where each event is a constraint on a set $\VariableSet$ of independent random variables that can be continuous or discrete. Suppose $\EventSet=\{\Event_1,...,\Event_n\}$ and $\VariableSet=\{\Variable_1,...,\Variable_m\}$. 
Let $\VariableSet_i\subseteq \VariableSet$ be a set of variables that completely determines $\Event_i$ for each $i\in [n]$. 
The model can be characterized by an \EventVariable ~which is a bigraph $\BipartiteGraph=([n],[m],E)$ where each pair $(i,j)\in [n]\times [m]$ is an edge if and only if $\Variable_j\in \VariableSet_i$. Then the fundamental challenge of LLL becomes the VLLL problem as follows: Given a bigraph $\BipartiteGraph$, determine its interior $\mathcal{I}(\BipartiteGraph)$ which is the set of vectors $\vec{p}$ such that $\Pr\left(\cap_{\Event\in \EventSet} \Neg{\Event} \right)>0$ for any variable-generated event system $\EventSet$ with \EventVariable ~$\BipartiteGraph$ and probability vector $\vec{p}$. LLLs solving this problem are collectively called variable-LLL.。

The model of variable-generated event systems is important, mainly because most applications of LLL have natural underlying independent variables, e.g., hypergraph coloring \cite{mcdiarmid1997hypergraph}, satisfiability \cite{gebauer2016local,gebauer2009lovasz}, counting solutions to CNF formulas \cite{moitra2016approximate}, acyclic edge coloring \cite{giotis2017acyclic}, etc.
Besides, most results on the algorithmic aspects of LLL are based on this model (see Section \ref{sec:relatedwork}).
However, there are no special studies on the VLLL problem. A common approach for using LLL in the variable setting is ignoring the variable information and applying abstract-LLL to a dependency graph. This approach only produces results that cannot be better than Shearer's bound. Recently, Harris \cite{harris2016lopsidependency} presents a condition for \emph{lopsided version} \cite{erdHos1991lopsided} of variable-LLL which can go beyond Shearer's criterion, but his condition is based on more information than the \EventVariable ~(i.e., how events disagree on variables is needed). Thus, the VLLL problem remains open. 

Meanwhile, it is widely believed that Shearer's bound is generally not tight for variable-LLL. More precisely, given a bigraph $\BipartiteGraph=(U,V,E)$, its \emph{base graph} is defined as the graph $\DependencyGraph_\BipartiteGraph=(U,E')$  where two nodes $u_1, u_2 \in U$ are adjacent if and only if $u_1,u_2$ share some common neighbor in $\BipartiteGraph$. A property of base graph is that if $\BipartiteGraph$ is an \EventVariable ~of variable-generated event system $\EventSet$, then $\DependencyGraph_\BipartiteGraph$ is a dependency graph of $\EventSet$, which immediately implies that $\mathcal{I}_a(\DependencyGraph_\BipartiteGraph)\subseteq \mathcal{I}(\BipartiteGraph)$. When $\mathcal{I}_a(\DependencyGraph_\BipartiteGraph)\neq\mathcal{I}(\BipartiteGraph)$, we say that Shearer's bound is not tight for $\BipartiteGraph$, or $\BipartiteGraph$ has a gap. The only reported bigraph that has a gap is the $4$-cyclic one \cite{kolipaka2011moser}, namely a bigraph whose base graph is the $4$-cycle. An exact characterization of the conditions for gap existence is far from clear. 

Therefore, we try to solve two closely related peoblems:
\begin{enumerate}
\item VLLL problem: characterize the interior $\mathcal{I}(\BipartiteGraph)$ for any bigraph $\BipartiteGraph$. Kolipaka et al. \cite{kolipaka2011moser} have shown that the Moser-Tardos algorithm is efficient up to the Shearer's bound. However, it remains unknown whether the algorithm converges up to the tight bound of variable-LLL and whether it is efficient even beyond Shearer's bound. Moreover, it is widely believed that better bounds can be obtained through variable-LLL for many combinatorial problems, but how much better can it be? A prerequisite for answering these questions is to know what $\mathcal{I}(\BipartiteGraph)$ is since it tightly upper-bounds the range of variable-LLL. 


\item Gap problem: characterize the conditions for a bigraph to have a gap. The status in quo of variable-LLL is to ignore variable information and apply abstract-LLL. This over-simplification generally compromises the power of variable-LLL, but it is lossless and can be safely used when there is no gap. In addition, VLLL problem makes sense only when a gap exists,  otherwise it's solved by Shearer's theorem. All this calls for a solution to the gap problem.
\end{enumerate}





\subsection{Related Work} \label{sec:relatedwork} 
\LongVersion
LLL provides a powerful tool to show the existence of some complex combinatorial objects meeting a prescribed collection of requirements. 
\LongVersionEnd
The first result for abstract-LLL was proved by Erd{\H{o}}s and Lov\'{a}sz \cite{erdos1975problems} and the first asymmetric one (Theorem \ref{thm:asymmetric}) was presented in \cite{spencer1977asymptotic}. 
Though these results are useful, they are not tight in general. 
A tight, but not local, criterion (Theorem \ref{thm:shearer}) for abstract-LLL was proposed by Shearer \cite{shearer1985problem} over 30 years ago. 

\LongVersion
Shearer's criterion is hard to verify since it involves all possible independent sets, so efforts have been made to obtain simpler (hence weaker) forms. 
\LongVersionEnd
Pegden \cite{pegden2011highly}\cite{pegden2012lefthanded} introduced \emph{lefthanded-LLL} which does not hold on all dependency graphs, but it is generally tighter than the condition in Theorem \ref{thm:asymmetric} and provides a much simpler form of (tight) conditions on special classes of dependency graphs, e.g., chordal graphs. 
Instead of bounds only working for some dependency graphs, Bissacot et al. \cite{bissacot2011improvement} proposed to improve Theorem \ref{thm:asymmetric} by cluster expansion. 
Kolipaka \cite{kolipaka2012sharper} further introduced a hierarchy of bounds (e.g., the \emph{clique-LLL}) which can be applied to any dependency graph and are all tighter than the condition in Theorem \ref{thm:asymmetric}. 
\LongVersion
Note that almost all the bounds either lose applicability for some dependency graphs or are not tight in general. 
\LongVersionEnd 

Erd\"{o}s and Spenser \cite{erdHos1991lopsided} introduced lopsided-LLL, which extends the results in \cite{erdos1975problems} to lopsidependency graphs. Scott and Sokal \cite{Scott05ondependency} proved that Shearer's condition is tight for lopsided dependency graphs. 

There are settings in which Shearer's bound are not tight in general. 
The best known one may be the variable-generated event systems, whose tight conditions are one of the main contributions of this paper. 
Harris \cite{harris2016lopsidependency} extended the concept of lopsidependency to variable-LLL, and proposed a condition which can go beyond Shearer's bound in some cases, but not so in general. Note that Harris' bound cannot be applied to standard variable-LLL, because the key concept of \emph{orderability} cannot be defined on event-variable graphs alone. 

To make LLL constructive, various sampling algorithms have been proposed so as to avoid all bad events.
Algorithm design for LLL is closely related to different bounds mentioned above. 
Beck \cite{beck1991algorithmic} first showed that an \emph{algorithmic version LLL (algorithmic-LLL)} is possible and proposed an efficient deterministic sequential algorithm. 
In that paper, it was required that the degree of the dependency graph under consideration be upper bounded by $2^{n/48}$, which is a very strong restriction. 
Several work has been done to relax this requirement \cite{czumaj2000new,molloy1998further,radhakrishnan1998improved,salavatipoury}. 

Under the model of variable-generated event systems, Moser and Tardos \cite{moser2010constructive} proposed a simple sampling-based algorithm with expected polynomial runtime. Their algorithm is Las Vegas and outputs an assignment to the random variables so as to avoid all bad events. 
Though a strong model is used, the condition needed in their analysis is the same as Theorem \ref{thm:asymmetric} which is even not tight for the abstract-LLL. Pegden \cite{pegden2014extension} proved that Moser and Tardos's algorithm efficiently converges even under the condition of the cluster expansion local lemma.
Kolipaka and Szegedy \cite{kolipaka2011moser} further showed that under the same model, Moser-Tardos algorithm actually works efficiently up to Shearer's bound. 
Harris \cite{harris2016lopsidependency} presented an algorithm for lopsided version of variable-LLL under the lopsided condition mentioned above. 
It is still open what conditions are tight for an efficient constructive variable-LLL. 
Catarata et al.\ \cite{catarata2017moser} tried experimental methods to observe the possibilities. 

Moser-Tardos algorithm can be naturally parallelized because it is not harmful to do sampling for independent events at the same time. 
Moser and Tardos showed that this parallelization achieves a better expected runtime, but the condition required in their analysis is slightly stronger than that for the sequential case. 
In fact, parallel algorithms for LLL has been considered much earlier than the invention of Moser-Tardos algorithm \cite{alon1991parallel}. 
Recently, there are new researches for parallel algorithms inspired by Moser-Tardos algorithm \cite{haeupler2017parallel,harris2017deterministic}. 
Besides, algorithmic-LLL has been studied using distributed computation models \cite{Brandt2016Lower,chung2014distributed,ghaffari2016improved}. 

Algorithms have also been devised for LLL with dependent variables and other conditions. 
Harris and Srinivasan \cite{harris2014constructive} first considered the space of permutations. 
Achlioptas and Iliopoulos \cite{achlioptas2016random} studied algorithms specified by certain multigraphs. 
Frameworks with resampling oracles are also investigated \cite{achlioptas2016focused,harvey2015algorithmic,kolmogorov2016commutativity}. 

Actually, variable-LLL has strong connection with sampling. Guo et al. \cite{guo2017uniform} proposed an algorithmic framework, called “partial rejection sampling”, which establishes this connection in scenarios such as uniform sampling. In a parallel work, Moitra \cite{moitra2017approximate} presended an algorithm to approximately sample solutions to general \emph{k}-CNF under Lov\'{a}sz Local Lemma-like conditions.

Apart from algorithms, LLL has affected (or has been affected by) many other disciplines, in particular physics. 
For example, alternating-sign independence polynomials of dependency graphs, which is a key element in Shearer's criterion, are also related to the concept of partition functions in statistical physics \cite{Scott2005,guttmann1987comment,todo1999transfer,wood1985exact}. Inspired by this connection, cluster expansion local lemma has been proposed \cite{bissacot2011improvement}, and the lower bound of a singularity point in the hard-core lattice gas model has been improved \cite{kolipaka2012sharper}. LLL has also been enriched by the concept of quantum in physics \cite{ambainis2012quantum,sattath2015constructive,Andras2017}.


\section*{Notation}
\begin{itemize}
\item $[n]$: the set $\{1,2,...,n\}$ for positive integer $n$.
\item $\VariableSet,\VariableSetY$: sets of mutually independent random variables. 
\item $\Variable,\VariableY$: random variables.  
\item $\vec{p},\vec{q},\vec{r}$: vectors of positive real numbers.
\item $\phi(\cdot)$: given $\vec{p}\in (0,+\infty)^n$, $\phi(\vec{p})\in (0,1]^n$ is the vector whose $i$-th entry is $\min\{1, p_i\}$.
\item $\EventSet,\EventSetB$: sets of events, or sets of cylinders.
\item $\Event,\EventB$: events, or cylinders.
\item $\Neg{\Event}$: the complementary of the event/cylinder $A$.
\item $\Pr(\Event)$: the probability of event $A$.
\item $\Pr(\EventSet)$: the vector whose $i$-th entry is the probability of the $i$-th event in $\EventSet$.
\item $\mu$: Lebesgue measure on Euclidean (sub)spaces.
\item $\DependencyGraph=(V,E)$: the undirected graph with vertex set $V$ and edge set $E$.
\item $\BipartiteGraph=(V_1,V_2,E)$: the bigraph with vertex set $V_1\cup V_2$ and edge set $E\subseteq V_1\times V_2$. $V_1$ and $V_2$ are called the left part and the right part of $\BipartiteGraph$, denoted by $L(\BipartiteGraph)$ and $R(\BipartiteGraph)$, respectively.
\item $\mathcal{N}_\DependencyGraph(v)$: the neighborhood of vertex $v$ in graph $\DependencyGraph$, or $\mathcal{N}(v)$ when  $\DependencyGraph$ is implicit.
\item $\mathbb{I}^{\{i\}}$: the unit interval in the $i$-th dimension of an Euclidean space, or simply $\mathbb{I}$ when $i$ is implicit.
\item $\mathbb{I}^{S}$: the unit cube $\prod_{i\in S}\mathbb{I}^{\{i\}}$, or simply $\mathbb{I}^m$ when $S=[m]$ for some integer $m$.
\end{itemize}

\section{Results and Discussion}
The main results of this paper are listed and discussed as follows. 
\paragraph{Tight condition for variable-LLL} As we mentioned, Shearer's condition is sufficient and necessary for abstract-LLL, but in general it is not tight for variable-LLL. Our first contribution is a sufficient and necessary condition for variable-LLL, namely an exact characterization of $\mathcal{I}(\BipartiteGraph)$ for any bigraph $\BipartiteGraph$. Characterizing $\mathcal{I}(\BipartiteGraph)$ is equivalent to delimiting its boundary, simply called the boundary of $\BipartiteGraph$ and denoted by $\partial(\BipartiteGraph)$, which consists of the vectors $\vec{p}$ such that $(1-\epsilon)\vec{p}\in \mathcal{I}(\BipartiteGraph)$ and $(1+\epsilon)\vec{p}\notin \mathcal{I}(\BipartiteGraph)$ for any $\epsilon\in(0,1)$.



\begin{restatable}{theorem}{boundary}
\label{solutiontoboundary}
Given a bigraph $\BipartiteGraph=([n], [m], E)$, let $\vec{d}=(d_1,...,d_m)$ where $d_j$ is the degree of the vertex $j\in R(\BipartiteGraph)$. For any vector $\vec{q}\in (0,1)^n$, $\lambda\vec{q}$ lies on the boundary of $\BipartiteGraph$ if and only if $\lambda$ is the optimal solution to the program:
\LongVersion
\begin{align*}
 \min & \quad \lambda\\
 \hbox{s.t.} & \quad \sum_{i\in [n]}C_{i,k_1,k_2,...,k_m}\geq 1 \hbox{ for any } k_j\in [d_j],j\in[m];\\
 & \quad  C_{i,k_1,k_2,...,k_m} \hbox{ does not depend on } k_j \hbox{ for any } (i,j)\in([n]\times[m])\setminus E;\\
 & \quad \sum_{k_1\in[\DiscreteDimen_1],...,k_m\in[\DiscreteDimen_m]}(\prod_{j\in [m]}x_{jk_j})C_{i,k_1,k_2,...,k_m}= \lambda q_i  \hbox{ for } i\in [n];\\
 &\quad \sum_{k\in [\DiscreteDimen_j]}x_{jk}=1 \hbox{ for } j\in[m]; \\
 &\quad x_{jk}\in[0,1] \hbox{ for } j\in[m], k\in[d_j];\\
 &\quad C_{i,k_1,k_2,...,k_m}\in\{0,1\} \hbox{ for } i\in [n], k_j\in [d_j],j\in[m].
\end{align*}
\LongVersionEnd

\ShortVersion
\begin{align*}
 \min & \quad \lambda\\
 \hbox{s.t.} & \quad \sum_{i\in [n]}C_{i,k_1,k_2,...,k_m}\geq 1 \hbox{ for any } k_j\in [d_j],j\in[m];\\
 & \quad \hbox{For any } (i,j)\in([n]\times[m])\setminus E, C_{i,k_1,k_2,...,k_m} \hbox{ does}\\
 & \quad \hbox{ not depend on } k_j;\\
 & \quad \sum_{k_1\in[\DiscreteDimen_1],...,k_m\in[\DiscreteDimen_m]}(\prod_{j\in [m]}x_{jk_j})C_{i,k_1,k_2,...,k_m}= \lambda q_i \\ 
 &\quad \hbox{ for } i\in [n];\\
 &\quad \sum_{k\in [\DiscreteDimen_j]}x_{jk}=1 \hbox{ for } j\in[m]; \\
 &\quad x_{jk}\in[0,1] \hbox{ for } j\in[m], k\in[d_j];\\
 &\quad C_{i,k_1,k_2,...,k_m}\in\{0,1\} \hbox{ for } i\in [n], k_j\in [d_j],j\in[m].
\end{align*}
\ShortVersionEnd
\end{restatable}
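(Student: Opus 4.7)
The plan is to first reinterpret $\lambda\vec{q}\in\partial(\BipartiteGraph)$ in operational form. By the definition of $\mathcal{I}(\BipartiteGraph)$ together with continuity and monotonicity of $\Pr(\bigcap_i \Neg{\Event_i})$ in the event probabilities, the condition that $\lambda\vec{q}$ lies on $\partial(\BipartiteGraph)$ is equivalent to
\[
\lambda \;=\; \inf\bigl\{\lambda'>0 : \exists\ \EventSet\text{ on independent variables with EV-graph }\BipartiteGraph,\ \Pr(\EventSet)=\lambda'\vec{q},\ \Pr(\cup_i \Event_i)=1\bigr\}.
\]
Below this infimum every realization satisfies $\Pr(\bigcap_i\Neg{\Event_i})>0$; at or above it some realization forces $\Pr(\bigcap_i\Neg{\Event_i})=0$. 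The rest of the proof is to identify this operational infimum with the optimum of the stated integer program.

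For the easy direction (soundness), I plan to show that every IP-feasible pair $(x,C)$ of value $\lambda$ yields a realizing event system: let $\Variable_j$ be independent and discrete on $[d_j]$ with $\Pr(\Variable_j=k)=x_{j,k}$, and set $\Event_i=\{\vec{k}:C_{i,\vec{k}}=1\}$. The condition that $C_{i,\vec{k}}$ is independent of $k_j$ for $(i,j)\notin E$ forces $\Event_i$ to depend only on $\{\Variable_j : j\in\Neighbor(i)\}$, so the EV-graph of the constructed system embeds into $\BipartiteGraph$; the covering inequality yields $\Pr(\cup_i \Event_i)=1$; and the product-form equation yields $\Pr(\Event_i)=\lambda q_i$. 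Hence every IP-feasible $\lambda$ is an upper bound on the operational infimum.

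The main obstacle is the converse (completeness): starting from an arbitrary, possibly continuous, event system that realizes $\Pr(\cup_i \Event_i)=1$ with probabilities $\lambda q_i$, I need to produce an IP-feasible point of value at most $\lambda$. My plan is a canonical discretization: partition $\Omega_j$ into exactly $d_j$ measurable parts $\{P_{j,k}\}_{k=1}^{d_j}$, indexed by the events adjacent to $\Variable_j$ in $\BipartiteGraph$, set $x_{j,k}=\Pr(\Variable_j\in P_{j,k})$, and declare $C_{i,\vec{k}}=1$ exactly when the product cell $\prod_j P_{j,k_j}$ is assigned to $\Event_i$. The nontrivial content is to design the partition so that (a) the induced assignment still covers the full sample space with the correct marginal $\lambda q_i$ for each event and (b) only $d_j$ parts per axis are needed. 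For (b), the intuition is that since exactly $d_j$ events touch $\Variable_j$, each such event can claim a single ``slot'' in $\Variable_j$'s range; a rigorous justification will likely go via first approximating the given system by a piecewise-constant one on a fine grid, and then an extreme-point / dimension-counting reduction that collapses slots sharing the same event-signature and restores $0/1$-integrality of $C$ without increasing $\lambda$.

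Combining soundness and completeness, the operational infimum and the integer-program optimum coincide, which by the first paragraph is exactly the statement of the theorem.
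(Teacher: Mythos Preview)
Your high-level plan matches the paper's: reformulate the boundary as an infimum over covering systems, then show soundness (IP $\Rightarrow$ system) and completeness (system $\Rightarrow$ IP) via a discretization-plus-dimension-reduction argument. The paper carries this out through Lemmas~\ref{le:finitediscrete}--\ref{le:boundaryisunique} and Theorem~\ref{thm:boundaryfills}. However, two steps that the paper treats carefully are glossed over in your proposal, and one of them is a genuine obstacle.

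\textbf{The equality constraint.} The program requires $\sum_{\vec{k}}(\prod_j x_{j k_j})C_{i,\vec{k}}=\lambda q_i$, not $\leq\lambda q_i$. Your extreme-point/dimension-counting reduction (which is indeed the right mechanism, essentially Carath\'eodory on the boundary of a $d_j$-dimensional convex hull, as in Lemma~\ref{le:boundeddiscrete}) moves the probability vector toward the origin: after reducing the $j$-th axis you obtain a covering system with $\mu(\EventSet)\leq\vec{q}_\epsilon$, and the decrease need not be proportional across coordinates. So you do not get an IP-feasible point for any particular $\lambda'$; you only get $\mu(\EventSet)\leq\vec p$. The paper closes this gap with a separate, nontrivial argument (Lemma~\ref{le:boundaryisunique}): if $\vec p\in\partial(\BipartiteGraph)$ and some $\EventSet\sim\BipartiteGraph$ covers with $\mu(\EventSet)\leq\vec p$, then in fact $\mu(\EventSet)=\vec p$. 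The proof uses connectivity of the base graph to shift mass between neighboring events along a shared variable, and without it your completeness direction does not go through.

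\textbf{Existence at the boundary.} You write ``starting from an arbitrary \ldots\ event system that realizes $\Pr(\cup_i\Event_i)=1$ with probabilities $\lambda q_i$'', but a priori no such system is known to exist \emph{at} the boundary vector; the definition only gives you covering systems at $(1+\epsilon)\vec p$ for every $\epsilon>0$. The paper handles this by a compactness argument (Lemma~\ref{le:boundaryfills}): encode the $\vec d$-discrete systems by finitely many bounded real parameters $x_{jk}$ and finitely many $\{0,1\}$ indicators $C$, then extract a convergent subsequence as $\epsilon\to 0$. Your sketch mentions approximation by a grid but not this limiting step, which is what actually produces a witness at $\lambda\vec q$ itself.

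Finally, your intuition that ``each event adjacent to $X_j$ claims a single slot'' is not the reason $d_j$ parts suffice. The correct reason is that only $d_j$ events depend on $X_j$, so the vector of their conditional measures along $X_j$ lives in $\mathbb{R}^{d_j}$; intersecting the ray through this vector with the boundary of the convex hull of the slice-measure vectors lands in a simplex with at most $d_j$ vertices. This is exactly the content of Lemma~\ref{le:boundeddiscrete}.
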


As far as we know, this is the first condition for general variable-LLL. It essentially means that the variables can be discretized. Namely, to determine the boundary vectors, it is enough to consider the discrete variables taking $d_j$ values. 
\LongVersion
Small finite domains of the variables enable to study the events by at least the method of exhaustion. 
\LongVersionEnd
In addition, the program facilitates to construct the ``worst-case" set of events, which means that the probability of the union of the events is maximized.

\LongVersion
This optimization problem looks like a geometric program, but it is not the case. Actually, it must be hard to solve, since we show that it is \#P-hard to decide the boundary of variable-LLL.
\LongVersionEnd

\paragraph{Boundary of cyclic bigraphs}
Though the program above is hard to solve in general, its insight of discretization makes it possible to fully determine the boundary of any cyclic bigraph as in the following theorem. Here a bigraph is called $n$-cyclic if its base graph is a cycle of length $n$.
\LongVersion
We propose a method to calculate the boundary vectors of cycles. 
\LongVersionEnd

\begin{restatable}{theorem}{Boundarycycle}
\label{cycleboundary}
Given a vector $\vec{p}\in(0,1)^n$, for each $i\in [n]$, let $\lambda_i$ be the minimum positive solution to the equation system: $b_1=\lambda p_i, b_k=\frac{\lambda p_{k+i-1}}{1-b_{k-1}}$ for $2\leq k\leq n-1$, $b_{n-1}=1-\lambda p_{i-1}$. Let $\lambda_0=\min_{i\in[n]}\lambda_i$. Then $\lambda_0\vec{p}$ lies on the boundary of any $n$-cyclic bigraph.\end{restatable}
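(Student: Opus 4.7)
The plan is to invoke Theorem~\ref{solutiontoboundary} to reduce the boundary computation on any $n$-cyclic bigraph to a discrete integer program with binary random variables, to produce for each $i \in [n]$ an explicit extremal construction witnessing $\lambda_i$, and then to argue matching tightness. Every right-vertex of the canonical $n$-cyclic bigraph has degree $d_j = 2$, and a general $n$-cyclic bigraph can be reduced to the canonical one by merging (and marginalizing) variables shared between the same pair of adjacent events without changing $\mathcal{I}(\BipartiteGraph)$. Theorem~\ref{solutiontoboundary} then allows me to restrict to binary $X_j \in \{0,1\}$ with $\Pr(X_j = 0) = s_j$ and events $A_j$ specified by subsets $C_j \subseteq \{0,1\}^2$ of the outcomes of $(X_{j-1}, X_j)$; the boundary $\lambda$ equals the optimum of the resulting IP.

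For the upper bound I fix $i \in [n]$ and use the following construction: set $s_{i-1} = 0$, so $X_{i-1} \equiv 1$; for $j \ne i - 1$ take $A_j = \{X_{j-1} = 1,\, X_j = 0\}$, so that $\Pr(A_j) = (1 - s_{j-1}) s_j$; and for $j = i-1$ take $A_{i-1}$ to be the complement of $\{X_{i-2} = 0,\, X_{i-1} = 1\}$, so that $\Pr(A_{i-1}) = 1 - s_{i-2}(1 - s_{i-1}) = 1 - s_{i-2}$. Imposing $\Pr(A_j) = \lambda p_j$ yields $s_i = \lambda p_i$, the recursion $s_{i+k-1} = \lambda p_{i+k-1}/(1 - s_{i+k-2})$ for $2 \le k \le n - 1$, and the closing equation $s_{i-2} = 1 - \lambda p_{i-1}$; under the identification $b_k = s_{i+k-1}$ this is exactly the system defining $\lambda_i$. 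The covering constraint holds because every assignment has $y_{i-1} = 1$, and if no $1 \to 0$ transition occurs at any position $j \ne i - 1$ then starting from $y_{i-1} = 1$ the sequence must be identically $1$, which forces $(y_{i-2}, y_{i-1}) = (1,1) \in C_{i-1}$. Hence $\Pr(\bigcap_j \Neg{A_j}) = 0$ at $\lambda = \lambda_i$, so $\lambda_0 \vec p \notin \mathcal{I}(\BipartiteGraph)$.

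The hard step is tightness: for every $\lambda < \lambda_0$ and every binary construction satisfying the covering constraint with $\Pr(A_j) = \lambda p_j$, I must show $\Pr(\bigcap_j \Neg{A_j}) > 0$, equivalently that every feasible point of the IP has $\lambda \ge \lambda_0$. My plan is to write $\Pr(\bigcap_j \Neg{A_j}) = \operatorname{tr}(T_1 \cdots T_n)$ using the $2 \times 2$ transfer matrices $(T_j)_{ab} = \mathbbm{1}[C_j(a,b) = 0]\, \Pr(X_j = b)$ and to use the key observation that the trace vanishes only when every closed walk around the cycle passes through a zero entry of some $T_j$ (either because the corresponding $C_j$ value equals $1$, i.e.\ $A_j$ fires, or because some $s_j \in \{0,1\}$). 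I then argue via a perturbation/convexity analysis that at any IP-optimal feasible configuration at least one variable must be deterministic; once one variable is pinned, the cycle unrolls into a path whose minimum $\lambda$ is governed by precisely the recurrence for $b_k$, and the closing probability constraint at the pinned vertex recovers exactly some $\lambda_i$, so minimizing over the choice of pinned vertex yields $\lambda_0$. The principal obstacle is formalizing this reduction to a deterministic variable: cycles admit feasible configurations with every $s_j \in (0,1)$, and ruling them out (or continuously deforming them to a one-deterministic-variable form without decreasing $\lambda$) will require a careful case analysis across the possible $C_j$ patterns.
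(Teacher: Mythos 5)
Your overall architecture matches the paper's: discretize via Theorem~\ref{solutiontoboundary} to a $(2,2,\dots,2)$-discrete configuration, exhibit for each $i$ the explicit ``one pinned variable'' construction whose probability constraints are exactly the recurrence for $\lambda_i$, and then argue that any boundary-achieving configuration must be of this form. Your upper-bound construction is correct and is the same extremal family the paper arrives at (two bases of type $T_2$, the rest rectangles of type $T_1$). The problem is that the entire difficulty of the theorem sits in the step you defer: proving that every $(2,\dots,2)$-discrete cylinder set realizing the boundary vector forces some $d_j=1$, i.e.\ some variable deterministic. This is the paper's Theorem~\ref{thm: break cycle}, and it occupies most of Section~\ref{sec:cyclicboundary} (Lemmas~\ref{le: fakj} through \ref{le:115isimpossible}). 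You correctly flag this as ``the principal obstacle,'' but a proposal that leaves it as a plan has not proved the theorem.

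Two concrete holes in your plan for that step. First, your claim that ``at any IP-optimal feasible configuration at least one variable must be deterministic'' is not reachable by a single perturbation/convexity argument, because there are two structurally different families of feasible configurations with every $s_j\in(0,1)$: the paper's $T_1$-dominant family (each base a single rectangle, one base the complement of a rectangle) and the $T_3$-dominant family (each base a ``checkerboard'' union of two opposite rectangles). The paper must rule these out by two separate arguments: Lemma~\ref{le:all4isimpossible} kills the all-$T_3$ case by an explicit strict improvement of the configuration (contradicting minimality of the boundary via Lemma~\ref{le:boundaryisunique}), while Lemma~\ref{le:115isimpossible} kills the $T_1$-dominant case by a genuine second-derivative computation on the composed M\"obius-type recurrence, showing the objective is maximized only at an endpoint where a base degenerates to type $T_2$. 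Your transfer-matrix reformulation does not make either of these go away; the checkerboard case in particular does not fit the ``unroll the cycle at a pinned vertex'' picture at all. Second, even granting the reduction, you still owe the final identification of the boundary value with the \emph{minimum} positive root of the polynomial system (the paper closes with a short monotonicity argument for this), and, in your constructive direction, a check that the minimum positive solution actually yields $b_k\in[0,1]$ so that the configuration exists.
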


\LongVersion
In the literature, cyclic bigraphs are attractive as they are the only example showing a gap exists, i.e., 
\LongVersionEnd
\ShortVersion
In the literature, 
\ShortVersionEnd
only one vector on the boundary of 4-cyclic bigraphs has been identified. 
The above theorem shows that the whole boundary of any $n$-cyclic bigraph can be determined by solving an $(n-1)$-degree polynomial equation. 
\LongVersion
The method works for any cyclic bigraph, no matter whether the probability vector is symmetrical or not. 
\LongVersionEnd

Not only for cyclic bigraphs, we also give a procedure to exactly determine the boundary of treelike bigraphs. A bigraph is called treelike if its base graph is a tree.

\paragraph{A sufficient and necessary condition for gap existence}

%
Since a bigraph provides more information than its base graph, it is naturally expected to have a gap, namely Shearer's bound is not tight for bigrpahs. 
We propose a necessary and sufficient condition to decide whether such a gap exist. For conciseness of presentation, we also call a bigraph gapful if it has a gap, and gapless otherwise.
\begin{restatable}{theorem}{bipartite}
\label{Conj:GapGeom}
Given a bigraph $\BipartiteGraph$ and a vector $\vec{p}$ of positive reals, the following three conditions are equivalent:
\begin{enumerate}
\item For any $\lambda$ such that $\lambda\vec{p}\in\mathcal{I}(\BipartiteGraph)$, there is an exclusive variable-generated event system $\EventSet$ with \EventVariable ~$\BipartiteGraph$ and probability vector $\lambda\vec{p}$. \label{interiorex}
\item For the $\lambda$ such that $\lambda\vec{p}\in\partial(\BipartiteGraph)$, there is an exclusive variable-generated  event system $\EventSet$ with \EventVariable ~$\BipartiteGraph$ and probability vector $\lambda\vec{p}$. \label{boundaryex}
\item $\BipartiteGraph$ is gapless in the direction of $\vec{p}$. \label{gapless}
\end{enumerate}
\end{restatable}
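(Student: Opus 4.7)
The strategy is to close the cycle $(1)\Rightarrow(2)\Rightarrow(3)\Rightarrow(1)$. Throughout, let $\lambda^*$ denote the scaling with $\lambda^*\vec{p}\in\partial(\BipartiteGraph)$, let $\DependencyGraph_\BipartiteGraph$ be the base graph, and let $q_\emptyset(\vec{q})=\sum_{S\in\mathrm{Ind}(\DependencyGraph_\BipartiteGraph)}(-1)^{|S|}\prod_{i\in S}q_i$ denote Shearer's $S=\emptyset$ polynomial.

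For $(1)\Rightarrow(2)$ I would pick $\lambda_k\uparrow\lambda^*$ with $\lambda_k\vec{p}\in\mathcal{I}(\BipartiteGraph)$, invoke (1) to produce exclusive variable-generated systems $\EventSet^{(k)}$ at $\lambda_k\vec{p}$, and pass to a limit. Theorem~\ref{solutiontoboundary} allows restriction to discrete variables with at most $d_j$ values per coordinate, so the parameter space of candidate systems is compact; the exclusivity constraint $\Event_i\cap\Event_j=\emptyset$ for adjacent $i,j$ is closed, so it survives the limit.

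The conceptual heart is $(2)\Rightarrow(3)$. Given an exclusive variable-generated system $\EventSet$ at $\lambda^*\vec{p}$, I would apply inclusion-exclusion to $\Pr(\bigcap_i\Neg{\Event_i})$: terms indexed by $S\subseteq[n]$ that are not independent in $\DependencyGraph_\BipartiteGraph$ vanish because some adjacent pair has $\Event_i\cap\Event_j=\emptyset$, while terms indexed by $S\in\mathrm{Ind}(\DependencyGraph_\BipartiteGraph)$ factor by independence of variable-disjoint events and contribute $\prod_{i\in S}\lambda^*p_i$; hence the probability under $\EventSet$ equals $q_\emptyset(\lambda^*\vec{p})\ge 0$. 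Meanwhile, the boundary condition combined with compactness produces some variable-generated system at $\lambda^*\vec{p}$ with $\Pr(\bigcap_i\Neg{\Event_i})=0$, and Shearer's theorem identifies $q_\emptyset(\lambda^*\vec{p})$ as the minimum such probability over all event systems with dependency graph $\DependencyGraph_\BipartiteGraph$, forcing $q_\emptyset(\lambda^*\vec{p})=0$. The same inclusion-exclusion identity applied to every $\Pr(\bigcap_{i\notin S}\Neg{\Event_i}\cap\bigcap_{i\in S}\Event_i)$ realizes each Shearer polynomial $q_S(\lambda^*\vec{p})$ as an honest probability, hence non-negative. So at $\lambda^*\vec{p}$ we have $q_\emptyset=0$ and all other $q_S\ge 0$, placing $\lambda^*\vec{p}$ on $\partial\mathcal{I}_a(\DependencyGraph_\BipartiteGraph)$; combined with $\mathcal{I}_a(\DependencyGraph_\BipartiteGraph)\subseteq\mathcal{I}(\BipartiteGraph)$, the two boundaries coincide in direction $\vec{p}$, establishing (3).

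For $(3)\Rightarrow(1)$ I would first realize a boundary exclusive system by a limit-from-outside: take $\lambda_k\downarrow\lambda^*$ with $\lambda_k\vec{p}\notin\mathcal{I}(\BipartiteGraph)$, obtain variable-generated systems $\EventSet^{(k)}$ with $\Pr(\bigcap_i\Neg{\Event_i})=0$, and extract a limit $\EventSet^*$ at $\lambda^*\vec{p}$ with $\Pr=0$. Under gaplessness we have $q_\emptyset(\lambda^*\vec{p})=0$, and any positive-measure overlap among adjacent events of $\EventSet^*$ ought, via a strict form of the inclusion-exclusion identity above, to push $\Pr(\bigcap_i\Neg{\Event_i})$ strictly above $q_\emptyset=0$, contradicting $\Pr=0$ and forcing $\EventSet^*$ to be exclusive. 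For an arbitrary interior $\lambda\vec{p}$ with $\lambda<\lambda^*$, I then shrink each $\Event_i\in\EventSet^*$ independently to a sub-event of probability $\lambda p_i$ defined on the same variable set; the result remains variable-generated with \EventVariableShort~$\BipartiteGraph$ and still exclusive on adjacent pairs, yielding (1). The principal obstacle will be the rigidity step: making precise that Shearer-tight variable-generated systems must be exactly exclusive rather than merely approximately so. The expected route is a strict-inequality version of the inclusion-exclusion calculation, tracking the mixed-sign sum of $\Pr(\bigcap_{i\in S}\Event_i)$ over non-independent $S$ as positive-measure correction terms and exploiting the extra rigidity imposed by the variable structure of $\BipartiteGraph$.
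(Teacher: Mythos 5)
Your cycle $(1)\Rightarrow(2)\Rightarrow(3)\Rightarrow(1)$ is architecturally sound, and you have correctly located the crux: one must show that any event system (in particular the cube-filling cylinder set that Theorem \ref{thm:boundaryfills} provides at $\lambda^*\vec{p}$) which attains the Shearer-optimal union probability is \emph{necessarily} exclusive, not merely that exclusive systems attain the optimum. The paper isolates exactly this as the ``only if'' half of Lemma \ref{WorstPlacement}, and your $(2)\Rightarrow(3)$ computation --- inclusion--exclusion collapsing to $q_\emptyset(\lambda^*\vec{p})$ for an exclusive variable-generated system, forcing $q_\emptyset=0$ at the boundary --- is essentially the paper's use of Shearer's Lemma \ref{le:exclusionisworst}. (One caveat there: from $q_\emptyset(\lambda^*\vec p)=0$ and $q_S(\lambda^*\vec p)\ge 0$ you still need $(1-\epsilon)\lambda^*\vec p\in\mathcal{I}_a$; the containment $\mathcal{I}_a\subseteq\mathcal{I}$ gives only the wrong-way inequality, so you must either invoke the down-set/closure property of Shearer's region or, as the paper does, shrink the exclusive system to interior scalings and apply Shearer's maximality there.)

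The genuine gap is the proposed mechanism for the rigidity step. You suggest that positive-measure overlaps among adjacent events ``push $\Pr(\bigcap_i\Neg{\Event_i})$ strictly above $q_\emptyset$'' via ``positive-measure correction terms'' in inclusion--exclusion. But the correction terms are $\sum_{S\notin\mathrm{Ind}}(-1)^{|S|}\Pr(\bigcap_{i\in S}\Event_i)$, a mixed-sign sum: an overlap on an edge contributes positively at $|S|=2$ but the triple intersections it creates enter with the opposite sign, and there is no termwise positivity to exploit. Shearer's own argument already has to work around this, and the strict version is strictly harder: the paper proves it by showing the ratio $\alpha(S)/\beta(S)$ of the non-exclusive to the exclusive survival probabilities increases strictly along a chain once a single overlapping adjacent pair is inserted, and then needs an entirely separate induction (the Claim with properties $Q_1,Q_2,Q_3$) to handle the degenerate boundary case $\beta([n])=0$ where the ratio argument breaks down --- which is precisely the case your theorem lives in. Your sketch does not supply a substitute for either piece. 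A secondary consequence: your compactness argument for $(1)\Rightarrow(2)$ needs a uniformly discrete parametrization of \emph{exclusive} systems, but Theorem \ref{solutiontoboundary} and Corollary \ref{cor:Interiorhasworstcase} only discretize worst-case systems without preserving exclusivity; recovering exclusivity of the discrete representative again requires the equality criterion of Lemma \ref{WorstPlacement}. So the entire proposal ultimately rests on the one lemma it does not prove.
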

Here the qualifier ``exclusive" means that the events in $\EventSet$ are either independent or disjoint, and ``gapless in the direction of $\vec{p}$" means that for any $\lambda$, $\lambda\vec{p}\in\mathcal{I}(\BipartiteGraph)$ if and only if $\lambda\vec{p}\in\mathcal{I}_a(\DependencyGraph_\BipartiteGraph)$.

By this criterion, one can check the existence of a gap just by examining the bigraph, without computing  Shearer's bound of its base graph.

On this basis, we investigate gap existence for two families of bigraphs. 

\begin{restatable}{theorem}{graph}
\label{dependencygaplessdecision}
Treelike bigraphs are gapless.
\end{restatable}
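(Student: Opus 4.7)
The plan is to reduce the theorem to an explicit construction via Theorem~\ref{Conj:GapGeom}. Since a bigraph is gapless iff it is gapless in every direction $\vec{p}$, the equivalence of conditions 2 and 3 in that theorem reduces our task to the following: for each positive vector $\vec{p}$, exhibit an exclusive variable-generated event system with event-variable graph $\BipartiteGraph$ whose probability vector is the scalar multiple of $\vec{p}$ lying on Shearer's boundary of the base graph $\DependencyGraph_\BipartiteGraph$. Combined with the standard containment $\mathcal{I}_a(\DependencyGraph_\BipartiteGraph)\subseteq\mathcal{I}(\BipartiteGraph)$, such a system forces the two regions to coincide and hence $\BipartiteGraph$ to be gapless.

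The construction exploits the tree structure of $\DependencyGraph_\BipartiteGraph$. For each edge $e=(i,i')$ of $\DependencyGraph_\BipartiteGraph$, pick a variable $X_{j(e)}$ shared by the events $\Event_i$ and $\Event_{i'}$; such a variable exists by the definition of base graph, and because $\DependencyGraph_\BipartiteGraph$ is triangle-free it belongs to no other event. Define $\Event_i=\bigcap_{e\ni i}\{X_{j(e)}\in J_e^i\}$ for intervals $J_e^i\subseteq[0,1]$, leaving every other variable in $\VariableSet_i$ unconstrained. Then $\Pr(\Event_i)=\prod_{e\ni i}\mu(J_e^i)$; adjacent events $\Event_i,\Event_{i'}$ are disjoint whenever $J_e^i\cap J_e^{i'}=\emptyset$; and non-adjacent events depend on disjoint variable sets, hence are independent. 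Any valid choice of intervals thus produces an exclusive variable-generated event system realizing the bigraph $\BipartiteGraph$.

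Feasibility then reduces to finding values $u_e^i=\mu(J_e^i)\in[0,1]$ with $\prod_{e\ni i}u_e^i=p_i$ at each vertex and $u_e^i+u_e^{i'}\le 1$ on each edge, for the target boundary vector $\vec{p}$. I plan to do this by induction on the size of the tree, peeling a leaf $\ell$ with parent $p$ at each step: set $u_{(\ell,p)}^\ell=p_\ell$ (forced by the single-edge constraint at $\ell$) and $u_{(\ell,p)}^p=1-p_\ell$ (saturating the edge constraint), and recurse on $\DependencyGraph_\BipartiteGraph-\ell$ with the probability vector $\vec{p}'$ obtained by replacing $p_p$ with $p_p/(1-p_\ell)$ and deleting $p_\ell$. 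The base case of a single vertex is immediate.

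The main obstacle, and the technical heart of the argument, is the key lemma that if $\vec{p}$ lies on Shearer's boundary of $\DependencyGraph_\BipartiteGraph$ then $\vec{p}'$ lies on Shearer's boundary of $\DependencyGraph_\BipartiteGraph-\ell$. I intend to prove first the identity
$$Z(\DependencyGraph_\BipartiteGraph,\vec{p})\;=\;(1-p_\ell)\,Z(\DependencyGraph_\BipartiteGraph-\ell,\vec{p}')$$
for the alternating independent set polynomial $Z$, by expanding $Z$ once along $\ell$ and once along $p$. Since the full Shearer criterion requires positivity of $Z(\DependencyGraph_\BipartiteGraph-S-\Neighbor(S),\vec{p})$ for every independent set $S$, I will then extend the identity to these induced subgraphs by a short case analysis on whether $p\in S$, $p\in\Neighbor(S)$, or $p\notin S\cup\Neighbor(S)$, using in each case that $\ell$ becomes either isolated or a leaf of the resulting subgraph, so that the same leaf-expansion applies. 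Once the lemma is established, the induction closes, the construction yields the desired exclusive event system, and Theorem~\ref{Conj:GapGeom} converts this feasibility into the gaplessness of $\BipartiteGraph$.
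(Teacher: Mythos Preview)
Your approach is essentially correct and closely parallels the paper's proof: both construct the same exclusive cylinder system, one interval constraint per edge of the tree $\DependencyGraph_\BipartiteGraph$, with lengths determined by peeling leaves. The paper's recursion $q_i=p_i/\prod_{k\in\mathcal{C}(i)}(1-q_k)$ is exactly your leaf-deletion $p'_p=p_p/(1-p_\ell)$ unrolled from the leaves toward a root.

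The main methodological difference lies in how feasibility (that all interval lengths stay in $(0,1)$) is established. You argue via the independence-polynomial identity $Z(\DependencyGraph_\BipartiteGraph,\vec{p})=(1-p_\ell)\,Z(\DependencyGraph_\BipartiteGraph-\ell,\vec{p}')$ together with a case analysis over Shearer's conditions. The paper instead starts from $\vec{p}\in\partial(\BipartiteGraph)$ (the \emph{variable}-LLL boundary, not Shearer's) and shows $q_i\in(0,1)$ by a direct contradiction using Lemma~\ref{le:boundaryisunique}: if some $q_i\ge 1$, the events on the subtree rooted at $i$ already cover $\mathbb{I}^m$ with strictly smaller probabilities, contradicting that $\vec{p}$ is a boundary vector. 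The paper's route is more self-contained (it never touches Shearer's polynomials), while yours makes the link to the abstract boundary explicit and leads more transparently to the recursive boundary formula of Corollary~\ref{computetreeboundary}.

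Two small points you should tighten. First, your invocation of Theorem~\ref{Conj:GapGeom} is slightly off: condition~2 there concerns the point on $\partial(\BipartiteGraph)$, not on $\partial_a(\DependencyGraph_\BipartiteGraph)$, and you do not yet know these coincide. The fix is immediate: because you saturate every edge ($u_e^i+u_e^{i'}=1$), a root-to-leaf descent shows $\bigcup_i \Event_i=\mathbb{I}^m$, so the Shearer boundary point lies outside $\mathcal{I}(\BipartiteGraph)$; combined with $\mathcal{I}_a\subseteq\mathcal{I}$, the two boundaries agree in that direction without appealing to the theorem at all. Second, for your recursion to stay in $(0,1]$ you need $p_\ell+p_p\le 1$ on the Shearer boundary; this does hold (it follows from your identity applied with the roles of $\ell$ and $p$ exchanged, or directly from the tree recursion), but it deserves a sentence since it is not one of Shearer's inequalities verbatim when $p$ has other leaf children.
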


Based on this theorem, we develop a simple algorithm to efficiently compute Shearer's bound for any dependency graph which is a tree.

In contrast, we obtain an opposite result for cyclic bigraphs, which considerably extends the only gap-existing example in literature \cite{kolipaka2011moser}.

\begin{restatable}{theorem}{cycle}
\label{cyclesaregapful}
Cyclic bigraphs are gapful.
\end{restatable}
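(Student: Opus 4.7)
The plan is to invoke Theorem \ref{Conj:GapGeom} with the symmetric direction $\vec{p}=(1,\ldots,1)$: I will exhibit a scalar $\lambda$ for which $\lambda\vec{p}\in\mathcal{I}(\BipartiteGraph)$ yet no exclusive variable-generated event system with event-variable graph $\BipartiteGraph$ has probability vector $\lambda\vec{p}$, which negates condition (\ref{interiorex}) and forces $\BipartiteGraph$ to be gapful. Concretely I will show that the symmetric boundary scalar $\lambda^{*}$ supplied by Theorem \ref{cycleboundary} satisfies $\lambda^{*}>1/4$, while any exclusive variable-generated event system with event-variable graph $\BipartiteGraph$ and every event of probability $\mu$ must satisfy $\mu\leq 1/4$; any $\lambda\in(1/4,\lambda^{*})$ then serves as the required witness.

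For the upper bound, I first observe that when $n\geq 4$ the base cycle $C_n$ is triangle-free, so every variable in $\BipartiteGraph$ is shared by at most two events, necessarily an adjacent pair $(A_i,A_{i+1})$. I aggregate the variables shared by $A_i,A_{i+1}$ into a single joint random object $U_i$; since $U_{i-1}$ and $U_i$ involve disjoint collections of underlying independent variables, they are themselves independent. Let $R_i$ be the set of values of $U_i$ on which $\Pr(A_i\mid U_i)>0$ and $L_{i+1}$ the analogous set for $A_{i+1}$; the disjointness $A_i\cap A_{i+1}=\emptyset$ forces $R_i\cap L_{i+1}$ to have measure zero, so $r_i+\ell_{i+1}\leq 1$ where $r_i:=\Pr(U_i\in R_i)$ and $\ell_{i+1}:=\Pr(U_i\in L_{i+1})$. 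Because $A_i\subseteq\{U_{i-1}\in L_i\}\cap\{U_i\in R_i\}$ up to a null set and $U_{i-1},U_i$ are independent, $\Pr(A_i)\leq\ell_i r_i$. Cyclically, $\prod_i\ell_i r_i=\prod_i r_i\ell_{i+1}\leq(1/4)^n$ by applying $xy\leq(x+y)^2/4$ to each factor, so $\min_i\ell_i r_i\leq 1/4$ and thus $\mu\leq 1/4$.

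For the lower bound $\lambda^{*}>1/4$, I analyze the recurrence of Theorem \ref{cycleboundary} with $p_i=1$: writing $t=\lambda$ it reads $b_1=t$, $b_k=t/(1-b_{k-1})$ for $2\leq k\leq n-1$, with boundary condition $b_{n-1}=1-t$. The map $g(b)=t/(1-b)$ is increasing for $b<1$, and its fixed-point equation $b^2-b+t=0$ admits real roots only for $t\leq 1/4$. When $t\leq 1/4$ a monotone induction shows that the iterates starting from $b_1=t$ stay below the lower fixed point $(1-\sqrt{1-4t})/2\leq 1/2$, hence $b_{n-1}\leq 1/2<3/4\leq 1-t$, violating the boundary condition. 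Therefore the minimum positive solution $\lambda^{*}$ must lie strictly above $1/4$, and any $\lambda\in(1/4,\lambda^{*})$ places $(\lambda,\ldots,\lambda)$ strictly inside $\mathcal{I}(\BipartiteGraph)$ while forbidding any exclusive realization with those probabilities, so Theorem \ref{Conj:GapGeom} yields the claim.

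I expect the main obstacle to be the measure-theoretic bookkeeping in the exclusive bound: an arbitrary $n$-cyclic bigraph may have multiple variables per cycle edge and arbitrary private variables per event, and I must justify that aggregating the shared variables and marginalizing over everything else preserves the inequalities $r_i+\ell_{i+1}\leq 1$ and $\Pr(A_i)\leq\ell_i r_i$ uniformly. A minor separate wrinkle is $n=3$: because $C_3$ is itself a triangle, a variable may be shared by all three events, so the support argument has to be supplemented by the direct bound $\sum_i\Pr(A_i)\leq 1$ coming from total pairwise disjointness, yielding $\mu\leq 1/3$ in this subcase, still strictly below the boundary $\lambda^{*}=(3-\sqrt{5})/2$ that Theorem \ref{cycleboundary} gives for triangles.
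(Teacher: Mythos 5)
Your proposal is correct, and it shares the paper's overall strategy --- work in the symmetric direction, apply Theorem \ref{Conj:GapGeom}, and show that exclusiveness caps the common probability at $1/4$ while the interior extends strictly beyond $1/4$ --- but the two halves are established differently. For the exclusiveness cap, your argument is essentially the paper's Claim 2 run in the contrapositive: the paper takes a putative exclusive cylinder set at $\frac14+\epsilon$, bounds each base by a product set $\Delta'_i\times\Delta_{i+1}$, and derives $\prod_i x_ix'_i>(1/4)^n$ hence some $x_i+x'_i>1$ and an overlap; you aggregate the shared variables into $U_i$, get $r_i+\ell_{i+1}\le 1$ from disjointness and $\Pr(\Event_i)\le \ell_i r_i$ from independence, and close the same cyclic product inequality. (Your worry about a triangle-shared variable when $n=3$ is moot: the paper's definition of a $3$-cyclic bigraph explicitly forbids $\cap_i\Neighbor(i)\neq\emptyset$, though your fallback bound $\mu\le 1/3<(3-\sqrt5)/2$ also works.) The genuine divergence is in the interior half. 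The paper never touches Theorem \ref{cycleboundary}: it exhibits an explicit exclusive cylinder set at $(\frac14,\dots,\frac14)$ whose union has measure $<1$, and invokes Lemma \ref{WorstPlacement} (exclusive placements maximize the union) to conclude that $(\frac14+\epsilon,\dots,\frac14+\epsilon)$ lies in $\mathcal{I}(\BipartiteGraph_n)$ for small $\epsilon$. You instead take the boundary recurrence of Theorem \ref{cycleboundary} with $p_i\equiv 1$ and show by a fixed-point/monotonicity argument that no solution exists with $t\le 1/4$, so $\lambda^*>1/4$; the iterates stay below the lower fixed point $(1-\sqrt{1-4t})/2\le 1/2$ and cannot meet the terminal condition $b_{n-1}=1-t\ge 3/4$. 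Both are valid. The paper's route is lighter --- it needs only \ref{Conj:GapGeom} and \ref{WorstPlacement}, not the cycle-breaking machinery of Section \ref{sec:cyclicboundary} that underlies Theorem \ref{cycleboundary} --- whereas your route buys an explicit quantitative separation ($\lambda^*>1/4$ versus the exclusive ceiling $1/4$) at the cost of depending on that heavier theorem; the measure-theoretic bookkeeping you flag (regular conditional probabilities on product spaces) is routine and not a real obstacle.
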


\LongVersion
Another interesting perspective of gaps is dependency-graph-oriented: we say that a graph $\DependencyGraph$ is a-gapful if there is a gapful bigraph whose base graph is $\DependencyGraph$, otherwise it's called a-gapless; $\DependencyGraph$ is said to be strongly a-gapful if any bigraph with $\DependencyGraph$ as base graph is gapful, otherwise it's called strongly a-gapless. Six years ago Kolipaka et al. \cite{kolipaka2011moser} proposed to characterize strongly a-gapful graphs, but the problem remains open. We provide an exact characterization for both concepts.

\begin{restatable}{theorem}{treegraph}
\label{treegraphgapless}
A graph is a-gapless if and only if it is a tree.
\end{restatable}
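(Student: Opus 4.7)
The $(\Leftarrow)$ direction is immediate from Theorem~\ref{dependencygaplessdecision}: if $G$ is a tree then every bigraph $\BipartiteGraph$ with $\DependencyGraph_\BipartiteGraph=G$ is by definition treelike, hence gapless, so $G$ is a-gapless.

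For $(\Rightarrow)$ I argue the contrapositive and exhibit, for each non-tree $G$, a single gapful bigraph $\BipartiteGraph$ with $\DependencyGraph_\BipartiteGraph=G$. Let $C$ be a shortest cycle of $G$, of length $k\geq 3$; by minimality $C$ is induced in $G$. Take $\BipartiteGraph$ with left part $V(G)$ and, for each edge $e=(u,v)\in E(G)$, a right vertex $x_e$ adjacent to $u$ and $v$; then $\DependencyGraph_\BipartiteGraph=G$, and the sub-bigraph on $V(C)$ together with $\{x_e:e\in E(C)\}$ is the $k$-cyclic bigraph $\BipartiteGraph_C$, gapful by Theorem~\ref{cyclesaregapful}. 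Pick $\vec q\in\mathcal{I}(\BipartiteGraph_C)\setminus\overline{\mathcal{I}_a(C)}$, a nonempty open set since both $\mathcal{I}(\BipartiteGraph_C)$ and $\mathcal{I}_a(C)$ are open and the former strictly contains the latter. Extend to $\vec p\in(0,1)^{V(G)}$ by $\vec p|_{V(C)}=\vec q$ and $p_v=\varepsilon$ for $v\notin V(C)$, with $\varepsilon>0$ to be chosen. That $\vec p\notin\mathcal{I}_a(G)$ follows by induced-subgraph monotonicity of abstract-LLL: given any event system realizing $(\vec q, C)$ with $\Pr(\bigcap_v\Neg{\Event_v})=0$, attach mutually independent Bernoulli$(\varepsilon)$ events on $V(G)\setminus V(C)$ to obtain an event system whose dependency graph is a spanning subgraph of $G$ (so $G$ itself is a valid dependency graph) and whose probability vector is $\vec p$, still with $\Pr(\bigcap\Neg{\Event_v})=0$.

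The remaining claim is $\vec p\in\mathcal{I}(\BipartiteGraph)$ for sufficiently small $\varepsilon$. For every variable-generated system on $\BipartiteGraph$ with probability $\vec p$,
\[
\Pr\!\Bigl(\bigcap_{v\in V(G)}\Neg{\Event_v}\Bigr)\;\geq\;\Pr\!\Bigl(\bigcap_{v\in V(C)}\Neg{\Event_v}\Bigr)-(|V(G)|-k)\varepsilon,
\]
so it suffices to show a uniform lower bound $\Pr(\bigcap_{v\in V(C)}\Neg{\Event_v})\geq c>0$. The main obstacle is that each $\Event_v$ with $v\in V(C)$ may also depend on ``pendant'' variables $x_e$ with $e=(v,w)\in E(G)\setminus E(C)$, $w\notin V(C)$, so the bigraph induced on the $V(C)$-events is $\BipartiteGraph_C$ augmented with degree-$1$ right vertices rather than $\BipartiteGraph_C$ itself. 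I plan to prove the following pendant-absorption lemma: adding a degree-$1$ right vertex to any bigraph does not change $\mathcal{I}(\cdot)$. The argument fixes all events but $\Event_v$, writes $\Pr(\bigcap\Neg{\Event_v})$ as a linear functional of the conditional success function $f_v(\mathcal{X}_v)=\Pr(\Event_v\mid\mathcal{X}_v)$ subject to $\mathbb{E}f_v=p_v$, and applies the Bauer maximum principle to conclude that the extremum is attained by an indicator $f_v=\mathbbm{1}[h_v\geq t_v]$, i.e.\ by a deterministic event on $\mathcal{X}_v$; iterating over pendants shows the minimum of $\Pr(\bigcap_{V(C)}\Neg{\Event_v})$ over pendant-augmented systems coincides with the minimum over genuine $\BipartiteGraph_C$-systems, which is strictly positive by $\vec q\in\mathcal{I}(\BipartiteGraph_C)$ and a compactness argument based on the discretization of Theorem~\ref{solutiontoboundary}. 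Choosing $\varepsilon<c/(|V(G)|-k)$ then gives $\vec p\in\mathcal{I}(\BipartiteGraph)\setminus\mathcal{I}_a(G)$, so $\BipartiteGraph$ is gapful and $G$ is a-gapful.
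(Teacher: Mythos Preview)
Your argument is correct, but it takes a different and more hands-on route than the paper. The paper's one-line proof invokes Theorems~\ref{dependencygaplessdecision} and~\ref{cyclesaregapful}, implicitly via Corollary~\ref{cor:Cyclegapful}: if $G$ has an induced cycle $C$, the edge-bigraph $\BipartiteGraph$ you construct \emph{contains} (in the paper's technical sense) the cyclic bigraph $\BipartiteGraph_C$, so $\BipartiteGraph$ is gapful by the reduction rules (Theorems~\ref{reductioneffect} and~\ref{fromgapfultogapful}), which are themselves proved through the exclusiveness criterion of Theorem~\ref{Conj:GapGeom}. You instead exhibit an explicit gap vector $\vec p$ and verify both conditions directly. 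Your pendant-absorption lemma is exactly the Delete-Variable rule of Theorem~\ref{reductioneffect} (the paper calls it ``trivial''), and your $\varepsilon$-extension with the union-bound estimate is a quantitative version of the inverse of Delete-Event. What you gain is a proof that bypasses Theorem~\ref{Conj:GapGeom} entirely; what you lose is the modularity of having reusable reduction rules.

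Two small points to tighten. First, your justification that $\mathcal{I}(\BipartiteGraph_C)\setminus\overline{\mathcal{I}_a(C)}$ is nonempty (``both open, strict containment'') is not a valid general argument for open sets, but you do not actually need the closure: step~3 only uses $\vec q\notin\mathcal{I}_a(C)$, which is precisely what gapfulness of $\BipartiteGraph_C$ gives. Second, the uniform positive lower bound $c$ is the content of Corollary~\ref{cor:Interiorhasworstcase} (attainment of the worst case in the interior), not Theorem~\ref{solutiontoboundary}; citing the corollary makes the compactness step immediate.
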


\begin{restatable}{theorem}{chordal}
\label{stronglyagapfuliffchordal}
A graph is strongly a-gapful if and only if it is chordal.
\end{restatable}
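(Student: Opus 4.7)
The plan is to prove the equivalence by invoking Theorem~\ref{Conj:GapGeom}, which reduces gap existence for $\BipartiteGraph$ in direction $\vec{p}$ to whether every boundary point $\lambda\vec{p}\in\partial(\BipartiteGraph)$ admits an exclusive (pairwise disjoint-or-independent) variable-generated event system with \EventVariable~$\BipartiteGraph$. Thus, arguing about strong a-gapfulness of $\DependencyGraph$ becomes arguing uniformly, over all bigraphs with base graph $\DependencyGraph$, about the existence or non-existence of such exclusive systems at the boundary.

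The first step handles the side where $\DependencyGraph$ contains an induced cycle $C_k$ of length $k\geq 4$. Given any bigraph $\BipartiteGraph$ with base graph $\DependencyGraph$, I would consider its sub-bigraph $\BipartiteGraph_C$ induced by the vertices of $C_k$; since $C_k$ is induced in $\DependencyGraph$, the base graph of $\BipartiteGraph_C$ is exactly $C_k$, and Theorem~\ref{cyclesaregapful} yields that $\BipartiteGraph_C$ is gapful. Using the reduction rules developed earlier in the paper, I then lift a boundary vector of $\BipartiteGraph_C$ witnessing the gap to one of $\BipartiteGraph$ by padding the non-cycle coordinates with probabilities small enough to admit independent realization; any putative exclusive realization on $\BipartiteGraph$ would restrict to an exclusive realization on $\BipartiteGraph_C$, contradicting gapfulness of the sub-bigraph.

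The second step handles the complementary side, where $\DependencyGraph$ is chordal, by exhibiting a single gapless bigraph $\BipartiteGraph$ with base graph $\DependencyGraph$. The natural candidate uses the clique-tree representation: introduce one variable $v_K$ per maximal clique $K$, and declare each event $A_u$ to depend on exactly those variables $v_K$ with $u\in K$. Adjacency in chordal $\DependencyGraph$ is equivalent to sharing a maximal clique, so the base graph of $\BipartiteGraph$ is $\DependencyGraph$. For every $\vec{p}\in\partial(\BipartiteGraph)$, construct an exclusive realization by partitioning the domain of each $v_K$ into regions indexed by the vertices of $K$, with $A_u$ firing iff each relevant $v_K$ lies in $u$'s region; events in a common clique are then disjoint via that clique's variable, while events sharing no clique use disjoint variable sets and hence are independent.

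The main obstacle I anticipate lies in this second step: showing that the region sizes can be chosen consistently across overlapping cliques to realize an arbitrary $\vec{p}\in\partial(\BipartiteGraph)$. Because $\Pr(A_u)$ factors as a product over the cliques containing $u$, matching a prescribed boundary vector becomes a non-trivial feasibility problem. I plan to resolve this by induction along a perfect elimination ordering of $\DependencyGraph$, using the running-intersection property of the clique tree to handle each simplicial vertex locally; the delicate point is carrying the construction all the way to the boundary rather than the interior, which I expect to require continuity and compactness arguments to extract a limit realization.
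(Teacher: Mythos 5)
Your outline follows essentially the same route as the paper: non-chordal graphs contain an induced cycle of length at least four, so every bigraph over them contains a cyclic sub-bigraph and is gapful by Theorem~\ref{cyclesaregapful} and the containment/reduction machinery (the paper routes this through the canonical bigraph $\BipartiteGraph_\DependencyGraph$ and Lemma~\ref{canonicalbigraphworst}, which says the canonical bigraph has the smallest interior, but your direct embedding into an arbitrary $\BipartiteGraph$ is an interchangeable variant); and for chordal graphs your clique-variable bigraph is exactly the paper's canonical bigraph, and your induction along a perfect elimination ordering is the paper's induction on a simplicial vertex in Lemma~\ref{chordalgapless}. One remark on the statement itself: as printed it reads ``strongly a-gapful iff chordal,'' but what both you and the paper actually prove is the opposite polarity (chordal $\Rightarrow$ strongly a-gap\emph{less}, non-chordal $\Rightarrow$ strongly a-gapful); since trees are chordal and treelike bigraphs are gapless, the printed statement must be a typo.

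The one step you explicitly leave open --- realizing an arbitrary boundary vector $\vec{p}$ exactly by an exclusive cylinder set --- is closed in the paper without any compactness or limit argument. The trick is a rescaling: remove a simplicial vertex $n$ lying in the single maximal clique $S$, replace the probabilities of its clique-mates $i\in S\setminus\{n\}$ by $p_i/(1-p_n)$, scale to reach the boundary of the smaller canonical bigraph, realize that vector exclusively by induction, shrink the clique variable's coordinate by the factor $1-p_n$ and re-attach $A_n$ as the slab where that coordinate exceeds $1-p_n$. The resulting set is exclusive and fills the cube, so Corollary~\ref{exclusivenonexterior} places its measure vector in $\Interior(\BipartiteGraph)\cup\partial(\BipartiteGraph)$, and Lemma~\ref{le:boundaryisunique} then forces it to equal $\vec{p}$. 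That rescaling-plus-uniqueness argument is the ingredient your plan is missing; with it, no limit extraction is needed.
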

\LongVersionEnd

\paragraph{Reduction method}  To discover more instances that have or have no gaps, we propose a set of reduction rules which allow us transforming a bigraph without changing the existence or nonexisence of a gap. 
We identify five basic operations. Three of them as well as their inverses preserve both gapful and gapless; the other two preserve gapful, while the inverses of the two preserve gapless. Applying these operations, we show that a bigraph is gapful if it contains a gapful one. This, together with Theorem \ref{cyclesaregapful}, intuitively means that Shearer's criterion is not tight for almost all cases of variable-LLL. Likewise, we show that combinatorial bigraphs $H_{n,m}$ are gapful if $m$ is small enough and are gapless if $m$ is large enough. 

\section{Probability Boundary of Variable-LLL}
This section aims at solving the \emph{VLLL problem}: given a bigraph $\BipartiteGraph$, determine all the vectors $\vec{p}$ such that $\Pr\left(\cap_{\Event\in \EventSet} \Neg{\Event} \right)>0$ for any variable-generated event system $\EventSet$ with \EventVariable ~$\BipartiteGraph$ and probability vector $\vec{p}$. 
Basically, we will transform the problem into a geometric one and solve it in the framework of Euclidean geometry. 

For conciseness of presentation, a variable-generated event system $\EventSet$ is said to conform with a bigraph $\BipartiteGraph$, denoted by $\EventSet\sim\BipartiteGraph$, if $\BipartiteGraph$ is an \EventVariable ~of $\EventSet$.

Throughout this section, we only consider bigraphs whose base graphs are connected. This restriction does not lose generality for the following reason. If a bigraph has disconnected  base graph, itself must also be disconnected and each component is again a bigraph. In this case, the interior of the original bigraph is exactly the direct product of the interiors of the component bigraphs.

\subsection{A Geometric Counterpart}
Now we formulate a geometric counterpart of the VLLL problem, called the GLLL problem. 
Consider the $m$-dimensional Euclidean space $\mathbb{R}^{m}$ endowed with Lebesgue measure $\mu$. 
Let $\Variable_i$ be the coordinate variable of the $i$-th dimension, $i\in [m]$. 
For any $S\subseteq [m]$, the $S$-\emph{unit cube}, denoted by $\mathbb{I}^S$, is defined to be the $|S|$-dimensional unit hypercube $[0,1]^{|S|}$ working as the domain range of the variables $\{X_i: i \in S\}$ such that for each $i \in S$, $X_i \in [0,1]$. 
When $S=[k]$ for some $k\leq m$, we simply write $\mathbb{I}^k$ for $\mathbb{I}^{[k]}$. A \emph{cylinde}r $\Event$ in $\mathbb{I}^m$ is a subset of the form $\EventB\times\mathbb{I}^S$, where $\EventB\subseteq \mathbb{I}^{[m]\setminus S}$ is called a base of $\Event$;
 define $\dim(\EventB)=[m]\setminus S$. 
Given a bigraph $\BipartiteGraph=([n], [m], E)$ and a set $\EventSet$ of cylinders $\Event_1,...,\Event_n$ in $\Interval^m$, we say that $\EventSet$ \emph{conforms with} $ \BipartiteGraph$, also denoted by $\EventSet\sim\BipartiteGraph$, if there are bases $\EventB_1,...,\EventB_n $ of $\Event_1,...,\Event_n$ such that $E=\{(i,j)\in[n]\times[m]:j\in \dim(\EventB_i)\}$. 
Now comes the \textbf{GLLL problem}: given bigraph $\BipartiteGraph$, determine all the vectors $\vec{p}$ such that $\mu(\cup_{\Event\in\EventSet}\Event)<1$ for any cylinder set $\EventSet\sim\BipartiteGraph$ with $\mu(\EventSet)=\vec{p}$.  

One can easily see that the VLLL problem is equivalent to the GLLL problem in the sense that they have the same solutions. Hence, the rest of the paper will be presented in the context of the GLLL problem. For ease understanding, the terms ``event" and ``cylinder" will be used interchangeably, and so will ``probability" and ``Lebesgue measure". The complementary of a cylinder $\Event$ in $\mathbb{I}^{[m]}$ is defined to be the cylinder $\Neg{\Event}=\mathbb{I}^{[m]}\setminus \Event$.

\subsection{A Sufficient and Necessary Criterion}

\begin{definition}[Interior]
The \emph{interior} of a bigraph $\BipartiteGraph$, denoted by $\mathcal{I}(\BipartiteGraph)$, is the set of vectors $\vec{p}$ on $(0,1)$ such that $\mu\left(\cap_{\Event\in \EventSet} \Neg{\Event} \right)>0$ for any cylinder set $\EventSet\sim \BipartiteGraph$ with $\mu(\EventSet) = \vec{p}$.
\end{definition}

\begin{definition}[Exterior]
The \emph{exterior} of a bigraph $\BipartiteGraph=([n], [m], E)$, denoted by $\mathcal{E}(\BipartiteGraph)$, is the set $(0,1]^n\setminus \mathcal{I}(\BipartiteGraph)$.
\end{definition}


\begin{definition}[Boundary]
The \emph{boundary} of a bigraph $\BipartiteGraph$, denoted by $\partial(\BipartiteGraph)$, is the set of vectors $\vec{p}$ on $(0,1]$ such that $(1-\epsilon)\vec{p}\in \Interior(\BipartiteGraph)$ and $(1+\epsilon)\vec{p}\notin \Interior(\BipartiteGraph)$ for any $\epsilon\in (0,1)$. Any $\vec{p}\in \partial(\BipartiteGraph)$ is called a \emph{boundary vector} of $\BipartiteGraph$.
\end{definition}

We can show that there is a boundary vector in every direction.
\begin{lemma}\label{le:probabilityboundexist}
Given a bigraph $\BipartiteGraph=([n], [m], E)$, for any $\vec{p}\in (0,1]^n$, there exists a unique $\lambda>0$ such that $\lambda\vec{p}\in\partial(\BipartiteGraph)$. 
\end{lemma}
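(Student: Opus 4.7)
The plan is to exploit a \emph{downward monotonicity} property of the interior along the ray from the origin through $\vec{p}$, combined with a lower bound from the asymmetric LLL (Theorem \ref{thm:asymmetric}) and an obvious upper bound. Fix $\vec{p}\in(0,1]^n$ and consider the set
\[
T=\{\lambda>0 : \lambda\vec{p}\in\mathcal{I}(\BipartiteGraph)\}.
\]
I will show that $T$ is a nonempty interval of the form $(0,\lambda_0)$ for some finite $\lambda_0>0$, and then identify $\lambda_0$ as the unique $\lambda$ making $\lambda\vec{p}$ a boundary vector.

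First I would establish the monotonicity lemma: if $\vec{r},\vec{r}'\in(0,1)^n$ with $\vec{r}'\le \vec{r}$ componentwise and $\vec{r}\in\mathcal{I}(\BipartiteGraph)$, then $\vec{r}'\in\mathcal{I}(\BipartiteGraph)$. The key geometric observation is that any cylinder set $\EventSet'\sim\BipartiteGraph$ with $\mu(\EventSet')=\vec{r}'$ can be \emph{enlarged} coordinate-by-coordinate into a cylinder set $\EventSet\sim\BipartiteGraph$ with $\mu(\EventSet)=\vec{r}$ and $\Event_i\supseteq\Event'_i$ for all $i$: for each $i$, write $\Event'_i=\EventB'_i\times\mathbb{I}^{[m]\setminus\dim(\EventB'_i)}$ and simply enlarge the base $\EventB'_i$ within $\mathbb{I}^{\dim(\EventB'_i)}$ to achieve the required measure (the underlying Lebesgue measure on $\mathbb{I}^{\dim(\EventB'_i)}$ is atomless, so any intermediate measure is attainable). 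Since $\cap_i\Neg{\Event_i}\subseteq\cap_i\Neg{\Event'_i}$, positivity is preserved, giving $\vec{r}'\in\mathcal{I}(\BipartiteGraph)$. Applied along the ray, this says $T$ is downward-closed in $\lambda$.

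Next I would show $T$ is nonempty and bounded. For the lower end, apply Theorem \ref{thm:asymmetric} to the base graph $\DependencyGraph_\BipartiteGraph$ with uniform weights $x_i=1/2$, say: for $\lambda$ small enough, $\lambda p_i\le \frac{1}{2}\prod_{j\in\Neighbor(i)}\frac{1}{2}$ for all $i$, hence $\lambda\vec{p}\in\mathcal{I}_a(\DependencyGraph_\BipartiteGraph)\subseteq\mathcal{I}(\BipartiteGraph)$ using the fact that $\DependencyGraph_\BipartiteGraph$ is a dependency graph of any conforming system. For the upper bound, take $\lambda^\ast=1/\max_i p_i$; for $\lambda>\lambda^\ast$ we have $\lambda\vec{p}\notin(0,1]^n$, while for $\lambda=\lambda^\ast$, letting $i^\ast$ achieve the max, $\mu(\Event_{i^\ast})=1$ forces $\mu(\cap\Neg{\Event})=0$ for any conforming $\EventSet$, so $\lambda^\ast\notin T$. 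Hence $T$ is a nonempty subset of $(0,\lambda^\ast)$.

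Finally, set $\lambda_0=\sup T$, which is finite and positive. Downward closure gives $(1-\epsilon)\lambda_0\vec{p}\in\mathcal{I}(\BipartiteGraph)$ for every $\epsilon\in(0,1)$. For the other side, if $(1+\epsilon)\lambda_0\le\lambda^\ast$ then $(1+\epsilon)\lambda_0\vec{p}\notin\mathcal{I}(\BipartiteGraph)$ by definition of the supremum; otherwise $(1+\epsilon)\lambda_0\vec{p}$ leaves $(0,1]^n$ and is vacuously outside $\mathcal{I}(\BipartiteGraph)$. Thus $\lambda_0\vec{p}\in\partial(\BipartiteGraph)$. Uniqueness is immediate: if $\lambda_1<\lambda_2$ both give boundary vectors, pick $\epsilon$ small so that $(1+\epsilon)\lambda_1<\lambda_2$; then $(1+\epsilon)\lambda_1\vec{p}=(1-\epsilon')\lambda_2\vec{p}$ must simultaneously lie outside $\mathcal{I}(\BipartiteGraph)$ (from $\lambda_1\in\partial$) and inside (from $\lambda_2\in\partial$), a contradiction. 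The main obstacle is the monotonicity step, specifically verifying that the enlargement procedure can be carried out without disturbing which variables each cylinder depends on; once that is in hand, the rest is bookkeeping with sup's.
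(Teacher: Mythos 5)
Your proposal is correct and follows essentially the same route as the paper: the paper defines $\Lambda=\{\lambda>0:\lambda\vec{p}\notin\mathcal{I}(\BipartiteGraph)\}$, shows it is nonempty (large $\lambda$) and excludes small $\lambda$ (via the union bound $\lambda\sum_i p_i<1$ rather than Theorem~\ref{thm:asymmetric}), and takes its infimum, which is the mirror image of your supremum of $T$. The only substantive difference is that you spell out the downward-monotonicity of the interior via the base-enlargement argument, a step the paper dismisses with ``it is easy to see''; your version is correct and makes explicit exactly what that claim rests on.
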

\LongVersion
\begin{proof}
Let $\Lambda\triangleq\{\lambda>0:\lambda\vec{p}\notin \Interior(\BipartiteGraph)\}$. 
If $\lambda$ is so large that $\lambda p_i\geq 1$ for some $i$, then $\lambda\in\Lambda$ since $\lambda\vec{p}\notin \Interior(\BipartiteGraph)$. 
If $\lambda$ is so small that $\lambda \sum_i p_i$ is smaller than $1$, then $\lambda\notin\Lambda$ because $\lambda\vec{p}\in \Interior(\BipartiteGraph)$.
Thus, $\Lambda$ is non-empty and its infimum, denoted by $\lambda_0$, must be positive. It is easy to see that $\lambda_0\vec{p}\in\partial(\mathcal{\BipartiteGraph})$. The uniqueness is trivial. 
\end{proof}
\LongVersionEnd

%
%
In the rest of this section, we propose a program to characterize boundary vectors. The cornerstone of the program is the observation that cylinders can be properly \textit{discretized} without changing the boundary.

Given an integer $d>0$, a cylinder $\Event\subseteq \mathbb{I}^m$ is said to be $d$-discrete in dimension $j$, if there is a partition of $\mathbb{I}^{\{j\}}$  into $d$ disjoint intervals $\Delta_1,...,\Delta_{d}$ such that $\Event=\cup_{k=1}^{d} S^\Event_k\times\Delta_k$ for some $S^\Event_k\subseteq\mathbb{I}^{[m]\setminus\{j\}}$, $k=1,\ldots, d$. A cylinder set $\EventSet$ is called $d$-discrete in dimension $j$, or discrete in dimension $j$ when $d$ is implicit, if so is every $\Event\in \EventSet$. 
Given a vector $\DiscreteVec=(d_1,...,d_m)$, a cylinder $\Event$ is called $\DiscreteVec$-discrete, if it is $d_j$-discrete in dimension $j$ for any $j\in[m]$.  Likewise, $\EventSet$ is called $\DiscreteVec$-discrete, or discrete when $\DiscreteVec$ is implicit, if so is every $\Event\in \EventSet$; then the vector $\DiscreteVec$ is called a discreteness degree of $\EventSet$.
%


Given two vectors $\vec{p}$ and $\vec{q}$, we say $\vec{p}\leq \vec{q}$ if the inequality holds entry-wise. Additionally, if the inequality is strict on at least one entry, we say that $\vec{p}< \vec{q}$. 

In the rest of this section, fix a bigraph $\BipartiteGraph=([n], [m], E)$ and a probability vector $\vec{p}\in \partial(\BipartiteGraph)$. Let $\vec{q}_\epsilon\triangleq \phi((1+\epsilon)\vec{p})$ for any real number $\epsilon>0$ and $\vec{d}\triangleq (d_1,...,d_m)$ with each $d_j$ being the degree of the vertex $j\in [m]$ in $\BipartiteGraph$.



The main results (Theorem \ref{thm:boundaryfills} and Theorem \ref{solutiontoboundary}) of this section present a discrete cylinder set for each probability vector on the boundary. As a byproduct, it is shown that the boundary lies in the exterior.
Following these theorems, there are two corollaries handling the discretization of interior and exterior respectively. 

The boundary is discretized in four steps, as shown in the coming four lemmas. First, we show that for any $\epsilon>0$, there is a discrete cylinder set whose measure vector lies in the exterior and is $\epsilon$-close to $\vec{p}$. Unfortunately, the discreteness degree of this cylinder set depends on $\epsilon$, and may be unbounded when $\epsilon$ tends to $0$. Second, we show that the set of cylinders can be chosen such that the discreteness degree is no more than $\DiscreteVec$. However, the measure vector may not be lower-bounded by $\vec{p}$, though it is still upper-bounded by $\vec{q}_\epsilon$. Third, with $\epsilon$ tending to $0$, a mathematic program and a calculus argument guarantee the existence of a $\DiscreteVec$-discrete cylinder set whose measure vector lies in the exterior and is upper-bounded by $\vec{p}$. Finally, we show that the measure vector of this cylinder set is exactly $\vec{p}$, which immediately leads to the main theorem.



The basic idea of proving the next lemma is to discretize cylinders dimension by dimension. To discretize the $j$-th dimension, the axis $\mathbb{I}^{\{j\}}$ is partitioned so that every cylinder varies little in each part, which naturally leads to an approximation (that is discrete in dimension $j$) to the origin cylinders. The partition is found by approximating an integral with a finite summation.

\begin{lemma}\label{le:finitediscrete}
For any $\epsilon>0$, there exists a discrete cylinder set $\EventSet\sim \BipartiteGraph$ such that $\vec{p}\leq\mu(\EventSet)\leq\vec{q}_\epsilon$ and $\mu(\cup_{\Event\in \EventSet} \Event)=1$.
\end{lemma}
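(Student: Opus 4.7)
The plan is to start from an arbitrary cylinder set conforming with $\BipartiteGraph$ whose measure vector lies between $\vec{p}$ and $\vec{q}_\epsilon$ and whose union already covers $\Interval^m$, and then convert it to a discrete cylinder set by a dimension-by-dimension procedure that only \emph{enlarges} each cylinder, so that full-measure covering is preserved while measures stay in $[\vec{p},\vec{q}_\epsilon]$. To produce the initial set, note that since $\vec{p}\in\partial(\BipartiteGraph)$ the vector $\vec{q}_{\epsilon/2}=\phi((1+\epsilon/2)\vec{p})$ belongs to $\mathcal{E}(\BipartiteGraph)$: if every coordinate is strictly below $1$ this is the definition of the boundary applied to $\epsilon/2$, and coordinates equal to $1$ are trivially realized by letting the corresponding cylinder be $\Interval^m$. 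Hence there is a cylinder set $\EventSet_0\sim\BipartiteGraph$ with $\mu(\EventSet_0)=\vec{q}_{\epsilon/2}$ and $\mu(\cup_{\Event\in\EventSet_0}\Event)=1$. Next I would \emph{open up} each cylinder: by outer regularity of Lebesgue measure, the base $\EventB_i\subseteq\Interval^{\dim(\EventB_i)}$ of each $\Event_i$ admits an open superset $\EventB'_i$ with $\mu(\EventB'_i\setminus\EventB_i)$ as small as desired, and then $\Event'_i:=\EventB'_i\times\Interval^{[m]\setminus\dim(\EventB_i)}$ is open in $\Interval^m$, contains $\Event_i$, and preserves the conformity with $\BipartiteGraph$.

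Now I would discretize dimension by dimension, $j=1,\dots,m$. At stage $j$, choose a partition $\Delta_1,\dots,\Delta_{d}$ of $\Interval^{\{j\}}$ and replace each current cylinder $\Event$ whose base involves dim $j$ by $\bigcup_{k}\bigl(\bigcup_{t\in\Delta_k}\Event|_{x_j=t}\bigr)\times\Delta_k$. This replacement is an enlargement, is $d$-discrete in dim $j$, and preserves discreteness established in earlier dimensions. The critical point is that because each $\Event'_i$ is open in $\Interval^m$, its cross-section $\Event'_i|_{x_j=t}$ is lower semicontinuous as a set-valued function of $t$; consequently the integral $\int_0^1\mu(\Event'_i|_{x_j=t})\,dt=\mu(\Event'_i)$ can be approximated to any precision by the Riemann-type finite sum $\sum_k|\Delta_k|\cdot\mu\bigl(\bigcup_{t\in\Delta_k}\Event'_i|_{x_j=t}\bigr)$, which is exactly the measure of the new cylinder. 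A direct check confirms that the cross-sections remain open in the remaining dimensions after each stage, so the same approximation argument drives every subsequent stage.

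Finally, by taking all enlargements and partition meshes small enough that the cumulative excess added to any cylinder is below $(\epsilon/2)\min_i p_i$, the resulting fully discrete cylinder set $\EventSet$ satisfies $q_{\epsilon/2,i}\le\mu(\tilde{\Event}_i)\le q_{\epsilon,i}$ coordinatewise, hence $p_i\le\mu(\tilde{\Event}_i)\le q_{\epsilon,i}$, while $\mu(\cup_{\tilde{\Event}\in\EventSet}\tilde{\Event})=1$ because every operation only enlarged cylinders. I expect the main technical obstacle to be the approximation step above: for an arbitrary measurable cylinder the union-of-cross-sections over a fine partition need not approach the corresponding cross-section measures as the mesh shrinks (think of sections of small but wildly scattered measure), which is precisely why the opening-up step is indispensable.
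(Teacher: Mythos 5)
Your overall architecture (start from a covering set at level $\vec{q}_{\epsilon/2}$, then discretize one dimension at a time while keeping the measure drift below $(\epsilon/2)\min_i p_i$) matches the paper, but your discretization mechanism is different from the paper's and, as stated, it contains a false step. The claim that for an \emph{open} cylinder the quantity $\sum_k|\Delta_k|\,\mu\bigl(\bigcup_{t\in\Delta_k}\Event'_i|_{x_j=t}\bigr)$ converges to $\int_0^1\mu(\Event'_i|_{x_j=t})\,dt$ as the mesh shrinks is not true. Take $m=2$, $j=1$, and let $V\subseteq[0,1]$ be an open dense set of measure $1/2$ (e.g.\ a union of small intervals around the rationals); then $\Event'=V\times(0,1)$ is an open cylinder of measure $1/2$, but for \emph{every} partition of $\Interval^{\{1\}}$ into intervals of positive length each $\Delta_k$ meets $V$, so $\bigcup_{t\in\Delta_k}\Event'|_{x_1=t}=(0,1)$ and your enlarged cylinder is the whole square, of measure $1$, independently of the mesh. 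Openness only makes $t\mapsto\mu(\Event'|_{x_j=t})$ lower semicontinuous, and your sum dominates the \emph{upper} Riemann sums of that function, which for a lower semicontinuous function need not converge to the integral --- this is exactly the ``small but wildly scattered sections'' pathology you flagged, and the opening-up step does not remove it (indeed the outer-regularity supersets are precisely the kind of dense open sets that exhibit it). So the enlargement-only strategy breaks at its critical step.

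The paper avoids this by not partitioning the $j$-th axis into intervals of position at all. It sets $f_i(x)=\mu(\EventB_i^{(x)})$ and partitions $[0,1]$ into the (generally non-interval, merely measurable) level sets $\Delta_{k_1,\dots,k_n}=\bigcap_i f_i^{-1}(\Gamma_{k_i})$, on each of which every $f_i$ varies by at most $\delta$. It then picks one representative $x_k$ in each piece with $\mu(\cup_i\EventB_i^{(x_k)})=1$ (possible because almost every cross-section of a full-measure union has full measure) and replaces the whole piece by that single frozen cross-section laid over an interval of the same total length. This is not an enlargement --- measures can move up or down by at most $\delta$ --- and the covering property is preserved by the choice of representative rather than by monotonicity. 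If you want to salvage your proof you need some device of this kind (grouping slices by the \emph{values} of their section measures, or otherwise controlling the union of sections over each cell), not just topological regularization of the cylinders.
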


\LongVersion
\begin{proof}
Since $\vec{p}\in\partial(\mathcal{E})$, there is a cylinder set $\EventSet'\sim \BipartiteGraph$ such that $\mu(\EventSet')=\vec{q}_{\epsilon/2}$ and $\mu(\cup_{\Event\in\EventSet'} \Event)=1$. 

We prove this lemma by showing the following claim. 

\textbf{Claim}: Suppose there is a cylinder set $\EventSetB\sim\BipartiteGraph$ such that $\mu(\cup_{\EventB\in\EventSetB} \EventB)=1$ and $\vec{q}_\sigma\leq\mu(\EventSetB)\leq\vec{q}_{\epsilon-\sigma}$ for some $0<\sigma<\epsilon/2$. Then there exists a discrete cylinder set $\EventSet\sim \BipartiteGraph$ such that $\vec{p}\leq\mu(\EventSet)\leq\vec{q}_\epsilon$ and $\mu(\cup_{\Event\in \EventSet} \Event)=1$.

\textbf{Proof of the claim}: Arbitrarily fix a cylinder set $\EventSetB=\{\EventB_1,...,\EventB_n\}$ satisfying the condition of the claim. Let $\mathcal{J}=\{j\in[m]: \EventSetB \textrm{ is discrete in dimension }j\}$. We prove the claim by induction on $|\mathcal{J}|$.

\textbf{Basis}: $|\mathcal{J}|=m$. The claim trivially holds.

\textbf{Hypothesis}: The claim holds when  $|\mathcal{J}|>l$.

\textbf{Induction}: Consider $|\mathcal{J}|=l<m$. 

Without loss of generality, assume that $1\notin \mathcal{J}$.

For each $i\in[n]$ and $x\in [0,1]$, let $\EventB_i^{(x)}= \EventB_i\cap  (\Variable_1=x)$. By Fubini's Theorem, $\EventB_i^{(x)}$ is Lebesgue measurable for almost all $x\in [0,1]$. Without loss of generality, assume that $\EventB_i^{(x)}$ is Lebesgue measurable for all $x\in [0,1]$. Let $f_i$ be the Lebesgue measurable function on $[0,1]$ such that $f_i(x)=\mu(\EventB_i^{(x)})$
. Then we have $\mu(\EventB_i)=\int_{[0,1]}f_i(x)d\mu$, where the integration is Lebesgue.

Let $\delta=\frac{\sigma}{2}p_0$ where $p_0=\min_{i\in [n]}p_i$.
For any integer $1\leq k\leq \lceil\frac{1}{\delta}\rceil$, consider intervals 
\begin{equation}
\begin{split}
\Gamma_k\triangleq\left\{
\begin{array}{ll}
((k-1)\delta,\min\{k\delta,1\}] & \textrm{ if } k>1\\ 
\left[0,\delta\right] & \textrm{ if } k=1
\end{array}\right.
\end{split}
\end{equation}

For each list of integers $1\leq k_1,...,k_n\leq \lceil\frac{1}{\delta}\rceil$, define a set $\Delta_{k_1,...,k_n}=\cap_{i\in [n]} f_{i}^{-1}(\Gamma_{k_i})$. Arbitrarily re-number the $\Delta$'s with non-zero measure into $\Delta_1,\cdots,\Delta_K$, where $K\leq \lceil\frac{1}{\delta}\rceil^n$.  We observe that:
\begin{enumerate}
\item $\cup_{i\in [K]}\Delta_i\subseteq [0,1]$ and $\mu(\cup_{i\in [K]}\Delta_i)=1$;
\item $\Delta_1,\cdots,\Delta_K$ are pairwise disjoint;
\item For any $k\in [K]$, any $x,x'\in \Delta_k$, and any $i\in [n]$, it holds that $|f_{i}(x)-f_{i}(x')|\leq \delta$. 
\end{enumerate}

Since $\mu(\cup_{i\in [n]}\EventB_i)=1$, for any $k\in[K]$, we can choose $x_k\in\Delta_k$ such that $\mu(\cup_{i\in[n]}\EventB_i^{(x_k)})=1$. 

Partition $\mathbb{I}^{\{1\}}$ into disjoint intervals $\Delta'_1,...,\Delta'_K$  such that $\mu(\Delta'_k)=\mu(\Delta_k)$ for any $k\in[K]$.

For each $i\in[n]$, define $\EventB'_i\triangleq \cup_{k\in[K]}(\EventB_i^{(x_k)}\times\Delta'_k)$. One can easily check that for any $i\in[n]$ and $j\in[m]$, $\EventB'_i$ is independent of $\Variable_j$ if so is $\EventB_i$. 
Then the cylinder set $\EventSetB'=\{\EventB'_1,...,\EventB'_n\}$ satisfies:

\begin{enumerate}
\item $\EventSetB'$ conforms with $\BipartiteGraph$;
\item $|\mu(\EventB_i)-\mu(\EventB'_i)|\leq \delta$ for any $i\in[n]$, so $\vec{q}_{\sigma/2}\leq\mu(\EventSetB')\leq\vec{q}_{\epsilon-\sigma/2}$;
\item Since $\cup_{k\in[K]}((\cup_{i\in[n]}\EventB_i^{(x_k)})\times\Delta'_k)=\cup_{i\in[n]}(\cup_{k\in[K]}(\EventB_i^{(x_k)}\times\Delta'_k))=\cup_{i\in [n]}\EventB'_i$, it holds that $\mu(\cup_{i\in [n]}\EventB'_i)=\sum_{k\in[K]}\mu(\cup_{i\in[n]}\EventB_i^{(x_k)})\mu(\Delta'_k)=1$.
\end{enumerate}

Now consider the set $\mathcal{J}'=\{j\in[m]: \EventSetB' \textrm{ is discrete in dimension }j\}$. The construction of $\EventSetB'$ indicates that $\mathcal{J}\cup\{1\}\subseteq \mathcal{J}'$. Hence, $|\mathcal{J}'|\geq l+1$, applying the induction hypothesis to $\EventSetB'$ \textbf{finishes the proof of the Claim}.

The lemma follows immediately.
\end{proof}
\LongVersionEnd

The basic idea of proving the next lemma is as follows. By Lemma \ref{le:finitediscrete}, we have a discrete cylinder set. The vector of the measures of the cylinders that depend on a common variable $\Variable_j$ turns out to be a convex combination of $d_j$-dimensional vectors. A simple combinatorial argument indicates that at most $d_j$ out of the latter vectors are enough to generate (also by convex combination) former one, which immediately implies the desired discreteness degree.

\begin{lemma}\label{le:boundeddiscrete}
For any $\epsilon>0$, there exists a $\vec{d}$-discrete cylinder set $\EventSet\sim\BipartiteGraph$ such that $ \mu(\EventSet)\leq\vec{q}_\epsilon$ and $\mu(\cup_{\Event\in \EventSet} \Event)=1$. 
\end{lemma}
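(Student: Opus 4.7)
The plan is to take the discrete cylinder set furnished by Lemma \ref{le:finitediscrete} (whose discreteness degree may be arbitrarily large) and then, dimension by dimension, thin it out so that in dimension $j$ only $d_j$ distinct ``profiles'' remain, while preserving the upper bound $\mu(\EventSet)\leq\vec{q}_\epsilon$ and the coverage $\mu(\cup_\Event \Event)=1$.

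Fix a dimension $j$ and suppose the current cylinder set $\EventSetB=\{\EventB_1,\ldots,\EventB_n\}$ is $D$-discrete in dimension $j$ via a partition $\Delta_1,\ldots,\Delta_D$ of $\mathbb{I}^{\{j\}}$, so that $\EventB_i=\cup_{k=1}^D S^{\EventB_i}_k\times\Delta_k$ for each of the $d_j$ indices $i$ with $\EventB_i\sim X_j$. Writing $\lambda_k=\mu(\Delta_k)$ and collecting the slice profiles $w_k=(\mu(S^{\EventB_i}_k))_{i:\EventB_i\sim X_j}\in[0,1]^{d_j}$, the measure vector of the $d_j$ events depending on $X_j$ equals $v^*=\sum_k\lambda_k w_k$, a point in the polytope $C=\mathrm{conv}(w_1,\ldots,w_D)\subseteq\mathbb{R}^{d_j}$. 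Events $\EventB_i\not\sim X_j$ have cross sections independent of $k$, so any reweighting of the $\Delta_k$'s leaves their measures untouched; only the $d_j$ events depending on $X_j$ need to be controlled.

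The key reduction step is to find a point $v^\dagger\in C$ with $v^\dagger\leq v^*$ entrywise that is a convex combination of at most $d_j$ of the $w_k$'s. I would do this by starting at $v^*$ and moving along the ray $\{v^*-t\mathbf{1}:t\geq 0\}$ until it exits the compact polytope $C$ at some parameter $t^*\geq 0$; setting $v^\dagger=v^*-t^*\mathbf{1}$, the exit point lies on a proper face of $C$ of dimension at most $d_j-1$, so Carath\'eodory's theorem applied to that face produces weights $\lambda'_k\geq 0$, supported on a set $K$ of size at most $d_j$, with $\sum_{k\in K}\lambda'_k=1$ and $\sum_{k\in K}\lambda'_k w_k=v^\dagger\leq v^*$. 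Replacing the partition by $\{\Delta'_k\}_{k\in K}$ with $\mu(\Delta'_k)=\lambda'_k$ and defining $\EventB'_i=\cup_{k\in K}S^{\EventB_i}_k\times\Delta'_k$ for $\EventB_i\sim X_j$ while leaving the other events alone yields a cylinder set that is $d_j$-discrete in dimension $j$, conforms with $\BipartiteGraph$ (after padding bases if necessary), covers $\mathbb{I}^m$ up to a null set (each retained slab inherits the full coverage of the original), and whose measure vector is entrywise at most $\vec{q}_\epsilon$.

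Crucially, the cross sections $S^{\EventB_i}_k$ used in the new construction are themselves discrete in the remaining dimensions, inherited from the discreteness of $\EventSetB$, so the operation does not disturb discreteness in dimensions other than $j$. Iterating this reduction for $j=1,2,\ldots,m$ therefore delivers the desired $\vec{d}$-discrete cylinder set. The main subtlety is squeezing the Carath\'eodory bound down to $d_j$ rather than the usual $d_j+1$; this is precisely what the ``push $v^*$ in direction $-\mathbf{1}$ until it exits $C$'' device accomplishes, by trading the equality constraint $\sum_k\lambda'_k w_k=v^*$ for the entrywise inequality $\leq v^*$ and thereby exploiting one extra degree of freedom.
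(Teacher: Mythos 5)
Your proposal is correct and follows essentially the same route as the paper: both express the measure vector of the events depending on $X_j$ as a convex combination of the slice profiles, push that point to the (relative) boundary of their convex hull along a coordinate-decreasing direction, and use the fact that the boundary has dimension at most $d_j-1$ to re-express it with at most $d_j$ profiles before reassembling the slabs. The only cosmetic difference is the direction of the push (the paper moves along the segment toward the origin, you move along $-\mathbf{1}$); both achieve the entrywise inequality $v^\dagger\leq v^*$ and the Carath\'eodory count of $d_j$.
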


\LongVersion
\begin{proof}
By Lemma \ref{le:finitediscrete}, there is a discrete cylinder set $\EventSet'=\{\Event'_1,...,\Event'_n\}\sim\BipartiteGraph$ such that $\mu(\EventSet')\leq\vec{q}_\epsilon$ and $\mu(\cup_{i\in[n]} \Event'_i)=1$. Let $\vec{q'}=\mu(\EventSet')$ and the discreteness degree of $\EventSet'$ be $(d'_1,...,d'_m)$. Now by induction on $l=|\{j\in [m]: \DiscreteDimen'_j > \DiscreteDimen_j\}|$, we show that the existence of such an $\EventSet'$ implies the existence of a desired $\EventSet$. 

\textbf{Basis}: If $l=0$, the lemma holds by letting $\EventSet=\EventSet'$.

\textbf{Hypothesis}: The lemma holds if $l\leq L$. 

\textbf{Induction}:  Consider the case $l=L+1$. Without loss of generality, assume $\DiscreteDimen'_1>\DiscreteDimen_1$. 

By the definition of discreteness, there is a partition of $\mathbb{I}^{\{1\}}$ into $d'_1$ disjoint measurable sets $\Delta_1,...,\Delta_{d'_1}$ such that $\Event'_i=\cup_{k=1}^{d'_1} S_{i,k}\times\Delta_k$ for any $i\in[n]$, where each $S_{i,k}\subseteq\mathbb{I}^{[m]\setminus\{1\}}$. Let $\mathcal{I}=\{i\in [n]: (i,1)\in E\}$. We know that $|\mathcal{I}|=d_1$. Since $\mu(\cup_{i\in[n]} \Event'_i)=1$, we have $\cup_{i\in[n]}S_{i,k}=\mathbb{I}^{[m]\setminus\{1\}}$ up to a set of measure zero, for any $1\leq k\leq d'_1$. 

%
%

Consider $\vec{\pi}=\vec{q'}|_{\mathcal{I}}$, which is a $d_1$-dimensional vector. Note that $\mu(\Event'_i)=\sum_{1\leq k\leq d'_1}\mu(S_{i,k})\delta_k$ for any $i\in\mathcal{I}$, where $\delta_k=\mu(\Delta_k)$. Hence $\vec{\pi}=\sum_{1\leq k\leq d'_1}\delta_k\vec{v}_k$ with each $\vec{v}_k=(\mu(S_{i,k}): i\in \mathcal{I})$ being a vector in the $d_1$-dimensional Euclidean space $\mathbb{R}^{\mathcal{I}}$. Since each $\delta_i\geq 0$ and $\sum_{1\leq i\leq \DiscreteDimen'_1}\delta_i=1$, from the perspective of geometry, $\vec{\pi}$ lies in the convex hull of $\vec{v}_1,\cdots \vec{v}_{\DiscreteDimen'_1}$. The segment between the origin and $\vec{\pi}$ must intersect with the boundary of the convex hull; let $\vec{u}$ be an intersection point. The boundary of the convex hull has a natural triangulation of dimension at most $\DiscreteDimen_1-1$. As a result, $\vec{u}$ must be located inside a simplex spanned by $K\leq\DiscreteDimen_1$ points among $\vec{v}_1,\cdots \vec{v}_{\DiscreteDimen'_1}$. 
Without loss of generality, assume that the $K$ points are $\vec{v}_1,\cdots \vec{v}_K$. Hence, there are $\lambda_1,...,\lambda_K>0$ such that $\vec{u}=\sum_{1\leq k\leq K}\lambda_k\vec{v}_k$ and $\sum_{1\leq k\leq K}\lambda_k=1$. 

For $i\in[n]$, define $\Event''_i=\cup_{1\leq k\leq K}S_{i,k}\times \Delta'_k$, where the disjoint intervals $\{\Delta'_1,...,\Delta'_K\}$ is an partition of $\Interval^{\{1\}}$ and $\mu(\Delta'_k)=\lambda_k$ for $1\leq k\leq K$. For $i\in [n]\setminus\mathcal{I}$, since $\Event'_i$  is independent of $\Variable_1$, $S_{i,k}$ does not depend on $k$, which in turn implies that $\Event''_i=S_{i,1}\times\Interval^{\{1\}}=\Event'_i$. Moreover, one can easily check that for any $i\in \mathcal{I}$ and $j\in[m]$, $\Event''_i$ is independent of $\Variable_j$ if so is $\Event'_i$.

Let  $\EventSet''=\{\Event''_1,...,\Event''_n\}$.
We have the following observations:
\begin{enumerate}
\item $\EventSet''$ conforms with $\BipartiteGraph$;
\item $\mu(\cup_{i\in[n]}\Event''_i)=\mu(\cup_{k\in[K]}(\cup_{i\in[n]}S_{i,k})\times\Delta'_k)=\sum_{k\in[K]}\mu(\cup_{i\in[n]}S_{i,k})\lambda_k=1$;
\item $\mu(\EventSet'')\leq \mu(\EventSet')\leq \vec{q}_\epsilon$.
\end{enumerate}
Denote by $(d''_1,...,d''_m)$ the discreteness degree of $\EventSet''$. The construction of $\EventSet''$ indicates that $d''_j\leq d'_j$ for $j>1$, and $d''_1=K\leq d_1$. It holds that $|\{j\in [m]: \DiscreteDimen''_i > \DiscreteDimen_i\}|\leq (L+1)-1=L$. Applying the induction hypothesis to $\EventSet''$ immediately finishes the proof.
\end{proof}
\LongVersionEnd

By Lemma \ref{le:boundeddiscrete}, for any small $\epsilon>0$, there is a $\DiscreteVec$-discrete cylinder set $\EventSet_\epsilon$ whose measure is upper bounded by $\vec{q}_\epsilon$. 
The next lemma claims that this is the case even if $\epsilon=0$. The basic idea is to show that as $\epsilon$ tends to 0, $\EventSet_\epsilon$ converges in some sense and the limit is a $\DiscreteVec$-discrete cylinder set. For this end, we establish an equivalence between the existence of a $\DiscreteVec$-discrete cylinder set and a mathematical program consisting of polynomial constraints. This equivalence, together with an argument based on the continuity of the constraints, ensures that a sequence of $\EventSet_\epsilon$ converges and the limit cylinder set is as desired. 

\begin{lemma}\label{le:boundaryfills}
There is a $\DiscreteVec$-discrete cylinder set $\EventSet\sim\BipartiteGraph$ such that $ \mu(\EventSet)\leq\vec{p}$ and $\mu(\cup_{\Event\in \EventSet} \Event)=1$.
\end{lemma}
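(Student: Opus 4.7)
The plan is to realize the claimed $\vec{d}$-discrete cylinder set as a limit of the approximations supplied by Lemma \ref{le:boundeddiscrete}. The key structural observation is that, once the partitions of all axes are taken common across events, a $\vec{d}$-discrete cylinder set conforming with $\BipartiteGraph$ is encoded by a pair $(x,C)$, where $x=(x_{j,k})_{j\in[m],\,k\in[d_j]}$ lists the widths of the partition of each axis and $C=(C_{i,k_1,\ldots,k_m})$ is a $\{0,1\}$-valued array indicating whether each box belongs to each event. This parameter space is the product of a compact set (the product of simplices $\sum_k x_{j,k}=1$, $x_{j,k}\ge 0$) and a finite set (the $\{0,1\}$-valued $C$'s), which is what will let us extract limits.

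I would then translate the three properties we want into the $(x,C)$ language: conformance with $\BipartiteGraph$ is the combinatorial condition that $C_{i,k_1,\ldots,k_m}$ does not depend on $k_j$ whenever $(i,j)\notin E$; the measure of the $i$-th cylinder is the polynomial $\mu(\Event_i)=\sum_{k_1,\ldots,k_m}\bigl(\prod_j x_{j,k_j}\bigr)C_{i,k_1,\ldots,k_m}$, which is continuous in $(x,C)$; and $\mu(\bigcup_i\Event_i)=1$ is equivalent to $\sum_i C_{i,k_1,\ldots,k_m}\ge 1$ for every tuple $(k_1,\ldots,k_m)$ with $\prod_j x_{j,k_j}>0$.

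Next I would fix a sequence $\epsilon_n\downarrow 0$ and apply Lemma \ref{le:boundeddiscrete} to obtain $\EventSet_n\sim\BipartiteGraph$ with $\mu(\EventSet_n)\le\vec{q}_{\epsilon_n}$ and $\mu(\bigcup_i\Event_{n,i})=1$. Tracing the inductive construction in Lemma \ref{le:boundeddiscrete}, each $\EventSet_n$ can be chosen with a single common partition of size at most $d_j$ per dimension, padded with zero-width parts to exactly $d_j$; write the corresponding parameters as $(x^{(n)},C^{(n)})$. By finiteness of the $C$-values, pass to a subsequence where $C^{(n)}\equiv C^*$, and by compactness of the $x$-space, a further subsequence gives $x^{(n)}\to x^*$. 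Let $\EventSet^*$ be the cylinder set defined by $(x^*,C^*)$. The conformance constraint, being purely combinatorial, is inherited from $C^*=C^{(n)}$. Continuity of the measure polynomial yields $\mu(\Event_i^*)=\lim_n\mu(\Event_{n,i})\le\lim_n q_{\epsilon_n,i}=p_i$. Finally, any box with $\prod_j x^*_{j,k_j}>0$ has $\prod_j x^{(n)}_{j,k_j}>0$ for all large $n$, whence $\sum_i C^*_{i,k_1,\ldots,k_m}=\sum_i C^{(n)}_{i,k_1,\ldots,k_m}\ge 1$ on such boxes, so $\mu(\bigcup_i\Event_i^*)=1$, as required.

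The main obstacle is ensuring that all approximating cylinder sets can be parameterized inside one fixed finite-dimensional compact space so that a limit argument makes sense; this needs a common partition per dimension of uniformly bounded size, which the convex-hull/simplex step in Lemma \ref{le:boundeddiscrete} luckily already produces (zero-width padding being harmless). The subtler point is that the coverage condition $\mu(\bigcup\Event)=1$ is not a closed constraint in $(x,C)$ in general, since it only requires $\sum_i C_i\ge 1$ on positive-measure boxes; it is nonetheless preserved in the limit because positive-measure boxes in $x^*$ are positive-measure boxes along the tail of the sequence, so the indicator constraint they must satisfy is inherited from $C^{(n)}$.
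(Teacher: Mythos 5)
Your proposal is correct and follows essentially the same route as the paper: encode each $\vec{d}$-discrete set from Lemma \ref{le:boundeddiscrete} by axis widths $x$ and $\{0,1\}$ indicators $C$, extract a convergent subsequence using finiteness of the $C$'s and compactness of the simplices, and pass the (continuous polynomial) measure constraints to the limit. The only minor difference is that you impose the covering constraint $\sum_i C_{i,k_1,\ldots,k_m}\ge 1$ only on positive-measure boxes and argue it survives in the limit there, whereas the paper's Condition $Q$ asserts it for all index tuples; both handle the zero-width-box subtlety adequately.
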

\LongVersion
\begin{proof}
Arbitrarily choose a sequence of positive real numbers $\epsilon_l$ such that $\lim\limits_{l\rightarrow\infty}\epsilon_l=0$. 

Now arbitrarily fix an $l> 0$. Define the vector $\vec{q}^{(l)}=(q^{(l)}_1,...,q^{(l)}_n)\triangleq \vec{q}_{\epsilon_l}$. 
By Lemma \ref{le:boundeddiscrete}, there exists a $\vec{d}$-discrete cylinder set $\EventSet^{(l)}=\{\Event^{(l)}_1,...,\Event^{(l)}_n\}\sim\BipartiteGraph$ such that $ \mu(\EventSet^{(l)})\leq\vec{q}^{(l)}$ and $\mu(\cup_{i\in [n]} \Event^{(l)}_i)=1$. Let $\vec{r}^{(l)}=(r^{(l)}_1,...,r^{(l)}_n)\triangleq\mu(\EventSet^{(l)})$. 
The existence of $\EventSet^{(l)}$ is equivalent to the following condition $Q$.

\textbf{Condition} $Q$: there are $x^{(l)}_{jk}\in[0,1]$ for $ j\in[m], k\in[d_j]$ and $C^{(l)}_{i,k_1,k_2,...,k_m}\in\{0,1\}$ for $i\in [n], k_j\in [d_j],j\in[m]$ such that 
%
%
\begin{enumerate}
\item $\sum_{i\in [n]}C^{(l)}_{i,k_1,k_2,...,k_m}\geq 1$ for any $k_j\in [d_j],j\in[m]$;
\item For any $i\in [n]$ and $j\in[m]$, if $(i,j)\notin E$, then $C^{(l)}_{i,k_1,k_2,...,k_m}$ is independent of $k_j$;
\item $\sum_{k_1\in[\DiscreteDimen_1],...,k_m\in[\DiscreteDimen_m]}(\prod_{j\in [m]}x^{(l)}_{jk_j})C^{(l)}_{i,k_1,k_2,...,k_m}=r^{(l)}_i\leq q^{(l)}_i$ for $i\in [n]$, and
\item $\sum_{k\in \DiscreteDimen_j}x^{(l)}_{jk}=1$ for $j\in[m]$.
\end{enumerate}
Intuitively, $\vec{d}$-discreteness means that each dimension $j$ is partitioned into $d_j$ segments, with $x^{(l)}_{jk}$ standing for the length of the $k$-th segment. This leads to a partition of the unit cube $\Interval^m$ into sub-cubes, where the $(k_1,k_2,...,k_m)$-th subcube has measure  $\prod_{j\in [m]}x^{(l)}_{jk_j}$. The variable $C^{(l)}_{i,k_1,k_2,...,k_m}$ indicates whether the $(k_1,k_2,...,k_m)$-th subcube is in the cylinder $\Event^{(l)}_i$. Then the equivalence trivially holds.

Note that each $C^{(l)}_{i,k_1,k_2,...,k_m}$ is binary and $i,k_1,k_2,...,k_m$ all range on finite sets that do not depend on $l$. Hence, there is a subsequence of $l$ such that for any fixed $i,k_1,k_2,...,k_m$, $C^{(l)}_{i,k_1,k_2,...,k_m}$ is a constant denoted by $C_{i,k_1,k_2,...,k_m}$. Without loss of generality, assume that the subsequence is the whole sequence. 

Arbitrarily fix $ j\in[m]$ and $k\in[d_j]$. Then the sequence $\{x^{(l)}_{jk}\}_{l\geq 1}$ must have a convergent subsequence, because the interval $[0,1]$ is a compact topological space. Again without loss of generality, assume that the whole sequence $\{x^{(l)}_{jk}\}_{l\geq 1}$ converges. Denote the limit by $x_{jk}$.

Likewise, without loss of generality, we can assume that the sequence $\{r^{(l)}_i\}_{l\geq 1}$ converges  for any $i\in [n]$. Let $r_i=\lim\limits_{l\rightarrow \infty} r^{(l)}_i$. Obviously, $r_i\leq \lim\limits_{l\rightarrow \infty} q^{(l)}_i=p_i$ for any $i\in [n]$.

Letting $l$ approaches infinity, we can see that $x_{jk}$ with $ j\in[m], k\in[d_j]$ and $C_{i,k_1,k_2,...,k_m}$ with $i\in [n], k_j\in [d_j],j\in[m]$ satisfy the condition $Q$. As a result, there is a $\DiscreteVec$-discrete cylinder set $\EventSet\sim\BipartiteGraph$ such that $ \mu(\EventSet)=(r_1,...,r_n)\leq\vec{p}$ and $\mu(\cup_{\Event\in \EventSet} \Event)=1$. 
\end{proof}

\begin{remark}
The equivalence mentioned in the proof of Lemma \ref{le:boundaryfills} implies a necessary and sufficient condition for deciding the interior of $\BipartiteGraph$. Namely, 
a vector $\vec{q}=(q_1,...,q_n)\in \mathcal{E}(\BipartiteGraph)$ if and only if there are $x_{jk}\in[0,1]$ for $ j\in[m], k\in[d_j]$ and $C_{i,k_1,k_2,...,k_m}\in\{0,1\}$ for $i\in [n], k_j\in [d_j],j\in[m]$ such that 

\begin{enumerate}
\item $\sum_{i\in [n]}C_{i,k_1,k_2,...,k_m}\geq 1$ for any $k_j\in [d_j],j\in[m]$;
\item For any $i\in [n]$ and $j\in[m]$, if $(i,j)\notin E$, then $C_{i,k_1,k_2,...,k_m}$ is independent of $k_j$;
\item $\sum_{k_1\in[\DiscreteDimen_1],...,k_m\in[\DiscreteDimen_m]}(\prod_{j\in [m]}x_{jk_j})C_{i,k_1,k_2,...,k_m}\leq q_i$ for $i\in [n]$, and
\item $\sum_{k\in [\DiscreteDimen_j]}x_{jk}=1$ for $j\in[m]$.
\end{enumerate}
\end{remark}
\LongVersionEnd

For the cylinder set $\EventSet$ obtained in Lemma \ref{le:boundaryfills}, the next lemma claims that $\mu(\EventSet)= \vec{p}$. 
Roughly speaking, if there are $\Event_i$ and $\Event_j$ both depending on $\Variable_l$ and satisfying that $\mu(\Event_i)<p_i$ and $\mu(\Event_j)=p_j$, we can remove a thin slice (perpendicular to the axis $\Variable_l$) from $\Event_j$ and attach it to $\Event_i$. 
After this operation, both $\mu(\Event_i)<p_i$ and $\mu(\Event_j)<p_j$, no extra dependency is brought about, and the whole cube remains been filled up. Iteratively, we can finally get $\mu(\Event_k)<p_k$ for any $k$, which is contradictory to the assumption that $\vec{p}$ is a boundary vector.

\begin{lemma}\label{le:boundaryisunique}
If there is a cylinder set $\EventSet\sim\BipartiteGraph$ such that $ \mu(\EventSet)\leq\vec{p}$ and $\mu(\cup_{\Event\in \EventSet} \Event)=1$, then $\mu(\EventSet)= \vec{p}$.
\end{lemma}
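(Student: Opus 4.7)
The plan is proof by contradiction: assume $\mu(\Event_{i_0})<p_{i_0}$ for some $i_0\in[n]$, and exhibit a cylinder set $\EventSet^{*}\sim\BipartiteGraph$ that covers the whole cube but satisfies $\mu(\Event^*_k)<p_k$ strictly for every $k$. Given such an $\EventSet^*$, pick $\epsilon>0$ with $\mu(\Event^*_k)\leq(1-\epsilon)p_k$ for all $k$, and enlarge each base within its prescribed dimension set to have measure exactly $(1-\epsilon)p_k$; this produces a cylinder set witnessing $(1-\epsilon)\vec{p}\in \mathcal{E}(\BipartiteGraph)$, contradicting $\vec{p}\in \partial(\BipartiteGraph)$, which forces $(1-\epsilon)\vec{p}\in \Interior(\BipartiteGraph)$.

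The core tool is a \emph{slice transfer} that takes $\Event_i,\Event_j\in \EventSet$ sharing a variable, with $\mu(\Event_j)=p_j$ and $\mu(\Event_i)<p_i$, and returns conforming replacements $\Event'_i,\Event'_j$ such that $\Event'_i\cup \Event'_j\supseteq \Event_i\cup \Event_j$, $\mu(\Event'_j)<p_j$, and $\mu(\Event'_i)<p_i$. Let $W := \dim(\Event_i)\cap \dim(\Event_j)$, and write $\Event_j=B_j\times \Interval^{[m]\setminus \dim(\Event_j)}$. By Fubini, the set $F\subseteq \Interval^W$ of $x_W$ whose fiber $B_j^{(x_W)}$ has positive measure is itself of positive measure; pick $T_W\subseteq F$ with $0<\mu(T_W)<\epsilon$ for any prescribed $\epsilon>0$. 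Let $D := B_j\cap (T_W\times \Interval^{\dim(\Event_j)\setminus W})$, so $\mu(D)>0$ and $\mu(\pi_W(D))\leq \mu(T_W)<\epsilon$, where $\pi_W$ denotes the coordinate projection. Define $\Event'_j := (B_j\setminus D)\times \Interval^{[m]\setminus \dim(\Event_j)}$ and $\Event'_i := \Event_i\cup(\pi_W(D)\times \Interval^{[m]\setminus W})$. Conformity holds since the added piece depends only on $W\subseteq \dim(\Event_i)$; coverage is preserved since any point of $D\times \Interval^{[m]\setminus \dim(\Event_j)}$ has $W$-coordinate in $\pi_W(D)$, hence lies in the added piece; and $\mu(\Event'_i)\leq \mu(\Event_i)+\epsilon<p_i$ once $\epsilon<p_i-\mu(\Event_i)$.

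Equipped with the slice transfer, I propagate slack across the base graph. Let $S:=\{k:\mu(\Event_k)<p_k\}\ni i_0$. While $S\neq[n]$, pick $j\in[n]\setminus S$ and a shortest path $j=k_0,k_1,\ldots,k_s$ in $\DependencyGraph_\BipartiteGraph$ to some $k_s\in S$. By minimality, $k_0,\ldots,k_{s-1}\notin S$, and consecutive vertices share a variable by definition of the base graph. Applying slice transfers in reverse order $r=s-1,s-2,\ldots,0$, each time moving mass from the full $\Event_{k_r}$ into the already-slack $\Event_{k_{r+1}}$ with the transferred measure kept strictly below the target's current slack, causes $k_0,\ldots,k_{s-1}$ to join $S$. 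After at most $n$ iterations $S=[n]$, producing the promised $\EventSet^*$.

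The main obstacle is the design of the slice transfer: naively moving the raw slab $\Event_j\cap \{\Variable_l\in[a,b]\}$ into $\Event_i$ would cause $\Event_i$ to depend on variables in $\dim(\Event_j)\setminus \dim(\Event_i)$, creating new bigraph edges. Projecting to $W$ before adding avoids this, but then the added measure is governed by $\mu(\pi_W(D))$ rather than $\mu(D)$; this is exactly why $D$ must be carved out of a thin $W$-slab $T_W\times \Interval^{\dim(\Event_j)\setminus W}$ via the Fubini argument above. Once the slice transfer is set up, the iteration and the final scaling/enlargement step are routine bookkeeping.
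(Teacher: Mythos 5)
Your proposal is correct and follows essentially the same strategy as the paper's proof: assume some coordinate has strict slack, repeatedly transfer a thin slice from a saturated event to an adjacent slack event (preserving conformity and coverage) until every coordinate is strictly slack, then scale to contradict $(1-\epsilon)\vec{p}\in\Interior(\BipartiteGraph)$. The only difference is cosmetic: the paper picks a single shared variable $l_0$ and moves an interval slab $[x,x+\delta/2]\times\Interval^{[m]\setminus\{l_0\}}$ wholesale, which avoids your Fubini/projection detour, and it organizes the iteration as an induction on the number of slack coordinates rather than along shortest paths, but these are equivalent.
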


\LongVersion
\begin{proof}
First of all, we prove the following claim:

\textbf{Claim}: Suppose there exists a cylinder set $\EventSet\sim\BipartiteGraph$ such that $ \mu(\EventSet)<\vec{p}$ and $\mu(\cup_{\Event\in \EventSet} \Event)=1$. Then there are $\epsilon>0$ and a cylinder set $\EventSetB\sim \BipartiteGraph$ satisfying $\mu(\EventSetB)\leq (1-\epsilon)\vec{p}$ and $\mu(\cup_{\EventB\in\EventSetB} \EventB)=1$.

\textbf{Proof of the Claim}: Arbitrarily choose $\EventSet\sim\BipartiteGraph$ such that $ \vec{r}\triangleq\mu(\EventSet)<\vec{p}$ and $\mu(\cup_{i\in[n]} \Event_i)=1$. Assume $\EventSet=\{\Event_1,...,\Event_n\}$, $\vec{p}=(p_1,...,p_n)$, $\vec{r}=(r_1,...,r_n)$. Let $\Delta(\vec{r},\vec{p})=|\{i\in [n]: r_i<p_i\}|$. We proceed by induction on $\Delta(\vec{r},\vec{p})$.

\textbf{Basis}: $\Delta(\vec{r},\vec{p})=n$. Choose $\epsilon>0$ such that $\vec{r}\leq (1-\epsilon)\vec{p}$. The claim trivially holds by letting $\EventSetB=\EventSet$.

\textbf{Hypothesis}: The claim holds for any $\Delta(\vec{r},\vec{p})>K$.

\textbf{Induction}: Consider the case $\Delta(\vec{r},\vec{p})=K<n$. Choose $i,j\in L(\BipartiteGraph)$ such that $r_i<p_i$, $r_j=p_j$, and $\Neighbor_\BipartiteGraph(i)\cap \Neighbor_\BipartiteGraph(j)\neq \emptyset$.
Such $i,j$ exist due to the assumption that the base graph of $\BipartiteGraph$ is connected. 

Let $\delta=\min\{p_i-r_i,p_j\}$. Arbitrarily choose $l_0\in \Neighbor_\BipartiteGraph(i)\cap \Neighbor_\BipartiteGraph(j)\subseteq [m]$. Let $D_{x}$ be the cylinder $[x,x+\frac{\delta}{2}]\times\Interval^{[m]\setminus\{l_0\}}\subset \Interval^m$, where $0\leq x\leq 1-\frac{\delta}{2}$. Since $\delta\leq p_j$, there must be some $x$ such that $0<\mu(D_{x}\cap \Event_j)<p_j$. Fix such an $x$. Define $\Event'_j = \Event_j\setminus D_x$ and $\Event'_i=\Event_i\cup D_x$. Consider $\EventSet'=\{\Event'_1,...,\Event'_n\}$ where $\Event'_k=\Event_k$ for $k\in[n]\setminus \{i,j\}$. Let $\vec{r}'=(r'_1,...,r'_n)\triangleq \mu(\EventSet')$. We observe that:

\begin{enumerate}
\item $r'_i\leq r_i+\frac{\delta}{2}<p_i$ and $0< p_j-\frac{\delta}{2}\leq r'_j=p_j-\mu(D_{x}\cap \Event_j)<p_j$, so $\vec{r}'<\vec{p}$;
\item $\EventSet'\sim \BipartiteGraph$, since $\EventSet$ conforms with $\BipartiteGraph$ and $\Event'_k$ does not depend on $\Variable_l$ for any $(k,l)\notin E$
, and
\item $\mu(\cup_{i\in[n]}\Event'_i)=1$ because $\mu(\cup_{i\in[n]}\Event'_i)\geq \mu(\cup_{i\in[n]}\Event_i))=1$.
\end{enumerate}

Note that $\Delta(\vec{r}',\vec{p})=\Delta(\vec{r},\vec{p})+1>K$. Applying the induction hypothesis to $\EventSet'$, we \textbf{finish the proof of the Claim}. 

Now we get back to prove the lemma. Suppose for contradiction that there is a cylinder set $\EventSet\sim\BipartiteGraph$ such that $ \mu(\EventSet)<\vec{p}$ and $\mu(\cup_{\Event\in \EventSet} \Event)=1$. By the Claim,  there are $\epsilon>0$ and a cylinder set $\EventSetB\sim \BipartiteGraph$ satisfying $\mu(\EventSetB)\leq (1-\epsilon)\vec{p}$ and $\mu(\cup_{\EventB\in\EventSetB} \EventB)=1$. We reach a contradiction since $\vec{p}\in\partial(\BipartiteGraph)$.
\end{proof}
\LongVersionEnd

\ShortVersion
Our main theorem immediately follows from Lemma \ref{le:boundaryfills} and Lemma \ref{le:boundaryisunique}. 
\ShortVersionEnd

\begin{theorem}\label{thm:boundaryfills}
Given a bigraph $\BipartiteGraph=([n], [m], E)$ and $\vec{p}\in \partial(\BipartiteGraph)$, let $\vec{d}=(d_1,...,d_m)$ where $d_j$ is the degree of the vertex $j\in R(\BipartiteGraph)$. Then there is a $\vec{d}$-discrete cylinder set $\EventSet\sim\BipartiteGraph$ such that $ \mu(\EventSet)=\vec{p}$ and $\mu(\cup_{\Event\in \EventSet} \Event)=1$.
\end{theorem}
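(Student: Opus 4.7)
The plan is simply to assemble the two preceding lemmas: Lemma \ref{le:boundaryfills} already delivers a $\vec{d}$-discrete cylinder set $\EventSet\sim\BipartiteGraph$ with $\mu(\cup_{\Event\in\EventSet}\Event)=1$ and measure vector $\mu(\EventSet)\le\vec{p}$, and Lemma \ref{le:boundaryisunique} then upgrades the entry-wise inequality to an equality. Since $\vec{p}\in\partial(\BipartiteGraph)$ is in the hypothesis and the target cylinder set is promised to be $\vec{d}$-discrete, no additional construction is required; the whole content of the theorem is already packaged by the previous four lemmas.

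Concretely, I would first invoke Lemma \ref{le:boundaryfills} to produce a $\vec{d}$-discrete cylinder set $\EventSet=\{\Event_1,\ldots,\Event_n\}\sim\BipartiteGraph$ with $\mu(\EventSet)\le\vec{p}$ and $\mu(\cup_{\Event\in\EventSet}\Event)=1$. The existence of such a set is the genuine technical core: Lemma \ref{le:finitediscrete} obtains a (possibly very fine) discrete approximation by partitioning each axis according to the level sets of the fiber-measure functions $f_i(x)=\mu(\Event_i^{(x)})$; Lemma \ref{le:boundeddiscrete} then trims the discreteness degree down to $\vec{d}$ via a convex-hull/triangulation argument, observing that the vector of measures of cylinders touching dimension $j$ is a convex combination lying in a simplex of dimension at most $d_j-1$; and Lemma \ref{le:boundaryfills} pushes $\epsilon\to 0$ by a compactness/Bolzano--Weierstrass argument on the finitely many bounded parameters $\{x_{jk}\}$ and the finitely many $\{0,1\}$-valued indicators $\{C_{i,k_1,\ldots,k_m}\}$, extracting a convergent subsequence whose limit realizes the desired cylinder set.

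Second, I would apply Lemma \ref{le:boundaryisunique} to this $\EventSet$. Suppose for contradiction that $\mu(\EventSet)<\vec{p}$ on some coordinate; then the lemma's mass-redistribution trick — moving a thin slab along a shared variable $\Variable_{l_0}$ from an event $\Event_j$ saturating $p_j$ to an event $\Event_i$ with $\mu(\Event_i)<p_i$ — strictly decreases the measure of some coordinate without increasing any other, while preserving conformity with $\BipartiteGraph$ and the equality $\mu(\cup_{\Event\in\EventSet}\Event)=1$. Iterating until every coordinate is strict yields a cylinder set with measure at most $(1-\epsilon)\vec{p}$ whose union still fills $\Interval^m$, contradicting $\vec{p}\in\partial(\BipartiteGraph)$. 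Hence $\mu(\EventSet)=\vec{p}$, completing the proof.

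The hard part is entirely absorbed into Lemmas \ref{le:boundeddiscrete} and \ref{le:boundaryfills}: the convex-hull reduction that caps discreteness at the degree $d_j$ and the subsequent compactness argument for passing to the limit. Once those are in hand, the final theorem is a one-line composition, with Lemma \ref{le:boundaryisunique} playing the essentially cosmetic role of converting $\le$ into $=$ by exploiting the boundary property of $\vec{p}$.
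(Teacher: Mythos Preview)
Your proposal is correct and matches the paper's own proof exactly: the theorem is derived immediately from Lemma \ref{le:boundaryfills} (producing the $\vec{d}$-discrete cylinder set with $\mu(\EventSet)\le\vec{p}$ and full union) together with Lemma \ref{le:boundaryisunique} (forcing $\mu(\EventSet)=\vec{p}$). Your additional summary of how Lemmas \ref{le:finitediscrete}--\ref{le:boundaryfills} feed into this is accurate but not needed for the proof itself.
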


\LongVersion
\begin{proof}
This immediately follows from Lemma \ref{le:boundaryfills} and Lemma \ref{le:boundaryisunique}.
\end{proof}
\LongVersionEnd

Theorem \ref{thm:boundaryfills} and Lemma \ref{le:boundaryisunique} essentially give a \textbf{necessary and sufficient condition for deciding the boundary}: $\vec{p}$ is a boundary vector if and only if it is a minimal probability vector that allows a cylinder set as in Theorem \ref{thm:boundaryfills}. Due to discreteness, such cylinders have only finitely many forms, so their existence can be checked at least by the exhaustive method. In this sense, not only can we decide boundary vectors, but also \textbf{constructively} find the ``worst-case" cylinders (i.e., the measure of the union is maximized). The method is as in Theorem \ref{solutiontoboundary}. 

\boundary*

Given a solution to the program, $\mathbb{I}^m$ is partitioned into subcubes by cutting every axis $X_j$ into $d_j$ intervals of length $x_{jk_j}$, $k_j\in [d_j]$. For each $i\in[n]$, let $\Event_i$ be the union of the subcubes numbered by $(k_1,k_2,...,k_m)$ with $C_{i,k_1,k_2,...,k_m}=1$. Then $\EventSet=\{\Event_1,...,\Event_n\}$ satisfies the requirement of Theorem \ref{thm:boundaryfills}. 

By Theorem \ref{thm:boundaryfills}, for $\vec{p}\in \partial(\BipartiteGraph)$, the worst set of cylinders can be $\vec{d}$-discrete. We will generalize the result to non-boundary vectors. When $\vec{p}$ is in the interior of $\BipartiteGraph$, the basic idea of the next corollary is to add an extra cylinder to the original set of cylinders so that their union has measure $1$. By minimizing the extra cylinder, the union of the original cylinders should be maximized. 
Then the discreteness degree follows from Theorem \ref{thm:boundaryfills}.

\begin{corollary}\label{cor:Interiorhasworstcase}
Given a bigraph $\BipartiteGraph=([n], [m], E)$ and $\vec{p}\in \mathcal{I}(\BipartiteGraph)$, define $\vec{d}=(d_1,...,d_m)$ where $d_j$ is the degree of the vertex $j\in R(\BipartiteGraph)$. Let $\vec{d}'=(d_1 +1,...,d_m +1)$. Then there is a $\vec{d}'$-discrete cylinder set $\EventSetB=\argmax_{\EventSet\sim\BipartiteGraph,\mu(\EventSet)=\vec{p}} \mu(\cup_{\Event\in \EventSet } \Event)$.
\end{corollary}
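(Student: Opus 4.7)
The plan is to convert the interior problem on $\BipartiteGraph$ into a boundary problem on a suitably augmented bigraph $\BipartiteGraph'$, after which Theorem \ref{thm:boundaryfills} supplies the desired $\vec{d}'$-discrete structure. Specifically, I would form $\BipartiteGraph' = ([n+1], [m], E')$ with $E' = E \cup \{(n+1, j): j \in [m]\}$, introducing one extra left vertex adjacent to every right vertex. The right-vertex degree in $\BipartiteGraph'$ is $d_j + 1$ for every $j$, so the right-degree vector of $\BipartiteGraph'$ is exactly the $\vec{d}'$ of the corollary.

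Set $u^* = \sup\{\mu(\cup_{\Event\in\EventSet}\Event) : \EventSet \sim \BipartiteGraph, \mu(\EventSet) = \vec{p}\}$ and $\lambda^* = 1 - u^*$. The central claim is that $(\vec{p}, \lambda^*) \in \partial(\BipartiteGraph')$. The driving observation is that a cylinder set $\EventSet' \sim \BipartiteGraph'$ with measures $(\alpha\vec{p}, \alpha\lambda^*)$ achieves $\mu(\cup_{\Event\in\EventSet'}\Event) = 1$ if and only if there exists $\EventSet \sim \BipartiteGraph$ with $\mu(\EventSet) = \alpha\vec{p}$ and $\mu(\cup_{i\leq n}\Event_i) \geq 1 - \alpha\lambda^*$, since the auxiliary cylinder $\Event_{n+1}$ depends on all $m$ variables and may be chosen to cover the complement of $\cup_{i\leq n}\Event_i$ whenever that complement has measure at most $\alpha\lambda^*$. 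Writing $u^*(\alpha)$ for the analogous sup at scaling $\alpha$, one verifies that $u^*(\alpha)$ is nondecreasing in $\alpha$ (each $\Event_i$ may be enlarged while preserving conformance with $\BipartiteGraph$), while $1 - \alpha\lambda^*$ is strictly decreasing in $\alpha$ when $\lambda^* > 0$. Combined with $u^*(1) = 1 - \lambda^*$, this yields $u^*(\alpha) \leq u^*(1) < 1 - \alpha\lambda^*$ for $\alpha < 1$ and $u^*(\alpha) \geq u^*(1) > 1 - \alpha\lambda^*$ for $\alpha > 1$, placing $(\vec{p}, \lambda^*)$ precisely on $\partial(\BipartiteGraph')$.

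Applying Theorem \ref{thm:boundaryfills} to $\BipartiteGraph'$ at this boundary vector yields a $\vec{d}'$-discrete cylinder set $\EventSet^* \sim \BipartiteGraph'$ with $\mu(\EventSet^*) = (\vec{p}, \lambda^*)$ and $\mu(\cup_{\Event\in\EventSet^*}\Event) = 1$. Discarding the auxiliary cylinder $\Event^*_{n+1}$ gives $\EventSetB := \EventSet^* \setminus \{\Event^*_{n+1}\}$, which inherits $\vec{d}'$-discreteness, conforms with $\BipartiteGraph$ (the restriction of $\BipartiteGraph'$ to left vertices $[n]$), and satisfies $\mu(\EventSetB) = \vec{p}$. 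Moreover, $\mu(\cup_{\Event\in\EventSetB}\Event) \geq 1 - \mu(\Event^*_{n+1}) = 1 - \lambda^* = u^*$, while the definition of $u^*$ supplies the reverse inequality, so $\EventSetB \in \argmax_{\EventSet \sim \BipartiteGraph, \mu(\EventSet) = \vec{p}} \mu(\cup_{\Event\in\EventSet}\Event)$.

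The main obstacle is ensuring $\lambda^* > 0$ (equivalently $u^* < 1$), so that $(\vec{p}, \lambda^*) \in (0,1]^{n+1}$ and Theorem \ref{thm:boundaryfills} legitimately applies. The hypothesis $\vec{p} \in \mathcal{I}(\BipartiteGraph)$ guarantees that every fixed admissible $\EventSet$ has $\mu(\cup_{\Event\in\EventSet}\Event) = 1 - \mu(\cap_{\Event\in\EventSet}\Neg{\Event}) < 1$, yet the supremum could a priori equal $1$ without being attained. I would close this gap by restricting the sup to $\vec{d}'$-discrete cylinder sets; their parameter space is the compact region carved out by the program of Theorem \ref{solutiontoboundary}, the union measure is continuous in those parameters, so the restricted sup is attained; a straightforward adaptation of the machinery of Lemmas \ref{le:finitediscrete}--\ref{le:boundaryfills} shows this restricted sup coincides with $u^*$, and attainment together with the interior hypothesis forces $u^* < 1$ as needed.
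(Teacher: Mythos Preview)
Your approach is essentially the paper's: augment to $\BipartiteGraph'=([n+1],[m],E')$ by adding a left vertex adjacent to every right vertex, verify that $(\vec{p},1-u^*)\in\partial(\BipartiteGraph')$, apply Theorem~\ref{thm:boundaryfills} there, and discard the auxiliary cylinder. The paper rules out $u^*=1$ by the same compactness route you sketch (working in $\BipartiteGraph'$ with $(\vec{p},\epsilon)$ and invoking the proof of Lemma~\ref{le:boundaryfills}), and its two-facts boundary check is the same content as your monotonicity argument for $u^*(\alpha)$ versus $1-\alpha\lambda^*$.
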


\LongVersion
\begin{proof}
Let $\xi= \sup_{\EventSet\sim\BipartiteGraph,\mu(\EventSet)=\vec{p}}\mu(\cup_{\Event\in \EventSet}\Event)$. Suppose $\xi<1$. Define a bigraph $\BipartiteGraph'=([n+1], [m], E')$ where $E'=E\cup\{(n+1,j):j\in[m]\}$. Let $\vec{p}'\in(0,1]^{n+1}$ be such that $p'_i=p_i$ for $1\leq i\leq n$ and $p'_{n+1}=1-\xi$. 

Arbitrarily choose $\epsilon>0$ and $0<\delta<\epsilon (1-\xi)$. We have two facts:

\begin{enumerate}
\item There is a cylinder set $\EventSet'\sim\BipartiteGraph'$ such that $ \mu(\EventSet')\leq(1+\epsilon)\vec{p}'$ and $\mu(\cup_{\Event\in \EventSet'} \Event)=1$. The reason lies in two aspects. On the one hand, by the definition of $\xi$, there is a cylinder set $\EventSet\sim\BipartiteGraph$ satisfying $ \mu(\EventSet)=\vec{p}$ and $\mu(\cup_{\Event\in \EventSet} \Event)\geq \xi-\delta$. On the other hand, let $\Event_{n+1}$ be an arbitrary cylinder such that $\mu(\Event_{n+1})=(1+\epsilon)(1-\xi)$ and $\cap_{\Event\in \EventSet} \Neg{\Event}\subseteq\Event_{n+1}$. It is easy to check that $\EventSet'=\EventSet\cup \{\Event_{n+1}\}$ is the desired cylinder set.
%

\item $\mu(\cup_{\Event\in \EventSet'} \Event)<1$ for any cylinder set $\EventSet'\sim\BipartiteGraph'$ with $\mu(\EventSet')=(1-\epsilon)\vec{p}'$. To show this, arbitrarily choose 
$\EventSet'=\{\Event_1,...,\Event_{n+1}\}\sim\BipartiteGraph'$ with $\mu(\EventSet')=(1-\epsilon)\vec{p}'$. Then $\EventSet=\EventSet'\setminus \{\Event_{n+1}\}$ conforms with $\BipartiteGraph$ and $\mu(\EventSet)=(1-\epsilon)\vec{p}$. By the definition of $\xi$, $\mu(\cup_{\Event\in \EventSet} \Event)\leq \xi$. We further have $\mu(\cup_{\Event\in \EventSet'} \Event)\leq \mu(\cup_{\Event\in \EventSet} \Event)+\mu(\Event_{n+1})=\xi+(1-\epsilon)(1-\xi)<1$.
\end{enumerate}

As a result, $\vec{p}'\in\partial(\BipartiteGraph')$. By Theorem \ref{thm:boundaryfills}, there is a $\vec{d}'$-discrete cylinder set $\EventSet'=\{\Event_1,...,\Event_{n+1}\}\sim\BipartiteGraph'$ such that  $\mu(\EventSet')=\vec{p}'$ and $\mu(\cup_{\Event\in \EventSet'} \Event)=1$. Again, $\EventSet=\EventSet'\setminus \{\Event_{n+1}\}$ conforms with $\BipartiteGraph$ and $\mu(\EventSet)=\vec{p}$. Note that $1=\mu(\cup_{\Event\in \EventSet'} \Event)\leq \mu(\cup_{\Event\in \EventSet} \Event)+\mu(\Event_{n+1})\leq \xi+1-\xi=1$, so 
$\mu(\cup_{\Event\in \EventSet} \Event)=1-\mu(\Event_{n+1})=\xi$.

Now deal with the case $\xi=1$. Let $\BipartiteGraph'$ be as defined above. Define vector $\vec{p}^{(\epsilon)}\triangleq (p_1,...,p_n, \epsilon)$ for any $\epsilon>0$. Using an argument like in the first fact mentioned above, we know that there is a cylinder set $\EventSet^{(\epsilon)}\sim\BipartiteGraph'$ such that $ \mu(\EventSet^{(\epsilon)})\leq \vec{p}^{(\epsilon)}$ and $\mu(\cup_{\Event\in \EventSet^{(\epsilon)}} \Event)=1$. Following the proof of Lemma \ref{le:boundaryfills}, we know that $\vec{p}=\lim_{\epsilon\rightarrow 0}\vec{p}^{(\epsilon)}$ lies in the exterior of $\BipartiteGraph$, contradictory to the assumption that $\vec{p}\in \mathcal{I}(\BipartiteGraph)$. As a result, it is impossible that $\xi=1$. The proof ends.
\end{proof}
\LongVersionEnd

The next corollary indicates that for $\vec{p}\in \mathcal{E}(\BipartiteGraph)$, the discreteness degree is also small. The basic idea is opposite to that of proving Corollary \ref{cor:Interiorhasworstcase}. Some events and/or a part of one are \emph{removed} so that the remaining events exactly fill the cube. Then the rest events are discretized according to Theorem \ref{thm:boundaryfills}. Finally, a slight refinement of the discretization also discretizes the removed events.
\begin{corollary}\label{cor:exteriorfewcuts}
Given a bigraph $\BipartiteGraph=([n], [m], E)$ and $\vec{p}\in \mathcal{E}(\BipartiteGraph)$, define $\vec{d}=(d_1,...,d_m)$ where $d_j$ is the degree of the vertex $j\in R(\BipartiteGraph)$. There is a $\tilde{\vec{d}}$-discrete cylinder set $\EventSet\sim\BipartiteGraph$ such that $ \mu(\EventSet)=\vec{p}$ and $\mu(\cup_{\Event\in \EventSet} \Event)=1$, where $\tilde{d}_{j_0}=d_{j_0}+1$ for some $j_0\in [m]$ and $\tilde{d}_j=d_j$ for $j\neq j_0$.
\end{corollary}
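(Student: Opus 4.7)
The plan is to mirror the proof of Corollary~\ref{cor:Interiorhasworstcase} ``in reverse'', as the preceding paragraph indicates. In that earlier result one pads $\vec{p}$ with an extra event to reach the boundary of an augmented bigraph; here I will instead shrink a single event of a realizing cylinder set so as to push $\vec{p}$ down to $\partial(\BipartiteGraph)$ itself, then apply Theorem~\ref{thm:boundaryfills} to discretize, and finally restore the shrinkage by refining exactly one dimension, thereby bumping the discreteness by $1$ in only one coordinate.

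First I will identify $i_0\in[n]$ and $t^*\in(0,p_{i_0}]$ such that $\vec{p}^*=(p_1,\ldots,p_{i_0-1},t^*,p_{i_0+1},\ldots,p_n)\in\partial(\BipartiteGraph)$. For each index $i$, Lemma~\ref{le:probabilityboundexist} assigns to every $t$ a unique scaling $\lambda_i(t)$ with $\lambda_i(t)\cdot(p_1,\ldots,t,\ldots,p_n)\in\partial(\BipartiteGraph)$, varying continuously with $t$; since $\vec{p}\in\mathcal{E}(\BipartiteGraph)$, $\lambda_i(p_i)\leq 1$. For a suitable choice $i=i_0$, one shows that $\lambda_{i_0}(t)>1$ when $t$ is close to $0^+$ (intuitively, event $i_0$ is essential for covering the cube, so the degenerate direction lies in $\mathcal{I}(\BipartiteGraph)$); the intermediate value theorem then yields $t^*$ with $\lambda_{i_0}(t^*)=1$, i.e., $\vec{p}^*\in\partial(\BipartiteGraph)$. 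Applying Theorem~\ref{thm:boundaryfills} at $\vec{p}^*$ produces a $\vec{d}$-discrete cylinder set $\EventSet^{\circ}\sim\BipartiteGraph$ with $\mu(\EventSet^{\circ})=\vec{p}^*$ and $\mu(\cup_{\Event^{\circ}\in\EventSet^{\circ}}\Event^{\circ})=1$; in particular $\mu(\Event^{\circ}_i)=p_i$ for all $i\neq i_0$, while $\mu(\Event^{\circ}_{i_0})=t^*$.

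Next I will pick any $j_0\in\Neighbor_\BipartiteGraph(i_0)$ (available because $\Event^{\circ}_{i_0}$ depends on at least one variable) and split a suitably-chosen interval of the $j_0$-axis partition of $\EventSet^{\circ}$, say the $k_0^*$-th of length $x_{j_0,k_0^*}$, into sub-intervals of lengths $\alpha$ and $x_{j_0,k_0^*}-\alpha$. Letting $M_{k_0^*}$ denote the measure of those subcubes in the $k_0^*$-th $j_0$-slice which lie outside $\Event^{\circ}_{i_0}$, I choose $\alpha=(p_{i_0}-t^*)/M_{k_0^*}$, so that adding the $\alpha$-portion of those subcubes to $\Event^{\circ}_{i_0}$ lifts its measure exactly to $p_{i_0}$; all other events extend to the refined partition unchanged (the indicator $C^{(i)}$ is copied on both new sub-slices, preserving measure). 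The resulting cylinder set $\EventSet\sim\BipartiteGraph$ is $\tilde{\vec{d}}$-discrete with $\tilde{d}_{j_0}=d_{j_0}+1$ and $\tilde{d}_j=d_j$ for $j\neq j_0$, satisfies $\mu(\EventSet)=\vec{p}$, and has $\mu(\cup\EventSet)\geq \mu(\cup\EventSet^{\circ})=1$, as required.

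The hard part will be the existence argument in Step~1. When every event is individually removable (so $(p_1,\ldots,0,\ldots,p_n)\in\mathcal{E}(\BipartiteGraph)$ for all $i$) the IVT argument degenerates; in this regime $\vec{p}$ lies sufficiently deep in $\mathcal{E}(\BipartiteGraph)$ that a $\vec{d}$-discrete cylinder set realizing $\vec{p}$ itself can be produced directly, by adapting Lemmas~\ref{le:finitediscrete} and~\ref{le:boundeddiscrete} from boundary to exterior vectors and then taking the limit as in Lemma~\ref{le:boundaryfills}, after which $\vec{d}$-discreteness trivially implies $\tilde{\vec{d}}$-discreteness. A secondary technical point is that the $\EventSet^{\circ}$ delivered by Theorem~\ref{thm:boundaryfills} must be further constrained, using the freedom in its construction via Theorem~\ref{solutiontoboundary}, so that some slice satisfies $x_{j_0,k_0^*}\cdot M_{k_0^*}\geq p_{i_0}-t^*$; a pigeonhole argument over the $d_{j_0}$ slices, combined with a rearrangement of the $C$-assignment so as to concentrate the missing volume in one slice, ensures such a $k_0^*$ exists.
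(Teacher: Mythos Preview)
Your Step~1/Step~2 plan is essentially the paper's Case~2: shrink one event's probability to hit $\partial(\BipartiteGraph)$, apply Theorem~\ref{thm:boundaryfills}, and then enlarge that event by refining one coordinate. The paper, however, wraps this inside an \emph{induction on $n$}: if deleting event $n$ leaves $\vec{p}'=(p_1,\ldots,p_{n-1})$ still in $\mathcal{E}(\BipartiteGraph')$ (your ``degenerate'' case), it recurses on $\BipartiteGraph'$ and then tacks on $A_n=\mathbb{I}^{[m]\setminus\{m\}}\times[0,p_n]$ as a slab; the bookkeeping works because the degree of the chosen dimension $m$ dropped by one when $n$ was removed, absorbing the extra cut. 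This induction is exactly what your proposal is missing.

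Your fallback for the degenerate case has a genuine gap. Adapting Lemmas~\ref{le:finitediscrete}--\ref{le:boundaryfills} to an exterior vector yields a $\vec{d}$-discrete set with $\mu(\EventSet)\le\vec{p}$ and $\mu(\cup\EventSet)=1$, but you need $\mu(\EventSet)=\vec{p}$. The upgrade from $\le$ to $=$ is Lemma~\ref{le:boundaryisunique}, which \emph{uses} $\vec{p}\in\partial(\BipartiteGraph)$: it argues that if some $\mu(A_i)<p_i$ one could push mass around to get all inequalities strict, contradicting minimality on the boundary. For a point strictly inside $\mathcal{E}(\BipartiteGraph)$ there is no such contradiction, so the lemma does not apply and you are left with the same ``enlarge some events without exceeding $\tilde{\vec d}$'' problem you started with. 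The induction on $n$ is the clean way out.

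Your ``secondary technical point'' is also a non-issue if you enlarge the right way. Instead of trying to find a slice with $x_{j_0,k_0^*}M_{k_0^*}\ge p_{i_0}-t^*$, simply set $A_{i_0}=A_{i_0}^{\circ}\cup\big(\mathbb{I}^{[m]\setminus\{j_0\}}\times[0,x^*]\big)$; the map $x^*\mapsto \mu(A_{i_0})$ is continuous from $t^*$ (at $x^*=0$) to $1$ (at $x^*=1$), so some $x^*$ hits $p_{i_0}$. Since $x^*$ lies in one interval of the existing $d_{j_0}$-partition, this splits exactly one interval, giving discreteness $d_{j_0}+1$ in dimension $j_0$ without any pigeonhole or rearrangement argument. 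This is precisely what the paper does (and omits the details of) in its Case~2.
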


\LongVersion
\begin{proof}
We prove by induction on $n$. 

\textbf{Basis: }$n=1$. It trivially holds.

\textbf{Hypothesis: } The lemma holds whenever $n<N$.

\textbf{Induction: } Consider $n=N$. Define $\BipartiteGraph'=([n-1], [m], E')$ to be an induced subgraph of $\BipartiteGraph$ , and $\vec{p}'=(p_1,...,p_{n-1})$. Now we proceed case by case.

\textbf{Case 1}: $\vec{p}'\in \mathcal{E}(\BipartiteGraph')$. By the induction hypothesis, there is a $\tilde{\vec{d}}'$-discrete cylinder set $\EventSet'\sim\BipartiteGraph'$ such that $ \mu(\EventSet')=\vec{p}'$ and $\mu(\cup_{\Event\in \EventSet'} \Event)=1$, where $\tilde{d}'_{j_0}=d'_{j_0}+1$ for some $j_0\in [m]$, $\tilde{d}'_j=d'_j$ for $j\neq j_0$, and each $d'_j=|\{i\in[n-1]: (i,j)\in E'\}|$. 

Without loss of generality, assume that $(n,m)\in E$. The discreteness of $\EventSet'$ in dimension $m$ means that $\mathbb{I}^{\{m\}}$ is partitioned into $\tilde{d}'_m$ disjoint intervals. Now refine the partition into $\tilde{d}'_m +1$ intervals such that the union of some intervals is $[0,p_n]$. Let $\EventSet=\EventSet'\cup\{\Event_n\}$, where $A_n=\Interval^{m-1}\times [0,p_n]$. Then $\mu(\EventSet)=\vec{p}$ and $\mu(\cup_{\Event\in \EventSet} \Event)=1$.

As to the discreteness, obviously $\EventSet$ is $(\tilde{d}'_1,...,\tilde{d}'_{m-1},\tilde{d}'_m +1)$-discrete. If $j_0=m$, then $\tilde{d}'_j=d'_j\leq d_j$ for $j\leq m-1$, and $\tilde{d}'_m+1=(d'_m+1)+1= d_m+1$
. If $j_0\neq m$, then $\tilde{d}'_j=d'_j\leq d_j$ for $j\notin\{j_0,m\}$, $\tilde{d}'_{j_0}=d'_{j_0}+1\leq d_{j_0} +1$, and $\tilde{d}'_m +1=d'_m +1= d_m$. As a result, let $\tilde{\vec{d}}=(d_1,...,d_{j_0-1},d_{j_0}+1,d_{j_0+1},...,d_m)$, and we always have that $\EventSet$ is $\tilde{\vec{d}}$-discrete.


\textbf{Case 2}: $\vec{p}'\in\mathcal{I}(\BipartiteGraph')$. Define $\vec{p}''=(p_1,...,p_{n-1},p''_n)\in (0,1]^n$, where $0<p''_n\leq p_n$ is chosen such that $\vec{p}''\in \partial(\BipartiteGraph)$. By Theorem \ref{thm:boundaryfills}, there is a $\vec{d}$-discrete cylinder set $\EventSet''\sim\BipartiteGraph$ such that $\mu(\EventSet'')=\vec{p}''$ and $\mu(\cup_{\Event\in \EventSet''} \Event)=1$. Again, without loss of generality, assume that $(n,m)\in E$. Then as in Case 1, the discreteness of $\EventSet''$ implies a partition of $\mathbb{I}^{\{m\}}$. We likewise refine that partition and construct the desired cylinder set $\EventSet$. The detail is omitted.
\end{proof}
\LongVersionEnd


\begin{remark}
The above theorems and corollaries mean that given a bigraph and a vector in $(0,1]^n$, the worst case cylinders can be discretized. More importantly, the discreteness degree is determined by the bigraph only.
\end{remark}

The discreteness degrees mentioned in Theorems \ref{thm:boundaryfills}, \ref{solutiontoboundary} and Corollaries \ref{cor:Interiorhasworstcase}, \ref{cor:exteriorfewcuts} are tight in general.  For example, consider the complete bigraph $\BipartiteGraph=([n], [1], E)$. For any $\vec{p}\in (0,1]^n$,  $\vec{p}\in\mathcal{I}(\BipartiteGraph)$ if and only if $\sum_{i\in[n]} p_i<1$, while $\vec{p}\in\partial(\BipartiteGraph)$ if and only if $\sum_{i\in[n]} p_i=1$. One can easily check that the discreteness degrees in the Theorems and Corollaries are the smallest possible for this example.

\section{Breaking cycles}\label{sec:cyclicboundary}
In this section, we compute the boundary of cyclic bigraphs. Roughly speaking, a cyclic bigraph models the variable-generated system of events where events are located on a cycle and neighbors (and only neighbors) depend on common variables. 
Note that the only gapful bigraph reported in the literature is $4$-cyclic \cite{kolipaka2011moser}. 


\begin{definition}[Cyclic bigraph]
A bigraph $\BipartiteGraph$ is said to be $n$-cyclic if the base graph $G_{\BipartiteGraph}$ is a cycle of length $n$. When $n=3$, it additionally requires $\cap_{i\in L(\BipartiteGraph)}\Neighbor_\BipartiteGraph(i)=\emptyset$. In case of no ambiguity, an $n$-cyclic bigraph is simply called a cyclic bigraph.
\end{definition}

As far as the GLLL problem is concerned, an $n$-cyclic bigraph is always equivalent to the canonical one $\BipartiteGraph_n=([n], [n], E_n)$ where $E_n=\{(i,i), (i,(i+1)(\overline{\textrm{mod}} ~n)): i\in [n]\}$. 
Here the value $k(\overline{\textrm{mod}} ~n)$ is defined to be $(k-1)(\textrm{mod} ~n)+1$. 
Hence, we will focus on $\BipartiteGraph_n$ in the rest of this section.

To simplify notation,  the operator ``$(\overline{\textrm{mod}} ~n)$" will be omitted whenever clear from context.

A concept that is opposite to cyclic bigraphs is as follows.

\begin{definition}[Linear bigraph]
A bigraph $\BipartiteGraph$ is said to be $n$-linear if the base graph $G_{\BipartiteGraph}$ is a path of length $n$. In case of no ambiguity, an $n$-linear bigraph is simply called linear. 
\end{definition}

A rather surprising phenomenon of cyclic bigraphs is that they can be reduced to linear bigraphs in the following sense: Any boundary vector of an $n$-cyclic bigraph is also that of an $n$-linear one. 
That is, to find the boundary vector in a certain direction, some pair of neighboring events can be decoupled (i.e., become independent of each other) by ignoring their shared variables. In this sense we say that the cycle is broken. The result is stated in the next theorem.

\begin{theorem} \label{thm: break cycle}
For any vector $\vec{p}\in \partial(\BipartiteGraph_n)$, there is a $\vec{d}$-discrete cylinder set $\EventSet\sim\BipartiteGraph_n$ such that $\mu(\EventSet)=\vec{p}$, $\mu(\cup_{\Event\in \EventSet} \Event)=1$, and $\vec{d}< (2,2,...,2)$.
\end{theorem}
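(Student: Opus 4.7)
My plan is to start with the $(2,2,\ldots,2)$-discrete cylinder set $\EventSet=\{A_1,\ldots,A_n\}\sim\BipartiteGraph_n$ guaranteed by Theorem~\ref{thm:boundaryfills}, and argue that it can be chosen so that in at least one dimension the partition degenerates to a single interval. Concretely, write the partition of $\mathbb{I}^{\{j\}}$ as $[0,x_j)\cup[x_j,1]$ and encode each $A_i$ by indicators $c_{i,a,b}\in\{0,1\}$, $a,b\in\{0,1\}$, specifying which of the four subcubes in dimensions $(X_i,X_{i+1})$ are contained in $A_i$. The desired conclusion $\vec{d}<(2,2,\ldots,2)$ is equivalent to the existence of some $j$ with $x_j\in\{0,1\}$: if so, the partition in dimension $j$ collapses to a single interval, no cylinder then needs to depend on $X_j$, and $\EventSet$ still conforms with $\BipartiteGraph_n$ (by enlarging the chosen bases).

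The key reduction is that, with the combinatorial data $\{c_{i,a,b}\}$ fixed, the covering condition $\bigcup_i A_i=\mathbb{I}^n$ is a purely combinatorial property of the $c$'s, independent of the real parameters $x_j$. Only the probability constraints
\[
\sum_{a,b\in\{0,1\}} c_{i,a,b}\, x_{i,a}\, x_{i+1,b}\;=\;p_i, \qquad i=1,\ldots,n,
\]
where $x_{j,0}=x_j$ and $x_{j,1}=1-x_j$, depend on $\vec{x}$. Hence the task reduces to showing that this polynomial system of $n$ equations in the $n$ real unknowns $x_1,\ldots,x_n\in[0,1]$ admits a solution with some $x_j\in\{0,1\}$.

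The heart of the argument is a deformation / implicit-function step. Suppose, for contradiction, that every feasible $\vec{x}$ lies in $(0,1)^n$. At such a point the Jacobian of $\vec{x}\mapsto(\mu(A_i))_i$ must be singular; otherwise the implicit function theorem furnishes cylinder sets (with the same $\{c_{i,a,b}\}$, hence the same covering and $\mu(\cup A_i)=1$) realizing probability vectors in a full neighborhood of $\vec{p}$. In particular, $(1-\epsilon)\vec{p}$ for sufficiently small $\epsilon>0$ would then lie in $\mathcal{E}(\BipartiteGraph_n)$, contradicting $(1-\epsilon)\vec{p}\in\mathcal{I}(\BipartiteGraph_n)$, which follows from $\vec{p}\in\partial(\BipartiteGraph_n)$. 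Singularity of the (cyclic-tridiagonal) Jacobian yields a non-zero kernel vector $\vec{v}$, which extends to a genuine algebraic arc $\vec{x}(t)$ of solutions of the polynomial system through $\vec{x}$. Following this arc inside the compact cube $[0,1]^n$ until it meets the boundary $\partial[0,1]^n$ produces the required $x_j\in\{0,1\}$.

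The chief obstacle is global control of this arc: one must rule out the possibility that $\vec{x}(t)$ is a closed loop lying entirely in $(0,1)^n$. I would exploit the cyclic-tridiagonal sparsity of $J$—only entries $J_{i,i}$ and $J_{i,i+1}$ are nonzero, so $\det J$ splits into a difference of two monomials corresponding to the two traversal directions of the cycle—and carry out a case analysis on the finitely many covering-feasible indicator patterns $\{c_{i,a,b}\}$. For each pattern, either a monotone quantity along the arc (reflecting the net ``winding'' around the cycle) forces exit from $(0,1)^n$, or the pattern can be locally swapped for another that immediately yields a solution on $\partial[0,1]^n$ with the same $\vec{p}$ and the same cover, allowing us to restart the deformation with one more collapsed dimension.
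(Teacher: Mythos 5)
Your setup is sound: starting from the $(2,\dots,2)$-discrete set of Theorem \ref{thm:boundaryfills}, encoding it by indicators and cut positions, observing that the covering condition is combinatorial in the indicators, and noting that a nonsingular Jacobian of $\vec{x}\mapsto(\mu(A_i))_i$ at an interior feasible point would produce a covering cylinder set with measure vector $(1-\epsilon)\vec{p}<\vec{p}$, contradicting $\vec{p}\in\partial(\BipartiteGraph_n)$ (this is exactly the minimality principle the paper isolates as Lemma \ref{le:boundaryisunique}). The genuine gap is the step from ``the Jacobian is singular at every feasible $\vec{x}\in(0,1)^n$'' to ``there is an algebraic arc of solutions through $\vec{x}$.'' A nonzero kernel vector of the Jacobian does not extend to a curve of solutions of a square polynomial system: the fiber over $\vec{p}$ can perfectly well be an isolated point at which the Jacobian is singular (already $f(x)=x^2$, $p=0$ in one variable shows this). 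Nothing in your argument rules out that the solution set of $\sum_{a,b}c_{i,a,b}x_{i,a}x_{i+1,b}=p_i$ is zero-dimensional with degenerate Jacobian, in which case no deformation to $\partial[0,1]^n$ exists and the contradiction never materializes. On top of that, the ``chief obstacle'' you name yourself --- ruling out closed loops confined to $(0,1)^n$ --- is deferred to an unspecified case analysis over indicator patterns, and that case analysis is where essentially all of the remaining work lives.

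For comparison, the paper avoids the deformation entirely. It classifies each base $\EventB_i\subseteq\Interval^{\{i,i+1\}}$ into one of a small number of combinatorial types, uses Lemma \ref{le:boundaryisunique} repeatedly to discard any configuration in which some base could be shrunk while preserving the cover (Lemmas \ref{le: fakj} and \ref{le: for break cycle}), and concludes that only two patterns survive: all bases of type $T_3$, or all of type $T_1$ except one of type $T_4$ (Lemma \ref{le: possibletypes}). The first is killed by an explicit smaller covering construction; the second is killed by a genuine one-parameter optimization --- there the family $b_{k}=\lambda p_{k}/(1-b_{k-1})$ really does have one degree of freedom by parameter counting, and a convexity argument shows the relevant measure is extremized only at an endpoint where some base degenerates to type $T_2$, i.e.\ where some dimension collapses. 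If you want to salvage your route, you would need to prove positive-dimensionality of the fiber for each covering-feasible pattern (or handle the isolated-singular case separately), which in effect forces you back into the pattern-by-pattern analysis the paper carries out.
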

\begin{remark}
$\vec{d}< (2,2,...,2)$ means that $d_j=1$ for some $j\in [n]$. Then all the cylinders (especially $\Event_j$ and $\Event_{j+1}$) are independent of $X_j$. As a result, $\EventSet$ also conforms with $\BipartiteGraph_n^{\{j\}}$, the $n$-linear bigraph obtained by removing the vertex $j\in R(\BipartiteGraph_n)$, meaning that $\vec{p}\in \mathcal{E}(\BipartiteGraph_n^{\{j\}})$. Due to the assumption that $\vec{p}\in \partial(\BipartiteGraph_n)$ and the easy fact that $\mathcal{E}(\BipartiteGraph_n^{\{j\}}) \subseteq \mathcal{E}(\BipartiteGraph_n)$, $\vec{p}$ must also lie on the boundary of $\BipartiteGraph_n^{\{j\}}$.
\end{remark}

\ShortVersion
Now we give a sketchy proof of Theorem \ref{thm: break cycle}.
\ShortVersionEnd

\LongVersion
To prove Theorem \ref{thm: break cycle}, 
\LongVersionEnd
\ShortVersion
A\ShortVersionEnd
\LongVersion
first a\LongVersionEnd
rbitrarily 
fix $\vec{p}\in \partial(\BipartiteGraph_n)$. By Theorem \ref{thm:boundaryfills}, there is a $(2,2,...,2)$-discrete cylinder set $\EventSet\sim\BipartiteGraph_n$ such that $\mu(\EventSet)=\vec{p}$ and $\mu(\cup_{\Event\in \EventSet} \Event)=1$. 
Arbitrarily choose such a cylinder set $\EventSet=\{\Event_1,...,\Event_n\}$. For each $i\in [n]$, let $\EventB_i$ be the base of $\Event_i$ such that $\dim(\EventB_i)=\{i,i+1\}$.

\LongVersion

Define function $F$ as follows. For any $S_1 \subseteq \Interval^{\{i,j\}}$ and $S_2 \subseteq \Interval^{\{j,k\}}$ where $i \not \eq j \not \eq k$, let $F(S_1,S_2)$ be the largest set $S\subseteq\Interval^{\{i,k\}}$ such that $S\times\Interval^{[n]\setminus\{i,k\}}\subseteq S_1\times\Interval^{[n]\setminus\{i,j\}} \cup S_2 \times\Interval^{[n]\setminus\{j,k\}}$. Let $F(S_1,S_2,...,S_l)= F(S_1,F(S_2,S_3,...,S_l))$ for any $2<l\leq n$. For any $i\in[n]$, there is a set $B_i\subseteq \Interval^{\{i,i+1\}}$ such that $\Event_i=\EventB_i\times \Interval^{[n]\setminus \{i,i+1\}}$. For any $1\leq i, j\leq n$, let $\EventSetB_{ij}=\{B_i,...,B_j\}$ if $i\leq j$, otherwise $\EventSetB_{ij}=\{B_i,...,B_n,B_1,...,B_j\}$.

Note that $F(\EventSetB_{i,j}) = F(B_i,F(\EventSetB_{i+1,j})) = F(F(\EventSetB_{i,j-1}),\EventSetB_j)$ for any $i \not \eq j \in [n]$. For simplify, let $F(\EventSetB_{i,i})=\EventB_i$. To emphasize this important definition $F(\EventSetB_{i,j})$, we leave it to readers to verify that $\mu(\cup_{\Event\in \EventSet} \Event)=1$ if and only if $\mu(F(\EventSetB_{i,j}) \cup F(\EventSetB_{j+1,i-1}))=1$ for any $i\leq j$.

Due to the discreteness of $\EventSet$, each $\Interval^{\{i,j\}}$ is partitioned into four rectangles as in Figure \ref{fig:rectangularpartition} and only unions of some of the rectangles make sense. Especially interesting is the 14 types of non-trivial unions, namely $T_{11}$ through to $T_{44}$, grouped into the four categories $T_1,...,T_4$, as shown in Figure \ref{fig:typesoftheevents}. For any $i,j\in [n]$, $B_i$ and $F(\EventSetB_{i,j})$ must have one of the 14 types in $\Interval^{\{i,i+1\}}$ and in $\Interval^{\{i,j+1\}}$, respectively.

%
%

\begin{lemma} \label{le: fakj}
For any $i,j\in [n]$, $F(\EventSetB_{j+1,i-1})=\Interval^{\{i,j\}}\setminus F(\EventSetB_{i,j})$.
\end{lemma}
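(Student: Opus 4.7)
The plan is to establish the set equality by two inclusions. The starting point is to unwind the recursive definition of $F$: a point lies in $F(\EventSetB_{i,j})$ if and only if, for every assignment of the intermediate coordinates $x_{i+1},\ldots,x_j$, some pair $(x_k,x_{k+1})$ with $k\in[i,j]$ lies in $B_k$; the analogous characterization holds for the complementary arc. Because $\EventSet$ is $(2,2,\ldots,2)$-discrete, the hypothesis $\mu(\bigcup_{\Event\in\EventSet}\Event)=1$ upgrades to the exact equality $\bigcup_{\Event\in\EventSet}\Event=\Interval^n$, so all of the quantifiers below are genuine ``for all'' rather than ``for almost all''.

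The ``easy'' inclusion $\Interval^{\{i,j\}}\setminus F(\EventSetB_{i,j})\subseteq F(\EventSetB_{j+1,i-1})$ follows directly from the full covering. If a point is not in $F(\EventSetB_{i,j})$, fix intermediates $x_{i+1},\ldots,x_j$ witnessing the failure, so no first-arc pair is covered. For any choice of the second-arc intermediates $x_{j+2},\ldots,x_{i-1}$, the assembled vector $\vec{x}$ still lies in $\bigcup_k\Event_k$; the covering index must lie in $[j+1,i-1]$. Since this argument is uniform in the second-arc intermediates, the point lies in $F(\EventSetB_{j+1,i-1})$.

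The reverse inclusion is the main obstacle, and the plan is a proof by contradiction that leverages the boundary extremality of $\vec{p}$ through Lemma \ref{le:boundaryisunique}. Suppose the intersection $F(\EventSetB_{i,j})\cap F(\EventSetB_{j+1,i-1})$ is nonempty; by discreteness it then has positive measure. Over the cylinder lifted from this intersection, the fiber in $\Interval^n$ is doubly covered: both arcs saturate it independently, so every point there is hit by some $B_k$ with $k\in[i,j]$ \emph{and} by some $B_l$ with $l\in[j+1,i-1]$. I would then carve a small slice away from one $B_k$ with $k\in[i,j]$, localized to this overlap region, yielding a modified cylinder set $\EventSet'\sim\BipartiteGraph_n$ with $\mu(\EventSet')\leq\vec{p}$, strict inequality in the $k$-th coordinate, and $\mu(\bigcup_{\Event\in\EventSet'}\Event)=1$. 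This directly contradicts Lemma \ref{le:boundaryisunique}, which forbids any cylinder set with measure strictly below a boundary vector while still tiling the cube.

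The delicate step is executing the removal cleanly: the perturbed $B_k$ must remain a union of axis-aligned rectangles from the $(2,2)$-discrete decomposition, the modification must not introduce dependence on any new variable (so that conformity with $\BipartiteGraph_n$ is preserved), and it must strictly shrink $\mu(\Event_k)$. Working within the $2\times 2$ rectangular decomposition of $B_k$ and, if necessary, refining the partition of one coordinate axis through which the slice is taken, one can carry out the deletion without altering the EV graph; the bookkeeping to verify all three properties simultaneously is the principal technical difficulty.
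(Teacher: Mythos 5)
Your proposal is correct and follows essentially the same route as the paper: the inclusion $\Interval^{\{i,j\}}\setminus F(\EventSetB_{i,j})\subseteq F(\EventSetB_{j+1,i-1})$ from the full covering, and the reverse inclusion by contradiction via shrinking one base while preserving the tiling and invoking Lemma \ref{le:boundaryisunique}. The only difference is in the ``delicate step'' you flag: the paper removes a whole rectangle of the $2\times 2$ decomposition and pushes the removal down the recursion $F(\EventSetB_{i,j})=F(B_i,F(\EventSetB_{i+1,j}))$ until it lands on a single $B_k$, which avoids the partition-refinement bookkeeping you anticipate.
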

\begin{proof}
Since $\mu(\cup_{\Event\in \EventSet} \Event)=1$, we know that $\mu(F(\EventSetB_{j+1,i-1}) \cup F(\EventSetB_{i,j}))=1$, which is equivalent to  $F(\EventSetB_{j+1,i-1})\supseteq \Interval^{\{i,j\}}\setminus F(\EventSetB_{i,j})$. By the definition of $F$, each of $F(\EventSetB_{i,j})$ and $F(\EventSetB_{j+1,i-1})$ also has one of the 14 types in Figure \ref{fig:typesoftheevents}. 

Suppose for contradiction that $F(\EventSetB_{i,j}) \cap F(\EventSetB_{j+1,i-1})) \not \eq \emptyset$. Then one of the rectangles in $F(\EventSetB_{i,j})$ can be removed so as to preserve the property $\mu(F(\EventSetB_{j+1,i-1}) \cup F(\EventSetB_{i,j}))=1$. It is straightforward to see that this can be achieved by removing a rectangle from either $B_i$ or $F(\EventSetB_{i+1,j})$. Iteratively, we see that $\mu(F(\EventSetB_{j+1,i-1}) \cup F(\EventSetB_{i,j}))=1$ remains true event if one element in $\EventSetB_{i,j}$ gets smaller. Considering Lemma \ref{le:boundaryisunique} and the assumption that $\vec{p}$ is a boundary vector, we reach a contradiction.
\end{proof}
\LongVersionEnd

Now, we explore how $\EventB_i, \EventB_{i+1}$ are correlated in terms of their types. 

If some $\EventB_i$ has type $T_2$, then $\Event_i$ is independent of either $\Variable_i$ or $\Variable_{i+1}$. It is easy to see that $\EventSet$ is 1-discrete either in dimension $i$ or in dimension $i+1$. Hence we have

\begin{lemma} \label{le: type2enough}
If $\EventB_i$ has type $T_2$ for some $i\in [n]$, then $\EventSet$ has a discreteness degree smaller than $(2,2,...,2)$.
\end{lemma}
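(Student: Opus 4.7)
The plan is to exploit the boundary condition via Lemma \ref{le:boundaryisunique} together with the structural fact that in the cyclic bigraph $\BipartiteGraph_n$ the variable $\Variable_{i+1}$ is shared by exactly $\Event_i$ and $\Event_{i+1}$. Since the definition of type $T_2$ is symmetric in the two coordinate axes, I may assume without loss of generality that $\Event_i$ is independent of $\Variable_{i+1}$. Then every member of $\EventSet$ except possibly $\Event_{i+1}$ is already independent of $\Variable_{i+1}$, and it suffices to prove that $\Event_{i+1}$ is too: once this is shown, $\EventSet$ is $1$-discrete in dimension $i+1$, so its discreteness degree is strictly smaller than $(2,2,\dots,2)$.

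Write $U = \bigcup_{j \neq i+1}\Event_j$, so that $U = V \times \Interval^{\{i+1\}}$ for some measurable $V \subseteq \Interval^{[n]\setminus\{i+1\}}$, and $\Event_{i+1} = T \times \Interval^{[n]\setminus\{i+1,i+2\}}$ with $T \subseteq \Interval^{\{i+1,i+2\}}$. Fubini in coordinate $\Variable_{i+1}$, applied to $\mu(U \cup \Event_{i+1}) = 1$, shows that for a.e.\ $\vec{x} \notin V$ the fibre $\Interval^{\{i+1\}} \times \{x_{i+2}\}$ is contained in $T$ up to measure zero. Define $W = \{z \in \Interval^{\{i+2\}} : \mu(\{y : (y,z) \in T\}) = 1\}$ and $\Event'_{i+1} = W \times \Interval^{[n]\setminus\{i+2\}}$; then $\Event'_{i+1}$ is independent of $\Variable_{i+1}$, still conforms with $\BipartiteGraph_n$ (viewing its base as a subset of $\Interval^{\{i+1,i+2\}}$), and $U \cup \Event'_{i+1}$ covers $\Interval^n$ up to measure zero by the previous fibre analysis.

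By construction $\mu(\Event'_{i+1}) = \mu(W) \leq \mu(T) = \mu(\Event_{i+1})$. If this inequality were strict, replacing $\Event_{i+1}$ by $\Event'_{i+1}$ would give a cylinder set conforming with $\BipartiteGraph_n$, covering $\Interval^n$, and with measure vector strictly below $\vec{p}$, contradicting Lemma \ref{le:boundaryisunique}; hence equality holds. Because $T$ is a union of at most four axis-aligned subrectangles (from the $(2,2)$-discreteness), equality forces each column of the induced partition to contain either both or neither of its subrectangles, which in turn forces $T = \Interval^{\{i+1\}} \times W$ exactly. Thus $\Event_{i+1}$ itself is independent of $\Variable_{i+1}$, completing the proof. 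The step I expect to need the most care is this last ``almost-everywhere to literally'' promotion, but it collapses cleanly because the discrete structure of $T$ rules out the only obstructive configuration, namely a column holding exactly one of its two subrectangles.
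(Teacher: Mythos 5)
Your proof is correct. It differs from the paper's in mechanics, though both ultimately rest on the same minimality principle. The paper's proof is a two-line consequence of the already-established machinery: with $\EventB_i$ independent of (say) $X_i$, the only other base that could depend on $X_i$ is $\EventB_{i-1}$, and Lemma \ref{le: fakj} identifies $\EventB_{i-1}$ as $\Interval^{\{i-1,i\}}\setminus F(\EventSetB_{i,i-2})$, which is independent of $X_i$ because every base entering that composition is. You instead bypass the $F$-operator entirely and re-derive the needed fact for $\Event_{i+1}$ from scratch: a Fubini slicing in the shared coordinate, an explicit replacement of $\Event_{i+1}$ by the cylinder over its full columns, and an appeal to Lemma \ref{le:boundaryisunique} to force equality of measures, after which the $(2,2)$-discrete structure of the base upgrades the a.e.\ product form to a genuine one. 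This is essentially a self-contained special case of the argument the paper packages into Lemma \ref{le: fakj} (whose own proof also runs through shrinking a rectangle and invoking Lemma \ref{le:boundaryisunique}), so your route is longer but more elementary and does not presuppose the type classification of $F(\EventSetB_{i,j})$; the paper's route is shorter but leans on the surrounding lemma infrastructure. Both are valid.
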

\LongVersion
\begin{proof}
Arbitrarily choose $i\in [n]$ such that $\EventB_i$ has type $T_2$. Without loss  of generality, assume that $\EventB_i$ is independent of $X_i$. This means that all events except $\EventB_{i-1}$ are independent of $X_i$. By Lemma \ref{le: fakj}, $\EventB_{i-1}=\Interval^{\{i-1,i\}}\setminus F(\EventSetB_{i,i-2})$ which is also independent of $X_i$. As a result, $\EventSet$ do not depend on $X_i$, namely, it is 1-discrete in dimension $i$.
\end{proof}
\LongVersionEnd

As a result, in the rest of this section, it is assumed  that no bases have type $T_2$.
\LongVersion

\begin{lemma} \label{le: for break cycle}
For any $i,j\in [n]$ such that $i<j$ or $i>j+1$, we have the following observations.
\begin{enumerate}
\item  If $F(\EventSetB_{i,j})$ has type $T_1$, then  both $\EventB_i$ and $F(\EventSetB_{i+1,j})$ has type $T_1$.
\item  If $F(\EventSetB_{i,j})$ has type $T_3$, then both $\EventB_i$ and $F(\EventSetB_{i+1,j})$ has type $T_3$.
\item  Suppose that none of $\EventB_i,...,\EventB_j$ has type $T_2$. If $F(\EventSetB_{i,j})$ has type $T_4$, then one of $\EventB_i$ and $F(\EventSetB_{i+1,j})$ has type $T_1$, and the other has type $T_4$.
\end{enumerate}
\end{lemma}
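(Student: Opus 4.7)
The plan is to reduce all three claims to a Boolean analysis of $F$. Set $B = \EventB_i$ and $G = F(\EventSetB_{i+1,j})$, so that $F(\EventSetB_{i,j}) = F(B, G)$. Because $\EventSet$ is $(2,2,\dots,2)$-discrete, each of $B$, $G$, and $F(B,G)$ can be identified with a $\{0,1\}$-valued $2\times 2$ matrix indexed by which half of each axis is meant. Unwinding the definition of $F$ gives the explicit formula
\[
F(B, G)_{s,u} \;=\; \bigwedge_{t \in \{0,1\}}\bigl(B_{s, t} \lor G_{t, u}\bigr).
\]
So to prove each claim it suffices to show that among all $2\times 2$ matrices $B$, $G$ whose Boolean product on the right gives $F(B,G)$ of the target type, the only admissible pairs are the ones listed.

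First I would handle the purely algebraic half: fix the target type ($T_1$, $T_3$, or $T_4$) in the codomain and enumerate, via the formula above, all pairs of matrix types $(\mathrm{type}(B), \mathrm{type}(G))$ that can realize it. Since there are only four categories and fourteen sub-types in total, this is a finite case check. Here the hypothesis of claim~3 that none of $\EventB_i,\dots,\EventB_j$ has type $T_2$ is used in two ways: to restrict $\mathrm{type}(B) \in \{T_1, T_3, T_4\}$ directly, and, by an easy induction on $j-i$ that uses the formula plus Lemma~\ref{le: type2enough}, to show the intermediate $G=F(\EventSetB_{i+1,j})$ cannot be type $T_2$ either.

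The second half is the minimality argument, needed to eliminate pairs that are algebraically possible but wasteful. Concretely, if $B$ and $G$ yield a target-type $F$, but a strictly smaller $B' \subsetneq B$ (or $G' \subsetneq G$) already produces the same $F$, then I can replace $\EventB_i$ with $B'$ (or, by reverse-induction on the construction of $G=F(\EventSetB_{i+1,j})$, shrink one of the bases $\EventB_k$ with $i<k\le j$). The resulting $\EventSet' \sim \BipartiteGraph_n$ satisfies $\mu(\EventSet')<\vec p$ while still $\mu(\cup\Event')=1$, contradicting Lemma~\ref{le:boundaryisunique}. For example, in claim~1 this minimality step is what rules out $\mathrm{type}(B)=T_3$: a $T_3$ diagonal $B$ together with a singleton $G$ does produce a singleton $F$, but one of the two rectangles in $B$ can be erased without disturbing $F$, so the pair is not realizable on the boundary. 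Symmetrically, the same analysis rules out $\mathrm{type}(G)\neq T_1$ in that claim, and likewise in claim~2 it forces both $B$ and $G$ to be diagonals in matching orientations; claim~3 is handled by the analogous enumeration of the $T_4$ (triple) case.

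The step I expect to be the main obstacle is the bookkeeping of the minimality argument for $G$: since $G$ is not itself one of the bases in $\EventSet$, shrinking $G$ to a proper sub-matrix $G'$ must be realized by simultaneously shrinking some base $\EventB_k$ with $i<k\le j$. I plan to handle this by strong induction on the interval length $j-i$, peeling off $\EventB_j$ (equivalently decomposing $G = F(F(\EventSetB_{i+1,j-1}),\EventB_j)$) and appealing to the already-established portion of the present lemma on the shorter segment; the symmetric statements obtained by writing $F(\EventSetB_{i,j})= F(F(\EventSetB_{i,j-1}),\EventB_j)$ follow by the same argument with the roles of $B$ and $G$ in the Boolean formula swapped.
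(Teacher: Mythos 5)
Your proposal follows essentially the same two-step strategy as the paper's (sketched) proof: first check which type combinations of $\EventB_i$ and $F(\EventSetB_{i+1,j})$ can feasibly produce the given type of $F(\EventSetB_{i,j})$, then eliminate the non-minimal combinations by shrinking some base in $\EventSetB_{i,j}$ without changing $F(\EventSetB_{i,j})$ and invoking Lemma~\ref{le:boundaryisunique} against the boundary assumption. Your Boolean-matrix formula for $F$ and the induction on $j-i$ to push the shrinking of $G$ down to an actual base simply make explicit the details the paper omits (its reference to "similar to the proof of Lemma~\ref{le: fakj}"), so this is a valid and faithful elaboration rather than a different route.
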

\begin{proof}
In each case, it is straightforward to check two facts. First, the claimed combination of types of $\EventB_i$ and $F(\EventSetB_{i+1,j})$ is feasible, namely, it can produce the given type of $F(\EventSetB_{i,j})$. Second, this combination is minimum in the following sense: if $\EventB_i$ and $F(\EventSetB_{i+1,j})$ have other feasible types, then at least one of them can be reduced without changing $F(\EventSetB_{i,j})$. Thus, similar to the proof of Lemma \ref{le: fakj}, at least one element in $\EventSetB_{i,j}$ can be reduced without changing $F(\EventSetB_{i,j})$.
The detailed proofs of the two facts are omitted.

By the second fact and Lemma \ref{le:boundaryisunique}, since $\vec{p}$ is a boundary vector, we know that the other feasible combinations are impossible.
\end{proof}

\LongVersionEnd

\begin{figure}
\centering
\includegraphics[scale=0.4]{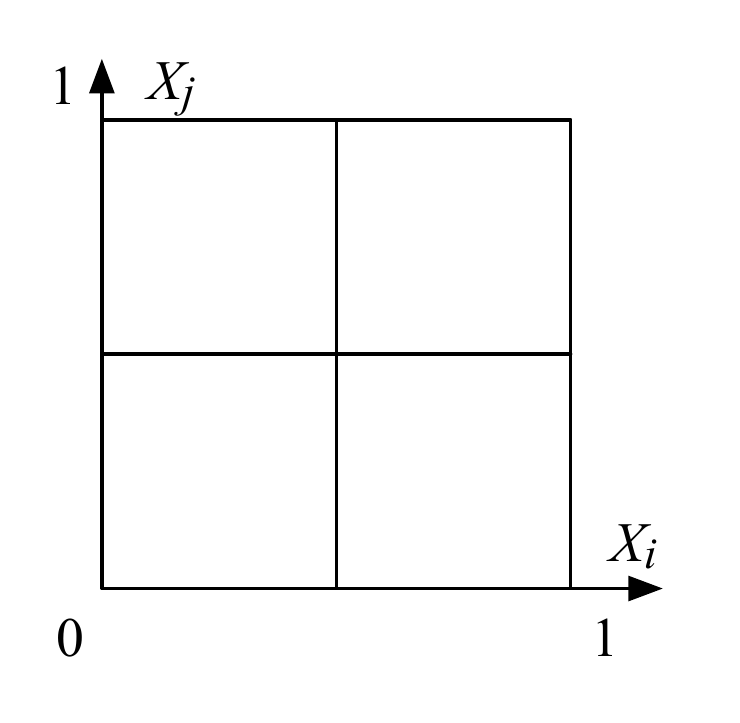}
\caption{Partitioning $\Interval^{\{i,j\}}$  into four rectangles}\label{fig:rectangularpartition}
\end{figure}
\begin{figure}
\centering
\includegraphics[scale=0.5]{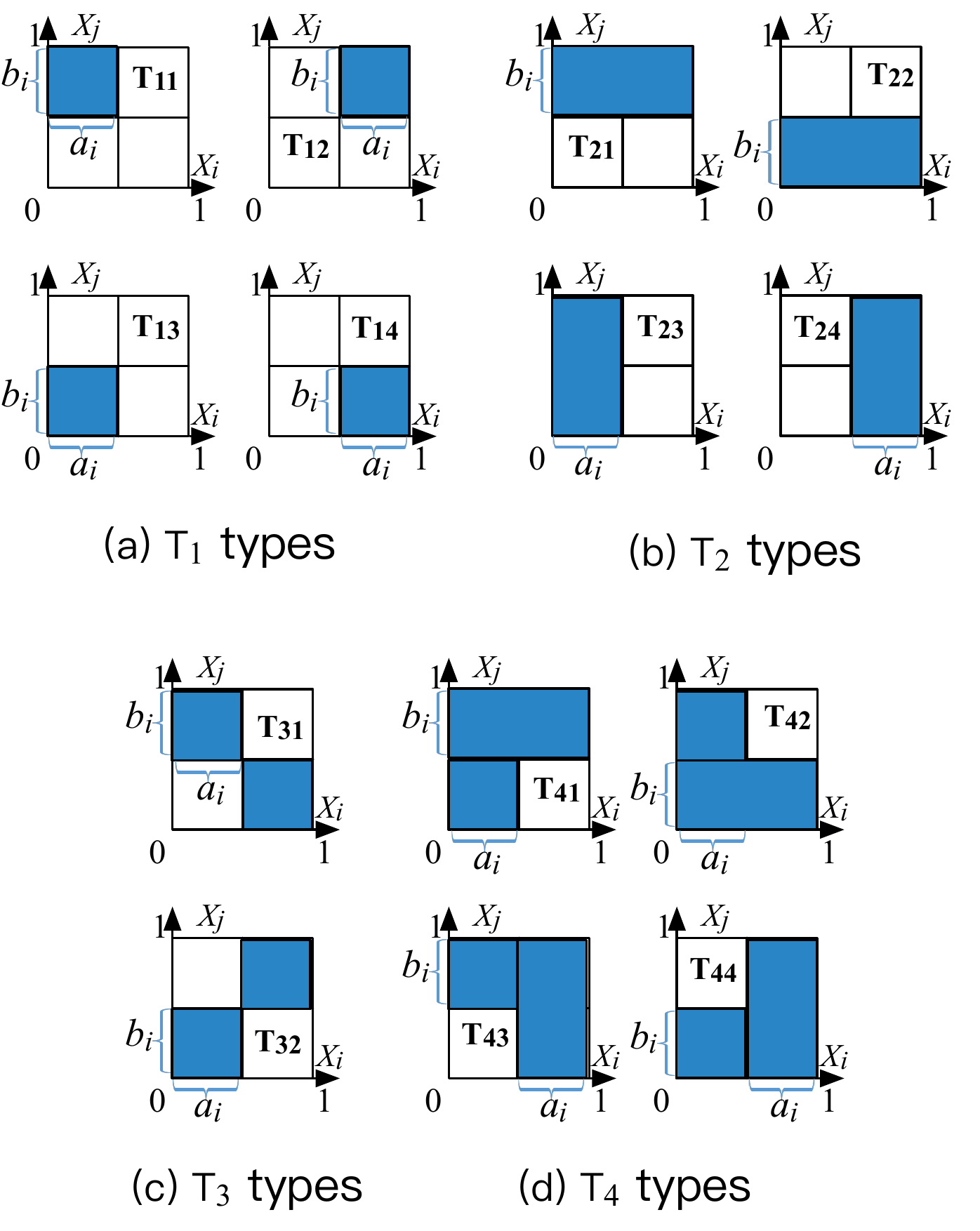}
\caption{Types $T_{11}$ through to $T_{44}$  in $\Interval^{\{i,j\}}$, indicated by the shaded areas}\label{fig:typesoftheevents}
\end{figure}
Now we can show that there are at most two essentially different possibilities of the types of $\EventB_1,...,\EventB_n$, as indicated in the following lemma.
\begin{lemma} \label{le: possibletypes}
There are at most two possible combinations of the types of the bases.
\begin{enumerate}
\item $T_1$-dominant: all bases have type $T_1$ except one has type $T_4$.
\item $T_3$-dominant: all bases have type $T_3$.
\end{enumerate}
\end{lemma}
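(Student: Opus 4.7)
The plan is to do a case analysis on the type of $F(\EventSetB_{1,n-1})$, combining Lemma~\ref{le: fakj} with the three clauses of Lemma~\ref{le: for break cycle} applied along the decomposition $F(\EventSetB_{k,n-1}) = F(B_k, F(\EventSetB_{k+1,n-1}))$. Specializing Lemma~\ref{le: fakj} to $i=1$, $j=n-1$ gives $B_n = \Interval^{\{1,n\}} \setminus F(\EventSetB_{1,n-1})$, so the type of $B_n$ is the $2\times 2$-rectangular complement of the type of $F(\EventSetB_{1,n-1})$. Reading off Figure~\ref{fig:typesoftheevents} (or directly from the rectangle counts: one-cell vs.\ three-cell, two-adjacent vs.\ two-adjacent, two-opposite vs.\ two-opposite), the category $T_1$ is the complement of $T_4$, while $T_2$ and $T_3$ are each self-complementary. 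Since $B_n$ is not of type $T_2$ by the standing assumption, $F(\EventSetB_{1,n-1})$ cannot be $T_2$ either, leaving three cases.

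If $F(\EventSetB_{1,n-1})$ is of type $T_1$, I invoke Lemma~\ref{le: for break cycle}(1) on $F(B_1, F(\EventSetB_{2,n-1}))$ to get $B_1 = T_1$ and $F(\EventSetB_{2,n-1}) = T_1$, then iterate: the same clause applied to $F(\EventSetB_{2,n-1}) = F(B_2, F(\EventSetB_{3,n-1}))$ gives $B_2 = T_1$ and $F(\EventSetB_{3,n-1}) = T_1$, and so on, down to $F(\EventSetB_{n-1,n-1}) = B_{n-1} = T_1$. Thus $B_1,\dots,B_{n-1}$ are all $T_1$ and $B_n$ is $T_4$, yielding the $T_1$-dominant configuration with the unique $T_4$-base at position $n$. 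Analogously, if $F(\EventSetB_{1,n-1})$ is $T_3$, iterating Lemma~\ref{le: for break cycle}(2) forces $B_1,\dots,B_{n-1}$ all to be $T_3$ and $B_n$ is also $T_3$, giving the $T_3$-dominant configuration.

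The remaining case is $F(\EventSetB_{1,n-1}) = T_4$, hence $B_n = T_1$. Here Lemma~\ref{le: for break cycle}(3) applied to $F(B_1, F(\EventSetB_{2,n-1}))$ forces exactly one of $B_1$ and $F(\EventSetB_{2,n-1})$ to be $T_1$ and the other $T_4$. If $B_1 = T_4$ and $F(\EventSetB_{2,n-1}) = T_1$, iterating Lemma~\ref{le: for break cycle}(1) on $F(\EventSetB_{2,n-1})$ propagates $T_1$ through $B_2,\dots,B_{n-1}$, producing the $T_1$-dominant pattern with $T_4$ at position~$1$. Otherwise $B_1 = T_1$ and $F(\EventSetB_{2,n-1}) = T_4$, and I recurse on $F(\EventSetB_{2,n-1})$: the length of the $F$-block strictly decreases while its type stays $T_4$, so the recursion terminates after at most $n-2$ steps at some step $k$ where either clause~(3) places $B_k = T_4$ with $F(\EventSetB_{k+1,n-1}) = T_1$ (after which clause~(1) propagates $T_1$ through $B_{k+1},\dots,B_{n-1}$), or at the base $F(\EventSetB_{n-1,n-1}) = B_{n-1} = T_4$. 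In every sub-branch exactly one index $k \in \{1,\dots,n-1\}$ carries the $T_4$ base and all other bases (plus $B_n$) are $T_1$ — again the $T_1$-dominant pattern.

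The main obstacle is the bookkeeping in the $T_4$ case; the key point is that the $T_4$ index is \emph{unique}. This follows from the dichotomy in Lemma~\ref{le: for break cycle}(3): at each recursion step either the head is $T_4$ (and the recursion ends) or the tail's $F$-block is $T_4$ (so the head is forced to be $T_1$); once a $T_4$ head is placed, Lemma~\ref{le: for break cycle}(1) forces every base further along the block to be $T_1$, preventing a second $T_4$ from arising.
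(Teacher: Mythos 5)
Your proposal is correct and follows essentially the same route as the paper: both anchor the case analysis on a complementary pair supplied by Lemma~\ref{le: fakj} (you condition on $F(\EventSetB_{1,n-1})$ and its complement $B_n$, the paper on $B_1$ and its complement $F(\EventSetB_{2,n})$) and then iterate the three clauses of Lemma~\ref{le: for break cycle} along the telescoping decomposition to propagate types, with clause~(3)'s dichotomy guaranteeing the uniqueness of the $T_4$ base. The difference in the starting index is cosmetic.
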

\LongVersion
\begin{proof}
We do a case by case analysis.

\textbf{Case 1:} $B_1$ has type $T_1$. By Lemma \ref{le: fakj}, $F(\EventSetB_{2,n})$ has type $T_4$. Applying Lemma \ref{le: for break cycle} to $F(\EventSetB_{2,n})$ results in two possibilities. One is that $B_2$ has type $T_4$ and $F(\EventSetB_{3,n})$ has type $T_1$, and the other is that $B_2$ has type $T_1$ and $F(\EventSetB_{3,n})$ has type $T_4$. Then iteratively apply Lemma \ref{le: for break cycle} to $F(\EventSetB_{3,n})$. Altogether, we see that all $\EventB_i$ have type $T_1$ except one has type $T_4$.

\textbf{Case 2:} $B_1$ has type $T_3$. By Lemma \ref{le: fakj}, $F(\EventSetB_{2,n})$ has type $T_3$. Applying Lemma \ref{le: for break cycle} to $F(\EventSetB_{2,n})$ shows that both $B_2$ and $F(\EventSetB_{3,n})$ have type $T_3$. Repeat this process, and one can observe that all $\EventB_i$ have type $T_3$.

\textbf{Case 3:} $B_1$ has type $T_4$. Then $F(\EventSetB_{2,n})$ has type $T_1$ due to Lemma \ref{le: fakj}. Iteratively applying Lemma \ref{le: for break cycle} to $F(\EventSetB_{2,n})$ indicates that all the other $\EventB_i$ have type $T_1$.
\end{proof}

However, the two possibilities are ruled out by the following two lemmas, respectively.
\LongVersionEnd

\ShortVersion
However, the two possibilities are ruled out by the following lemma.

\begin{lemma}\label{le:all4_115areimpossible}
Neither $T_1$-dominant nor $T_3$-dominant combination is possible.
\end{lemma}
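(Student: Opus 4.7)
The plan is to derive a contradiction in each case by exhibiting a strictly smaller probability vector $\vec{p}'<\vec{p}$ that is still achievable by some cylinder set $\EventSet'\sim\BipartiteGraph_n$ with $\mu\bigl(\bigcup_{\Event\in\EventSet'}\Event\bigr)=1$; this contradicts Lemma~\ref{le:boundaryisunique} together with the hypothesis $\vec{p}\in\partial(\BipartiteGraph_n)$.

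For the $T_3$-dominant case I first set up a combinatorial encoding. Write $L_i=[0,x_i]$ and $R_i=[x_i,1]$ for the discretization segments on axis $X_i$, and attach to every $\EventB_i$ a sign $\sigma_i\in\{\pm 1\}$ recording which of the two diagonal pairs it covers ($\sigma_i=+1$ if $\EventB_i=L_iL_{i+1}\cup R_iR_{i+1}$). Index the $2^n$ subcubes by $\vec{t}\in\{\pm 1\}^n$ with $t_i=+1$ iff the subcube's $X_i$-component is $L_i$. Then the subcube $\vec{t}$ lies in $\Event_i$ iff $t_it_{i+1}=\sigma_i$, so an uncovered subcube exists iff the system $\{t_it_{i+1}=-\sigma_i\}_{i=1}^n$ is consistent. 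Multiplying these $n$ equalities around the cycle gives $1=\prod_i(-\sigma_i)=(-1)^n\prod_i\sigma_i$, so the uncovered system has a solution iff $\prod_i\sigma_i=(-1)^n$. In that case some subcube has positive measure in $\bigcap_i\Neg{\Event}_i$ unless some $x_i\in\{0,1\}$, but the latter already yields a discreteness degree strictly smaller than $(2,\ldots,2)$ in dimension $i$, which by Lemma~\ref{le: type2enough}-style reasoning contradicts the assumption that every $\EventB_i$ has type $T_3$.

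So the covering parity $\prod_i\sigma_i=(-1)^{n-1}$ must hold. In this remaining case I reduce $\vec{p}$ by perturbing the $x_j$'s. Each $\mu(\EventB_i)$ is a bilinear function of $(x_i,x_{i+1})$ whose partial derivatives have signs governed by whether the relevant neighbouring coordinate lies above or below $1/2$. A short case analysis on these signs shows that some single index $j$ admits a limit ($x_j\to 0$ or $x_j\to 1$) under which both $p_{j-1}$ and $p_j$ weakly decrease with at least one strict inequality, while all other $p_i$'s are unchanged. The combinatorial coverage condition $\prod_i\sigma_i=(-1)^{n-1}$ is preserved in the limit, so the resulting cylinder set still covers the cube but with strictly smaller measure vector, the desired contradiction. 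If every candidate perturbation fails simultaneously, the constraints collapse to $x_i=1/2$ for all $i$, in which case a direct computation shows $\vec{p}$ lies in the interior, not on the boundary.

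The $T_1$-dominant case is handled by the same strategy. Here each $\EventB_i$ for $i<n$ is a single corner $[\alpha_i]\times[\beta_{i+1}]$ with $[\alpha_i]\in\{L_i,R_i\}$ and $[\beta_{i+1}]\in\{L_{i+1},R_{i+1}\}$, and $\EventB_n$ is the complement of a single corner. Iterating Lemma~\ref{le: for break cycle} on $F(\EventSetB_{1,n-1})$ together with Lemma~\ref{le: fakj} forces the $X_{i+1}$-projections of $\EventB_i$ and $\EventB_{i+1}$ to be complementary segments (otherwise some $F(\EventSetB_{i,i+1})$ is too small to match the required type), so the configuration is rigidly determined up to the values of the $x_j$'s. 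Each $\mu(\EventB_i)=|\alpha_i|\cdot|\beta_{i+1}|$ (and $\mu(\EventB_n)=1-|\alpha_n|\cdot|\beta_1|$) is then monotone in the relevant $x_j$'s, and an appropriate limit $x_j\to 0$ or $x_j\to 1$ once again produces a strictly smaller achievable $\vec{p}'$ contradicting boundary. The main obstacle in both cases is the sign bookkeeping in the perturbation argument: when moving a single $x_j$, two of the $p_i$'s change and the direction of change depends on the position of neighbouring $x$'s relative to $1/2$; the cycle structure, combined with the parity/complementarity constraints, is what ultimately guarantees a good direction always exists.
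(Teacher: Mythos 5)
Your overall strategy---contradict Lemma~\ref{le:boundaryisunique} by exhibiting a covering configuration with a strictly smaller measure vector---is the same as the paper's, and your parity encoding of the $T_3$ covering condition ($\prod_i\sigma_i=(-1)^{n-1}$) is correct and rather cleaner than the paper's pictures. But the perturbation mechanism you use to finish both cases has genuine gaps.

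In the $T_3$-dominant case, moving a single partition point $x_j$ changes exactly $\mu(\EventB_{j-1})$ and $\mu(\EventB_j)$, with derivative signs $\sigma_{j-1}(2x_{j-1}-1)$ and $\sigma_j(2x_{j+1}-1)$. Your claim that some $j$ always admits a direction decreasing both is false as stated: take $n=6$, $\sigma=(+,-,+,-,+,-)$ (so $\prod\sigma_i=-1=(-1)^{n-1}$ and the cube is covered), and all $x_i>1/2$. Then for every $j$ the two derivatives are strictly nonzero and of opposite sign, so every single-coordinate move strictly increases one of the two affected measures. Your fallback ``the constraints collapse to $x_i=1/2$ for all $i$'' does not follow---the failure conditions are sign conditions, not equations. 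The paper escapes this by leaving the class of $T_3$ configurations altogether: it deletes one of the two rectangles from each of $\EventB_1,\dots,\EventB_{n-2}$ and re-chooses $\EventB_{n-1},\EventB_n$ (with unchanged measures) so that the cube is still covered, which strictly shrinks $n-2$ of the measures.

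The $T_1$-dominant case is worse: there the complementarity constraint $b_j+a_{j+1}=1$ on each shared axis means that any move of a partition point strictly increases one of $\mu(\EventB_j)=a_jb_j$ and $\mu(\EventB_{j+1})=a_{j+1}b_{j+1}$ while decreasing the other, so a ``good direction'' for your argument never exists. The paper instead fixes $p_1,\dots,p_{n-1}$ as hard constraints, treats $x=b_{n-2}$ as the single remaining degree of freedom, and shows by a second-derivative computation that $\mu(F(\EventSetB_{1,n-1}))$ is maximized only at an endpoint of the feasible range of $x$, where some $\EventB_i$ degenerates to type $T_2$---contradicting the standing no-$T_2$ assumption. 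Some argument of this optimization flavor (or another structural modification, as in the $T_3$ case) is needed; coordinate-wise monotone limits cannot do the job here.
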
  

\ShortVersionEnd

\LongVersion
\begin{lemma}\label{le:all4isimpossible}
The $T_3$-dominant combination is impossible.
\end{lemma}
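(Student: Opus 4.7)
The plan is a proof by contradiction via Lemma~\ref{le:boundaryisunique}: supposing the $T_3$-dominant $\EventSet$ realizes $\vec{p} \in \partial(\BipartiteGraph_n)$, I would produce some $\EventSet' \sim \BipartiteGraph_n$ with $\mu(\EventSet') \leq \vec{p}$, strict on at least one coordinate, and $\mu(\cup_{\Event \in \EventSet'} \Event) = 1$. To set up, parametrize each $B_i$ by cuts $\alpha_j \in (0,1)$ in dimension $j$ and a diagonal sign $s_i \in \{\pm 1\}$; letting $t_j = \alpha_j - 1/2$, a direct calculation gives $\mu(B_i) = \tfrac{1}{2} + 2 s_i t_i t_{i+1}$.

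The key observation is that $\mu(\cup_i A_i) = 1$ is a purely combinatorial condition on $(s_i)$: the cube is partitioned into $2^n$ cells indexed by $Y \in \{0,1\}^n$ (with $Y_j$ indicating which side of $\alpha_j$ the variable $X_j$ falls on), and an $\mathbb{F}_2$-parity argument around the cycle shows coverage holds iff $\prod_i s_i = (-1)^{n-1}$, independent of $(t_j)$. Consequently, any sufficiently small perturbation of $(t_j)$ preserves coverage, and it suffices to find one that strictly decreases every $\mu(B_i)$.

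Perturbing $t_j \mapsto t_j + \varepsilon \eta_j$, the first-order change is $\delta \mu(B_i) = 2 s_i (t_{i+1}\eta_i + t_i\eta_{i+1})$. Normalizing $v_j = \mathrm{sign}(t_j)\,\eta_j$ and writing $\sigma_i = \mathrm{sign}(\mu(B_i) - 1/2)$, the goal reduces to the cyclic linear system
\[
\sigma_i\bigl(|t_{i+1}|\,v_i + |t_i|\,v_{i+1}\bigr) < 0 \qquad \text{for all } i \in [n].
\]
By LP-duality, this system is infeasible only when $(\sigma_i)$ is strictly alternating around the cycle. Since $\sigma_i = s_i \cdot \mathrm{sign}(t_i t_{i+1})$ gives $\prod_i \sigma_i = \prod_i s_i = (-1)^{n-1}$, while strict alternation (when $n$ is even) forces $\prod_i \sigma_i = (-1)^{n/2}$, the compatibility $(-1)^{n/2} = (-1)^{n-1}$ holds only for $n \equiv 2 \pmod 4$. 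In every other case the system admits a solution, yielding the desired contradiction at once.

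The main obstacle is the residual case $n \equiv 2 \pmod 4$ with alternating $\sigma$, where the linear perturbation provably fails and I would fall back on a global argument. For $n$ even, the identity $\prod_{\text{odd } i} c_i = \prod_{\text{even } i} c_i$ (with $c_i = (p_i - 1/2)/(2 s_i)$) makes the $n$ equations $t_i t_{i+1} = c_i$ dependent, producing a one-parameter fiber of valid $(t_j)$ realizing $\vec{p}$. Following this fiber to the closure of its valid range in $[-1/2, 1/2]^n$, some $t_j$ reaches $\pm 1/2$ and the adjacent bases $B_{j-1}, B_j$ degenerate to type $T_2$. A small second-order adjustment transverse to this degeneration—using the freedom in the other coordinates along the fiber's tangent—then produces a cylinder set whose measures strictly decrease at every coordinate while coverage is maintained, completing the contradiction with Lemma~\ref{le:boundaryisunique}.
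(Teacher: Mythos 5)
Your parametrization $\mu(B_i)=\tfrac12+2s_it_it_{i+1}$, the observation that coverage of the cube is the parity condition $\prod_i s_i=(-1)^{n-1}$ on the diagonal signs alone, and the first-order feasibility analysis are all correct, and this perturbative route is genuinely different from the paper's. But the proof has a fatal gap exactly at the ``residual case'' you flag. For $n\equiv 2\pmod 4$ with $(\sigma_i)$ strictly alternating, the fallback is not an argument. Following the multiplicative fiber until some $|t_j|$ reaches $\tfrac12$ only produces \emph{another} cylinder set realizing $\vec{p}$, one with a $T_2$ base; that contradicts nothing and does not show the original $T_3$-dominant set cannot exist. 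The ``small second-order adjustment'' is the entire content of this case and it is not supplied --- and a direct computation suggests it cannot exist: by your own duality argument no first-order direction makes all $\delta\mu(B_i)\le 0$ with one strict, so any useful perturbation must lie in the kernel of the differential, which (for $n$ even) is exactly the one-dimensional fiber tangent $\eta_i=(-1)^{i+1}t_i$; along it the first-order terms vanish and the exact change is $2s_i\epsilon^2\eta_i\eta_{i+1}=-\epsilon^2\bigl(p_i-\tfrac12\bigr)$, which \emph{increases} $\mu(B_i)$ for every $i$ with $p_i<\tfrac12$ --- and in the alternating case half the indices satisfy this. So local perturbation of the cut positions provably fails here, and some non-perturbative idea is required. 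A secondary issue: the normalization $v_j=\mathrm{sign}(t_j)\eta_j$ and the whole LP presuppose $t_j\neq 0$ for all $j$; if some $\alpha_j=\tfrac12$ then $\sigma_{j-1}=\sigma_j=0$, both corresponding constraints involve only $\eta_j$, and they may be contradictory.

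For contrast, the paper's proof is a single uniform construction with no case split on $n$: each two-rectangle diagonal base $B_i$, $1\le i\le n-2$, is replaced by one of its two rectangles, chosen consistently so that $F(\EventSetB'_{1,n-2})$ collapses to a single rectangle, and then $B'_{n-1},B'_n$ are chosen with $\mu(B'_{n-1})=\mu(B_{n-1})$ and $\mu(B'_n)=\mu(B_n)$ so that the union still has measure $1$. This strictly decreases the first $n-2$ measures while fixing the last two, contradicting Lemma \ref{le:boundaryisunique} outright. To salvage your approach you would need either a genuinely different argument for $n\equiv 2\pmod 4$ (e.g., showing the strictly alternating sign pattern is incompatible with the remaining constraints), or a non-local move such as the paper's change of the combinatorial type of the bases.
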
  

\begin{proof}
\begin{figure}
\centering
\includegraphics[scale=0.6]{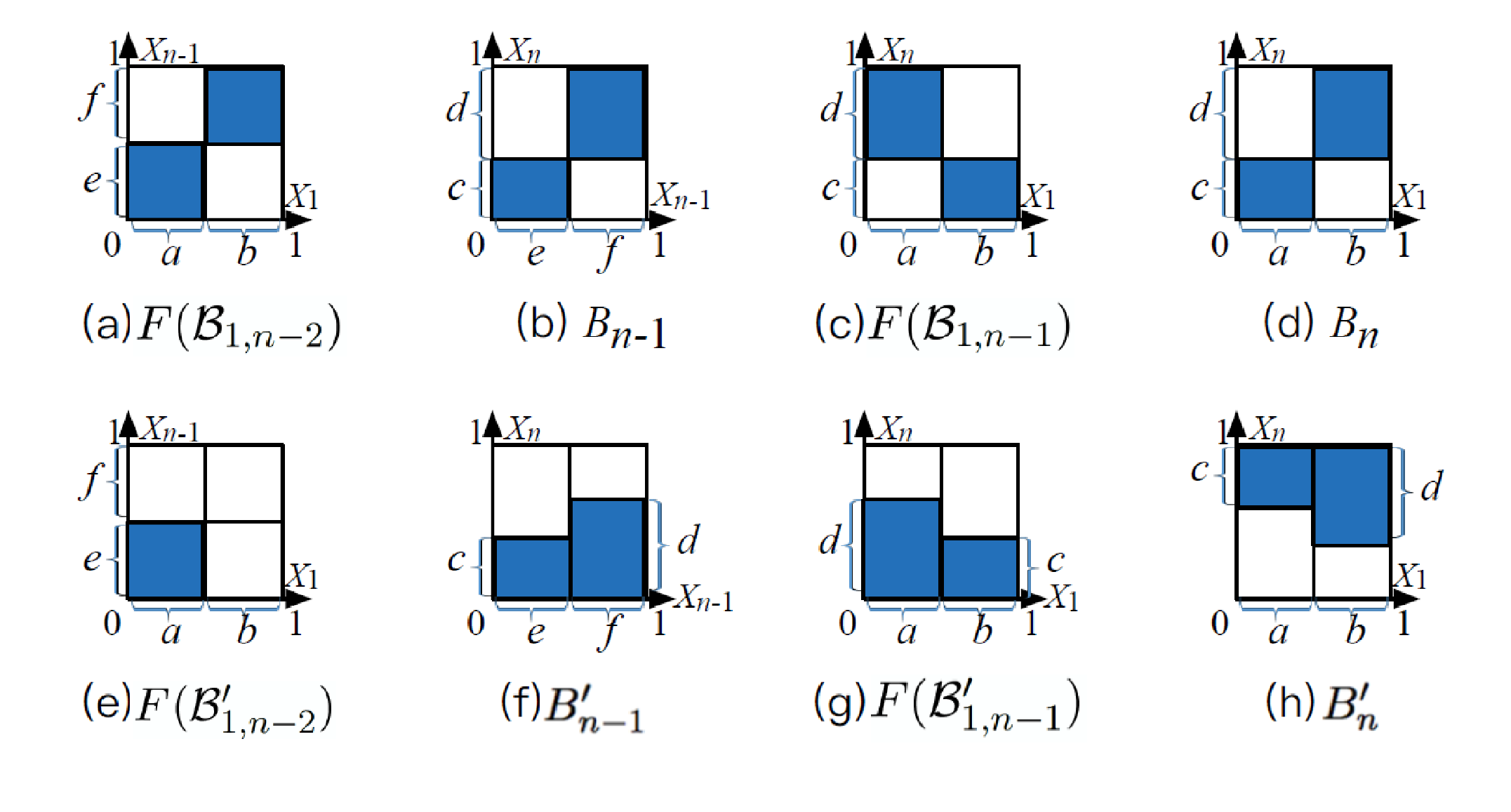}
\caption{The all-$T_3$-case is impossible} \label{fig:all4isimpossible}
\end{figure}
For contradiction, suppose that all $B_i$ have type $T_3$. Then $F(\EventSetB_{1,n-2})$ also has type $T_3$. Without loss of generality, assume that $F(\EventSetB_{1,n-2})$ and $B_{n-1}$ are as illustrated in Figure \ref{fig:all4isimpossible}(a) and \ref{fig:all4isimpossible}(b). Then $\mu(\Event_{n-1})=ce+df$. 
Again without loss of generality, we assume $c\leq d$. 

On the one hand, $F(\EventSetB_{1,n-1})$ is as illustrated in Figure \ref{fig:all4isimpossible}(c). By Lemma \ref{le: fakj}, $B_n$ must be as in Figure \ref{fig:all4isimpossible}(d). We have $\mu(A_n)=ac+bd$.

On the other hand, construct $B'_i$ by properly removing one of the two rectangles from $B_i$ for each $1\leq i\leq n-2$ such that $F(\EventSetB'_{1,n-2})$ is as shown in Figure  \ref{fig:all4isimpossible}(e). Let $B'_{n-1}$ be as illustrated in Figure  \ref{fig:all4isimpossible}(f).  One can check that $F(\EventSetB'_{1,n-1})$ must be as illustrated in Figure  \ref{fig:all4isimpossible}(g). Choose $B'_n$ as in Figure  \ref{fig:all4isimpossible}(h). Let $\EventSet'$ is the set of cylinders whose bases are $B'_1,...,B'_{n}$, respectively. We know that $\mu(\Event'_i)<\mu(\Event_i)$ for $1\leq i\leq n-2$, $\mu(\Event'_i)=\mu(\Event_i)$ for $i\in\{n,n-1\}$, $\Pr(\cup_{i\in [n]}\Event'_i)=1$, and $\EventSet'$ conforms with $\BipartiteGraph_n$. Since $\vec{p}$ is a boundary vector, by Lemma \ref{le:boundaryisunique}, we reach a contradiction. 

\end{proof}


\begin{lemma}\label{le:115isimpossible}
The $T_1$-dominant combination is impossible.
\end{lemma}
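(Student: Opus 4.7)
The plan mirrors Lemma \ref{le:all4isimpossible}: pin down the geometry forced by the $T_1$-dominant assumption, then perturb the parameters to exhibit a nearby cylinder set $\mathcal{B}'$ with $\mu(\mathcal{B}')<\vec{p}$ componentwise and $\mu(\bigcup_i A'_i)=1$, which contradicts Lemma \ref{le:boundaryisunique}. Suppose for contradiction that, WLOG, $B_n$ has type $T_4$ and every $B_i$ with $i<n$ has type $T_1$; write $B_i=R_i\times S_i$ with half-intervals $R_i\subsetneq\mathbb{I}^{\{i\}}$, $S_i\subsetneq\mathbb{I}^{\{i+1\}}$ of measures $r_i,s_i\in(0,1)$. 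Iteratively applying Lemma \ref{le: for break cycle} forces every $F(\mathcal{B}_{1,k})$ to be of type $T_1$, and the only feasible propagation imposes the cascade $R_{i+1}=\mathbb{I}^{\{i+1\}}\setminus S_i$ for $1\le i\le n-2$, hence $r_{i+1}=1-s_i$, together with $F(\mathcal{B}_{1,n-1})=R_1\times S_{n-1}$. By Lemma \ref{le: fakj}, $B_n=\mathbb{I}^{\{n,1\}}\setminus(S_{n-1}\times R_1)$, so the measures satisfy
\[
p_1=r_1 s_1,\quad p_i=(1-s_{i-1})s_i\ (2\le i\le n-1),\quad p_n=1-r_1 s_{n-1}.
\]

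Next I would perturb by $r_1\to r_1+\epsilon_0$ and $s_i\to s_i+\epsilon_i$, enforcing $\Delta p_i=0$ to first order for $i=1,\ldots,n-1$. This yields the recursion $\epsilon_1=-\epsilon_0 s_1/r_1$ and $\epsilon_i=\epsilon_{i-1}s_i/(1-s_{i-1})$ for $i\ge 2$, so every $\epsilon_i$ with $i\ge 1$ has sign opposite to $\epsilon_0$. Substituting into $\Delta p_n=-\epsilon_0 s_{n-1}-r_1\epsilon_{n-1}$ gives
\[
\Delta p_n=\epsilon_0\,s_{n-1}\Bigl(\prod_{j=1}^{n-2}\tfrac{s_j}{1-s_j}-1\Bigr).
\]
When the bracket is nonzero, choose the sign of $\epsilon_0$ making $\Delta p_n<0$ and $|\epsilon_0|$ small enough that every perturbed parameter remains in $(0,1)$. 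The second-order terms then read $\mu(A'_1)-p_1=\epsilon_0\epsilon_1$ and $\mu(A'_i)-p_i=-\epsilon_{i-1}\epsilon_i$ for $2\le i<n$; by the sign analysis these are strictly negative. The cascade is preserved so $\mu(\bigcup_i A'_i)=1$ still holds, and hence $\mu(\mathcal{B}')<\vec{p}$ componentwise with full coverage, directly contradicting Lemma \ref{le:boundaryisunique}.

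The main obstacle is the degenerate case $\prod_{j=1}^{n-2}s_j=\prod_{j=1}^{n-2}(1-s_j)$, in which the first-order $\Delta p_n$ vanishes. My plan to dispose of this is to repeat the argument after a cyclic relabelling of the bases (using $s_{n-1}$ rather than $r_1$ as the free parameter, which corresponds to ``opening the cycle'' at a different edge); the relabelled version produces a structurally different bracket. If both brackets vanished simultaneously, one could combine the identities with the measure equations and the strict constraints $r_i,s_i\in(0,1)$ to force some parameter to leave $(0,1)$, contradicting the strict $T_1$-dominance and closing the proof.
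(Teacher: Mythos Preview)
Your non-degenerate argument is correct and is essentially a linearized version of the paper's own proof: with $p_1,\dots,p_{n-1}$ fixed, the $T_1$-dominant configurations form a one-parameter family, and your bracket is precisely (up to a positive factor) the derivative of $p_n=1-r_1s_{n-1}$ along that family. When it is nonzero you may move to decrease $p_n$; the additional second-order drop you engineer in $p_1,\dots,p_{n-1}$ is a harmless bonus, since Lemma~\ref{le:boundaryisunique} already yields a contradiction from a strict decrease in a single coordinate.

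The degenerate case, however, is a genuine gap, and your proposed fix does not work. The family of $T_1$-dominant configurations with $p_1,\dots,p_{n-1}$ prescribed is one-dimensional, so reparametrizing by $s_{n-1}$ instead of $r_1$ yields the same critical set: the ``structurally different bracket'' you anticipate is the same derivative written in different coordinates and vanishes simultaneously. Nor can you ``open the cycle at a different edge'', since the position of the unique $T_4$ base is part of the hypothesis, not a free choice. Worse, under your own perturbation scheme the exact change is $\Delta p_n=-\epsilon_0\epsilon_{n-1}>0$ in the degenerate case (the first-order terms cancel by assumption and $\epsilon_0,\epsilon_{n-1}$ have opposite signs), so your construction actively moves $p_n$ the wrong way there.

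The paper closes this gap by a second-derivative computation. Keeping $p_1,\dots,p_{n-1}$ \emph{exactly} fixed (not merely to first order), it parametrizes the family by $x=s_{n-2}$, writes $g(x)=r_1s_{n-1}$, and shows via a short recursion that the sign of $g'(x)$ changes at most once, and only from negative to positive. Hence $g$ has no interior maximum; its maximum on the feasible closed interval is attained at an endpoint, where one of $B_1$ or $B_{n-1}$ degenerates to type $T_2$, contradicting $T_1$-dominance. That convexity-type statement is exactly the missing ingredient your degenerate case needs.
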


\begin{proof}
For contradiction, suppose without loss of generality that $\EventB_n$ has type $T_4$ while the others $\EventB_i$ have $T_1$.
We can further assume that $B_i$ has type $T_{14}$ as in Figure \ref{fig:typesoftheevents}(a) for every $1\leq i\leq n - 1$. Let $(p_1,...,p_n)=\vec{p}$. We have $a_ib_i=p_i$ for $1\leq i\leq n-1$ and $b_j+a_{j+1}=1$ for $1\leq j\leq n-2$. The parameters $a_i$'s and $b_i$'s should be chosen to maximize the measure of $F(\EventSetB_{1,n-1})$, namely $a_1b_{n-1}$.

Let $x=b_{n-2}\in (0,1)$. Then $b_{n-1}=\frac{p_{n-1}}{1-x},a_{n-2}=f_0(x)\triangleq \frac{p_{n-2}}{x},a_{n-2-k}=f_k(x)\triangleq \frac{p_{n-2-k}}{1-f_{k-1}(x)}, 1\leq k\leq n-3$. Define $g(x)\triangleq a_1b_{n-1}=f_{n-3}(x)\frac{p_{n-1}}{1-x}$.

To maximize $g(x)$, we consider its derivative $\frac{dg(x)}{dx}=f_{n-3}(x)\frac{p_{n-1}}{(1-x)^2}+\frac{df_{n-3}(x)}{dx}\frac{p_{n-1}}{1-x}$. Since only the sign of $\frac{dg(x)}{dx}$ matters, let $h(x)\triangleq \frac{dg(x)}{dx}\frac{(1-x)^2}{p_{n-1}}=f_{n-3}(x)+\frac{df_{n-3}(x)}{dx}(1-x)$. One can check that $\frac{dh(x)}{dx}=\frac{d^2 f_{n-3}(x)}{d^2 x}(1-x)$.
On the other hand, for any $0\leq k\leq n-4$, $\frac{d^2 f_{k+1}(x)}{d^2 x}=\frac{p_{n-3-k} }{(1-f_k(x))^2}\frac{d^2 f_k(x)}{d^2 x}+2\frac{p_{n-3-k} }{(1-f_k(x))^3}\left(\frac{d f_k(x)}{d x}\right)^2$. Since $\frac{d^2 f_0(x)}{d^2 x}>0$, by induction we know that $\frac{d^2 f_k(x)}{d^2 x}> 0$ for any $1\leq k\leq n-3$.

Altogether, $\frac{dh(x)}{dx} > 0$ when $x\in (0,1)$. This implies three possible cases:
\begin{enumerate}
\item $h(x)> 0$ for all $x\in (0,1)$;
\item $h(x)< 0$ for all $x\in (0,1)$;
\item There is $x_0\in (0,1)$ such that $h(x)< 0$ for all $x\in (0,x_0)$ and $h(x)> 0$ for all $x\in (x_0,1)$ and $h(x_0) = 0$ .
\end{enumerate}

Since $h(x)$ and $\frac{dg(x)}{dx}$ always have the same sign, $g(x)$ is either strictly monotonic on $(0,1)$ or decreasing on $(0,x_0)$ and increasing on $(x_0,1)$. 

On the other hand, we show that $x=b_{n-2}$ ranges over a closed interval in $(0,1)$. First, due to $1\geq b_{n-1}=\frac{p_{n-1}}{1-x}$, $x$ increases as $b_{n-1}$ increases, so it reaches upper bound when $\EventB_{n-1}$ has type $T_2$. Second, it is easy to see that $x$ decreases as $a_1$ increases, so $x$ reaches lower bound when $\EventB_1$ has type $T_2$.

Note that $g(x)$ gets maximized either at the lower bound or upper bound of $x$. When $g(x)$ is maximized, either $\EventB_{n-1}$ or $\EventB_1$ has type $T_2$, contradictory to our assumption that the measure of $F(\EventSetB_{1,n-1})$ is maximized in the case $\EventB_n$ has type $T_4$ while the others $\EventB_i$ have $T_1$.
\end{proof}

It is time to prove Theorem \ref{thm: break cycle}.

\begin{proof}
Theorem \ref{thm: break cycle} immediately follows from Lemmas \ref{le: type2enough}, \ref{le: possibletypes}, \ref{le:all4isimpossible}, \ref{le:115isimpossible}.
\end{proof}
\LongVersionEnd

\ShortVersion
By Lemmas \ref{le: type2enough} through to \ref{le:all4_115areimpossible}, Theorem \ref{thm: break cycle} immediately follows, which in turn enables us to find out all boundary vectors of any cyclic bigraph. 
\ShortVersionEnd


\Boundarycycle*

\LongVersion
\begin{proof}
Arbitrarily choose a vector $\vec{p}\in(0,1]^n$. By Lemma \ref{le:probabilityboundexist}, there is a unique $\lambda>0$ such that $\lambda\vec{p}\in\partial(\BipartiteGraph_n)$. By Theorem \ref{thm: break cycle}, there is a $\vec{d}$-discrete cylinder set $\EventSet\sim\BipartiteGraph_n$ such that $\mu(\EventSet)=\lambda\vec{p}$, $\mu(\cup_{\Event\in \EventSet} \Event)=1$, and $\vec{d}< (2,2,...,2)$. Then each $\Event_i\in\EventSet$ has a base $\EventB_i\in \Interval^{\{i,i+1\}}$. Arbitrarily choose $i\in [n]$ such that $d_i=1$, which means that both $\EventB_i$ and $\EventB_{i-1}$ have type $T_2$ and $F(\EventSetB_{i-1,i})$ have type $T_4$. More precisely, the type of $\EventB_{i-1}$ is $T_{23}$ or $T_{24}$, and that of $\EventB_i$ is $T_{21}$ or $T_{22}$. By Lemma \ref{le: fakj}, $F(\EventSetB_{i+1,i-2})$ has type $T_1$. By Lemma \ref{le: for break cycle}, we can show that for any $j\notin \{i-1,i\}$, $\EventB_j$ has type $T_1$. The types are illustrated in Figure \ref{fig:typesoftheevents}. Using the notation as in Figure \ref{fig:typesoftheevents}, it is easy to check that $a_{i-1}=\lambda p_{i-1}$, $b_i=\lambda p_i$, $b_j+a_{j+1}=1$ for any $j\neq i-1$, and $b_k a_k=\lambda p_k$ for any $k\notin \{i-1,i\}$. Eliminate all $a$'s, properly re-number the $b$'s, and we get the desired equation. As a result, the unique $\lambda$, say $\lambda_0$, is a solution to that equation.

For any $0<\lambda'<\lambda_0$, let $b'_1=\lambda' p_i, b'_k=\frac{\lambda' p_{k+i-1}}{1-b'_{k-1}}$ for $2\leq k\leq n-1$. By an argument of monotonicity, we know that $0<b'_k< b_k$ for $1\leq k\leq n-1$. On the other hand, if it also holds that $b'_{n-1}=1-\lambda' p_{i-1}$, then $b'_{n-1}>b_{n-1}$, which is a contradiction. Therefore, $\lambda'$ cannot be a solution to the equation system.
Altogether, $\lambda_0$ is the minimum positive solution. The proof ends.
\end{proof}
\LongVersionEnd

As an application of Theorem \ref{cycleboundary}, we explicitly characterize the boundary of the 3-cyclic bigraph $\BipartiteGraph_3$.

\begin{example}
For $\BipartiteGraph_3$, consider an arbitrary $\vec{p}\in(0,1)^3$ with $p_1+p_2+p_3=1$. For $i\in\{1,2,3\}$, we have $\lambda_i=\frac{1-\sqrt{1-4 p_i p_{i-1}}}{2 p_i p_{i-1}}$. Since the function $\frac{1-\sqrt{1-2x}}{x}$ is increasing with $x>0$, the final $\lambda_0 $ is the $\lambda_i$ with $i$ minimizing $p_i p_{i-1}$. For example, if $p_1\ge p_2$ and $p_1\ge p_3$, then $\lambda_3 \vec{p}=\frac{1-\sqrt{1-4 p_2 p_3}}{2 p_2 p_3}\vec{p}$ is a boundary vector. 
\end{example}

\section{Gap between Abstract- and Variable-LLL}
In this section, we investigate conditions under which Shearer's bound remains tight for Variable-LLL.
\subsection{A Theorem for Gap Decision}


\begin{definition}[Exclusiveness]
An event set $\EventSet$ is said to be \emph{exclusive} with respect to a graph $\DependencyGraph$, if $\DependencyGraph$ is a dependency graph of $\EventSet$ and $\mu(\Event_i\cap \Event_j)=0$ for any $i,j$ such that $i\in \Neighbor_\DependencyGraph(j)$. A cylinder set $\EventSet$ is called exclusive with respect to a bigraph $\BipartiteGraph$, if $\EventSet$ conforms with $\BipartiteGraph$ and $\EventSet$ is exclusive with respect to  $\DependencyGraph_\BipartiteGraph$. We do not mention ``with respect to $\DependencyGraph$ or $\BipartiteGraph$" if it is clear from context. 
\end{definition}

The next lemma claims that exclusive cylinder sets always exist if all the probabilities are small enough. 

\begin{lemma}\label{le:smalleventsdisjoint}
For any bigraph $\BipartiteGraph$, there is $\epsilon>0$ such that for any vector $\vec{p}$ on $(0,\epsilon)$, there exists a cylinder set $\EventSet$ that is exclusive with respect to $\BipartiteGraph$ and $\mu(\EventSet)=\vec{p}$.
\end{lemma}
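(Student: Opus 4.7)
The plan is to build each cylinder $\Event_i$ as a Cartesian product of short sub-intervals along the axes corresponding to its relevant variables, arranging that on every shared axis the sub-intervals chosen by different events are pairwise disjoint. Exclusiveness then follows automatically, and conformity with $\BipartiteGraph$ is built into the construction.

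Concretely, let $d_i = |\Neighbor_\BipartiteGraph(i)|$ for each $i \in L(\BipartiteGraph)$; I may assume $d_i \geq 1$, since a left-vertex of degree zero would force $\mu(\Event_i)\in\{0,1\}$ and can be handled trivially in a separate component. For each edge $(i,j) \in E$ I pick an interval $I_{i,j} \subseteq \Interval^{\{j\}}$ of length $p_i^{1/d_i}$, placing the intervals $\{I_{i,j} : i \in \Neighbor_\BipartiteGraph(j)\}$ consecutively and disjointly on the $j$-th axis. Setting
\[
\Event_i \;=\; \Bigl(\prod_{j\in\Neighbor_\BipartiteGraph(i)} I_{i,j}\Bigr) \times \Interval^{[m]\setminus\Neighbor_\BipartiteGraph(i)},
\]
independence of the coordinate variables gives $\mu(\Event_i) = \prod_{j\in\Neighbor_\BipartiteGraph(i)} p_i^{1/d_i} = p_i$. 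Moreover, whenever $i,i'\in\Neighbor_\BipartiteGraph(j)$ for some shared variable $j$, any point in $\Event_i \cap \Event_{i'}$ would have its $j$-th coordinate in the empty set $I_{i,j}\cap I_{i',j}$, so $\mu(\Event_i\cap\Event_{i'})=0$; events that share no variable are independent by the product structure. Hence $\EventSet=\{\Event_1,\ldots,\Event_n\}$ is exclusive with respect to $\BipartiteGraph$.

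The only feasibility requirement is $\sum_{i\in\Neighbor_\BipartiteGraph(j)} p_i^{1/d_i} \leq 1$ on every axis. Writing $d_{\max}=\max_i d_i$ and choosing $\epsilon = n^{-d_{\max}}$, for any $\vec{p}\in (0,\epsilon)^n$ and any $i$ we have $p_i^{1/d_i} < \epsilon^{1/d_i} \leq \epsilon^{1/d_{\max}} = 1/n$ (using that $\epsilon<1$ and $1/d_i\geq 1/d_{\max}$), so each axis sum is strictly less than $n \cdot (1/n) = 1$ and the placement fits. I do not foresee a serious obstacle: the argument is essentially a Cartesian-product construction, and the only subtle point is exhibiting a single $\epsilon$ that works simultaneously for every axis and every admissible probability vector, which is exactly what the uniform bound $\epsilon = n^{-d_{\max}}$ delivers.
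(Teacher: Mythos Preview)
Your proof is correct and follows essentially the same approach as the paper: both constructions place, on each axis $j$, pairwise disjoint sub-intervals (one for each $i\in\Neighbor_\BipartiteGraph(j)$) of length at most $1/n$, and take each $\Event_i$ to be the product of its assigned intervals. The only cosmetic difference is that the paper fixes all interval lengths to $1/n$ (giving $\mu(\Event_i)=(1/n)^{d_i}$) and then sets $\epsilon=\min_i(1/n)^{d_i}$, implicitly shrinking bases afterward, whereas you choose lengths $p_i^{1/d_i}$ to hit $\mu(\Event_i)=p_i$ directly; your choice $\epsilon=n^{-d_{\max}}$ coincides with the paper's.
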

\LongVersion
\begin{proof}
Let $\BipartiteGraph=([n],[m],E)$. For each $i\in L(\BipartiteGraph)$, define a cylinder $\Event_i=\{(x_1,...,x_m): \frac{i-1}{n}\leq x_j<\frac{i}{n}, \forall j\in \Neighbor(i)\}$. Obviously, $\{\Event_1,...,\Event_n\}$ is exclusive with respect to $\BipartiteGraph$. The lemma holds with $\epsilon=\min_{i\in L(\BipartiteGraph)} \mu(\Event_i)$. 
\end{proof}
\LongVersionEnd

\begin{definition}[Abstract Interior]
The \emph{abstract interior} of a graph $\DependencyGraph=([n],E)$, 
denoted by $\Interior_a(\DependencyGraph)$, is the set $\{\vec{p}\in (0,1)^n: \Pr\left(\cap_{\Event\in \EventSet} \Neg{\Event} \right)>0$  for any event set $\EventSet\sim_a \DependencyGraph$ with $\Pr(\EventSet) = \vec{p}\}$, where ``$\EventSet\sim_a \DependencyGraph$" means that $\DependencyGraph$ is a dependency graph of $\EventSet$. Given a bigraph $\BipartiteGraph$, we simply write $\Interior_a(\BipartiteGraph)$ for $\Interior_a(\DependencyGraph_{\BipartiteGraph})$.
\end{definition} 

It is obvious that $\Interior_a(\BipartiteGraph)\subseteq \Interior(\BipartiteGraph)$ for any bigraph $\BipartiteGraph$. 

\begin{definition}[Abstract Boundary]
The \emph{abstract boundary} of a graph $\DependencyGraph=([n],E)$, denoted by $\partial_a(\DependencyGraph)$, is the set $\{\vec{p}\in (0,1]^n: (1-\epsilon)\vec{p}\in \Interior_a(\DependencyGraph) \textrm{ and } (1+\epsilon)\vec{p}\notin \Interior_a(\DependencyGraph) \textrm{ for any }\epsilon\in (0,1)\}$. Any $\vec{p}\in \partial_a(\DependencyGraph)$ is called an \emph{abstract boundary vector} of $\DependencyGraph$.
\end{definition}

\LongVersion
Here is an interesting property of exclusive event sets.
\LongVersionEnd

\ShortVersion
The proof of \cite[Theorem 1]{shearer1985problem} presents an interesting property of exclusive event sets as in Lemma \ref{le:exclusionisworst}.
\ShortVersionEnd

\begin{lemma}\label{le:exclusionisworst}
Given a graph $\DependencyGraph$ and $\vec{p}\in\mathcal{I}_a(\DependencyGraph)\cup\partial_a(\DependencyGraph)$. Among all event sets $\EventSet\sim_a \DependencyGraph$ with $\Pr(\EventSet)=\vec{p}$, there is an exclusive one such that $\Pr(\cup_{\Event\in \EventSet} \Event )$ is maximized.
\end{lemma}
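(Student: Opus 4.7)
The plan is to establish the lemma in two parts. First, I would compute $\Pr(\cup_{\Event\in\EventSet}\Event)$ for an arbitrary exclusive event system with the prescribed dependency graph and marginals by a direct inclusion-exclusion, obtaining a closed form that depends only on $\DependencyGraph$ and $\vec p$. Second, I would invoke (the proof of) Shearer's Theorem \ref{thm:shearer} to show that this same value is an upper bound on $\Pr(\cup_{\Event\in\EventSet}\Event)$ over all event systems with the same dependency graph and marginals. Together with the construction of an exclusive realizer, this yields the claimed maximality.

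For the first step, let $\EventSet^* = \{A_1, \ldots, A_n\}$ be exclusive with respect to $\DependencyGraph$ and satisfy $\Pr(\EventSet^*) = \vec p$. By inclusion-exclusion,
\[
\Pr\Bigl(\bigcup_{i\in [n]} A_i\Bigr) \;=\; \sum_{\emptyset \neq S \subseteq [n]} (-1)^{|S|+1}\, \Pr\Bigl(\bigcap_{i \in S} A_i\Bigr).
\]
If $S \notin Ind(\DependencyGraph)$, some pair $i, j \in S$ is adjacent in $\DependencyGraph$, so $\Pr(A_i \cap A_j) = 0$ by exclusiveness and the term vanishes. If $S \in Ind(\DependencyGraph)$, the events $\{A_i : i \in S\}$ are mutually independent (being pairwise non-adjacent), contributing $\prod_{i \in S} p_i$. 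Hence
\[
\Pr\Bigl(\bigcup_{i \in [n]} A_i\Bigr) \;=\; 1 - \sum_{S \in Ind(\DependencyGraph)} (-1)^{|S|} \prod_{i \in S} p_i.
\]
For the second step, the inductive proof underlying Theorem \ref{thm:shearer} actually yields the pointwise inequality
\[
\Pr\Bigl(\bigcap_{i \in [n]} \Neg{A_i}\Bigr) \;\geq\; \sum_{S \in Ind(\DependencyGraph)} (-1)^{|S|} \prod_{i \in S} p_i
\]
for every event system with dependency graph $\DependencyGraph$ and probability vector $\vec p$, with the right-hand side being non-negative precisely on $\mathcal I_a(\DependencyGraph) \cup \partial_a(\DependencyGraph)$. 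Combining the two displays shows that any exclusive realizer attains the supremum of $\Pr(\cup_{\Event\in\EventSet} \Event)$ among all event systems with dependency graph $\DependencyGraph$ and probability vector $\vec p$.

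The main obstacle is exhibiting such an exclusive realizer. Lemma \ref{le:smalleventsdisjoint} handles uniformly small $\vec p$. For a general $\vec p \in \mathcal I_a(\DependencyGraph)$, my plan is to induct on $|V(\DependencyGraph)|$: select a vertex $v$, allocate a set of measure $p_v$ on an auxiliary axis to serve as $A_v^*$, require every $A_i^*$ with $i \in \Neighbor(v)$ to lie inside $\Neg{A_v^*}$, and recursively realize $\{A_i^*\}_{i \neq v}$ on $\Neg{A_v^*}$ as an exclusive system for $\DependencyGraph - v$ with appropriately rescaled marginals, using fresh independent variables for vertices not in $\Neighbor(v)$ so as to preserve independence across non-edges. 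The feasibility of the recursion reduces to Shearer's inequalities on the reduced graph, which follow from those on $\DependencyGraph$ by the defining recursion of the independence polynomial. For boundary $\vec p \in \partial_a(\DependencyGraph)$, approach via interior vectors $(1-\epsilon)\vec p$, produce exclusive realizers for each, and extract a limiting exclusive system by a compactness/discretization argument analogous to that used in the proof of Lemma \ref{le:boundaryfills}.
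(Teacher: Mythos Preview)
The paper's own proof is a single line deferring to Shearer's original argument in \cite{shearer1985problem}; the lemma is declared a byproduct of the proof of Theorem~1 there. Your proposal is thus an unpacking of that argument rather than a genuinely different route, and your first two steps---the inclusion-exclusion computation showing every exclusive system has $\Pr(\cap_i \Neg{A_i})=\sum_{S\in Ind(\DependencyGraph)}(-1)^{|S|}\prod_{i\in S}p_i$, together with Shearer's inductive lower bound on $\Pr(\cap_i \Neg{A_i})$ for arbitrary systems---are exactly what Shearer does and are correct.

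Where you diverge is in constructing the exclusive realizer. Your peel-off-a-vertex induction can be made to work, but it is heavier than necessary: you have to track which marginals get rescaled, verify that the rescaled vector stays in $\mathcal{I}_a(\DependencyGraph-v)$ (this is the heart of Shearer's recursion, so asserting it is somewhat circular), and then patch the boundary case with a separate compactness argument. Shearer's own construction avoids all of this: take the probability space to be $Ind(\DependencyGraph)$, assign each $S\in Ind(\DependencyGraph)$ probability $q_S(\vec{p})=\sum_{T\supseteq S,\,T\in Ind(\DependencyGraph)}(-1)^{|T|-|S|}\prod_{i\in T}p_i$, and set $A_i=\{S:i\in S\}$. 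Exclusiveness is automatic (adjacent vertices never co-occur in an independent set), the marginals and the dependency-graph independence both fall out of short M\"obius computations, and the construction works uniformly on $\mathcal{I}_a(\DependencyGraph)\cup\partial_a(\DependencyGraph)$ with no limiting procedure. If you want a self-contained proof rather than a citation, this direct construction is the one to write down.
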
  
\LongVersion
\begin{proof}
It is a byproduct of the proof of \cite[Theorem 1]{shearer1985problem}.
\end{proof} 
\LongVersionEnd

\begin{definition}[Gap]
A bigraph $\BipartiteGraph$ is called \emph{gapful in the direction} of $\vec{p}\in (0,1)^n$, if there is $\lambda>0$ such that $\lambda \vec{p}\in \Interior(\BipartiteGraph)\setminus\Interior_a(\BipartiteGraph)$, otherwise it is called gapless in this direction. $\BipartiteGraph$ is said to be \emph{gapful} if it is gapful in some direction, otherwise it is gapless.
\end{definition} 

For convenience, ``being gapful" will be used interchangeably with ``having a gap".

The main result of this section, namely  Theorem \ref{Conj:GapGeom}, is a necessary and sufficient condition for deciding whether a bigraph is gapful. 
Intuitively, it bridges gaplessness and exclusiveness both in the interior and on the boundary. 
At the first glance, the connection between gaplessness and exclusiveness seems to be an immediate corollary of the well-known Lemma \ref{le:exclusionisworst} by Shearer. However, this is not the case. The main difficulty lies in boundary vectors. Suppose the bigraph is gapless. On the one hand, for a vector on its boundary, there is an exclusive event set whose union has probability $1$, by Lemma \ref{le:exclusionisworst}. These events are not necessarily cylinders, so we cannot claim the existence of an exclusive cylinder set. On the other hand, there indeed is a cylinder set whose union has measure $1$. Such a cylinder set must be exclusive as desired, if the union of non-exclusive events always has smaller probability than that of exclusive ones. But Lemma \ref{le:exclusionisworst} just claims that the union of non-exclusive events cannot have bigger probability, not precluding the possibility that the probabilities are equal. Our proof essentially distills down to ruling out this possibility, as in Lemma \ref{WorstPlacement}.

\LongVersion
The next lemma will be used in proving Lemma \ref{WorstPlacement}. It claims that every individual event in an exclusive event set contributes to the overall probability.
\begin{lemma}\label{thelessthesmaller}
Given an exclusive set $\EventSet$ of events, $\Pr(\cup_{\Event\in \EventSet'} \Event)<\Pr(\cup_{\Event\in \EventSet} \Event)$ for any $\EventSet'\subsetneq \EventSet$.
\end{lemma}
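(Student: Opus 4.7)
The plan is to reduce to the removal of a single event and then to expose the incremental contribution of that event via the two defining features of exclusiveness: disjointness of neighbors and joint independence from non-neighbors.

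First I would fix any $\Event\in\EventSet\setminus\EventSet'$ (possible since $\EventSet'\subsetneq\EventSet$) and observe, by monotonicity of probability, that $\Pr(\cup_{\EventB\in\EventSet'}\EventB)\le \Pr(\cup_{\EventB\in\EventSet\setminus\{\Event\}}\EventB)$. Hence it suffices to show
$$\delta \;\triangleq\; \Pr\!\left(\Event\cap \bigcap_{\EventB\in\EventSet\setminus\{\Event\}}\Neg{\EventB}\right)\;>\;0.$$
Let $\DependencyGraph$ be a dependency graph witnessing the exclusiveness of $\EventSet$, and partition $\EventSet\setminus\{\Event\}$ into the neighbor part $N_\Event\triangleq\Neighbor_\DependencyGraph(\Event)\cap\EventSet$ and the non-neighbor part $M_\Event$. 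For each $\EventB\in N_\Event$, exclusiveness yields $\mu(\Event\cap\EventB)=0$, so dropping all neighbor complements from the big intersection leaves the measure unchanged and gives $\delta=\Pr(\Event\cap\bigcap_{\EventB\in M_\Event}\Neg{\EventB})$. Since $\Event$ is jointly independent of the family $M_\Event$ (this is exactly what it means for $\DependencyGraph$ to be a dependency graph of $\EventSet$), I then factor
$$\delta\;=\;\Pr(\Event)\cdot\Pr\!\left(\bigcap_{\EventB\in M_\Event}\Neg{\EventB}\right).$$

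The remaining task is to show both factors are strictly positive. $\Pr(\Event)>0$ is built into the setting, and by monotonicity $\Pr(\bigcap_{\EventB\in M_\Event}\Neg{\EventB})\ge \Pr(\bigcap_{\EventB\in\EventSet}\Neg{\EventB})$, with the right-hand side positive under the standing feasibility hypothesis of this section (the probability vector lies in the interior, so all events in $\EventSet$ can be simultaneously avoided).

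I expect the main obstacle to be precisely this final positivity step: the algebraic identity for $\delta$ holds in full generality, but strictness genuinely requires $\Pr(\cap_{\EventB\in\EventSet}\Neg{\EventB})>0$. Without such a hypothesis the statement can fail on simple exclusive systems (for instance, a dependency graph consisting of two disjoint edges whose two disjoint pairs of events each already partition the probability space), so the proof must explicitly locate itself in the interior regime. Once that is in place, the remaining manipulations---collapsing the neighbor complements by disjointness and factoring the non-neighbor intersection by independence---are immediate.
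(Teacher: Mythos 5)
Your reduction to a single removed event and the identity
\[
\delta=\Pr\Bigl(\Event\cap\bigcap_{\EventB\in\EventSet\setminus\{\Event\}}\Neg{\EventB}\Bigr)
=\Pr(\Event)\cdot\Pr\Bigl(\bigcap_{\EventB\in M_\Event}\Neg{\EventB}\Bigr)
\]
match the opening of the paper's argument (the paper phrases it contrapositively: if $\delta=0$ then $\Pr(\cup_{\EventB\in M_\Event}\EventB)=1$). The genuine gap is in your final step. You close the argument by invoking $\Pr(\cap_{\EventB\in\EventSet}\Neg{\EventB})>0$, i.e.\ an interior hypothesis, but the lemma carries no such hypothesis and is deployed precisely where it fails: in Case~2 of Lemma~\ref{WorstPlacement} the lemma is used to conclude $\beta(S_1)>0$ for an exclusive set whose full union has probability $1$ (there $\beta([n])=0$), and similarly in Corollary~\ref{exclusivenonexterior} the probability vector may lie on the boundary. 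A proof that only works in the interior regime proves a strictly weaker statement than what the paper needs. Your proposed counterexample (two disjoint edges, each pair partitioning the space) does not show the hypothesis is necessary either, because it uses a disconnected dependency graph, and the standing assumption of this section is that base graphs (hence dependency graphs) are connected.

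The missing idea is how to rule out $\Pr(\cup_{\EventB\in M_\Event}\EventB)=1$ without any interior assumption. The paper does this by induction on $|\EventSet|$ together with connectedness: if the union of the non-neighbors of $\Event$ has probability $1$, then since distinct connected components of $\DependencyGraph\setminus\Neighbor^+(\Event)$ carry mutually independent event families, some single component $\Gamma$ already satisfies $\Pr(\cup_{i\in\Gamma}\Event_i)=1$; connectedness of $\DependencyGraph$ supplies a vertex $k\in\Neighbor(\Event)$ adjacent to $\Gamma$, and the induction hypothesis applied to the exclusive set indexed by $\Gamma\cup\{k\}$ forces $\Pr(\cup_{i\in\Gamma\cup\{k\}}\Event_i)>1$, a contradiction (using that all events have positive probability). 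You would need to supply this inductive descent, or an equivalent mechanism, to make your argument go through in the generality the lemma is actually used.
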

\begin{proof}
The proof is by induction on $|\EventSet|$, the size of $\EventSet$. 

\textbf{Basis:} $|\EventSet|=1$. It trivially holds.

\textbf{Hypothesis:} The lemma whenever $|\EventSet|<n$.

\textbf{Induction:} Consider $|\EventSet|=n$.  Let $\EventSet=\{\Event_1,...,\Event_n\}$, and $\DependencyGraph=([n],E)$ be a dependency graph with respect to which $\EventSet$ is exclusive. For contradiction, suppose that there is
$\EventB\in \EventSet$ such that $\Pr(\cup_{\Event\in \EventSet\setminus\{ \EventB\}} \Event)=\Pr(\cup_{\Event\in \EventSet} \Event)$.  We try to reach a contradiction. Without loss of generality, assume $\EventB=\Event_n$.

Since $\Pr(\cup_{i\in [n-1]} \Event_i)=\Pr(\cup_{i\in [n]} \Event_i)$, we have $\Event_n\subseteq \cup_{i\in [n-1]} \Event_i$. Recall that $\EventSet$ is exclusive, so $\Event_n\cap (\cup_{i\in \Neighbor(n)} \Event_i)=\emptyset$. 
As a result, $\Event_n\subseteq \cup_{i\notin \Neighbor^+(n)} \Event_i$, where $\Neighbor^+(n)\triangleq \Neighbor(n)\cup\{n\}$. Note that $\Event_n$ and $\{\Event_i:i\notin \Neighbor^+(n)\}$ are independent, so $\Pr(\Event_n)=\Pr(\Event_n\cap \cup_{i\notin \Neighbor^+(n)} \Event_i)=\Pr(\Event_n)\Pr(\cup_{i\notin \Neighbor^+(n)}\Event_i)$, which implies that $\Pr(\cup_{i\notin \Neighbor^+(n)}\Event_i)=1$.

Consider the connected components of $\DependencyGraph$ after $\Neighbor^+(n)$ has been removed. There must be a component $\Gamma$ such that $\Pr(\cup_{i\in \Gamma}\Event_i)=1$, because of two facts. First, event sets on different connected components are independent. Second, if the union of independent events has probability $1$, at least one of them has probability $1$.

Because the vertex $n$ is isolated from $\Gamma$ and $G$ is connected, there must be some vertex $k\in \Neighbor(n)$ that is adjacent to $\Gamma$. Let $\Gamma'=\Gamma\cup \{k\}$, $\DependencyGraph'$ be the induced subgraph of $\DependencyGraph$ on $\Gamma'$, $\vec{p}'=\vec{p}|_{\Gamma'}$, $\EventSet'=\{\Event_i: i\in\Gamma'\}$. Then $\EventSet'$ is exclusive with respect to $\DependencyGraph'$, and $\Pr(\EventSet')=\vec{p}'$. Since $G'$ has less than $n$ vertices, by induction hypothesis, $\Pr(\cup_{i\in \Gamma'} \Event_i)>\Pr(\cup_{i\in \Gamma} \Event_i)=1$ which is a contrdiction.
\end{proof}

The following corollary means that the probability vector of any exclusive cylinder set must lie in the interior or on the boundary. It can be regarded as the converse of Lemma \ref{le:exclusionisworst}. 

\begin{corollary}\label{exclusivenonexterior}
Given a bigraph $\BipartiteGraph$ and a vector $\vec{p}$ on $(0,1]$, if there is a cylinder set $\EventSet$ that is exclusive with respect to $\BipartiteGraph$ and $\mu(\EventSet)=\vec{p}$, then $\vec{p}\in\mathcal{I}(\DependencyGraph)\cup\partial(\DependencyGraph)$.
\end{corollary}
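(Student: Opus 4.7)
The plan is to reduce the claim to Shearer's criterion. I will first show that existence of an exclusive cylinder set with probability vector $\vec{p}$ forces every Shearer polynomial $q_S(\vec{p},\DependencyGraph_\BipartiteGraph)$ to be nonnegative, placing $\vec{p}$ in the closure $\Interior_a(\DependencyGraph_\BipartiteGraph)\cup\partial_a(\DependencyGraph_\BipartiteGraph)$; then I will argue that this abstract closure is contained in $\Interior(\BipartiteGraph)\cup\partial(\BipartiteGraph)$.

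For each $S\in Ind(\DependencyGraph_\BipartiteGraph)$ I will analyse the auxiliary event
\[
E_S \;=\; \bigcap_{i\in S}\Event_i \;\cap\; \bigcap_{j\in [n]\setminus(S\cup\Neighbor(S))}\Neg{\Event_j}.
\]
Exclusivity gives $\bigcap_{i'\in S}\Event_{i'}\subseteq\Neg{\Event_j}$ for every $j\in\Neighbor(S)$, so adjoining those complements would be redundant; hence $E_S$ also equals $\bigcap_{i\in S}\Event_i\cap\bigcap_{j\notin S}\Neg{\Event_j}$, and the family $\{E_S\}_{S\in Ind(\DependencyGraph_\BipartiteGraph)}$ partitions $\Interval^m$ up to null sets. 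Since $S$ is independent in $\DependencyGraph_\BipartiteGraph$, the cylinders $\{\Event_i\}_{i\in S}$ depend on pairwise disjoint variable blocks and are jointly independent of the cylinders indexed by $[n]\setminus(S\cup\Neighbor(S))$. Expanding the second intersection by inclusion--exclusion and invoking exclusivity once more---so $\mu(\bigcap_{k\in U}\Event_k)$ vanishes whenever $U$ contains an edge and factors as $\prod_{k\in U}p_k$ whenever $U$ is independent---the surviving $U$'s biject via $U\mapsto S\cup U$ with independent sets $T\supseteq S$, and I obtain
\[
\mu(E_S) \;=\; \sum_{T\supseteq S,\; T\in Ind(\DependencyGraph_\BipartiteGraph)}(-1)^{|T|-|S|}\prod_{i\in T}p_i \;=\; q_S(\vec{p},\DependencyGraph_\BipartiteGraph).
\]
Since $\mu(E_S)\geq 0$ for every $S$, Theorem \ref{thm:shearer} in its closed form (obtained by continuity of the Shearer polynomials) places $\vec{p}\in \Interior_a(\DependencyGraph_\BipartiteGraph)\cup\partial_a(\DependencyGraph_\BipartiteGraph)$.

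For the containment step, $\Interior_a\subseteq\Interior$ is already noted in the paper. If $\vec{p}\in\partial_a(\DependencyGraph_\BipartiteGraph)$, then $(1-\epsilon)\vec{p}\in\Interior_a(\DependencyGraph_\BipartiteGraph)\subseteq\Interior(\BipartiteGraph)$ for every $\epsilon\in(0,1)$, so the unique scale $\lambda^*$ with $\lambda^*\vec{p}\in\partial(\BipartiteGraph)$ (Lemma \ref{le:probabilityboundexist}) must satisfy $\lambda^*\geq 1$; hence $\vec{p}\in\Interior(\BipartiteGraph)$ when $\lambda^*>1$ and $\vec{p}\in\partial(\BipartiteGraph)$ when $\lambda^*=1$.

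The hard part is singling out the right auxiliary event $E_S$. Exclusivity forces adjacent indices out of the non-occurring block (otherwise the complements are automatic and contribute nothing), while the independence of the $S$-block from its non-neighbourhood is precisely what makes the inclusion--exclusion expansion align with Shearer's alternating sum over independent supersets of $S$. Once $E_S$ is in hand, the rest of the computation is routine bookkeeping.
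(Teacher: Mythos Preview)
Your approach is genuinely different from the paper's. The paper splits one cylinder $A_n$ into two disjoint pieces to form an auxiliary exclusive system on an enlarged bigraph, then invokes Lemma~\ref{thelessthesmaller} to show that dropping the extra piece strictly lowers the measure of the union below $1$; combined with Lemma~\ref{WorstPlacement} this yields $(p_1,\dots,p_{n-1},p_n-\epsilon)\in\mathcal I(\BipartiteGraph)$ for every $\epsilon>0$. You instead recognise that the atoms $E_S$ of an exclusive system realise the Shearer polynomials as honest measures, which is an elegant way to recover Shearer's extremal construction from the inside and sidesteps the splitting trick entirely.

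There is, however, a gap in your passage from ``$q_S(\vec p)\ge 0$ for all $S$'' to ``$\vec p\in\mathcal I_a\cup\partial_a$.'' Continuity of the polynomials gives only the easy inclusion $\overline{\mathcal I_a}\subseteq\{q_S\ge 0\ \forall S\}$; the direction you actually need is the reverse, and that does \emph{not} follow from continuity---one must exclude regions of simultaneous nonnegativity disconnected from the origin. The reverse inclusion is true (it is the Scott--Sokal refinement of Shearer's theorem, resting on the down-set/ray-monotonicity structure of the Shearer region), so a citation to~\cite{Scott2005} closes the gap. Alternatively you can stay self-contained: shrink each base to get exclusive $\mathcal A_\lambda\sim\BipartiteGraph$ with $\mu(\mathcal A_\lambda)=\lambda\vec p$, note that your own identity gives $q_\emptyset(\lambda\vec p)=\mu\bigl(\bigcap_i\overline{A_{\lambda,i}}\bigr)$, and then show this is \emph{strictly} positive for $\lambda<1$---but establishing that strict positivity is precisely where either Lemma~\ref{thelessthesmaller} or the Scott--Sokal monotonicity becomes unavoidable, so the detour through Shearer does not ultimately save that step.
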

\begin{proof} Assume $\BipartiteGraph=([n],[m],E)$, $\EventSet=\{\Event_1,...,\Event_n\}$, and $\vec{p}=(p_1,...,p_n)$. Consider the vertex $n\in L(\BipartiteGraph)$, and let $B_n$ be the base of $\Event_n$ that lies in $\Interval^{\Neighbor_\BipartiteGraph(n)}$. Arbitrarily fix $0<\epsilon<p_n$. Choose a subset $B'_n\subset B_n$ with $\mu(B'_n)=p_n-\epsilon$. Let $B''_n=B_n\setminus B'_n$. Define a bigraph $\BipartiteGraph'=([n+1],[m],E)$ such that $\Neighbor_{\BipartiteGraph'}(i)=\Neighbor_{\BipartiteGraph}(i)$ for any $i\in L(\BipartiteGraph)\subset L(\BipartiteGraph')$ and $\Neighbor_{\BipartiteGraph'}(n+1)=\Neighbor_{\BipartiteGraph}(n)$ for $n+1\in L(\BipartiteGraph')$. Let $\EventSet'=\{\Event_1,...,\Event_{n-1}, \Event'_n, \Event'_{n+1}\}$ where $\Event'_n$ and $\Event'_{n+1}$ are the cylinders with bases $B'_n$ and $B''_n$, respectively. It is easy to see that $\EventSet'$ is exclusive with respect to $\BipartiteGraph'$ and $\mu(\EventSet')=(p_1,...,p_{n-1},p_n-\epsilon,\epsilon)$. By Lemma \ref{thelessthesmaller}, $\mu(\cup_{\Event\in\EventSet''} \Event)< \mu(\cup_{\Event\in\EventSet'} \Event)\leq 1$, where $\EventSet''=\{\Event_1,...,\Event_{n-1}, \Event'_n\}$. One can check that $\EventSet''$ is exclusive with respect to $\BipartiteGraph$ and $\mu(\EventSet'')=\vec{p}_\epsilon\triangleq (p_1,...,p_{n-1},p_n-\epsilon)$. Hence $\vec{p}_\epsilon\in\Interior(\BipartiteGraph)$. Since $\epsilon$ can be arbitrarily small, we know that $\vec{p}\in\mathcal{I}(\DependencyGraph)\cup\partial(\DependencyGraph)$.
\end{proof}

The following lemma is key to the proof of Theorem \ref{Conj:GapGeom}. Intuitively, it claims that the overall probability is maximized by and only by an exclusive set of event. The ``by'' part was proved in \cite[Theorem 1]{shearer1985problem}, and the ``only by'' part will be proved here. The proof is inspired by that of \cite[Theorem 1]{shearer1985problem}.
\LongVersionEnd

\ShortVersion
The following lemma is key to the proof of Theorem \ref{Conj:GapGeom}. Intuitively, it claims that the overall probability is maximized by and only by an exclusive set of event. Note that the ``by'' part was proved in \cite[Theorem 1]{shearer1985problem}, but it appears here to make this paper self-contained.
\ShortVersionEnd


\begin{lemma}\label{WorstPlacement}
Suppose that $\DependencyGraph$ is a dependency graph of event sets $\EventSet$ and $\EventSetB$, $\Pr(\EventSet)=\Pr(\EventSetB)$, and $\EventSetB$ is exclusive. Then $\Pr(\cup_{\Event\in \EventSet} \Event)\leq\Pr(\cup_{B\in \EventSetB} B)$, and the equality holds if and only if $\EventSet$ is exclusive.
\end{lemma}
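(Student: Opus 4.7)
My plan is to prove both parts by induction on $n=|V(\DependencyGraph)|$. The inequality $\Pr(\cup\EventSet)\leq\Pr(\cup\EventSetB)$ is exactly Shearer's Theorem~1 in \cite{shearer1985problem}; I would re-present its induction argument explicitly so that the tightness conditions---which drive the new ``only if'' direction---can be read off step by step.

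For the base case $n=1$ both unions have probability $p_1$ and a one-element family is vacuously exclusive. For the inductive step, pick any $v\in V$, write $\mathcal{N}=\Neighbor_\DependencyGraph(v)$ and $\bar{\mathcal{N}}=V\setminus(\mathcal{N}\cup\{v\})$, and decompose
\[
\Pr\!\bigl(\cap_{i\in V}\Neg{\Event_i}\bigr)=\Pr\!\bigl(\cap_{i\neq v}\Neg{\Event_i}\bigr)-\Pr\!\bigl(\Event_v\cap\cap_{i\neq v}\Neg{\Event_i}\bigr).
\]
Using $\cap_{i\neq v}\Neg{\Event_i}\subseteq\cap_{i\in\bar{\mathcal{N}}}\Neg{\Event_i}$ and the dependency-graph-induced independence of $\Event_v$ from $\sigma(\Event_j:j\in\bar{\mathcal{N}})$, I would derive the key bound, call it $(\star)$:
\[
\Pr\!\bigl(\Event_v\cap\cap_{i\neq v}\Neg{\Event_i}\bigr)\leq \Pr\!\bigl(\Event_v\cap\cap_{i\in\bar{\mathcal{N}}}\Neg{\Event_i}\bigr)=p_v\Pr\!\bigl(\cap_{i\in\bar{\mathcal{N}}}\Neg{\Event_i}\bigr),
\]
which holds with equality iff $\Pr\bigl(\Event_v\cap(\cup_{i\in\mathcal{N}}\Event_i)\cap(\cap_{i\in\bar{\mathcal{N}}}\Neg{\Event_i})\bigr)=0$. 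For exclusive $\EventSetB$, the disjointness $\EventB_v\cap\EventB_i=\emptyset$ for $i\in\mathcal{N}$ automatically makes $(\star)$ an equality. Combining the two resulting recursions with the inductive hypothesis applied to $\DependencyGraph-v$ and $\DependencyGraph[\bar{\mathcal{N}}]$ (on both of which $\EventSetB$ restricts to exclusive families) reproduces Shearer's inequality.

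For the ``only if'' direction, assume $\Pr(\cup\EventSet)=\Pr(\cup\EventSetB)$; then every inequality in the induction is forced to be tight. In particular $(\star)$ must be an equality for every choice of pivot $v$, so $\Event_v\cap(\cup_{i\in\mathcal{N}}\Event_i)$ and $\cap_{i\in\bar{\mathcal{N}}}\Neg{\Event_i}$ are almost disjoint. Under the joint independence of the groups $\{\Event_i:i\in\{v\}\cup\mathcal{N}\}$ and $\{\Event_j:j\in\bar{\mathcal{N}}\}$---the structural feature of the variable-generated setting in which this lemma is ultimately applied---this disjointness factors as $\Pr(\Event_v\cap\cup_{i\in\mathcal{N}}\Event_i)\cdot\Pr(\cap_{i\in\bar{\mathcal{N}}}\Neg{\Event_i})=0$. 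Whenever the second factor is strictly positive the first must vanish, yielding $\Pr(\Event_v\cap\Event_i)=0$ for every $i\in\mathcal{N}(v)$. Since $v$ is arbitrary, exclusivity holds on every edge of $\DependencyGraph$.

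The main obstacle will be the degenerate case $\Pr(\cap_{i\in\bar{\mathcal{N}}(v)}\Neg{\Event_i})=0$ for every admissible pivot $v$, where the factorization becomes vacuous. I plan to handle this by a perturbation argument: since $\vec p\in\mathcal{I}_a(\DependencyGraph)\cup\partial_a(\DependencyGraph)$, replacing $\vec p$ by $(1-\epsilon)\vec p$ for small $\epsilon>0$ and invoking Lemma~\ref{le:smalleventsdisjoint} produces an exclusive reference family with strictly positive $\Pr(\cap_{i\in S}\Neg{\Event_i})$ on every induced subgraph. The nondegenerate argument then forces exclusivity of the perturbed family, and a compactness-plus-continuity passage to $\epsilon\to 0$, in the spirit of the limit argument in the proof of Lemma~\ref{le:boundaryfills}, transfers exclusivity back to $\EventSet$.
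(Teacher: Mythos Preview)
Your proposal has a genuine gap in the ``only if'' direction. The factorization
\[
\Pr\!\bigl(\Event_v\cap(\cup_{i\in\mathcal{N}}\Event_i)\cap(\cap_{i\in\bar{\mathcal{N}}}\Neg{\Event_i})\bigr)
=\Pr\!\bigl(\Event_v\cap\cup_{i\in\mathcal{N}}\Event_i\bigr)\cdot\Pr\!\bigl(\cap_{i\in\bar{\mathcal{N}}}\Neg{\Event_i}\bigr)
\]
is not valid, even in the variable-generated setting. The dependency graph only guarantees that $\Event_v$ is independent of $\{\Event_j:j\in\bar{\mathcal{N}}\}$; it says nothing about joint independence of the block $\{\Event_i:i\in\{v\}\cup\mathcal{N}\}$ from $\{\Event_j:j\in\bar{\mathcal{N}}\}$. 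Indeed, vertices of $\mathcal{N}(v)$ typically have neighbors in $\bar{\mathcal{N}}$, so the corresponding events share variables. Concretely, on the path $1$--$2$--$3$ with $v=1$, your claim would force $\{A_1,A_2\}$ to be independent of $\{A_3\}$, which is false. Thus from tightness of $(\star)$ you only get $\Pr(\Event_v\cap(\cup_{i\in\mathcal{N}}\Event_i)\cap(\cap_{i\in\bar{\mathcal{N}}}\Neg{\Event_i}))=0$, and this does not by itself yield $\Pr(\Event_v\cap\Event_i)=0$ for $i\in\mathcal{N}$.

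There is also a structural issue with your induction for the inequality itself. From
\[
\alpha(V)\geq\alpha(V\setminus\{v\})-p_v\,\alpha(\bar{\mathcal{N}}),\qquad
\beta(V)=\beta(V\setminus\{v\})-p_v\,\beta(\bar{\mathcal{N}}),
\]
the hypotheses $\alpha(V\setminus\{v\})\geq\beta(V\setminus\{v\})$ and $\alpha(\bar{\mathcal{N}})\geq\beta(\bar{\mathcal{N}})$ point in opposite directions and do \emph{not} combine to give $\alpha(V)\geq\beta(V)$. Shearer's proof avoids this by proving instead that the ratio $\alpha(S)/\beta(S)$ is monotone in $S$; the paper's proof of the present lemma keeps that framework and, for the ``only if'' part, traces equality through the chain of ratio inequalities. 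In the nondegenerate case $\beta([n])>0$ one locates an edge $(i,j)$ with $\Pr(A_i\cap A_j)>0$, shows this makes the inequality at $S=\{i,j\}$ strict, and then monotonicity carries strictness up to $S=[n]$. The degenerate case $\beta([n])=0$ is handled not by perturbation but by an inner induction on $|T_1|$ that repeatedly re-pivots to a neighbor $j'$ of the current pivot and uses Lemma~\ref{thelessthesmaller} to ensure $\beta(S_1)>0$ at each step. Your perturbation idea does not obviously work: shrinking $\vec{p}$ to $(1-\epsilon)\vec{p}$ changes both $\EventSet$ and $\EventSetB$, and there is no reason the perturbed $\EventSet$ still attains equality, so nothing forces exclusivity of the perturbed family in the first place.
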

\LongVersion
\begin{proof}
Shearer proved $\Pr(\cup_{A\in \EventSet} A)\leq\Pr(\cup_{B\in \EventSetB} B)$ in \cite[Theorem 1]{shearer1985problem}, so we focus on the other part. 

Assume $\DependencyGraph=([n],E)$, $\EventSet=\{\Event_1,...,\Event_n\}$, and $\Pr(\EventSet)=(p_1,...,p_n)$.
Let's borrow the notation from the proof of \cite[Theorem 1]{shearer1985problem}. 
For any $S\subseteq[n]$, define $\alpha(S)=\Pr(\cap_{i\in S}\overline{A_i})$ and $\beta(S)=\Pr(\cap_{i\in S}\overline{B_i})$. We proceed case by case.

\textbf{Case 1}: $\beta([n])>0$. Suppose that $\EventSet$ is not exclusive.

We first prove by induction on $|S|$ that $\alpha(S)/\beta(S)$ increases with inclusion. The base case holds since $\alpha(\emptyset)=\beta(\emptyset)$ and $\alpha(S)=\beta(S)$ for any singleton $S$. For induction, given $S_1\subset [n]$ and $j\in[n]\setminus S_1$, let $S_2=S_1\cup\{j\}$, $T_2=S_1\cap \Neighbor(j)$, and $T_1=S_1\setminus T_2$. We have 
\begin{eqnarray}\label{equ:quotientmonotonic}
\frac{\alpha(S_2)}{\beta(S_2)}- \frac{\alpha(S_1)}{\beta(S_1)}\geq \frac{\alpha(S_1)-p_{j}\alpha(T_1)}{\beta(S_1)-p_{j}\beta(T_1)}- \frac{\alpha(S_1)}{\beta(S_1)}=\frac{p_{j}\beta(T_1)}{\beta(S_1)-p_{j}\beta(T_1)}\left[\frac{\alpha(S_1)}{\beta(S_1)}-\frac{\alpha(T_1)}{\beta(T_1)}\right]\geq 0. 
\end{eqnarray}

The last inequality is by induction, and the first one holds because on the one hand
\begin{eqnarray}\label{equ:S2}
\begin{array}{rl}
    \alpha(S_2)=&\Pr(\cap_{i\in S_2}\overline{A_i})= \Pr(\cap_{i\in S_1}\overline{A_i})-\Pr(\cap_{i\in S_1}\overline{A_i}\cap A_j)\\
    = &  \Pr(\cap_{i\in S_1}\overline{A_i})-\Pr(\cap_{i\in T_1}\overline{A_i}\cap A_j)+\Pr(\cap_{i\in T_1}\overline{A_i}\cap A_j\cap(\cup_{i\in T_2}A_i))\\
    \geq&  \Pr(\cap_{i\in S_1}\overline{A_i})-\Pr(\cap_{i\in T_1}\overline{A_i}\cap A_j)=\alpha(S_1) - p_{j}\alpha(T_1),
\end{array}
\end{eqnarray}
and on the other hand, $\beta(S_2)=\beta(S_1)-p_{j}\beta(T_1)$ due to a similar process like formula \ref{equ:S2} and the assumption that $\EventSetB$ is exclusive.  Hence, $\alpha(S)/\beta(S)$ is increasing.

As a special case, choose $i,j\in[n]$ such that $j\in \Neighbor(i)$ and $\Pr(\Event_i\cap\Event_j)>0$. Such a pair of $i,j$ exists because of the assumption that $\EventSet$ is not exclusive. Apply (\ref{equ:quotientmonotonic}) to $S_1=\{i\}, S_2=\{i,j\}$, $T_2=\{i\}, T_1=\emptyset$. Since the inequality in (\ref{equ:S2}) turns out to be ``$>$", the first inequality in (\ref{equ:quotientmonotonic}) is also ``$>$". Thus $\frac{\alpha(S_2)}{\beta(S_2)}>\frac{\alpha(S_1)}{\beta(S_1)}=1$, which, together with the monotonicity of $\alpha(S)/\beta(S)$, implies that $\frac{\alpha([n])}{\beta([n])}>1$. As a result, $\alpha([n])>\beta([n])$.



\textbf{Case 2}: $\beta([n])=0$.
Assume that $\Pr(\cup_{A\in \EventSet} A)=\Pr(\cup_{B\in \EventSetB} B)$ while $\EventSet$ is NOT exclusive. We try to reach a contradiction.

Let $S_2 =[n]$. Since $\EventSet$ is NOT exclusive, there is $j\in[n]$ such that $\Pr(A_j\cap(\cup_{i\in \Neighbor(j)}A_i))>0$. Let $S_1 =S_2 \setminus \{j\}, T_1= S_1\setminus \Neighbor(j), T_2=S_1\setminus T_1=\Neighbor(j)$. The property $Q_1$ holds immediately:

\quad $Q_1$: $T_2\neq\emptyset$ and $\Pr(A_j\cap(\cup_{i\in T_2}A_i))>0$.

Then note that 
\begin{eqnarray}\label{equ:allareequal}
\begin{array}{rl}
    0=\alpha(S_2)\geq & \alpha(S_1) - p_{j}\alpha(T_1)=  \frac{\alpha(S_1)}{\beta(S_1)}(\beta(S_1) - p_{j}\alpha(T_1) \frac{\beta(S_1)}{\alpha(S_1)}) \\
    \geq &\frac{\alpha(S_1)}{\beta(S_1)}(\beta(S_1) - p_{j}\beta(T_1))= \frac{\alpha(S_1)}{\beta(S_1)}\beta(S_2)=0
\end{array}
\end{eqnarray}
where $\beta(S_1)>0$ due to Lemma \ref{thelessthesmaller}. 
The first inequality in (\ref{equ:allareequal}) is due to (\ref{equ:S2}). 
The second follows from $\frac{\alpha(S_1)}{\beta(S_1)}\geq \frac{\alpha(T_1)}{\beta(T_1)}$, by the monotonicity of $\frac{\alpha(S)}{\beta(S)}$.
Since both inequalities turns out to be equal, we get the properties $Q_2, Q_3$:

\quad $Q_2$: $\frac{\alpha(S_1)}{\beta(S_1)} = \frac{\alpha(T_1)}{\beta(T_1)}.$

\quad $Q_3$: $\Pr(\cap_{i\in T_1}\overline{A_i}\cap A_j\cap(\cup_{i\in T_2}A_i))=0$.

Consequently, the proof is reduced to proving the following claim.

\textbf{Claim:} For any $S_2 \subseteq[n]$ and $j\in S_2$, let $S_1 =S_2 \setminus \{j\}, T_1= S_1\setminus \Neighbor(j), T_2=S_1\setminus T_1$. It is impossible that the properties $Q_1,Q_2,Q_3$ hold simultaneously.

\textbf{Proof of the Claim}: The proof is by induction on the size of $T_1$. 

\textbf{Basis:} $T_1=\emptyset$. By $Q_3$, $0=\Pr(\cap_{i\in T_1}\overline{A_i}\cap A_j\cap(\cup_{i\in T_2}A_i))=\Pr(A_j\cap(\cup_{i\in T_2}A_i))$, which is contradictory to $Q_1$.

\textbf{Hypothesis:} The claim holds if $|T_1|<t$.

\textbf{Induction:} Consider the case where $|T_1|=t$. Assume for contradiction that $Q_1,Q_2,Q_3$ hold simultaneously.

By $Q_1$, one can choose $j'\in T_2$ such that $\Pr(A_j\cap A_{j'})>0$.

We first show that $T_1\cap \Neighbor(j')\neq \emptyset$. This is because if $T_1\cap \Neighbor(j')= \emptyset$, then

\begin{eqnarray}\label{equ:contradiction}
\begin{array}{rcl}
0&\stackrel {\textrm{by }Q_3}{=}&  \Pr(\cap_{i\in T_1}\overline{A_i}\cap A_j\cap(\cup_{i\in T_2}A_i))\geq \Pr(\cap_{i\in T_1}\overline{A_i}\cap A_j\cap A_{j'})\\
&\stackrel {\textrm{by }T_1\cap \Neighbor(j')= \emptyset}{=}&  \Pr(\cap_{i\in T_1}\overline{A_i})\Pr(A_j\cap A_{j'})>0   \quad \textrm{A contradiction!}
\end{array}
\end{eqnarray}

A byproduct of (\ref{equ:contradiction}) is that $Q_3$ implies
\begin{eqnarray}\label{byproduct}
\Pr(\cap_{i\in T_1}\overline{A_i}\cap A_j\cap A_{j'})=0.
\end{eqnarray}

Then we prove that $\Pr(A_{j'}\cap(\cup_{i\in T_1\cap \Neighbor(j')}A_i))>0$. This is due to 
\begin{eqnarray*}
\begin{array}{rcl}
&&\Pr(A_{j'}\cap(\cup_{i\in T_1\cap \Neighbor(j')}A_i))\\
&\geq &\Pr(A_{j'}\cap(\cup_{i\in T_1\cap \Neighbor(j')}A_i)\cap A_j\cap_{k\in T_1\setminus \Neighbor(j')}\overline{A_k})\\
&\stackrel{\textrm{by } (\ref{byproduct})}{=}&\Pr(A_{j'}\cap(\cup_{i\in T_1\cap \Neighbor(j')}A_i)\cap A_j\cap_{k\in T_1\setminus \Neighbor(j')}\overline{A_k})\\
&&+\Pr(\cap_{i\in T_1}\overline{A_i}\cap A_j\cap A_{j'}) \\
&=&\Pr(A_j\cap A_{j'}\cap_{k\in T_1\setminus \Neighbor(j')}\overline{A_k})\\
&=&\Pr(A_j\cap A_{j'})\Pr(\cap_{k\in T_1\setminus \Neighbor(j')}\overline{A_k})>0
\end{array}
\end{eqnarray*}
%
%
%
%

Now let $S'_2 \triangleq T_1\cup\{j'\}, S'_1 \triangleq T_1, T'_1\triangleq S'_1\setminus \Neighbor(j'), T'_2\triangleq S_1'\setminus T'_1= T_1\cap \Neighbor(j')$. We have shown 

\quad $Q'_1$: $T'_2\neq\emptyset$ and $\Pr(A_{j'}\cap(\cup_{i\in T'_2}A_i))>0$.

Now we show other properties. 

On the one hand, $\frac{\alpha(S'_2)}{\beta(S'_2)}- \frac{\alpha(S'_1)}{\beta(S'_1)}=0$ due to four facts: $S'_1=T_1$, $S'_2\subseteq S_1$, $\frac{\alpha(S)}{\beta(S)}$ is monotone, and $Q_2$ holds.

One the other hand,  $\frac{\alpha(S'_2)}{\beta(S'_2)}- \frac{\alpha(S'_1)}{\beta(S'_1)}\geq \frac{\alpha(S'_1)-p_{j'}\alpha(T'_1)}{\beta(S'_1)-p_{j'}\beta(T'_1)}- \frac{\alpha(S'_1)}{\beta(S'_1)}=\frac{p_{j'}\beta(T'_1)}{\beta(S'_1)-p_{j'}\beta(T'_1)}\left[\frac{\alpha(S'_1)}{\beta(S'_1)}-\frac{\alpha(T'_1)}{\beta(T'_1)}\right]\geq 0$ by formula (\ref{equ:quotientmonotonic}). Since both inequalities should be equality, we have $\alpha(S'_2)=\alpha(S'_1)-p_{j'}\alpha(T'_1)$ and 

\quad $Q'_2$: $\frac{\alpha(S'_1)}{\beta(S'_1)} = \frac{\alpha(T'_1)}{\beta(T'_1)}.$

%
Applying formula (\ref{equ:S2}) to $S'_2$, the equality $\alpha(S'_2)=\alpha(S'_1)-p_{j'}\alpha(T'_1)$ implies  

\quad $Q'_3$: $\Pr(\cap_{i\in T'_1}\overline{A_i}\cap A_{j'}\cap(\cup_{i\in T'_2}A_i))=0$.

Altogether, the properties $Q'_1,Q'_2,Q'_3$ remains true for $S'_2,j',S'_1,T'_1,T'_2$. 

However, since $|T'_1|<|T_1|=t$, by the induction hypothesis, the properties $Q'_1,Q'_2,Q'_3$ can't holds simultaneously. We reach a contradiction. \textbf{The Claim is proven}. 
\end{proof}
\LongVersionEnd


%


\LongVersion
Intuitively, the next lemma shows that any set of events can be reduced proportionally so that the dependency graph and exclusiveness are preserved and the probability of the union decreases at most linearly. Basically, in order to reduce an event $\Event$, construct cylinders with height $1$ whose bases are the events, respective. Then adjust the height of $\Event$-based cylinder to $\lambda$. Regard the cylinders as new events and repeat this process until each original event has been handled.
\begin{lemma}\label{DGboundaryregular}
Given a graph $\DependencyGraph=([n],E)$ and a vector $\vec{p}\in(0,1]^n$, suppose that event set $\EventSet\sim_a \DependencyGraph$ and $\Pr(\EventSet)=\vec{p}$. For any $\lambda\in (0,1)$, there is an event set $\EventSetB\sim_a \DependencyGraph$ with $\Pr(\EventSetB)=\lambda\vec{p}$ such that 
\begin{enumerate}
\item If $\EventSet$ is exclusive, so is $\EventSetB$;
\item $\Pr(\cup_{A\in\EventSet}A)-(1-\lambda)\sum_{i\in[n]}p_i\leq\Pr(\cup_{B\in\EventSetB}B)\leq \Pr(\cup_{A\in\EventSet}A)$.
\end{enumerate}

\end{lemma}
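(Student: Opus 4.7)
The plan is to realize $\EventSetB$ by passing to a larger probability space and independently thinning each $\Event_i$. Specifically, I would work on $\Omega\times[0,1]^n$ equipped with the product of the original probability measure and Lebesgue measure, let $U_1,\ldots,U_n$ denote the coordinate variables on the $[0,1]^n$ factor, and set $\EventB_i\triangleq\Event_i\cap\{U_i\leq\lambda\}$. Since $U_i$ is independent of $\Event_i$ and $\Pr(U_i\leq\lambda)=\lambda$, we immediately get $\Pr(\EventB_i)=\lambda p_i$, so $\Pr(\EventSetB)=\lambda\vec{p}$ as required. This is exactly the construction hinted at in the lemma's sketch, applied in one shot to all $n$ events.

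Next, I would verify that $\DependencyGraph$ remains a dependency graph of $\EventSetB$. For any $i$ and any $S\subseteq[n]\setminus(\Neighbor(i)\cup\{i\})$, the variable $U_i$ is independent of $\Event_i$, of $\{\Event_j:j\in S\}$, and of $\{U_j:j\in S\}$; combined with the hypothesis that $\Event_i$ is independent of $\{\Event_j:j\in S\}$, this yields independence of $\EventB_i=\Event_i\cap\{U_i\leq\lambda\}$ from $\{\EventB_j:j\in S\}$. Exclusiveness is preserved trivially, since $\EventB_i\cap\EventB_j\subseteq\Event_i\cap\Event_j$ for all $i,j$. Together these establish item~1.

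For item~2, the upper bound $\Pr(\cup_i\EventB_i)\leq\Pr(\cup_i\Event_i)$ is immediate from $\EventB_i\subseteq\Event_i$ (interpreting each $\Event_i$ as $\Event_i\times[0,1]^n$ in the extended space). For the lower bound I would condition on the base coordinate $\omega\in\Omega$: writing $S(\omega)=\{i:\omega\in\Event_i\}$ and using the mutual independence of the $U_i$'s, a point $(\omega,u)$ with $S(\omega)\neq\emptyset$ fails to lie in $\cup_i\EventB_i$ precisely when $U_i>\lambda$ for every $i\in S(\omega)$, which happens with conditional probability $(1-\lambda)^{|S(\omega)|}$. Hence
\[
\Pr\!\left(\cup_i\Event_i\right)-\Pr\!\left(\cup_i\EventB_i\right)=\int_{\Omega}\mathbf{1}[S(\omega)\neq\emptyset]\,(1-\lambda)^{|S(\omega)|}\,d\Pr(\omega).
\]
Applying the elementary inequality $(1-\lambda)^k\leq(1-\lambda)k$ valid for integers $k\geq 1$ and $\lambda\in(0,1)$, the right-hand side is bounded above by $(1-\lambda)\int_{\Omega}|S(\omega)|\,d\Pr(\omega)=(1-\lambda)\sum_{i\in[n]}p_i$, which is exactly the slack in item~2.

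There is no substantive obstacle in this argument; the only point that requires care is checking that the $\EventB_i$ inherit the dependency structure of the $\Event_i$, which follows as soon as the thinning variables $U_i$ are taken mutually independent of one another and of $\EventSet$. The same construction can equivalently be performed iteratively, introducing one fresh uniform coordinate at a time and shrinking one event per round, matching the informal description given in the paper.
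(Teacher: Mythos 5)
Your proposal is correct and is essentially the paper's own argument: the paper introduces one fresh uniform coordinate per round and shrinks each $\Event_i$ to $\Event_i\times[0,\lambda]$, which after $n$ rounds is exactly your one-shot thinning $\EventB_i=\Event_i\cap\{U_i\leq\lambda\}$, with the dependency/exclusiveness checks and the $(1-\lambda)\sum_i p_i$ loss bound obtained the same way (the paper uses a per-step union bound where you integrate $(1-\lambda)^{|S(\omega)|}$, but these coincide up to the elementary inequality you cite).
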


\begin{proof}
Assume $\EventSet=\{A_1,...,A_n\}$. Let $\mathcal{S}^{(0)}$ be the probability space from which the events in $\EventSet$ come. Define probability space $\mathcal{S}^{(1)}=\mathcal{S}^{(0)}\times \Interval$ where $\Interval$ is the unit interval $[0,1]$ endowed with Lebesgue measure. Let $\EventSet^{(1)}$ be the set of events in $\mathcal{S}^{(1)}$ defined as $A^{(1)}_1=A_1\times [0,\lambda]$ and $A^{(1)}_k=A_k\times \Interval$ for $k\neq 1$. Let $\vec{p}^{(1)}=(\lambda p_1,p_2,...,p_n)$. It is easy to see that $\EventSet^{(1)}\sim_a\DependencyGraph$, $\Pr(\EventSet^{(1)})=\vec{p}^{(1)}$, and $\Pr(\cup_{i\in[n]} A^{(1)}_i)\geq \Pr(\cup_{i\in[n]}A_i)-(1-\lambda)  p_1$. 

Likewise, we define probability space $\mathcal{S}^{(2)}=\mathcal{S}^{(1)}\times \Interval$, and event set $\EventSet^{(2)}$ in $\mathcal{S}^{(2)}$ with $A^{(2)}_2=A^{(1)}_2\times [0,\lambda]$ and $A^{(2)}_k=A^{(1)}_k\times \Interval$ for $k\neq 2$. Let $\vec{p}^{(2)}=(\lambda p_1,\lambda p_2,p_3,...,p_n)$. We have that $\EventSet^{(2)}\sim_a\DependencyGraph$, $\Pr(\EventSet^{(2)})=\vec{p}^{(2)}$, and $\Pr(\cup_{i\in[n]} A^{(2)}_i)\geq \Pr(\cup_{i\in[n]}A_i)-(1-\lambda)  (p_1+p_2)$.

Iterate until we get $\EventSet^{(n)}=(A^{(n)}_1,...,A^{(n)}_n)\sim_a\DependencyGraph$, $\Pr(\EventSet^{(n)})=\vec{p}^{(n)}=\lambda\vec{p}$, and $\Pr(\cup_{i\in[n]} A^{(n)}_i)\geq \Pr(\cup_{i\in[n]}A_i)-(1-\lambda)\sum_{i\in[n]}p_i$.

One can check that 
\begin{enumerate}
\item If $\EventSet$ is exclusive, so is $\EventSet^{(i)}$ for any $i\in[n]$;
\item $\Pr(\cup_{A\in\EventSet^{(i)}}A)\leq \Pr(\cup_{A\in\EventSet}A)$ for any $i\in[n]$.
\end{enumerate}

Let $\EventSetB=\EventSet^{(n)}$. The proof ends.
\end{proof}

Now we are ready to present a counterpart of Lemma \ref{le:probabilityboundexist}.
\begin{lemma}\label{le:abstractprobabilityboundexist}
For any graph $\DependencyGraph=([n],E)$ and $\vec{p}\in (0,1)^n$, there is a unique $\lambda>0$ such that $\lambda\vec{p}\in\partial_a(\DependencyGraph)$.
\end{lemma}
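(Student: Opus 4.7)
The plan is to mirror the proof of Lemma \ref{le:probabilityboundexist} almost verbatim, replacing cylinder sets by arbitrary event sets, once down-closedness of $\mathcal{I}_a(\DependencyGraph)$ has been established. Define $\Lambda = \{\lambda > 0 : \lambda\vec{p}\in \mathcal{I}_a(\DependencyGraph)\}$. For $\lambda < (\sum_i p_i)^{-1}$ small enough to also keep all coordinates below $1$, the union bound gives $\Pr(\cup_{i}\Event_i) \leq \lambda \sum_i p_i < 1$ for every $\EventSet\sim_a\DependencyGraph$ with $\Pr(\EventSet)=\lambda\vec{p}$, so $\Pr(\cap_i \Neg{\Event_i})>0$ and $\lambda\in\Lambda$. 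Conversely, once $\lambda\ge (\min_i p_i)^{-1}$ some coordinate of $\lambda\vec p$ equals or exceeds $1$, forcing $\lambda\notin\Lambda$. Hence $\Lambda$ is a nonempty bounded subset of $(0,\infty)$, and I will take $\lambda_0=\sup\Lambda$.

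The key step, which does not already appear in the excerpt, is to show that $\mathcal{I}_a(\DependencyGraph)$ is down-closed: if $\vec q\leq \vec p$ entrywise (both in $(0,1)^n$) and $\vec p\in\mathcal{I}_a(\DependencyGraph)$, then $\vec q\in\mathcal{I}_a(\DependencyGraph)$. Given any $\EventSetB\sim_a\DependencyGraph$ with $\Pr(\EventSetB)=\vec q$, I enlarge the ambient probability space by $n$ mutually independent uniform $[0,1]$-variables $U_1,\dots,U_n$ (independent of $\EventSetB$) and set $\Event_i = \EventB_i \cup \{U_i\leq \alpha_i\}$, where $\alpha_i=(p_i-q_i)/(1-q_i)\in[0,1)$. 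Then $\Pr(\Event_i)=p_i$ and, because each $U_i$ is independent of everything else, $\DependencyGraph$ remains a dependency graph of $\EventSet=\{\Event_1,\dots,\Event_n\}$. Since $\Neg{\Event_i}=\Neg{\EventB_i}\cap\{U_i>\alpha_i\}$ and the $U_i$'s are jointly independent of $\EventSetB$ and of each other,
\[
\Pr\!\left(\bigcap_{i\in[n]} \Neg{\Event_i}\right) \;=\; \Pr\!\left(\bigcap_{i\in[n]}\Neg{\EventB_i}\right)\cdot \prod_{i\in[n]}(1-\alpha_i).
\]
The left-hand side is positive by $\vec p\in \mathcal{I}_a(\DependencyGraph)$ and each factor $1-\alpha_i>0$, so $\Pr(\cap_i\Neg{\EventB_i})>0$, as required.

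Down-closedness immediately gives $\Lambda=(0,\lambda_0)$ or $(0,\lambda_0]$: any $\lambda<\lambda_0$ satisfies $\lambda\vec p\le \lambda'\vec p$ for some $\lambda'\in\Lambda$ close to $\lambda_0$, hence lies in $\Lambda$. Consequently, for every $\epsilon\in(0,1)$, $(1-\epsilon)\lambda_0\vec p\in\mathcal{I}_a(\DependencyGraph)$ while $(1+\epsilon)\lambda_0\vec p\notin\mathcal{I}_a(\DependencyGraph)$, which is exactly $\lambda_0\vec p\in\partial_a(\DependencyGraph)$. Uniqueness of $\lambda_0$ also follows from down-closedness: were $\lambda_1<\lambda_2$ both boundary values, pick $\epsilon\in(0,1)$ with $(1+\epsilon)\lambda_1\le(1-\epsilon)\lambda_2$ (e.g.\ $\epsilon=(\lambda_2-\lambda_1)/(\lambda_1+\lambda_2)$); then $(1+\epsilon)\lambda_1\vec p\notin\mathcal{I}_a$ but it lies coordinatewise below $(1-\epsilon)\lambda_2\vec p\in\mathcal{I}_a$, contradicting down-closedness.

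The main obstacle is the down-closedness lemma: unlike the variable/geometric setting, where Lebesgue cylinders can be shrunk coordinate-wise directly, in the abstract setting $\mathcal{I}_a(\DependencyGraph)$ is a universal statement over \emph{all} event sets realizing a given probability vector, so reducing probabilities requires actually enlarging—not shrinking—a given event set in a way that keeps $\DependencyGraph$ a valid dependency graph. The independent-auxiliary-variables trick above is the cleanest route, and it is what makes the rest of the infimum argument go through exactly as in Lemma \ref{le:probabilityboundexist}.
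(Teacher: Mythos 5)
Your proof is correct, but it reaches the conclusion by a genuinely different route than the paper. The cosmetic difference is that you take $\sup$ of the set of good $\lambda$ while the paper takes $\inf$ of the set of bad ones. The substantive difference is how the required monotonicity of $\Interior_a(\DependencyGraph)$ is obtained. The paper only proves radial down-closedness ($\vec{q}\in\Interior_a(\DependencyGraph)\Rightarrow\delta\vec{q}\in\Interior_a(\DependencyGraph)$ for $\delta\in(0,1)$), and does so by a three-lemma chain: pick a worst-case \emph{exclusive} event set realizing $\vec{q}$ (Lemma \ref{le:exclusionisworst}), shrink it proportionally while preserving exclusiveness (Lemma \ref{DGboundaryregular}), then invoke Lemma \ref{WorstPlacement} to transfer the bound from that single exclusive witness to \emph{all} event sets with the smaller probability vector — this last step is unavoidable in their approach because $\Interior_a$ is a universal statement and shrinking one witness says nothing about the others. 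You instead prove the stronger entrywise down-closedness directly: given an arbitrary adversarial event set realizing the smaller vector, you \emph{enlarge} it by unioning each $\EventB_i$ with an independent auxiliary event $\{U_i\le\alpha_i\}$, which handles the universal quantifier head-on and factors the probability $\Pr(\cap_i\Neg{\Event_i})=\Pr(\cap_i\Neg{\EventB_i})\prod_i(1-\alpha_i)$ cleanly. Your argument is more elementary and self-contained (no appeal to Shearer-type exclusiveness results), whereas the paper's reuses machinery it needs anyway for Theorem \ref{Conj:GapGeom}; both are valid. One cosmetic point worth a sentence in a final write-up: you should note that $\lambda_0\vec p\in(0,1]^n$ (if some coordinate of $\lambda_0\vec p$ exceeded $1$, then $\lambda\vec p\notin(0,1)^n\supseteq\Interior_a(\DependencyGraph)$ already for $\lambda$ slightly below $\lambda_0$, contradicting $\lambda_0=\sup\Lambda$), since the definition of $\partial_a$ requires it.
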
     
\begin{proof}
Arbitrarily fix a graph $\DependencyGraph=([n],E)$ and $\vec{p}\in (0,1)^n$. Let $\Lambda\triangleq\{\lambda>0:\lambda\vec{p}\notin\Interior_a(\DependencyGraph)\}$.

It is easy to see that  

\begin{enumerate}
\item If $\lambda$ is so big that an entry of $\lambda\vec{p}$ equals $1$,  $\Pr(\cup_{\Event\in \EventSet} \Event )=1$ for any event set $\EventSet\sim_a \DependencyGraph$ such that $\Pr(\EventSet)=\phi(\lambda\vec{p})$.
\item If $\lambda$ is so small that $l_1$-norm of $\lambda\vec{p}$ is smaller than $1$,  $\Pr(\cup_{\Event\in \EventSet} \Event )<1$ for any event set $\EventSet\sim_a \DependencyGraph$ such that $\Pr(\EventSet)=\lambda\vec{p}$.
\end{enumerate}
Thus, $\Lambda$ is non-empty and its infimum, denoted by $\lambda_0$, must be positive. Let $\vec{q}=\lambda_0\vec{p}$. In order to show that $\vec{q}\in \partial_a(\DependencyGraph)$, consider an arbitrary real number $\epsilon>0$.

On the one hand, because $\lambda_0=\inf \Lambda$, we have $(1-\epsilon)\vec{q}\in \Interior_a(\DependencyGraph)$. 

On the other hand, assume for contradiction that $(1+\epsilon)\vec{q}\in \Interior_a(\DependencyGraph)$. By Lemma \ref{le:exclusionisworst}, we can choose an exclusive event set $\EventSet\sim_a \DependencyGraph$ such that $\Pr(\EventSet)=(1+\epsilon)\vec{q}$ and $\Pr(\cup_{\Event\in \EventSet} \Event )<1$. By Lemma \ref{DGboundaryregular}, for any $0<\delta<1$, there is an exclusive event set $\EventSet_\delta\sim_a \DependencyGraph$ such that $\Pr(\EventSet_\delta)=\delta(1+\epsilon)\vec{q}$ and $\Pr(\cup_{\Event\in \EventSet_\delta} \Event )<\Pr(\cup_{\Event\in \EventSet} \Event )<1$. 
By Lemma \ref{WorstPlacement}, $\Pr(\cup_{\Event\in \EventSet} \Event )<1$ for any event set $\EventSet\sim_a \DependencyGraph$ with $\Pr(\EventSet)=\delta(1+\epsilon)\vec{q}$, so $\delta(1+\epsilon)\vec{q}\in \Interior_a(\DependencyGraph)$, which means $\delta(1+\epsilon)\lambda_0\notin \Lambda$.
Since $\delta$ ranges over $(0,1)$, we have $(0,(1+\epsilon)\lambda_0)\cap\Lambda=\emptyset$, contradictory to the fact that $\lambda_0=\inf \Lambda$. As a result, $(1+\epsilon)\vec{q}\notin \Interior_a(\DependencyGraph)$.

Altogether, $\lambda_0\vec{p}\in\partial_a(\DependencyGraph)$. The uniqueness immediately follows from the definition of abstract boundary vectors.
%
%
%
\end{proof}
\LongVersionEnd

Now we are ready to prove the main theorem of this section.
\bipartite*


\begin{proof}
($\ref{interiorex}\Rightarrow\ref{gapless}$): Arbitrarily fix $\lambda>0$ such that $\vec{q}\triangleq\lambda\vec{p}\in\mathcal{I}(\BipartiteGraph)$. Let $\EventSet\sim \BipartiteGraph$ be an exclusive cylinder set such that $\mu(\EventSet)=\vec{q}$ and $\mu(\cup_{\Event\in\EventSet}\Event)<1$. It also holds that $\EventSet$ is exclusive with respect to the base graph $G_\BipartiteGraph$. Since $\mu(\cup_{\Event\in\EventSet}\Event)<1$, by Lemma \ref{WorstPlacement}, $\mu(\cup_{\EventB\in\EventSetB}\EventB)<1$ for any event set $\EventSetB\sim_a G_\BipartiteGraph$ with $\Pr(\EventSetB)=\vec{q}$. As a result, $\vec{q}\in\mathcal{I}_a(\BipartiteGraph)$. Altogether, $\BipartiteGraph$ is gapless in the direction of $\vec{p}$.

($\ref{gapless}\Rightarrow\ref{boundaryex}$): Assume that $\BipartiteGraph$ is gapless in the direction of $\vec{p}$. Let $\lambda$ be such that $\vec{q}\triangleq\lambda\vec{p}\in\partial(\BipartiteGraph)$. By Theorem \ref{thm:boundaryfills}, there is a cylinder set $\EventSet\sim\BipartiteGraph$ such that $ \mu(\EventSet)=\vec{q}$ and $\mu(\cup_{\Event\in \EventSet} \Event)=1$. On the other hand, $\vec{q}\in\partial_a(\BipartiteGraph)$ due to the assumption that $\BipartiteGraph$ is gapless in the direction of $\vec{p}$. By Lemma \ref{le:exclusionisworst}, there is an exclusive event set $\EventSetB\sim_a G_\BipartiteGraph$ such that $ \mu(\EventSetB)=\vec{q}$ and $\Pr(\cup_{\EventB\in \EventSetB} \EventB)=1$. Because $\EventSet$ also conforms with $G_\BipartiteGraph$ and $\Pr(\cup_{\EventB\in \EventSetB} \EventB)=\Pr(\cup_{\Event\in \EventSet} \Event)=1$, by Lemma \ref{WorstPlacement}, $\EventSet$ must be exclusive with respect to $G_\BipartiteGraph$, hence exclusive with respect to $\BipartiteGraph$. 

($\ref{boundaryex}\Rightarrow\ref{interiorex}$): Arbitrarily fix $\lambda>0$ such that $\vec{q}\triangleq\lambda\vec{p}\in\mathcal{I}(\BipartiteGraph)$. Let $\delta>1$ be such that $\delta\lambda\vec{p}\in\partial(\BipartiteGraph)$. Arbitrarily choose an exclusive cylinder set $\EventSet\sim\BipartiteGraph$ which satisfies $ \mu(\EventSet)=\delta\lambda\vec{p}$. 
Let $\EventSet=\{\Event_1,...,\Event_n\}$. For each $i\in L(\BipartiteGraph)$, there is a base $\EventB_i$ of $\Event_i$ such that $\dim(\EventB_i)=\Neighbor_\BipartiteGraph(i)$. Arbitrarily choose a subset $\EventB'_i\subset \EventB_i$ with $\mu(\EventB'_i)=\mu(\EventB_i)/\delta$. Let $\EventSet'=\{\Event'_1,...,\Event'_n\}$ where each $\Event'_i$ is the cylinder with base $\EventB'_i$. It is easy to check that $\EventSet'\sim\BipartiteGraph$, $\mu(\EventSet')=\vec{q}$, and $\EventSet'$ is exclusive.
\end{proof}

The significance of Theorem \ref{Conj:GapGeom} lies in that it enables to decide whether a gap exists without checking Shearer's bound. 

\begin{remark}
Given a bigraph $\BipartiteGraph=([n],[m],E)$ and a vector $\vec{p}\in (0,1)^n$, consider three real numbers that are of special interest. $\lambda_1,\lambda_2$ are such that $\lambda_1 \vec{p}\in \partial(\BipartiteGraph)$ and $\lambda_2 \vec{p}\in \partial_a(\DependencyGraph_{\BipartiteGraph})$, respectively. $\lambda_3$ is the maximum $\lambda$ such that there is an exclusive cylinder set $\EventSet\sim\BipartiteGraph$ with $ \mu(\EventSet)=\lambda\vec{p}$. 
It is not difficult to see that $\lambda_1\geq\lambda_2\geq\lambda_3$. 
An equivalent form of Theorem \ref{Conj:GapGeom} is that the three numbers are either all equal or pairwise different. 
\end{remark}

\subsection{Reduction Rules}
Given a bigraph $\BipartiteGraph$, we define the following 5 types of operations on $\BipartiteGraph$.
\begin{enumerate}
\item Delete-Variable: Delete a vertex $j\in R(\BipartiteGraph)$ with $|\Neighbor(j)|\leq 1$, and remove the incident edge if any. 
\item Duplicate-Event: Given a vertex $i\in L(\BipartiteGraph)$, add a vertex $i'$ to $L(\BipartiteGraph)$, and add edges incident to $i'$ so that $\Neighbor(i')=\Neighbor(i)$.
\item Duplicate-Variable:  Given a vertex $j\in R(\BipartiteGraph)$, add a vertex $j'$ to $R(\BipartiteGraph)$, and add some edges incident to $j'$ so that $\Neighbor(j')\subseteq \Neighbor(j)$.
\item Delete-Edge: Delete an edge from $E$ provided that the base graph remains unchanged.
\item Delete-Event: Delete a vertex $i\in L(\BipartiteGraph)$, and remove all the incident edges.
\end{enumerate}
We also define the inverses of the above operations. 
The inverse of an operation $O$ is the operation $O'$ such that for any $\BipartiteGraph$, $O'(O(\BipartiteGraph))=O(O'(\BipartiteGraph))=\BipartiteGraph$.

The next theorems show how these operations influence the existence of gaps.

\begin{theorem}\label{reductioneffect}
A bigraph $\BipartiteGraph$ is gapful, if and only if it is gapful after applying Delete-Variable, Duplicate-Event, Duplicate-Variable, or their inverse operations.
\end{theorem}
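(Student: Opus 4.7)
My strategy is to invoke Theorem~\ref{Conj:GapGeom}: gaplessness of a bigraph in a direction $\vec{p}$ is equivalent to the existence of an exclusive cylinder set realising the (unique) probability vector on $\partial(\BipartiteGraph)$ in that direction. So it suffices to exhibit, for each of the three operations (and by symmetry their inverses), a natural correspondence between cylinder sets conforming to $\BipartiteGraph$ and cylinder sets conforming to the transformed bigraph $\BipartiteGraph'$, which (i) preserves conformance, (ii) preserves exclusiveness, and (iii) preserves the measure vector up to the coordinate that is inserted or deleted. Once these correspondences transport exclusive boundary witnesses both ways, the criterion of Theorem~\ref{Conj:GapGeom}, applied on each side, yields ``$\BipartiteGraph$ gapful iff $\BipartiteGraph'$ gapful.''

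\textbf{The three correspondences.} For \emph{Delete-Variable} on $j\in R(\BipartiteGraph)$ of degree $\le1$: if $j$ is isolated every cylinder trivially factors through $\Interval^{[m]\setminus\{j\}}$, so the two GLLL problems coincide; if $\Neighbor(j)=\{i\}$, only $\Event_i$ uses $X_j$, so I would replace $\Event_i$ by its $X_j$-shadow of equal measure (Fubini) to get a cylinder set on $\BipartiteGraph'$, and in the reverse direction lift by tensoring with $\Interval^{\{j\}}$. Exclusiveness is preserved because, for each $k\ne i$, the disjointness $\mu(\Event_i\cap\Event_k)=0$ depends only on coordinates in $\Neighbor(i)\cap\Neighbor(k)\subseteq[m]\setminus\{j\}$. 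For \emph{Duplicate-Event} creating a twin $i'$ of $i$: an exclusive cylinder set on $\BipartiteGraph$ realising $\vec{p}=(p_1,\dots,p_n)$ is turned into one on $\BipartiteGraph'$ realising $(p_1,\dots,\alpha p_i,\dots,p_n,(1-\alpha)p_i)$ by partitioning $\Event_i$ into two disjoint sub-cylinders on the same coordinate set $\Neighbor(i)$ with the prescribed measures; the inverse takes the union $\Event_i\cup\Event_{i'}$, which is again a cylinder on $\Neighbor(i)$, and remains disjoint from every $\Event_k$ that was exclusive with both. For \emph{Duplicate-Variable} creating $j'$ with $\Neighbor(j')\subseteq\Neighbor(j)$: given a system on $\BipartiteGraph'$, identify $X_{j'}$ with $X_j$ by collapsing onto the diagonal; because every event using $X_{j'}$ already uses $X_j$, this yields a conforming system on $\BipartiteGraph$ with the same measure vector (pushforward under the diagonal embedding). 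The inverse is tensoring with $\Interval^{\{j'\}}$.

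\textbf{Main obstacle and wrap-up.} The crux is that the correspondences must send $\partial(\BipartiteGraph)$ to the ``correct'' subset of $\partial(\BipartiteGraph')$, and the analogous statement on the abstract side. This is cleanest for Delete-Variable and Duplicate-Variable, where the dimension of $\vec{p}$ is preserved; for Duplicate-Event I need to verify the biconditional $\vec{p}\in\partial(\BipartiteGraph)\iff(p_1,\dots,\alpha p_i,\dots,p_n,(1-\alpha)p_i)\in\partial(\BipartiteGraph')$ for every $\alpha\in(0,1)$, which I would obtain by combining the partition/merge bijection with the uniqueness of minimal measure-filling witnesses from Lemma~\ref{le:boundaryisunique}. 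A parallel check is needed on the base graph: Delete-Variable and Duplicate-Variable leave $\DependencyGraph_{\BipartiteGraph}$ unchanged, while Duplicate-Event adds a closed twin to $\DependencyGraph_{\BipartiteGraph}$, so the behaviour of $\partial_a$ and $\Interior_a$ under the twin operation is the main technical input. With both sides aligned, Theorem~\ref{Conj:GapGeom} converts the bijection on exclusive boundary witnesses into the claimed equivalence of gapfulness; the inverse operations are automatic because each correspondence above is itself a bijection.
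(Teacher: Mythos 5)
Your overall strategy matches the paper's: reduce everything to Theorem~\ref{Conj:GapGeom} and build explicit cylinder-set correspondences for each operation. For Delete-Variable and Duplicate-Event your constructions are essentially the paper's (split the base of $\Event_i$ into two pieces of prescribed measure, merge the twins in the other direction, with Lemma~\ref{le:boundaryisunique} giving disjointness of the twins on the boundary), and the boundary-alignment worry you raise is handled exactly the way you suggest: an exclusive cylinder set whose union has measure $1$ is automatically on $\partial(\BipartiteGraph')$ (Corollary~\ref{exclusivenonexterior}), so no separate analysis of $\partial_a$ under the twin operation is needed -- Theorem~\ref{Conj:GapGeom} absorbs the abstract side entirely.

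The genuine gap is in Duplicate-Variable. ``Collapsing onto the diagonal'' is not measure-preserving: the pushforward of Lebesgue measure on $\Interval^{\{j\}}$ under $x\mapsto(x,x)$ is the uniform measure on the diagonal of $\Interval^{\{j,j'\}}$, not the product measure, so the measure vector of the events is not preserved (e.g.\ the cylinder $\{x_j\le \tfrac12,\,x_{j'}>\tfrac12\}$ has measure $\tfrac14$ but empty diagonal section). Equivalently, setting $X_{j'}:=X_j$ destroys the independence of the variables and changes every event probability. The correct merge is to bundle the two coordinates into one rather than equate them: the paper first discretizes (so each cylinder is a finite union of slabs over rectangles $\Delta_1,\dots,\Delta_K$ partitioning $\Interval^{\{j,j'\}}$) and then replaces $\Interval^{\{j,j'\}}$ by $\Interval^{\{j\}}$ partitioned into intervals $\Gamma_k$ with $\mu(\Gamma_k)=\mu(\Delta_k)$; one can equally use a measure-space isomorphism $\Interval\cong\Interval^{\{j,j'\}}$. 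The hypothesis $\Neighbor(j')\subseteq\Neighbor(j)$ is what guarantees this bundling creates no new dependencies. Note also that for this operation the base graph is unchanged, so (as the paper does) you only need $\partial(\BipartiteGraph)=\partial(\BipartiteGraph')$; exclusiveness and Theorem~\ref{Conj:GapGeom} are not required here. With the diagonal step replaced by a genuine measure-preserving flattening, your argument goes through.
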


\LongVersion
\begin{proof}
(Delete-Variable): It is trivial.

(Duplicate-Event): Without loss of generality, assume that the vertex $n+1$ is added to $L(\BipartiteGraph)$ and $\Neighbor(n+1)=\Neighbor(n)$. Let $\BipartiteGraph'=([n+1],[m],E')$ be the resulting bigraph.

On the one hand, suppose that $\BipartiteGraph$ is gapless. Arbitrarily choose $\vec{p}'$. Let $\vec{p}=(p'_1,...,p'_{n-1},p'_n+p'_{n+1})$. From Lemma \ref{le:probabilityboundexist}, we have there exists a unique $\lambda>0$ such that $\lambda\vec{p}\in\partial(\BipartiteGraph)$. From Theorem \ref{Conj:GapGeom}, we have there is a exclusive cylinder set $\EventSet=\{\Event_1,...,\Event_{n}\}$ in $\mathbb{I}^m$ such that $\EventSet$ conforms with $\BipartiteGraph$ and $\mu(\EventSet)=\lambda\vec{p}$. Partition the base of cylinder $\Event_n$ such that the resulting disjoint cylinders $\Event'_n$ and $\Event'_{n+1}$ satisfy $\mu(\Event'_n) = \lambda p'_n,\mu(\Event'_{n+1}) = \lambda p'_{n+1}$. Let $\EventSet'=\{\Event_1,...,\Event_{n-1},\Event'_{n},\Event'_{n+1}\}$. One can check that $\EventSet'$ is exclusive with respective to with $\BipartiteGraph'$, $\mu(\EventSet')=\lambda \vec{p}'$, and  $\mu(\cup_{\Event\in\EventSet'}\Event)=1$. This means that $\BipartiteGraph'$ is gapless, by Theorem \ref{Conj:GapGeom}.
  
On the other hand, suppose that $\BipartiteGraph'$ is gapless. Arbitrarily choose $\vec{p}$. Let $\vec{p'}=(p_1,...,p_{n-1},p'_n,p'_{n+1})$ such that $p'_n + p'_{n+1} = p_n$. From Lemma \ref{le:probabilityboundexist}, we have there exists a unique $\lambda>0$ such that $\lambda\vec{p'}\in\partial(\BipartiteGraph')$. From Theorem \ref{Conj:GapGeom}, we have there is a exclusive cylinder set $\EventSet'=\{\Event'_1,...,\Event'_{n},\Event'_{n+1}\}$ in $\mathbb{I}^m$ such that $\EventSet'$ conforms with $\BipartiteGraph'$ and $\mu(\EventSet')=\lambda\vec{p'}$. By Lemma \ref{le:boundaryisunique}, it is easy to see that $\mu(\Event'_n\cap\Event'_{n+1})=0$. Let $\EventSet=\{\Event'_1,...,\Event'_n\cup\Event'_{n+1}\}$. One can check that $\EventSet$ is exclusive with respective to with $\BipartiteGraph$, $\mu(\EventSet)=\lambda \vec{p}$, and  $\mu(\cup_{\Event\in\EventSet}\Event)=1$. This means that $\BipartiteGraph$ is gapless, by Theorem \ref{Conj:GapGeom}. 

(Duplicate-Variable): Without loss of generality, assume that the vertex $m+1$ is added to $R(\BipartiteGraph)$ and $\Neighbor(m+1)\subseteq \Neighbor(m)$. Let $\BipartiteGraph'=([n],[m+1],E')$ be the resulting bigraph. Since $G_\BipartiteGraph=G_{\BipartiteGraph'}$, we only have to show $\partial(\BipartiteGraph)=\partial(\BipartiteGraph')$. Arbitrarily fix $\vec{p}\in (0,1)^n$.
Suppose $\lambda\vec{p}\in\partial(\BipartiteGraph)$ and $\lambda'\vec{p}\in\partial(\BipartiteGraph')$.

Since $\lambda\vec{p}\in\partial(\BipartiteGraph)$, there is a cylinder set $\EventSet=\{\Event_1,...,\Event_n\}$ in $\mathbb{I}^m$ such that $\EventSet\sim\BipartiteGraph$, $\mu(\EventSet)=\lambda\vec{p}$, and $\mu(\cup_{i\in[n]}\Event_i)=1$. For any $i\in[n]$, define $\Event'_i=\Event_i\times \Interval^{\{m+1\}}$. Let $\EventSet'=\{\Event'_1,...,\Event'_n\}$. We have $\EventSet'$ conforms with $\BipartiteGraph'$, $\mu(\EventSet')=\lambda\vec{p}$, and $\mu(\cup_{i\in[n]}\Event'_i)=1$, so $\lambda'\leq \lambda$.

On the other hand, since $\lambda'\vec{p}\in\partial(\BipartiteGraph')$, there is a discrete cylinder set $\EventSet'=\{\Event'_1,...,\Event'_n\}$ in $\mathbb{I}^{m+1}$ such that $\EventSet'$ conforms with $\BipartiteGraph'$, $\mu(\EventSet')=\lambda'\vec{p}$, and $\mu(\cup_{i\in[n]}\Event'_i)=1$. By discreteness, one can partition $\Interval^{\{m,m+1\}}$ into disjoint rectangles $\Delta_1,...,\Delta_K$ such that for each $i\in [n]$, there are sets $\Event_{ik}\subseteq \mathbb{I}^{m-1}$ for $k\in[K]$ satisfying $\Event'_i=\cup_{k\in[K]}\Event_{ik}\times \Delta_k$. For each $i\in L(\BipartiteGraph)\setminus \Neighbor(m)$, since $\{m,m+1\}\cap \Neighbor(i)=\emptyset$,  $\Event_{ik}$ does not depend on $k$, and is denoted by $B_i$. Since $\mu(\cup_{i\in[n]}\Event'_i)=1$, we have $\mu(\cup_{i\in[n]}\Event_{ik})=1$ for any $k$. Now partition $\mathbb{I}^{\{m\}}$ into disjoint intervals $\Gamma_1,...,\Gamma_K$ with $\mu(\Gamma_k)=\mu(\Delta_k)$ for each $k$. Define $\EventSet=\{\Event_1,...,\Event_n\}$ in $\Interval^m$ such that $\Event_i=\cup_{k\in[K]}\Event_{ik}\times \Gamma_k$ for $i\in \Neighbor(m)$ and $\Event_i=B_i\times \Interval^m=\Event'_i$ for $i\in L(\BipartiteGraph)\setminus \Neighbor(m)$. It is straightforward to check that $\EventSet$ conforms with $\BipartiteGraph$, $\mu(\EventSet)=\lambda'\vec{p}$, and $\mu(\cup_{i\in[n]}\Event_i)=1$. Hence, $\lambda\leq \lambda'$.

As a result, $\partial(\BipartiteGraph)=\partial(\BipartiteGraph')$. Recall that $G_\BipartiteGraph=G_{\BipartiteGraph'}$, so $\BipartiteGraph$ is gapful if and only if so is $\BipartiteGraph'$.
\end{proof}
\LongVersionEnd

\begin{theorem}\label{fromgaplesstogapless}
A gapless bigraph remains gapless after applying Delete-Event or the inverse of Delete-Edge.
\end{theorem}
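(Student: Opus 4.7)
The plan is to dispatch the two operations separately. For the inverse of Delete-Edge, let $\BipartiteGraph^+$ be obtained from $\BipartiteGraph$ by adding an edge $(i_0,j_0)$ that leaves the base graph unchanged, so $\DependencyGraph_{\BipartiteGraph^+}=\DependencyGraph_\BipartiteGraph$. I first observe that enlarging the edge set only relaxes the conformance requirement: given any cylinder set $\EventSet\sim\BipartiteGraph$, thickening the base of $\Event_{i_0}$ by the trivial factor $\Interval^{\{j_0\}}$ yields bases whose dimensions realise $E_{\BipartiteGraph^+}$ exactly, so $\EventSet\sim\BipartiteGraph^+$ as well. Hence $\Interior(\BipartiteGraph^+)\subseteq\Interior(\BipartiteGraph)$. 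Combining this with the hypothesis $\Interior(\BipartiteGraph)=\Interior_a(\DependencyGraph_\BipartiteGraph)$, the equality of base graphs, and the always-available inclusion $\Interior_a(\DependencyGraph_{\BipartiteGraph^+})\subseteq\Interior(\BipartiteGraph^+)$, the chain sandwiches to $\Interior(\BipartiteGraph^+)=\Interior_a(\DependencyGraph_{\BipartiteGraph^+})$, which is exactly gaplessness of $\BipartiteGraph^+$.

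The Delete-Event case carries the real content. Assume the event indexed by $n$ is deleted and set $\BipartiteGraph'=\BipartiteGraph-\{n\}$. My plan is to appeal to the exclusive-cylinder characterisation of gaplessness in Theorem \ref{Conj:GapGeom}: to show $\BipartiteGraph'$ is gapless in every direction, it suffices to produce, for every $\vec{p}'\in\Interior(\BipartiteGraph')$, an exclusive cylinder set $\EventSet'\sim\BipartiteGraph'$ of measure $\vec{p}'$. The construction I have in mind is natural: lift $\vec{p}'$ to a vector $\vec{p}\triangleq(\vec{p}',\epsilon)$ lying in $\Interior(\BipartiteGraph)$ by appending a small $n$-th coordinate, invoke gaplessness of $\BipartiteGraph$ to obtain an exclusive $\EventSet\sim\BipartiteGraph$ with $\mu(\EventSet)=\vec{p}$, and then delete the $n$-th cylinder. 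Exclusiveness is inherited, and since $\Neighbor_{\BipartiteGraph'}(i)=\Neighbor_\BipartiteGraph(i)$ for $i<n$, the remaining cylinders conform to $\BipartiteGraph'$ with the correct measure vector.

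The crux, and the step I expect to need the most care, is justifying the lift---that for every $\vec{p}'\in\Interior(\BipartiteGraph')$ one can choose $\epsilon>0$ with $(\vec{p}',\epsilon)\in\Interior(\BipartiteGraph)$. Here I would invoke Corollary \ref{cor:Interiorhasworstcase}: it guarantees that the supremum
\[
\xi\;\triangleq\;\sup\bigl\{\mu(\cup_{\Event\in\EventSet'}\Event):\EventSet'\sim\BipartiteGraph',\,\mu(\EventSet')=\vec{p}'\bigr\}
\]
is in fact attained by some conforming cylinder set, and $\vec{p}'\in\Interior(\BipartiteGraph')$ then forces $\xi<1$. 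Fixing any $\epsilon\in(0,1-\xi)$, every $\EventSet\sim\BipartiteGraph$ with $\mu(\EventSet)=\vec{p}$ restricts (by dropping $\Event_n$) to a conforming cylinder set for $\BipartiteGraph'$ of measure $\vec{p}'$, so $\mu(\cup_{i<n}\Event_i)\leq\xi$ and therefore $\mu(\cup_{i\in[n]}\Event_i)\leq\xi+\epsilon<1$, placing $\vec{p}$ in $\Interior(\BipartiteGraph)$. With this lift in hand, applying Theorem \ref{Conj:GapGeom} to $\BipartiteGraph$ and then dropping the extra cylinder closes the argument for an arbitrary direction $\vec{p}'$, hence for $\BipartiteGraph'$ itself.
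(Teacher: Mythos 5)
Your proposal is correct, and it supplies details that the paper itself omits (the paper only remarks that the proof is ``similar to that of Theorem \ref{reductioneffect}''). Both halves are sound: for the inverse of Delete-Edge, your observation that every cylinder set conforming with $\BipartiteGraph$ also conforms with the enlarged bigraph (by padding the relevant base with $\Interval^{\{j_0\}}$) gives $\Interior(\BipartiteGraph^+)\subseteq\Interior(\BipartiteGraph)$, and the sandwich with $\Interior_a$ closes the argument without even needing the exclusiveness machinery; for Delete-Event you follow the same template as the paper's proof of Theorem \ref{reductioneffect}, namely transferring an exclusive cylinder set from the old bigraph to the new one and invoking Theorem \ref{Conj:GapGeom} on both ends. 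The one divergence is that the paper's template works on the boundary (condition 2 of Theorem \ref{Conj:GapGeom}, via Lemma \ref{le:probabilityboundexist}), whereas you work in the interior (condition 1) and therefore need the lift $(\vec{p}',\epsilon)\in\Interior(\BipartiteGraph)$; you correctly identify that the only nontrivial point there is ruling out the case where the supremum $\xi$ of union measures equals $1$ without being attained, and Corollary \ref{cor:Interiorhasworstcase} (whose proof establishes $\xi<1$ for interior vectors) is exactly the right tool. The interior route arguably buys a slightly cleaner argument, since one avoids having to locate the precise $n$-th coordinate that puts the lifted vector on $\partial(\BipartiteGraph)$.
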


\begin{theorem}\label{fromgapfultogapful}
A gapful bigraph remains gapful after applying Delete-Edge or the inverse of Delete-Event.
\end{theorem}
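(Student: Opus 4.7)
The plan is to dispatch the two operations separately: Delete-Edge reduces to monotonicity, and the inverse of Delete-Event requires a small padding construction. For Delete-Edge, if $\BipartiteGraph'$ arises from $\BipartiteGraph$ by dropping an edge while preserving the base graph, then $\mathcal{I}_a(\BipartiteGraph') = \mathcal{I}_a(\BipartiteGraph)$ is immediate. Moreover, conformance with $\BipartiteGraph'$ is strictly stronger than conformance with $\BipartiteGraph$ (the minimum dimension set of each cylinder must fit into the smaller neighborhood), so the family $\{\EventSet : \EventSet \sim \BipartiteGraph'\}$ is a subset of $\{\EventSet : \EventSet \sim \BipartiteGraph\}$; because the definition of $\mathcal{I}$ is universally quantified over this family, this yields $\mathcal{I}(\BipartiteGraph) \subseteq \mathcal{I}(\BipartiteGraph')$. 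Chaining $\mathcal{I}(\BipartiteGraph') \supseteq \mathcal{I}(\BipartiteGraph) \supsetneq \mathcal{I}_a(\BipartiteGraph) = \mathcal{I}_a(\BipartiteGraph')$ delivers the gap for $\BipartiteGraph'$.

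For the inverse of Delete-Event, write $\BipartiteGraph' = ([n{+}1], [m], E \cup \{(n{+}1, j) : j \in S\})$ for the chosen neighborhood $S$ of the new event-vertex. Fix a witness $\vec{q} \in \mathcal{I}(\BipartiteGraph) \setminus \mathcal{I}_a(\BipartiteGraph)$ of the gap, and set $\vec{q}' := (\vec{q}, \epsilon)$ for a small $\epsilon > 0$ to be chosen. To show $\vec{q}' \in \mathcal{I}(\BipartiteGraph')$, apply Corollary \ref{cor:Interiorhasworstcase} to obtain the attained value $\xi := \max_{\EventSet \sim \BipartiteGraph,\,\mu(\EventSet) = \vec{q}} \mu(\cup_{\Event \in \EventSet} \Event)$, noting that $\xi < 1$ (attainment of $1$ would put $\vec{q}$ outside $\mathcal{I}(\BipartiteGraph)$). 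For any $\EventSet' \sim \BipartiteGraph'$ with $\mu(\EventSet') = \vec{q}'$, the restriction $\EventSet'|_{[n]}$ conforms with $\BipartiteGraph$ (the sub-bigraph of $\BipartiteGraph'$ on the first $n$ left-vertices equals $\BipartiteGraph$), so $\mu\bigl(\cup_{\Event \in \EventSet'} \Event\bigr) \leq \mu\bigl(\cup_{i \leq n} \Event'_i\bigr) + \mu(\Event'_{n+1}) \leq \xi + \epsilon$; any $\epsilon < 1 - \xi$ then forces $\vec{q}' \in \mathcal{I}(\BipartiteGraph')$.

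For $\vec{q}' \notin \mathcal{I}_a(\BipartiteGraph')$, start with an event set $\EventSet$ on some probability space $\mathcal{S}$ witnessing $\vec{q} \notin \mathcal{I}_a(\BipartiteGraph)$ — so $\DependencyGraph_{\BipartiteGraph}$ is a dependency graph of $\EventSet$, $\Pr(\EventSet) = \vec{q}$, and $\Pr(\cup_{\Event \in \EventSet} \Event) = 1$ — then enlarge the probability space to $\mathcal{S} \times \Interval$ and adjoin $\Event_{n+1}$ depending only on the new $\Interval$ coordinate with $\Pr(\Event_{n+1}) = \epsilon$. Then $\Event_{n+1}$ is independent of every $\Event_i$ with $i \in [n]$. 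The dependency-graph condition only requires independence of non-neighbors, so any supergraph of the augmented dependency graph (which has $n{+}1$ isolated) still qualifies as a dependency graph; in particular $\DependencyGraph_{\BipartiteGraph'}$ does. The probability vector is exactly $\vec{q}'$, and $\Pr\bigl(\cup_{i \leq n+1} \Event_i\bigr) \geq \Pr\bigl(\cup_{i \leq n} \Event_i\bigr) = 1$, so $\vec{q}' \notin \mathcal{I}_a(\BipartiteGraph')$.

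The steps demanding the most care are the strict inequality $\xi < 1$ — which is what makes the padding argument feasible and requires that the supremum be \emph{attained}, precisely the content of Corollary \ref{cor:Interiorhasworstcase} — together with getting the direction of inclusion right in the Delete-Edge step, since conformance with a smaller bigraph is the stronger condition while the definition of $\mathcal{I}$ is universally quantified, so the inclusion flips.
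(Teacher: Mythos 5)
Your proof is correct, but it takes a genuinely different route from the one the paper intends. The paper omits this proof, stating it is ``similar to that of Theorem \ref{reductioneffect}''; that template works direction by direction, locating the boundary vector $\lambda\vec{p}\in\partial(\BipartiteGraph)$ and invoking the exclusiveness criterion of Theorem \ref{Conj:GapGeom} to transfer an exclusive cylinder set between the two bigraphs. You instead transport a single gap witness $\vec{q}\in\mathcal{I}(\BipartiteGraph)\setminus\mathcal{I}_a(\BipartiteGraph)$ directly, which is exactly what ``gapful'' (an existential statement) requires. For Delete-Edge your observation is pure monotonicity: conformance with the edge-reduced bigraph is the stronger condition (a base can always be padded to a larger dimension set), so $\mathcal{I}(\BipartiteGraph)\subseteq\mathcal{I}(\BipartiteGraph')$ while $\mathcal{I}_a$ is unchanged because the base graph is; the witness survives verbatim. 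For the inverse of Delete-Event, your padding $(\vec{q},\epsilon)$ is sound on both sides: on the cylinder side you correctly identify that one needs $\sup_{\EventSet\sim\BipartiteGraph,\,\mu(\EventSet)=\vec{q}}\mu(\cup_{\Event\in\EventSet}\Event)<1$ rather than merely $<1$ for each individual set, and attainment via Corollary \ref{cor:Interiorhasworstcase} supplies exactly this; on the abstract side, adjoining an event on an independent coordinate and noting that any supergraph of a dependency graph is again a dependency graph (since adding edges weakens the required independences) places the witness outside $\mathcal{I}_a(\BipartiteGraph')$. What your approach buys is economy — no boundary vectors, no exclusiveness, no per-direction analysis — at the cost of leaning on Corollary \ref{cor:Interiorhasworstcase}; what the paper's template buys is uniformity with the proofs of Theorems \ref{reductioneffect} and \ref{fromgaplesstogapless}, where the universal quantification over directions genuinely cannot be avoided.
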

 
\LongVersion
The proofs of the above two theorems are similar to that of Theorem \ref{reductioneffect}, so they are omitted.
\LongVersionEnd

Because the operations can be pipelined, applying them in combination may produce interesting results. The following corollaries are some examples. 

\begin{definition}[Combinatorial bigraph]
Given two positive integers $m<n$, let $\BipartiteGraph_{n,m}=([(^n_m)],[n], E_{n,m})$ where $(i,j)\in E_{n,m}$ if and only if $j$ is in the $m$-sized subset of $[n]$ represented by $i$.
 $\BipartiteGraph_{n,m}$ is called the $(n,m)$-combinatorial bigraph.
\end{definition}

\begin{corollary}\label{cor:densegraphgapless}
If $\BipartiteGraph_{n,m}$ is gapless, then so is $\BipartiteGraph_{n+c,m+c}$ for any integer $c\geq 1$.
\end{corollary}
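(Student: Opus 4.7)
The plan is to proceed by induction on $c$, reducing everything to the case $c=1$: show that $\BipartiteGraph_{n,m}$ gapless implies $\BipartiteGraph_{n+1,m+1}$ gapless. Because the operations from Theorems \ref{reductioneffect} and \ref{fromgaplesstogapless} are the only tools we have for propagating gaplessness between bigraphs, the plan is to exhibit an explicit sequence of such gaplessness-preserving operations transforming $\BipartiteGraph_{n,m}$ into $\BipartiteGraph_{n+1,m+1}$.

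The structural observation driving the construction is that the $\binom{n+1}{m+1}$ events of $\BipartiteGraph_{n+1,m+1}$ split naturally according to whether they contain the new variable $n+1$: the $\binom{n}{m}$ events that do correspond bijectively (via $S\mapsto S\cup\{n+1\}$) to the events of $\BipartiteGraph_{n,m}$ with the new variable adjoined to each neighborhood, while the $\binom{n}{m+1}$ events that do not form a copy of $\BipartiteGraph_{n,m+1}$ sitting on $[n]$. I would carry out the transformation in two phases. In Phase 1, I introduce the variable $n+1$ and attach it to every existing event via the inverse of Delete-Variable (to create $n+1$ initially with degree at most one) followed by a sequence of Duplicate-Variable and inverses of Delete-Edge that progressively enlarge $\Neighbor(n+1)$ to include all of $L(\BipartiteGraph_{n,m})$. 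In Phase 2, I introduce the remaining events---those corresponding to $(m+1)$-subsets of $[n]$---each via Duplicate-Event (which gives an event with an existing neighborhood) followed by inverse-of-Delete-Edge adjustments to reshape that neighborhood to the desired target $T$.

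The main obstacle is verifying that every application of inverse of Delete-Edge is legitimate, i.e., that the added edge does not alter the prevailing base graph (this is the hypothesis required for Theorem \ref{fromgaplesstogapless} to apply to that operation). The plan to handle this is to first \emph{densify} the intermediate bigraph using only Theorem \ref{reductioneffect} operations (Duplicate-Event and Duplicate-Variable, which are fully reversible and preserve both gapfulness and gaplessness), so that its base graph already coincides with $\DependencyGraph_{\BipartiteGraph_{n+1,m+1}}$; once the target base graph is in place, subsequent edge additions via inverse of Delete-Edge are automatically base-graph-preserving, and event neighborhoods can be freely refined. Because every individual step in the sequence either belongs to Theorem \ref{reductioneffect} (preserving both gapful and gapless) or to Theorem \ref{fromgaplesstogapless} (preserving gapless in the forward direction), the gaplessness of $\BipartiteGraph_{n,m}$ transfers to $\BipartiteGraph_{n+1,m+1}$, completing the inductive step and hence the corollary.
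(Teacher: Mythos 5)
Your overall strategy---run the reduction forward from $\BipartiteGraph_{n,m}$ to $\BipartiteGraph_{n+1,m+1}$ using gapless-preserving operations---is logically the mirror image of the paper's proof, which instead degrades $\BipartiteGraph_{n+1,m+1}$ to $\BipartiteGraph_{n,m}$ by Delete-Edge, then Delete-Variable, then the inverse of Duplicate-Event, and concludes by contraposition via Theorems \ref{reductioneffect} and \ref{fromgapfultogapful}. So the approach is right. However, as written your construction has a genuine defect in the order of the phases. After your Phase~1, every existing event has $n+1$ in its neighborhood. In Phase~2 you seed each new event (an $(m+1)$-subset $T\subseteq[n]$) by Duplicate-Event, so its neighborhood is $S\cup\{n+1\}$ for some $m$-subset $S$, and you must then \emph{remove} the edge to $n+1$ to reach $T$. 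The inverse of Delete-Edge only adds edges, and Delete-Edge itself preserves gapfulness, not gaplessness (Theorem \ref{fromgapfultogapful} vs.\ Theorem \ref{fromgaplesstogapless})---the asymmetry between these two theorems is precisely the point of the reduction machinery, so you cannot delete that edge in a gapless-preserving chain. The fix is to swap the phases: first duplicate an event with neighborhood $T\setminus\{t\}\subseteq[n]$ for each target $T$ and add the single missing edge $t$, and only afterwards introduce $n+1$ and attach it to the $\binom{n}{m}$ original events. This is exactly the paper's sequence read in reverse.

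Your proposed remedy for the base-graph-preservation condition also does not work: neither Duplicate-Event nor Duplicate-Variable can ``densify'' the base graph, since Duplicate-Event only adds a twin vertex and Duplicate-Variable creates a variable whose neighborhood is contained in an existing one (already a clique in the base graph), so neither creates a new adjacency between pre-existing events. No gapless-preserving operation can do this. Fortunately the step is unnecessary: the corollary is non-vacuous only when $\BipartiteGraph_{n,m}$ is gapless with $m\geq 2$, which by Theorem \ref{cyclesaregapful} and Corollary \ref{cor:Cyclegapful} forces $m>\tfrac{2}{3}n$, so any two $m$-subsets of $[n]$ intersect, $\DependencyGraph_{\BipartiteGraph_{n,m}}$ is complete, and every edge addition you need automatically leaves the base graph unchanged. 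You should make this observation explicit rather than appeal to an impossible densification step.
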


\LongVersion
\begin{proof} We only need to prove for $c=1$. 

First, apply Delete-Edge to $\BipartiteGraph_{n+1,m+1}$ as follows. 
For each vertex $i$ in $[(^{n+1}_{m+1})]$, if $(i,n+1) \in E_{n+1,m+1}$, delete $(i,n+1)$. Otherwise, delete an arbitrary edge of $i$. 

Then, apply Delete-Variable to the bigraph, i.e., delete the vertex $n+1$. 

Finally, apply the inverse operation of Duplicate-Event to the bigraph. 

For any $m$-set $S\subset [n]$, suppose the set $S\cup\{n+1\}$ is represented by $i\in [(^{n+1}_{m+1})]$. After applying Delete-Edge to $i$, the neighborhood of $i$ is exactly $S$. This means that the final bigraph is exactly $\BipartiteGraph_{n,m}$. Because $\BipartiteGraph_{n,m}$ is gapless, from Theorem \ref{reductioneffect} and Theorem \ref{fromgapfultogapful}, we have that $\BipartiteGraph_{n+1,m+1}$ is also gapless.
\end{proof}

\LongVersionEnd
\begin{corollary}\label{cor:sparsegraphgapful}
If $\BipartiteGraph_{n,m}$ is gapful, then for any integer $c \geq 1$, $\BipartiteGraph_{cn,cm}$ is also gapful.
\end{corollary}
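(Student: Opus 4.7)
The plan is to construct $\BipartiteGraph_{cn,cm}$ from $\BipartiteGraph_{n,m}$ by a sequence of operations each of which is known to preserve gapfulness, namely Duplicate-Variable (Theorem \ref{reductioneffect}) and the inverse of Delete-Event (Theorem \ref{fromgapfultogapful}).

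First I would blow up every variable by a factor of $c$. Starting from $\BipartiteGraph_{n,m}$, for each $j\in[n]$ apply Duplicate-Variable $c-1$ times, each time creating a new vertex $j'$ on the right whose neighborhood equals the full $\Neighbor(j)$. Let $H$ denote the resulting bigraph; by Theorem \ref{reductioneffect}, $H$ is still gapful. Identify the right part of $H$ with $[n]\times[c]$ so that the $c$ copies of the original variable $j$ form the block $\{j\}\times[c]$. Under this identification, each event of $H$ (still indexed by an $m$-subset $S\subseteq[n]$) has neighborhood $S\times[c]\subseteq[n]\times[c]$, which is precisely a ``block-aligned'' $cm$-subset of $[cn]$.

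Next I would add the remaining events. For each $cm$-subset $T\subseteq[cn]$ that is not of the form $S\times[c]$ for some $m$-subset $S\subseteq[n]$, apply the inverse of Delete-Event to add a new vertex $i_T$ on the left with neighborhood exactly $T$. By Theorem \ref{fromgapfultogapful}, each such addition preserves gapfulness. After all these additions, the resulting bigraph has an event for every $cm$-subset of $[cn]$, whose neighborhood is exactly that subset; this bigraph is isomorphic to $\BipartiteGraph_{cn,cm}$, which is therefore gapful.

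I do not expect any real obstacle here. The one thing that needs a sentence of justification is that Duplicate-Variable with $\Neighbor(j')=\Neighbor(j)$ does not alter the base graph on the original left vertices (since two events share the duplicate $j'$ exactly when they shared $j$), so the intermediate bigraph $H$ is well-defined and the hypothesis ``$H$ is gapful'' is inherited rigorously from $\BipartiteGraph_{n,m}$ via Theorem \ref{reductioneffect}. Once $H$ is in hand, the second stage is purely set-theoretic bookkeeping about which $cm$-subsets of $[cn]$ are block-aligned versus not.
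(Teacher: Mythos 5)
Your proposal is correct and is essentially the paper's own argument run in the opposite direction: the paper reduces $\BipartiteGraph_{cn,cm}$ down to $\BipartiteGraph_{n,m}$ via Delete-Event followed by the inverse of Duplicate-Variable and concludes by preservation of gaplessness (Theorems \ref{reductioneffect} and \ref{fromgaplesstogapless}), whereas you build $\BipartiteGraph_{cn,cm}$ up from $\BipartiteGraph_{n,m}$ via Duplicate-Variable followed by the inverse of Delete-Event and conclude by preservation of gapfulness (Theorems \ref{reductioneffect} and \ref{fromgapfultogapful}). Since the cited theorems cover both an operation and its inverse, the two write-ups use the same block identification of $[cn]$ with $[n]\times[c]$ and are logically equivalent.
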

\LongVersion
\begin{proof}  
We apply operations to $\BipartiteGraph_{cn,cm}$ in two steps.

First, apply Delete-Event to $\BipartiteGraph_{cn,cm}$. Given an $m$-set $S\subset [n]$, define $f(S)=\cup_{i\in S}\{ki: k\in [c]\}$. Delete all vertices from $L(\BipartiteGraph_{cn,cm})$ except those representing $f(S)$ for some $S\subset [n]$. Let $\BipartiteGraph'$ be the resulting bigraph.

Second, apply the inverse operation of Duplicate-Variable to $\BipartiteGraph'$. It is easy to see that for any $k_1 j,k_2 j\in R(\BipartiteGraph')$ with $k_1,k_2\in [c]$ and $j\in [n]$, $\Neighbor_{\BipartiteGraph'}(k_1 j)=\Neighbor_{\BipartiteGraph'}(k_2 j)$. Hence, we delete all vertices in $R(\BipartiteGraph')\setminus [n]$ from $R(\BipartiteGraph')$, preserving gapful/gapless.

It is easy to verify that the final bigraph is exactly $\BipartiteGraph_{n,m}$. Because $\BipartiteGraph_{n,m}$ is gapful, from Theorem \ref{reductioneffect} and Theorem \ref{fromgaplesstogapless}, we have that $\BipartiteGraph_{cn,cm}$ is also gapful.
\end{proof}
\LongVersionEnd

\begin{definition}[Sparsified bigraphs]
A bigraph $H' = ([n'],[m'],E')$ is called a \emph{sparsification} of $\BipartiteGraph= ([n],[m],E)$ if $[n'] = [n],[m'] \subseteq [m], E' \subseteq E$ and their base graphs are the same. 
\end{definition}

By Theorem \ref{reductioneffect} and Theorem \ref{fromgaplesstogapless}, we know that if $\BipartiteGraph$ is gapful, all sparsifications of $\BipartiteGraph$ must be gapful. Applying Corollary \ref{cor:sparsegraphgapful}, we get the following result.

\begin{corollary}\label{cor:Sparsifiedgraphgapful}
If $\BipartiteGraph_{n,m}$ is gapful, all sparsifications of $\BipartiteGraph_{cn,cm}$ are also gapful for any integer $c \geq 1$. 
\end{corollary}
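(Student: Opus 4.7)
\noindent\textbf{Proof plan for Corollary \ref{cor:Sparsifiedgraphgapful}.} The plan is to start from $\BipartiteGraph_{cn,cm}$, which is already known to be gapful by Corollary \ref{cor:sparsegraphgapful}, and transform it into an arbitrary sparsification $H' = ([n], [m'], E')$ using only operations that preserve gapfulness. Specifically, I will use Delete-Edge (which preserves gapfulness by Theorem \ref{fromgapfultogapful}, provided the base graph is unchanged) to trim the edge set down from $E$ to $E'$, and then Delete-Variable (which preserves gapfulness by Theorem \ref{reductioneffect}) to remove the right-vertices in $[m] \setminus [m']$.

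The transformation is carried out in two phases. In the first phase, I enumerate the edges in $E \setminus E'$ in any order as $e_1, e_2, \ldots, e_k$ and delete them one by one, obtaining a chain of edge sets $E = E_0 \supset E_1 \supset \cdots \supset E_k = E'$ on the vertex set $[n] \cup [m]$. In the second phase, after all these deletions, each vertex in $[m] \setminus [m']$ has degree $0$ (since $E' \subseteq [n] \times [m']$ by hypothesis), so each such vertex satisfies the precondition $|\Neighbor(j)| \leq 1$ for Delete-Variable and can be removed one at a time. The final bigraph has left vertex set $[n]$, right vertex set $[m']$, and edge set $E'$, which is exactly $H'$. Since each step preserves gapfulness, $H'$ is gapful.

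The key point to verify, and the main place where attention is required, is that every Delete-Edge step in the first phase actually preserves the base graph (which is required by the definition of Delete-Edge). This follows from a simple sandwiching argument: by the definition of sparsification, $E$ and $E'$ induce the same base graph $G_0$; for every intermediate edge set $E_i$ with $E' \subseteq E_i \subseteq E$, the corresponding base graph $G_{E_i}$ satisfies $G_{E'} \subseteq G_{E_i} \subseteq G_E$, so $G_{E_i} = G_0$. Applied to $E_i$ and $E_{i+1} = E_i \setminus \{e_{i+1}\}$, this yields $G_{E_i} = G_{E_{i+1}} = G_0$, certifying that the deletion of $e_{i+1}$ preserves the base graph. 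Once this observation is in hand, the rest of the argument is a direct composition of Theorem \ref{fromgapfultogapful}, Theorem \ref{reductioneffect}, and Corollary \ref{cor:sparsegraphgapful}, with no further obstacles.
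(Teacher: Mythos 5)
Your proof is correct and follows essentially the same route as the paper: the paper likewise observes that any sparsification of a gapful bigraph is gapful (via a sequence of Delete-Edge steps, justified because the base graph is fixed throughout, followed by Delete-Variable on the now-isolated right-vertices) and then composes this with Corollary \ref{cor:sparsegraphgapful}. The only cosmetic difference is that you invoke Theorem \ref{fromgapfultogapful} directly for Delete-Edge, whereas the paper cites the contrapositive of Theorem \ref{fromgaplesstogapless}; these are equivalent, and your sandwiching check that each intermediate edge set still induces the same base graph is exactly the detail the paper leaves implicit.
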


\section{Relationship between gaps and cycles}\label{sec:characterization}
In this section, we show that a bigraph has a gap is almost equivalent to that its base graph has an cycle. The only case that is not completely known is when the bigraph does not  contain any cyclic bigraph but its base graph has a 3-clique. Many examples in this case is gapless, but we find one that turns out to be gapful. 

\LongVersion
We also study gaps from a dependency-graph-oriented perspective. Namely, a dependency graph is a-gapful if at least one corresponding bigraph is gapful, while is strongly a-gapful if all corresponding bigraphs are gapful. Intuitively speaking, the two concepts serve as a lower bound and an upper bound of the notion of gapfulness. Characterization of strongly a-gapful graphs was initiated by Kolipaka et al. \cite{kolipaka2011moser}  and has been open for 6 years. 

\subsection{Gaps are not equivalent to cycles}
\LongVersionEnd

First of all, we prove that any treelike bigraph is gapless. Recall that a bigraph is called treelike if its base graph is a tree. Basically, for a vector on boundary, we construct an exclusive cylinder set, which leads to the result by Theorem \ref{Conj:GapGeom}. To ensure exclusiveness, the unit interval in each dimension is divided into two disjoint parts, each of which is assigned to one of the two cylinders depending on this dimension. The construction is feasible because the base graph is a tree.
\graph*

\LongVersion
\begin{proof}
Arbitrarily choose a treelike bigraph $\BipartiteGraph= ([n+1],[m],E)$. Since the case where $n=1$ is trivial, we just consider $n>1$. $G_\BipartiteGraph$ is a tree means that any vertex in $R(\BipartiteGraph)$ has at most two neighbors in $\BipartiteGraph$ . Hence, by Theorem \ref{reductioneffect}, it does not lose generality to assume that: 1. any vertex in $R(\BipartiteGraph)$ has exactly two neighbors in $\BipartiteGraph$, and 2. any two vertices in $L(\BipartiteGraph)$ have no more than one common neighbor in $\BipartiteGraph$. Since $G_\BipartiteGraph$ is a tree, one has $m=n$.

Let $\vec{p}\in (0,1)^{n+1}$ be a boundary vector of $\BipartiteGraph$. We will construct a set $\EventSet$ of cylinders $\Event_1,...,\Event_{n+1}\subset\mathbb{I}^m=\mathbb{I}^n$ such that $\mu(\EventSet)=\vec{p}$ and $\EventSet$ conforms with $\BipartiteGraph$. Recall that for any $j\in [n]$, $X_j$ is the coordinate variable of the $j$-th dimension of $\mathbb{I}^n$.

We regard $G_\BipartiteGraph$ as a tree rooted at the vertex $n+1$. For any vertex $i\in [n+1]$, let $\mathcal{C}(i)$ be the set of children of $i$. 
Without loss of generality, for any $k\in \mathcal{C}(i)$, assume that $\Neighbor_\BipartiteGraph(k)\cap \Neighbor_\BipartiteGraph(i)=\{k\}$, which means that both $\Event_i$ and $\Event_k$ depend on $X_k$. 

Define $\vec{q}=(q_1,...,q_n)\in \mathbb{R}^n$ to be 
\begin{equation}
\begin{split}
q_i=\left\{\begin{array}{ll} p_i& \textrm{ if vertex } i \textrm{ is a leaf of } G_\BipartiteGraph\\
p_i / \prod_{k\in \mathcal{C}(i)}(1 - q_k)&\textrm{ otherwise}
\end{array}\right.
\end{split}
\end{equation}

\textbf{Claim}: $\vec{q}\in (0,1)^n$.

\textbf{Proof of the Claim}: Suppose for contradiction that there is $i\in [n]$ such that $q_i\notin (0,1)$. Fix such an $i$ each of whose descendant $k$ satisfies $q_k\in (0,1)$. By the definition of $\vec{q}$, we must have $q_i\geq 1$. 

Let $T_i$ be the subtree of $G_\BipartiteGraph$ rooted at $i$. For each $k\in [n+1]$, if $k$ is not a vertex of $T_i$, define $\Event'_k=\emptyset\subset\mathbb{I}^n$. When $k$ is in $T_i$, construct a cylinder $\Event'_k\subset\mathbb{I}^n$ which consists of all the vectors
$(x_1,x_2,...,x_n)$ such that
\begin{equation*}
\begin{split} \left\{\begin{array}{ll} 0\leq x_k\leq q_k & \textrm{ if } k \textrm{ is a leaf of } T_i\\
q_{l}< x_{l}\leq 1, \forall l\in \mathcal{C}(i) & \textrm{ if } k= i\\
0\leq x_k\leq q_k,q_{l}< x_{l}\leq 1, \forall l\in \mathcal{C}(k) & \textrm{ otherwise}
\end{array}\right.
\end{split}
\end{equation*}

Define vector $\vec{p}'=(p'_1,...,p'_{n+1})$ such that 
\begin{equation*}
\begin{split}
p'_k=\left\{\begin{array}{ll} \prod_{k\in \mathcal{C}(i)}(1 - q_k) & \textrm{ if } k=i\\
                                          p_k & \textrm{ if }  k \textrm{ is in }T_i  \textrm{ and }  k\neq i\\
                                          0 & \textrm{ otherwise}
\end{array}\right.
\end{split}
\end{equation*}

Then the cylinder set $\EventSet'=\{\Event'_k | k\in[n+1]\}$ conforms with $\BipartiteGraph$, and $\mu(\EventSet')=\vec{p}'<\vec{p}$. 

Now we prove that $\cup _{k\in[n+1]}\Event'_k =\mathbb{I}^n$. Arbitrarily fix $\vec{x}=(x_1,...,x_n)\in \mathbb{I}^n$. 

Let $l=i$. Then, if there is $k\in \mathcal{C}(l)$ such that $0\leq x_k\leq q_k$, let $l$ be such a $k$. Iterate this process and finally one of the following two cases must be reached.

Case 1: $\mathcal{C}(l)=\emptyset$, namely $l$ is a leaf.

Case 2: $\mathcal{C}(l)\neq\emptyset$ and $q_k<x_k\leq 1$ for any $k\in \mathcal{C}(l)$.

Let the final $l$ be $l_0$.
We can see that $\vec{x}\in \Event'_i$ if  $l_0=i$. Otherwise, the iteration guarantees that $0\leq x_{l_0}\leq q_{l_0}$, so it also holds that $\vec{x}\in \Event'_{l_0}$. To sum, we always have $\vec{x}\in \Event'_{l_0}$, which implies that $\cup _{k\in[n+1]}\Event'_k =\mathbb{I}^n$. Considering that $\vec{p}\in \partial(\BipartiteGraph)$ and $\vec{p}>\vec{p}'$, we reach a contradiction due to Lemma \ref{le:boundaryisunique}. \textbf{The Claim is proven.}

Then we can construct cylinders $\Event_1,...,\Event_{n+1}\subset\mathbb{I}^n$ as follows. For any $i\in [n+1]$, $\Event_i\subset\mathbb{I}^n$ consists of all the vectors $(x_1,x_2,...,x_n)$ such that 
\begin{equation*}
\begin{split}
\left\{\begin{array}{ll} 0\leq x_i\leq q_i & \textrm{ if } i \textrm{ is a leaf of } G_\BipartiteGraph\\
q_{k}< x_{k}\leq 1, \forall k\in \mathcal{C}(i) & \textrm{ if } i= n+1\\
0\leq x_i\leq q_i,q_{k}< x_{k}\leq 1, \forall k\in \mathcal{C}(i) & \textrm{ otherwise}
\end{array}\right.
\end{split}
\end{equation*}

Define $\vec{p}'=(p_1,...,p_n,p'_{n+1})$ where $p'_{n+1}=\prod_{k\in \mathcal{C}(n+1)}(1 - q_k)$. 
It is easy to observe three facts.

First,  $\EventSet=\{\Event_i | i\in[n+1]\}$ is exclusive and conforms with $\BipartiteGraph$.

Second, $\mu(\EventSet)=\vec{p}'$.

Third, $\cup _{i\in[n+1]}\Event_i=\mathbb{I}^n$, which follows from the proof of $\cup _{k\in[n+1]}\Event'_k =\mathbb{I}^n$ in the above Claim. 

Since $\vec{p}\in\partial(\BipartiteGraph)$, we have $\vec{p}'\geq\vec{p}$. Arbitrarily choose $\Event''_{n+1}\subseteq \Event_{n+1}$ such that: 1. $\Event''_{n+1}$ only depends on $X_i$'s with $i\in \mathcal{C}(n+1) $, and 2. $\mu(\Event''_{n+1})=p_{n+1}$. Let $\EventSet''=\{\Event_1,...,\Event_n, \Event''_{n+1}\}$. We know that $\mu(\EventSet'')=\vec{p}$ and $\EventSet''$ is exclusive with respect to $\BipartiteGraph$. Because $\vec{p}\in\partial(\BipartiteGraph)$, by Theorem \ref{Conj:GapGeom}, $\BipartiteGraph$ is gapless in the direction of $\vec{p}$.
\end{proof}
\LongVersionEnd

Using of the constructed cylinders, we obtain a system of equations whose solution determines the boundary of a treelike bigraph.

\begin{corollary}\label{computetreeboundary}
Given a bigraph $\BipartiteGraph=([n],[m],E)$ such that $G_\BipartiteGraph$ is a tree, appoint the vertex $n$ as the root of $G_\BipartiteGraph$. For any $\vec{p}\in (0,1)^n$, $\lambda\vec{p}\in\partial(\BipartiteGraph)$ if and only if $\lambda$ is the minimum positive solution to the equation system: $q_i = \lambda p_i$ if vertex $i$ is a leaf of $G_\BipartiteGraph$, $q_i = \lambda p_i / \prod_{k \text{ is a child of }i}(1 - q_k)$ if $i\neq n$ and is not a leaf, and $\lambda p_n = \prod_{k \text{ is a child of }n}(1 - q_k)$.
\end{corollary}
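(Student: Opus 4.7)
The plan is to read the equation system off the exclusive cylinder construction that already appears in the proof of Theorem \ref{dependencygaplessdecision}. Fix $\vec{p}\in(0,1)^n$ and let $\lambda_0>0$ be the unique scalar with $\lambda_0\vec{p}\in\partial(\BipartiteGraph)$, whose existence and uniqueness are given by Lemma \ref{le:probabilityboundexist}. For any candidate $\lambda>0$, define $q_i=q_i(\lambda)$ recursively from the leaves upward by $q_i=\lambda p_i$ at each leaf and $q_i=\lambda p_i/\prod_{k\in\mathcal{C}(i)}(1-q_k)$ at each internal non-root vertex; the recursion is well-defined as long as every descendant satisfies $q_k<1$.

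For the forward direction, I would apply the cylinder construction from the proof of Theorem \ref{dependencygaplessdecision} verbatim to the boundary vector $\lambda_0\vec{p}$. That construction produces an exclusive cylinder set $\EventSet=\{\Event_1,\ldots,\Event_n\}\sim\BipartiteGraph$ tiling $\Interval^m$ whose measures are $\mu(\Event_i)=\lambda_0 p_i$ for $i\neq n$ and $\mu(\Event_n)=\prod_{k\in\mathcal{C}(n)}(1-q_k(\lambda_0))$. Since $\lambda_0\vec{p}$ is on the boundary, Lemma \ref{le:boundaryisunique} forces $\mu(\Event_n)=\lambda_0 p_n$, which is exactly the root equation. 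Together with the recursive definition this shows that $\lambda_0$ is a positive solution of the system.

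For the converse and the minimality claim, I would argue by monotonicity in $\lambda$. A straightforward induction on the depth of the subtree rooted at $i$ shows that $q_i(\lambda)$ is continuous and strictly increasing in $\lambda$ on the interval on which the recursion is well-defined and $q_i<1$; consequently $\lambda\mapsto\prod_{k\in\mathcal{C}(n)}(1-q_k(\lambda))$ is continuous and strictly decreasing, tending to $1$ as $\lambda\to 0^+$, while $\lambda\mapsto\lambda p_n$ is strictly increasing from $0$. These two curves can meet in at most one point, so the smallest positive $\lambda$ satisfying the full system is unique; combined with the forward direction, it must equal $\lambda_0$.

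The main obstacle is checking that the recursion does not break down before reaching $\lambda_0$, i.e., that no $q_k(\lambda)$ hits the value $1$ at some $\lambda'<\lambda_0$. I would rule this out exactly as in the proof of Theorem \ref{dependencygaplessdecision}: if it did, one could assemble a cylinder set on the subtree in which $q_k=1$ already tiles $\Interval^m$ while every cylinder has measure strictly less than the corresponding entry of $\lambda'\vec{p}$; by Lemma \ref{le:boundaryisunique} this contradicts $\lambda'\vec{p}\in\Interior(\BipartiteGraph)$.
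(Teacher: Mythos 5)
Your overall route is exactly the paper's: the paper's entire proof of this corollary is the one-line remark that it ``immediately follows from the construction of $\EventSet$ in the proof of Theorem \ref{dependencygaplessdecision}'', and you correctly identify that construction as the source of the recursion, supply the monotonicity argument for uniqueness/minimality of the solution, and reuse the paper's Claim to rule out the recursion breaking down. So in spirit this is the intended proof, fleshed out.

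There is one step that does not go through as written: you assert that Lemma \ref{le:boundaryisunique} ``forces $\mu(\Event_n)=\lambda_0 p_n$''. That lemma only applies to cylinder sets whose measure vector is \emph{dominated by} the boundary vector, so it yields only the inequality $\prod_{k\in\mathcal{C}(n)}(1-q_k(\lambda_0))\geq\lambda_0 p_n$ (if the product were strictly smaller, the whole measure vector would be $\leq\lambda_0\vec p$ with a strict entry and union of full measure, contradicting the lemma). It gives nothing in the other direction, and without the reverse inequality the root equation --- and hence the claim that $\lambda_0$ solves the system at all --- is unproved; a priori the unique solution of the system could sit strictly above $\lambda_0$. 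The gap is repairable inside your own framework: for any $\lambda$ strictly below the unique root $\lambda_1$ of the system, the construction plus shrinking the root cylinder to measure $\lambda p_n$ yields an \emph{exclusive} cylinder set with measure $\lambda\vec p$ whose union has measure strictly less than $1$ (the discarded part of the root cylinder is independent of all non-neighbouring cylinders and disjoint from the neighbouring ones, so it contributes positive measure outside the union); Lemma \ref{WorstPlacement} then shows every cylinder set with measure $\lambda\vec p$ has union of measure less than $1$, i.e.\ $\lambda\vec p\in\Interior(\BipartiteGraph)$, whence $\lambda_0\geq\lambda_1$. Combined with $\lambda_1\vec p\in\mathcal E(\BipartiteGraph)$ (the construction at $\lambda_1$ tiles the cube), this gives $\lambda_1=\lambda_0$ without ever needing to verify the root equation at $\lambda_0$ directly. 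With that substitution your argument is complete.
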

\LongVersion
\begin{proof}
This immediately follows from the construction of $\EventSet$ in the proof of Theorem \ref{dependencygaplessdecision}.
\end{proof}
\LongVersionEnd

Now we show that cyclic bigraphs are gapful. Though in principle this can be shown by a combination of \cite[Theorem 1]{shearer1985problem} and the results in Section \ref{sec:cyclicboundary}, it is tough since both Shearer's inequality system and the high degree polynomial in Theorem \ref{cycleboundary} are hard to solve. Hence we do it in another way. 
Specifically, for the vector $\vec{q}=(\frac{1}{4}+\epsilon,...,\frac{1}{4}+\epsilon)$ where $\epsilon>0$ is small enough, we show two facts. First, the vector $\vec{q}$ lies in the interior of the cyclic bigraph. Second, $\vec{q}$ does not allow any exclusive cylinder set. By Theorem \ref{Conj:GapGeom}, these facts immediately imply Theorem \ref{cyclesaregapful}.

\cycle*


\LongVersion
\begin{proof}
It is enough to consider the canonical $n$-cyclic bigraphs $\BipartiteGraph_n=([n], [n], E_n)$ where $E_n=\{(i,i), (i,(i+1)(\overline{\textrm{mod}} ~n)): i\in [n]\}$. Again for convenience of presentation, ``$(\overline{\textrm{mod}} ~n)$" will be omitted when it is clear from the context. Arbitrarily fix $n$.


For any $i\in [n]$, let $\Event_i=\{(x_1,...,x_n): \frac{1}{2}\leq x_i\leq 1, 0\leq x_{i+1}< \frac{1}{2}\}$. Let $\EventSet=\{\Event_1,...,\Event_n\}$, and $\vec{p}=(\frac{1}{4},...,\frac{1}{4})\in (0,1)^n$. It is straightforward to check that $\EventSet$  is exclusive with respect to $\BipartiteGraph_n$, $\mu(\EventSet)=\vec{p}$, and $\mu(\cup_{i\in [n]}\Event_i)<1$. Arbitrarily choose $0<\epsilon<\frac{1}{n}(1-\mu(\cup_{i\in [n]}\Event_i))$. Let $\vec{q}=(\frac{1}{4}+\epsilon,...,\frac{1}{4}+\epsilon)\in (0,1)^n$. Then we prove two claims.

\textbf{Claim 1}: $\vec{q}\in \mathcal{I}(\BipartiteGraph_n)$.

Assume for contradiction that there is a cylinder set $\EventSetB=\{\EventB_1,...,\EventB_n\}\sim\BipartiteGraph_n$ such that $\mu(\EventSetB)=\vec{q}$ and $\Pr(\cup_{i\in[n]}\EventB_i)=1$. For each $i\in[n]$, arbitrarily choose a cylinder $B'_i$ such that $\EventB'_i\subset \EventB_i$, $\mu(\EventB'_i)=1/4$,  and $\EventB'_i$ only depends on $X_i$ and $X_{i+1}$. Let $\EventSetB'=\{\EventB'_1,...,\EventB'_n\}$. We have that $\EventSetB'$ conforms with $\BipartiteGraph_n$ and $\mu(\EventSetB')=\vec{p}$. On the one hand, $\mu(\cup_{i\in[n]}\EventB'_i)\geq 1-n\epsilon >\mu(\cup_{i\in [n]}\Event_i)$. On the other hand, since $\EventSet$ is exclusive, by Lemma \ref{WorstPlacement}, $\mu(\cup_{i\in[n]}\EventB'_i)\leq \mu(\cup_{i\in [n]}\Event_i)$. We reach a contradiction, so \textbf{Claim 1 holds}.

\textbf{Claim 2}:  For any cylinder set $\EventSetB\sim\BipartiteGraph_n$ with $\mu(\EventSetB)=\vec{q}$, $\EventSetB$ is not exclusive.

Arbitrarily fix a cylinder set $\EventSetB=\{\EventB_1,...,\EventB_n\}\sim\BipartiteGraph_n$ with $\mu(\EventSetB)=\vec{q}$. For each $i\in [n]$, let $\widetilde{\EventB_i}\subset\Interval^{\{i,i+1\}}$ be a base of $\EventB_i$, and choose the minimum subsets $\Delta'_i\subseteq \Interval^{\{i\}}$ and $\Delta_{i+1}\subseteq \Interval^{\{i+1\}}$ such that $\mu(\widetilde{\EventB_i}\setminus (\Delta'_i\times \Delta_{i+1}))=0$. Let $x_i=\mu(\Delta_i), x'_i=\mu(\Delta'_i)$. Then $\mu(\widetilde{\EventB_i}\setminus (\Delta'_i\times \Delta_{i+1}))=0$ implies that $x'_i x_{i+1}\geq \mu(\widetilde{\EventB_i})=\mu(\EventB_i)>\frac{1}{4}$. Hence, $\prod^n_{i=1}(x_i x'_i)>\frac{1}{4^n}$. There must be some $i\in [n]$ such that $x_i x'_i>\frac{1}{4}$, which in turn means that $x_i+x'_i>1$. As a result, $\mu(\Delta_i\cap\Delta'_i)>0$, implying that $\mu(B_{i-1}\cap B_i)>0$. \textbf{Claim 2 holds}.

Altogether, by Theorem \ref{Conj:GapGeom}, $\BipartiteGraph_n$ is gapful.
\end{proof}
\LongVersionEnd

By Theorem \ref{cyclesaregapful}, we can get a large class of gapful bigraphs.
\begin{definition}[Containing]
We say that a bigraph $\BipartiteGraph$ contains another bigraph $\BipartiteGraph'$, if there are injections $\pi_L: L(\BipartiteGraph')\rightarrow L(\BipartiteGraph)$ and $\pi_R: R(\BipartiteGraph')\rightarrow R(\BipartiteGraph)$ such that the following two conditions hold simultaneously:
\begin{enumerate}
\item For any $i\in L(\BipartiteGraph')$ and $j\in R(\BipartiteGraph')$,  $\pi_R (j)\in \Neighbor_{\BipartiteGraph} (\pi_L(i))$ if and only if $j\in \Neighbor_{\BipartiteGraph'} (i)$.
\item For any $j\in R(\BipartiteGraph)\setminus \pi_R(R(\BipartiteGraph'))$, $j\notin \Neighbor_{\BipartiteGraph}(\pi_L(i))\cap \Neighbor_{\BipartiteGraph}(\pi_L(k))$ for any $i,k\in L(\BipartiteGraph')$.
\end{enumerate}
%
%
\end{definition}
Intuively, $\BipartiteGraph$ contains $\BipartiteGraph'$ means that $\BipartiteGraph'$ can be \emph{embedded} in $\BipartiteGraph$ without incurring extra dependency.

By Theorem \ref{reductioneffect} and Theorem \ref{fromgaplesstogapless}, a bigraph $\BipartiteGraph$ is gapful if it contains a gapful one. According to Theorem \ref{cyclesaregapful}, we obtain the following result.
\begin{corollary}\label{cor:Cyclegapful} Any bigraph containing a cyclic one is gapful.
\end{corollary}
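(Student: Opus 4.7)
The plan is to leverage the reduction rules from Theorems \ref{reductioneffect} and \ref{fromgaplesstogapless} together with the gapfulness of cyclic bigraphs established in Theorem \ref{cyclesaregapful}. The key observation is that if we can produce a gapful bigraph from $\BipartiteGraph$ by operations that either preserve gap existence in both directions (Delete-Variable, Duplicate-Event, Duplicate-Variable and their inverses) or propagate gapfulness backward from the result to the original (Delete-Event and the inverse of Delete-Edge, via the contrapositive of Theorem \ref{fromgaplesstogapless}), then $\BipartiteGraph$ itself must be gapful.

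Given a cyclic bigraph $\BipartiteGraph'$ contained in $\BipartiteGraph$ via injections $\pi_L:L(\BipartiteGraph')\to L(\BipartiteGraph)$ and $\pi_R:R(\BipartiteGraph')\to R(\BipartiteGraph)$, I would reduce $\BipartiteGraph$ to a bigraph isomorphic to $\BipartiteGraph'$ in two stages. First, I would repeatedly apply Delete-Event to each vertex of $L(\BipartiteGraph)\setminus\pi_L(L(\BipartiteGraph'))$; by Theorem \ref{fromgaplesstogapless}, if the resulting bigraph is gapful then $\BipartiteGraph$ is gapful. Second, I would apply Delete-Variable to every vertex of $R(\BipartiteGraph)\setminus\pi_R(R(\BipartiteGraph'))$. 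This is where condition~2 in the definition of containment does the work: any $j\in R(\BipartiteGraph)\setminus\pi_R(R(\BipartiteGraph'))$ has at most one neighbor among $\pi_L(L(\BipartiteGraph'))$, so after the events outside the image have been deleted, such a $j$ has degree at most $1$ and is eligible for Delete-Variable. By Theorem \ref{reductioneffect}, each of these deletions preserves gapfulness in both directions.

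After both stages, the remaining bigraph has left part $\pi_L(L(\BipartiteGraph'))$ and right part $\pi_R(R(\BipartiteGraph'))$. Condition~1 of containment guarantees that the incidence between the images is exactly the incidence of $\BipartiteGraph'$ under the injections, so the remaining bigraph is isomorphic to $\BipartiteGraph'$. Since $\BipartiteGraph'$ is cyclic, Theorem \ref{cyclesaregapful} tells us it is gapful. Tracing the reductions backward then yields that $\BipartiteGraph$ is gapful.

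I do not anticipate a serious obstacle: the whole argument is bookkeeping on top of Theorems \ref{reductioneffect}, \ref{fromgaplesstogapless}, and \ref{cyclesaregapful}. The only subtle point is to verify that condition~2 in the definition of containment precisely matches what Delete-Variable requires after the extra events have been removed, which is exactly what makes the second stage go through cleanly.
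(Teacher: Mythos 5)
Your proposal is correct and follows exactly the route the paper intends: the paper's own justification is the one-line remark that, by Theorems \ref{reductioneffect} and \ref{fromgaplesstogapless}, a bigraph containing a gapful one is gapful, combined with Theorem \ref{cyclesaregapful}. Your write-up simply makes explicit the same reduction (Delete-Event on the events outside the image, then Delete-Variable on the variables outside the image, with condition~2 of containment guaranteeing the degree condition), so there is nothing to add.
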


%
%
%
%

Based on Theorem \ref{dependencygaplessdecision} and Corollary
\ref{cor:Cyclegapful}, it is natural to have the following conjecture:

\begin{conjecture}[Gap conjecture]
A bigraph is gapful if and only if it contains a cyclic bigraph. 
\end{conjecture}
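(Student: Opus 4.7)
The forward direction (containment of a cyclic sub-bigraph implies gapfulness) is immediate from Corollary \ref{cor:Cyclegapful}, so the plan targets the converse: any bigraph $\BipartiteGraph$ that contains no cyclic sub-bigraph is gapless. By Theorem \ref{Conj:GapGeom} it suffices to exhibit, for every direction $\vec{p}$, an exclusive cylinder set realising the boundary vector $\lambda\vec{p}\in\partial(\BipartiteGraph)$, so the whole argument will be constructive.

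I would proceed by structural induction on $|L(\BipartiteGraph)|+|R(\BipartiteGraph)|$, combined with the reduction toolkit (Theorems \ref{reductioneffect}, \ref{fromgaplesstogapless}, \ref{fromgapfultogapful}). First, normalise $\BipartiteGraph$ using the inverses of Duplicate-Event and Duplicate-Variable (to collapse left-vertices with identical neighborhoods and right-vertices whose neighborhoods nest) followed by Delete-Variable (to strip any right-vertex of degree at most $1$). These operations preserve both gap-status and the non-containment of cyclic sub-bigraphs; in the resulting normal form every variable has degree at least $2$ and no two variables have comparable neighborhoods.

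The base case $G_\BipartiteGraph$ is a tree follows from Theorem \ref{dependencygaplessdecision}. For the inductive step, suppose $G_\BipartiteGraph$ contains an induced cycle $v_1,\ldots,v_k$. The hypothesis that $\BipartiteGraph$ does \emph{not} contain a $k$-cyclic sub-bigraph, combined with the normalisation above, forces either (a) some consecutive pair $v_i,v_{i+1}$ to share a variable that is additionally adjacent to a third event in the cycle, thereby producing a chord and in particular a triangle, or (b) the case $k=3$ with a variable adjacent to all three events. In either situation the induced cycle is ``witnessed'' by a single variable touching at least three of its events. I would then construct the required exclusive cylinder set by partitioning the unit interval of that shared variable into disjoint pieces, one per event touching it, and extending the construction along the remaining tree-like part of $\BipartiteGraph$ via the recipe from the proof of Theorem \ref{dependencygaplessdecision}; Delete-Event together with the induction hypothesis supplies exclusive realisations on the sub-bigraphs hanging off the triangle.

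The decisive obstacle is the gluing step when several such triangles overlap on a common edge or vertex and also share auxiliary variables. The local three-way partitions of shared variables must then be made simultaneously consistent across all triangles while still matching the prescribed boundary probabilities, and a purely combinatorial construction can fail. This is exactly the ``$3$-clique-only'' regime that the paper flags as only partially resolved, and I expect it will eventually need the LP characterisation of Theorem \ref{solutiontoboundary}: one would argue that the discrete program defining the boundary admits an optimum whose indicator variables $C_{i,k_1,\ldots,k_m}$ encode an exclusive cylinder set, reducing the problem to a non-trivial min-cost-style combinatorial argument on the intersection pattern of the overlapping triangles.
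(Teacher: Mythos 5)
The statement you are trying to prove is a \emph{conjecture} that the paper itself goes on to refute. Your forward direction (containment of a cyclic bigraph implies gapful) is correct and is exactly the paper's Corollary \ref{cor:Cyclegapful}. But the converse, which is the bulk of your plan, is false: Theorem \ref{gapwithoutcycle} exhibits the bigraph $\BipartiteGraph^*=([5],[5],E)$ with $E= (\{1\}\times\{1,4,5\})\cup (\{2\}\times\{2,4,5\})\cup (\{3\}\times\{3,4,5\})\cup (\{4\}\times\{1,2,3,4\})\cup (\{5\}\times\{1,2,3,5\})$, which contains no cyclic bigraph (its base graph is complete, so all induced cycles are triangles, and no sub-bigraph realizes a cyclic one) yet is gapful. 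The proof there fixes the boundary direction $\vec{p}=(\tfrac{2}{9}-\tfrac{2}{3}\epsilon,\tfrac{2}{9}-\tfrac{2}{3}\epsilon,\tfrac{2}{9}-\tfrac{2}{3}\epsilon,\tfrac{1}{6}+\epsilon,\tfrac{1}{6}+\epsilon)$ and shows by a case analysis on the discretized types of the cylinders that no exclusive cylinder set with this measure vector can exist; Theorem \ref{Conj:GapGeom} then forces a gap. So no construction of the kind you outline can succeed for all cyclic-free bigraphs.

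To your credit, you located the failure point precisely: you flag the ``gluing step when several triangles overlap on a common edge or vertex and also share auxiliary variables'' as the decisive obstacle and admit that ``a purely combinatorial construction can fail.'' That is not a technical nuisance to be repaired by the program of Theorem \ref{solutiontoboundary} — it is where the conjecture genuinely breaks. In $\BipartiteGraph^*$ the variables $4$ and $5$ are each shared by three of the five events, and the resulting three-way partitions of $\Interval^{\{4\}}$ and $\Interval^{\{5\}}$ cannot be made simultaneously consistent with the required probabilities (the paper derives incompatible inequalities $\gamma_i\lambda_i\geq \tfrac{2}{9}-\tfrac{2}{3}\epsilon$, $\sum_i\gamma_i\leq 1$, $\prod_i(1-\lambda_i)\geq \tfrac{1}{6}+\epsilon$). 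The correct takeaway is that the conjecture's necessity direction must be weakened; the paper leaves open only the regime where the base graph has $3$-cliques but no induced cycle of length at least $4$, and even there gives both gapless families (Theorems \ref{43isgapless} and \ref{nn-2isgapless}) and the gapful counterexample $\BipartiteGraph^*$.
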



We have already known that the sufficiency does hold. 
As to the necessity, assume that the bigraph $\BipartiteGraph$ does not contain any cyclic one. We analyze case by case.

Case 1: the base graph is a tree. By Theorem \ref{dependencygaplessdecision}, $H$ is gapless, as desired.

Case 2: the base graph has cycles. Since $\BipartiteGraph$ does not contain a cyclic bigraph, its base graph does not have induced cycles longer than three. As a result, solving the conjecture is equivalent to answering the following question \textbf{$Q$: Is a bigraph gapless if it does not contain any cyclic one but its base graph has 3-cliques?}

First have look at a simple example of bigraph $\BipartiteGraph=([3],[1],E)$ with $ E=[3]\times [1]$. It satisfies the condition of question $Q$. One can easily check that $\partial(\BipartiteGraph)=\{(p_1,p_2,p_3): p_1+p_2+p_3=1\}=\partial_a(\DependencyGraph_\BipartiteGraph)$. So, $\BipartiteGraph$ is gapless.

For more evidence, recall $\BipartiteGraph_{n,m}$, the $(n,m)$-combinatorial bigraph.
As a special case, $\BipartiteGraph_{3,2}$ is the canonical $3$-cyclic bigraph $\BipartiteGraph_3$. Generally, we have the following observations:

First, $m=1 $: Only sets of independent events can conform with $\BipartiteGraph_{n,m}$.

Second, $2\leq m\leq \frac{2}{3}n$: $\BipartiteGraph_{n,m}$ contains $3$-cyclic bigraphs, so it is gapful.

Third, $m> \frac{2}{3}n$: $\BipartiteGraph_{n,m}$ does not contain cyclic bigraphs, but the base graph have 3-cliques since it is a complete graph. 
We mainly consider bigraphs in this category.
%
%

\begin{theorem}\label{43isgapless}
$\BipartiteGraph_{4,3}$ is gapless.
\end{theorem}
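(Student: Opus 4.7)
My plan is to apply Theorem \ref{Conj:GapGeom}: gaplessness of $\BipartiteGraph_{4,3}$ follows if I can exhibit, for every boundary vector $\vec{p}\in\partial(\BipartiteGraph_{4,3})$, an exclusive cylinder set $\EventSet\sim\BipartiteGraph_{4,3}$ with $\mu(\EventSet)=\vec{p}$. The base graph of $\BipartiteGraph_{4,3}$ is $K_4$, so Shearer's criterion yields $\mathcal{I}_a(\BipartiteGraph_{4,3})=\{\vec{p}:\sum p_i<1\}$ and $\partial_a(\BipartiteGraph_{4,3})=\{\vec{p}:\sum p_i=1\}$. Since $\mathcal{I}_a\subseteq\mathcal{I}$ is automatic, the task reduces to pinning the variable boundary exactly onto $\{\sum p_i=1\}$ and exhibiting an exclusive cylinder partition there.

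By Theorem \ref{thm:boundaryfills}, any $\vec{p}\in\partial(\BipartiteGraph_{4,3})$ admits a cylinder cover of $\Interval^4$, so the union bound gives $\sum p_i\ge 1$. Suppose I can construct, for every $\vec{q}\in\Delta^3$ (i.e.\ $\sum q_i=1$, $q_i\ge 0$), an exclusive cylinder partition of $\Interval^4$ with $\mu(\Event_i)=q_i$ and $\Event_i$ independent of $X_i$. Then for any $\vec{p}$ with $\sum p_i>1$, rescale to $\vec{q}=\vec{p}/\sum p_j\in\Delta^3$, build the exclusive partition, and enlarge each $\Event_i$ from measure $q_i$ to $p_i$ by appending an $X_i$-invariant subset of its complement (possible since $q_i\le p_i\le 1$); this yields a cover of $\Interval^4$ with measures $\vec{p}$, showing $\vec{p}\in\mathcal{E}(\BipartiteGraph_{4,3})$. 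Hence $\partial(\BipartiteGraph_{4,3})=\{\sum p_i=1\}$. At such a boundary vector, the Theorem \ref{thm:boundaryfills} cover satisfies $\sum\mu(\Event_i)=1=\mu(\cup\Event_i)$, which by measure additivity forces pairwise disjointness, i.e.\ exclusiveness; Theorem \ref{Conj:GapGeom} then gives gaplessness.

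The crux is therefore the exclusive partition for an arbitrary $\vec{q}\in\Delta^3$. My approach is to bisect each axis $\Interval^{\{j\}}$ into intervals of lengths $\alpha_j$ and $1-\alpha_j$, dividing $\Interval^4$ into $16$ rectangular cells, and to select a perfect matching of the $4$-hypercube $Q_4$ that uses exactly two edges in each coordinate direction; the two cells of each direction-$i$ edge are assigned to $\Event_i$. This automatically yields four pairwise disjoint cylinders that cover $\Interval^4$, with $\Event_i$ independent of $X_i$. For one natural matching the resulting measures are the multilinear polynomials
\[
\mu(\Event_1)=\alpha_2\alpha_3\alpha_4+(1-\alpha_2)(1-\alpha_3)(1-\alpha_4),
\]
and analogous antisymmetric expressions for $\Event_2,\Event_3,\Event_4$; one verifies identically $\sum_i\mu(\Event_i)=1$.

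The principal obstacle will be showing that the induced measure map $[0,1]^4\to\Delta^3$ is surjective. Substituting $y_j=2\alpha_j-1\in[-1,1]$ turns the four measure equations into a rank-$3$ linear system in the six products $y_iy_j$, so the combinatorial constraint is that a solution in the quadric image $\{(y_iy_j):\vec{y}\in[-1,1]^4\}$ exists. I would exploit the coordinate-face structure (setting some $y_k=0$ reduces to a two-variable problem that is solvable explicitly) together with the $S_4$-symmetry of $\BipartiteGraph_{4,3}$ acting on events and variables: the orbit of the face-constructions, possibly combined with a second matching covering the complementary region, should tile $\Delta^3$. Verifying this covering (by a continuity/degree argument) is the hardest step.
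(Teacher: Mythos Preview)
Your high-level plan coincides with the paper's: show that for every $\vec{q}$ on the simplex $\sum q_i=1$ there is an exclusive cylinder partition of $\Interval^4$ conforming to $\BipartiteGraph_{4,3}$, then invoke Theorem~\ref{Conj:GapGeom}. Your side observations are fine (the ``enlarging'' step is correct but unnecessary---monotonicity of the exterior via enlarging bases already gives $\{\sum p_i\ge 1\}\subseteq\mathcal{E}$; and yes, at a simplex point any cover with $\mu(\EventSet)=\vec{q}$ is automatically exclusive since $\sum\mu(\Event_i)=\mu(\cup\Event_i)$).

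The genuine gap is in the construction itself. Your hypercube-matching scheme is well-posed and the measure formulas you write are correct (e.g.\ $\mu(\Event_1)=\tfrac14(1+y_2y_3+y_2y_4+y_3y_4)$ after the substitution), but you have not proved that the induced map $[-1,1]^4\to\Delta^3$ is onto. Your sketch---linearize in the $y_iy_j$, solve on faces $y_k=0$, patch via $S_4$-symmetry and a degree argument---is plausible, and one can check that faces of $[-1,1]^4$ do land in $\partial\Delta^3$, so a degree computation is available; but none of this is carried out, and it is real work. Until that step is done the proposal is incomplete.

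The paper sidesteps this entirely with a short asymmetric construction. It orders $p_1\ge p_2\ge p_3\ge p_4$, uses only three of the four coordinates (dimension~$3$ is never touched), and writes down explicitly
\[
A_3=\{X_4>\tfrac12,\ X_1\le 2p_3\},\quad A_4=\{X_4\le\tfrac12,\ X_2\le 2p_4\},
\]
together with an explicit $A_1'$ chosen so that $A_3\cup A_4\cup A_1'$ is already independent of $X_4$; then $A_1$ is $A_1'$ padded inside $\{X_1>2p_3,X_2>2p_4\}$ and $A_2$ is the leftover. The entire verification reduces to the single inequality $\mu(A_1')=p_3+p_4-4p_3p_4\le p_1$, which follows in two lines from the ordering and $\sum p_i=1$. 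If you want to complete your argument with minimal effort, abandoning the symmetric matching in favour of this asymmetric three-variable construction is the quickest fix.
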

\LongVersion
\begin{proof}
By \cite[Theorem 1]{shearer1985problem}, $\vec{p}\in\partial_a(\BipartiteGraph_{4,3})$ if and only if $\sum^4_{i=1}p_i=1$. Arbitrarily fix $\vec{p}\in\partial_a(\BipartiteGraph_{4,3})$. Without loss of generality, assume that $p_i\geq p_{i+1}$, for any $1\leq i\leq 3$. Then $ p_4\leq \frac{1}{4}$, $p_3 \leq \frac{1}{3}$, $p_1 \geq \frac{1}{4}$.

We construct four cylinders in the unit 4-cube. Let the four dimensions be $X_1,X_2,X_3,X_4$.
Specifically,  define the cylinders as follows:

$A_3$:  $X_4> \frac{1}{2}, X_1\leq 2p_3$.

$A_4$: $X_4 \leq \frac{1}{2}, X_2 \leq 2p_4$.

$A'_1$: $(X_4 \leq \frac{1}{2}, X_2> 2p_4, X_1\leq 2p_3)$ or $(X_4> \frac{1}{2}, X_2 \leq 2p_4, X_1> 2p_3)$.

%
%

One can see that $\mu(A'_1)=p_3+p_4-4p_3 p_4$. 
Furthermore, if $p_3>\frac{1}{4}$, $p_3+p_4-4p_3 p_4\leq p_3+p_4-p_4=p_3\leq p_1$. When $p_3\leq \frac{1}{4}$, $p_3+p_4-4p_3 p_4= p_3+(1-4p_3)p_4\leq p_3+(1-4p_3) \frac{1}{4}=\frac{1}{4}\leq p_1$. We always have that $\mu(A'_1)\leq p_1$.


Arbitrarily choose a set $S\subset \Interval^{\{1,2\}}$ in the area $X_1>2p_3,X_2> 2p_4$  such that $\mu(S)=p_1 + 4p_3 p_4 - p_3 -p_4$. Let $S'$ be the cylinder with base $S$.

Define $A_1=A'_1\cup S'$ and $A_2=\overline{A_1\cup A_3\cup A_4}$. It is easy to see that $\mu(A_i)=p_i$ for $1\leq i\leq 4$. The bases $B_i$ of $A_i$, $1\leq i\leq 4$, can be chosen such that $\dim(B_1)=\{1,2,4\}$, $\dim(B_2)=\{1,2,3\}$, $\dim(B_3)=\{1,3,4\}$, $\dim(B_4)=\{2,3,4\}$.
\end{proof}
\LongVersionEnd

Theorem \ref{43isgapless}, together with Corollary \ref{cor:densegraphgapless}, immediately implies the following result. 
\begin{corollary}\label{cor:nminus1gapful} For $n \geq 4$, $\BipartiteGraph_{n,n-1}$ is gapless.
\end{corollary}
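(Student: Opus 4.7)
The plan is to combine Theorem~\ref{43isgapless} with Corollary~\ref{cor:densegraphgapless} in the most direct possible way. Theorem~\ref{43isgapless} provides the base case $\BipartiteGraph_{4,3}$ is gapless, and Corollary~\ref{cor:densegraphgapless} is precisely the statement that gaplessness is preserved when both parameters are shifted upward by the same constant $c \geq 1$.

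Concretely, fix $n \geq 4$ and set $c = n - 4 \geq 0$. If $c = 0$ then the claim is just Theorem~\ref{43isgapless}. If $c \geq 1$, then $(4+c, 3+c) = (n, n-1)$, so applying Corollary~\ref{cor:densegraphgapless} to the gapless bigraph $\BipartiteGraph_{4,3}$ yields that $\BipartiteGraph_{4+c,\,3+c} = \BipartiteGraph_{n,n-1}$ is gapless. This completes the proof.

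There is essentially no obstacle: the substantive work has already been done in Theorem~\ref{43isgapless} (the concrete construction of an exclusive cylinder set of the required probability vector in $\Interval^4$) and in Corollary~\ref{cor:densegraphgapless} (the pipeline of Delete-Edge, Delete-Variable, and the inverse of Duplicate-Event, justified by Theorems~\ref{reductioneffect} and~\ref{fromgapfultogapful}). The only thing to verify is the arithmetic identity $4 + c = n$ and $3 + c = n - 1$ when $c = n - 4$, which is immediate.
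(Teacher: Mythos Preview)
Your proposal is correct and matches the paper's own argument exactly: the paper simply states that Theorem~\ref{43isgapless} together with Corollary~\ref{cor:densegraphgapless} immediately implies the result.
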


Actually, Corollary \ref{cor:nminus1gapful} can be generalized to $\BipartiteGraph_{n,n-m}$ for any fixed $m$ and large enough $n$, as shown in Theorem \ref{nn-2isgapless}.

%

\LongVersion
\begin{definition}[Upper combinatorial bigraph]
Given positive integers $m<n$, let $T_{n,m}=\sum_{t\geq m}{n \choose t}$. Then each $k\in [T_{n,m}]$ naturally represents a set in $[n]$ that has size at least $m$. Define bigraph $\BipartiteGraph^\geq_{n,m}=([T_{n,m}],[n], E^\geq_{n,m})$ where $(i,j)\in E^\geq_{n,m}$ if and only if $j$ is in the set represented by $i$. $\BipartiteGraph^\geq_{n,m}$ is called the upper $(n,m)$-combinatorial bigraph.
\end{definition}
\LongVersionEnd

Theorem \ref{nn-2isgapless} is proved by construction. Basically, given a boundary vector $\vec{p}$ of $\BipartiteGraph_{n,n-m}$, we identify a small number of dimensions, partition the unit cube $\mathcal{C}$ spanned by these dimensions into ${n \choose m}$ parts, and use  each part as the base to construct a cylinder in $\Interval^n$. Essentially this means projecting all cylinders to a low-dimensional cube. For this end, we first show that when $n$ is big enough, there are $10$ dimensions such that any cylinder independent of at least one of these dimensions has very small probability. Then Lemma \ref{le:smalleventsdisjoint} ensures that the bases of these cylinders can be chosen as exclusive. Finally, the other cylinders are obtained by partitioning the part of $\mathcal{C}$ that has not yet been covered. Altogether, we get an exclusive set of cylinders whose measure vector is $\vec{p}$. 


\begin{theorem}\label{nn-2isgapless}
For any constant $m$, when $n$ is large enough, $\BipartiteGraph_{n,n-m}$ is gapless.
\end{theorem}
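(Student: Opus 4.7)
The plan is to invoke Theorem~\ref{Conj:GapGeom}: the bigraph $\BipartiteGraph_{n,n-m}$ is gapless iff every boundary vector is realized by some exclusive cylinder set conforming to it. When $n>2m$ any two $(n-m)$-subsets of $[n]$ meet in at least $n-2m\ge 1$ elements, so the base graph is the complete graph $K_{\binom{n}{m}}$; exclusiveness then coincides with pairwise disjointness, and it suffices to show that every $\vec{p}$ with $\sum_i p_i=1$ is the measure vector of a family of pairwise disjoint cylinders conforming to $\BipartiteGraph_{n,n-m}$.

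I would begin by choosing, via a pigeonhole argument, a $10$-subset $S\subseteq[n]$ of variables such that the total probability of the \emph{small} events $\mathcal{S}\triangleq\{i:T_i\not\supseteq S\}$ is at most $10m/(n-10)$. Indeed, for a uniformly random $S$ of size $10$, a union bound gives $\Pr(T_i\not\supseteq S)\le 10m/(n-10)$ for each fixed $i$, so $\sum_i p_i\cdot\Pr(T_i\not\supseteq S)\le 10m/(n-10)$ and some particular $S$ realizes this bound. Denote by $\mathcal{L}\triangleq\{i:T_i\supseteq S\}$ the \emph{large} events; for $n$ large in $m$, both $\sum_{\mathcal{S}}p_i$ and each individual $p_i$ with $i\in\mathcal{S}$ become arbitrarily small.

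The problem then reduces to the ten-dimensional subcube $\Interval^S\cong\Interval^{10}$: setting $\Event_i=R_i\times\Interval^{[n]\setminus S}$ with $R_i\subseteq\Interval^S$, I need a measurable partition of $\Interval^S$ into regions $\{R_i\}$ with $\mu(R_i)=p_i$ such that, for each small $i$, $R_i$ is a cylinder in $\Interval^S$ independent of $X_j$ for every $j\in S\setminus T_i$ (feasible since $T_i\cap S\ne\emptyset$ when $m<10$), while each large $R_i$ may be an arbitrary measurable subset. I would first lay down pairwise disjoint cylinders of the prescribed shapes and measures for the small events inside a sub-region $\mathcal{R}\subseteq\Interval^S$ of measure $\sum_{\mathcal{S}}p_i$, and then partition the residual $\Interval^S\setminus\mathcal{R}$ (of measure $\sum_{\mathcal{L}}p_i$) measurably into pieces of the prescribed sizes for the large events; the latter step imposes no shape constraint and is immediate.

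The main obstacle is the construction of the small-event cylinders inside $\Interval^S$. A direct application of Lemma~\ref{le:smalleventsdisjoint} is too weak: its threshold scales as $|\mathcal{S}|^{-10}$, well below the per-event bound $10m/(n-10)$ available to us. The key is to exploit that only the \emph{total} small probability is tiny, that the induced base graph on $\mathcal{S}$ is also complete, and that there are at most $2^{10}$ distinct ``shape classes'' $T_i\cap S\subseteq S$. Grouping the small events by shape class and processing classes in an order compatible with inclusion of $T_i\cap S$, one can nest each class's disjoint cylinders inside a shrinking sub-cube of $\Interval^S$ whose dimensions accommodate that class's shape constraints and whose measure absorbs that class's contribution. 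Once this tailored small-event placement is achieved, Theorem~\ref{Conj:GapGeom} completes the proof.
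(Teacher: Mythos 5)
Your proposal follows essentially the same route as the paper's proof: pick a constant-size variable subset $S$ by an averaging argument so that the events whose variable sets do not contain $S$ have vanishing total measure, place those as pairwise disjoint cylinders in $\Interval^S$ grouped by their at most $2^{10}$ shape classes $T_i\cap S$, and partition the residual of $\Interval^S$ arbitrarily among the unconstrained events. The one step you leave informal is also where the paper is most concrete: a cylinder over $U\subsetneq S$ cannot literally be nested in a shrinking sub-cube of $\Interval^S$ (it spans the full range of every coordinate in $S\setminus U$), so rather than ``nesting by inclusion'' the paper aggregates each shape class into a single super-event of the \emph{fixed} auxiliary bigraph $\BipartiteGraph^{\geq}_{10,8}$ (for $m=2$), applies Lemma~\ref{le:smalleventsdisjoint} there to obtain a disjointness threshold independent of $n$ (the classes are separated along shared coordinates, which exist because every class retains at least $10-2m$ common coordinates with every other), and then slices each super-cylinder along one of its own coordinates to recover the individual events --- exactly the fix your observation that ``only the total small probability is tiny and there are constantly many shape classes'' is pointing at.
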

\LongVersion
\begin{proof}
We just consider $m=2$, since the method can be easily generalized to other $m$.

Apply Lemma \ref{le:smalleventsdisjoint} to $\BipartiteGraph^{\geq}_{10,8}$,  and we get an $\epsilon>0$. Let $K=\frac{2}{\epsilon}, n=10K, N={n \choose m}$. Arbitrarily fix a vector $\vec{p}\in(0,1)^N$ with $\sum_{i\in[N]}p_i=1$. Let $f$ be an arbitrary bijective function which maps unordered pairs on $[n]$ to $N$.

Arbitrarily partition the set $[n]$ into $K$ disjoint groups with each containing $10$ elements.

Arbitrarily fix a group $T$. For any 9-subset $S$ of $T$ with $\{i\}=T\setminus S$, define $q_S=\sum_{j\notin T}p_{f(i,j)}$. For any 8-subset $S$ of $T$ with $\{i,j\}=T\setminus S$, define $q_S=p_{f(i,j)}$. The vector consisting of all these $q_S$ is denoted by $\vec{q}^T$. The $l_1$ norm of $\vec{q}^T$ is denoted by $v_T$.

We claim that there is a $T$ such that all entries of $\vec{q}^T$ are at most $\epsilon$. If it is not the case, $v_T>\epsilon$ for all $T$, so $\sum_T v_T\geq K\epsilon>2$. However, $\sum_T v_T\leq 2\sum_{1\leq i\leq N}p_i=2$. Hence, the claim is true.

Choose such a $T$. By the choice of $\epsilon$, there is an exclusive cylinder set $\EventSetB$ in the unit cube $\Interval^T$ that conforms with $\BipartiteGraph^{\geq}_{10,8}$ and satisfies $\mu(\EventSetB)=q^T$. For any 8- or 9-subset $S\subset T$, let $\EventB_S$ denote the cylinder in $\EventSetB$ that corresponds to $S$.

For each 8-subset $S=T\setminus \{i,j\}$, rename $B_S$ as $B_{f(i,j)}$.

For each 9-subset $S\subset T$ with $\{i\}=T\setminus S$, divide the cylinder $B_S$ into $10(K-1)$ disjoint cylinders $B_{f(i,j)}$, for each $j\notin T$. These $B_{f(i,j)}$ can be chosen so that they only depend on those $X_k$ with $k\in S$
.

Arbitrarily partition $\Interval^T\setminus (\cup_{B\in\EventSetB}B)$ into ${n-10 \choose 2}$ disjoint sets, denoted by $B_{f(i,j)}$ where $i\neq j$ and $i,j\notin T$.  These $B_{f(i,j)}$ can be chosen such that $\mu(B_{f(i,j)})=p_{f(i,j)}$.

For each of the above $B_{f(i,j)}$, define a cylinder $A_{f(i,j)}=B_{f(i,j)}\times \Interval^{[n]\setminus T}$. Let $\EventSet=\{A_1,...,A_N\}$. It is straightforward to check that $\EventSet$ is exclusive with respect to $\BipartiteGraph_{n,n-m}$, $\mu(\EventSet)=\vec{p}$, and $\Pr(\cup_{A\in \EventSet}A)=1$.

As a result, $\BipartiteGraph_{n,n-2}$ is gapless.
\end{proof}
\LongVersionEnd

In spite of so much confirmative evidence, the general answer to the question $Q$ turns out to be NO! The following bigraph is an example where gap is not caused by containing cyclic bigraphs. Specifically, it is the bigraph $\BipartiteGraph^*=([5],[5],E)$ with $E= (\{1\}\times\{1,4,5\})\cup (\{2\}\times\{2,4,5\})\cup (\{3\}\times\{3,4,5\})\cup (\{4\}\times\{1,2,3,4\})\cup (\{5\}\times\{1,2,3,5\})$.



\begin{theorem}\label{gapwithoutcycle}
$\BipartiteGraph^*$ is gapful.
\end{theorem}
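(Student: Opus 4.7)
My plan is to apply Theorem \ref{Conj:GapGeom}. To show $\BipartiteGraph^*$ is gapful, I would exhibit a direction $\vec{p}$ on the abstract boundary of the base graph $\DependencyGraph_{\BipartiteGraph^*}$, which is the complete graph $K_5$ (so $\sum_i p_i = 1$), and show that there is no exclusive cylinder set $\EventSet \sim \BipartiteGraph^*$ with $\mu(\EventSet) = \vec{p}$. Since $\sum p_i = 1$ forces $\mu(\cup \Event_i) = 1$ for any such exclusive set, this is equivalent to showing that $\mathbb{I}^5$ admits no partition into five pairwise-disjoint cylinders with the dependency pattern of $\BipartiteGraph^*$ and prescribed measures $p_i$.

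Assuming such a partition exists, I would extract structural information by slicing at fixed $(x_1,x_2,x_3)$ and examining the two-dimensional slice in $(X_4,X_5)$. Since $\Event_4$ is independent of $X_5$, its slice is a vertical strip $G_4(x_1,x_2,x_3) \times \Interval$; dually, the slice of $\Event_5$ is a horizontal strip $\Interval \times G_5(x_1,x_2,x_3)$. The condition $\Event_4 \cap \Event_5 = \emptyset$ forces $G_4 \cdot G_5 = 0$ almost everywhere, so $[0,1]^3$ splits (up to measure zero) into regions $D_V$ (only $G_4$ nonempty), $D_H$ (only $G_5$ nonempty), and $D_E$ (both empty). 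Events $\Event_1,\Event_2,\Event_3$ contribute slices $F_i(x_i) \subseteq \Interval^{\{4,5\}}$ each depending on a single coordinate; because any two of $\Event_1,\Event_2,\Event_3$ share only the variables $X_4$ and $X_5$, their slices have pairwise disjoint unions $\tilde F_1, \tilde F_2, \tilde F_3$. Filling the slice then forces the key rigidity that $F_1(x_1) \cup F_2(x_2) \cup F_3(x_3)$ is, almost everywhere, either the complement of a vertical strip, the complement of a horizontal strip, or all of $\Interval^{\{4,5\}}$.

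To reach a contradiction, I would choose a specific $\vec{p}$ with $\sum p_i = 1$—for instance a symmetric vector $(p,p,p,q,q)$ with $q$ large enough that the constructive families available to treelike bigraphs (Theorem \ref{dependencygaplessdecision}) cannot be mimicked here, e.g.\ $\vec{p}=(1/6,1/6,1/6,1/4,1/4)$—and combine the strip rigidity with the disjointness conditions $\Event_i \cap \Event_j = \emptyset$ for $i\in\{1,2,3\}$, $j\in\{4,5\}$. Projecting each such condition onto the pair of shared variables (e.g.\ $(X_1,X_4)$ for $\Event_1$ and $\Event_4$) yields a disjoint-support condition on $\Interval^{\{1,4\}}$ between the marginals of $\Event_1$ and $\Event_4$. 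Combining these projected conditions with the strip rigidity pins down the shape of $\tilde F_1, \tilde F_2, \tilde F_3$. A volume-counting argument then shows that at least two of $\Event_1, \Event_2, \Event_3$ must sit (after suitable restriction) inside a common region of measure strictly less than the sum of their probabilities, the desired contradiction.

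The main obstacle is that $\Event_4$ and $\Event_5$ are extremely flexible: the sets $G_4, G_5$ may depend on all of $x_1,x_2,x_3$ in arbitrary ways, so $(D_V, D_H, D_E)$ can be any measurable partition. One must rule out every such configuration. I expect the two dual projection identities to provide the global levers: $h_1(x_4,x_5) + h_2(x_4,x_5) + h_3(x_4,x_5) + \beta_4(x_4) + \beta_5(x_5) = 1$ on $\Interval^{\{4,5\}}$ (with pairwise disjoint supports of the $h_i$), and $q_1(x_1) + q_2(x_2) + q_3(x_3) + g_4 + g_5 = 1$ on $\Interval^{\{1,2,3\}}$ (with $g_4 g_5 = 0$). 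Carefully playing these two identities against one another, across all cases for $(D_V, D_H, D_E)$, should close the argument.
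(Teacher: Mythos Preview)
Your overall strategy is right and matches the paper: use Theorem~\ref{Conj:GapGeom}, pick a vector $\vec{p}$ on the abstract boundary (the base graph is $K_5$, so $\sum p_i=1$), and show no exclusive cylinder set $\EventSet\sim\BipartiteGraph^*$ with $\mu(\EventSet)=\vec{p}$ exists. Your choice $\vec p=(\tfrac16,\tfrac16,\tfrac16,\tfrac14,\tfrac14)$ is in fact a valid witness (it survives the paper's test as well). The setup you give---$A_4$'s slice is a vertical strip, $A_5$'s a horizontal strip, one of them empty, and the $\tilde F_i$ pairwise disjoint in $\Interval^{\{4,5\}}$---is also correct.

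Where your proposal diverges from the paper, and where it currently has a gap, is the slicing direction and the missing structural lemma. You slice at fixed $(x_1,x_2,x_3)$ and look in $\Interval^{\{4,5\}}$; the paper slices at fixed $(x_4,x_5)$ and looks in $\Interval^{\{1,2,3\}}$. The paper's direction is more fruitful because it yields a pointwise set identity,
\[
\tau^{A_{4,5}}_{i,j}\cup\tau^{A_{4,5}}_{i',j'}=\tau^{A_{4,5}}_{i',j}\cup\tau^{A_{4,5}}_{i,j'},
\]
coming from the fact that $A_4$ is independent of $X_5$ and $A_5$ of $X_4$. Combined with the observation that each such slice in $\Interval^{\{1,2,3\}}$ must be empty, full, or a single-coordinate strip, this identity drives a four-case analysis whose only survivors are: (Case~2) some $p_k\ge 2\rho-4\rho^2$ for $k\in\{1,2,3\}$ where $\rho=p_4=p_5$; or (Case~4) all five events are independent of $X_4$, reducing the problem to four dimensions where three explicit inequalities in $\gamma_i,\lambda_i$ are shown to be jointly infeasible. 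With $\vec p=(\tfrac16,\tfrac16,\tfrac16,\tfrac14,\tfrac14)$ one has $2\rho-4\rho^2=\tfrac14>\tfrac16$, killing Case~2, and in Case~4 the maximum of $\prod(1-\lambda_i)$ under $\gamma_i\lambda_i\ge\tfrac16$, $\sum\gamma_i\le1$ is $\tfrac18<\tfrac14$, killing Case~4.

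Your proposal does not have an analog of this diagonal identity or of the Case~2/Case~4 dichotomy. The two ``dual projection identities'' you write down are correct but are only integrated (first-moment) consequences of the partition; the paper's argument genuinely needs the pointwise set-level identity, and it is not clear your integrated identities constrain $(D_V,D_H,D_E)$ enough, since that can be an arbitrary measurable partition of $[0,1]^3$ subject only to marginal constraints. ``Playing the identities against one another across all cases'' is not yet a plan. If you want to salvage your slicing direction, you would need to extract from it the same dichotomy---either one of $A_1,A_2,A_3$ is forced large, or one of the variables $X_4,X_5$ becomes redundant---and then finish as above.
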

\LongVersion
\begin{proof}
The base graph $G_{\BipartiteGraph^*}$ is complete, so $\partial_a(\BipartiteGraph^*)=\{\vec{p}\in (0,1)^5: p_1+...+p_5=1\}$. Arbitrarily fix $\vec{p}\in\partial_a(\BipartiteGraph^*)$ with $p_4=p_5=\rho$ where $\rho$ is a constant.

Suppose $\EventSet=\{A_1,...,A_5\}$ is a set of cylinders in $\Interval^5$ which is exclusive with respect to $\BipartiteGraph^*$ and satisfies $\mu(\EventSet)=\vec{p}$. 
Let the coordinate variables of $\Interval^5$ be $X_1, X_2,...,X_5$. Since $\EventSet$ is exclusive and $\vec{p}\in\partial_a(\BipartiteGraph^*)$, we know that $\Pr(\cup_{A\in\EventSet}A)=1$ due to Lemma \ref{WorstPlacement}. By Theorem \ref{thm:boundaryfills}, further suppose that $\EventSet$ is $d$-discrete in every dimension, where $d$ is a positive integer. Namely, the unit interval $\Interval^{\{l\}}$, for any $l\in R(\BipartiteGraph^*)$, is partitioned into $d$ disjoint subintervals denoted by $\Delta^{\{l\}}_i,i\in [d]$. For any pair of integers $i,j\in [d]$ and a set $A\subseteq \Interval^{5}$, let $\pi^A_{i,j}$  denote the set $A\cap \left(\Delta^{\{4\}}_i\times \Delta^{\{5\}}_j \times \Interval^{3}\right)$; When $A$ lies in the $\sigma$-algebra of $\EventSet$, there must be a set in $\Interval^3$, denoted by $\tau^A_{i,j}$, such that $\pi^A_{i,j}=\left(\Delta^{\{4\}}_i\times \Delta^{\{5\}}_j \times \tau^A_{i,j}\right)$. A set $B\subseteq \Interval^{3}$ is said to have $e$-type if $\mu(B)=0$, $f$-type if $\mu(B)=1$, or $i$-type if $B=B^{\{i\}}\times \Interval^{[3]\setminus \{i\}}$ for some $B^{\{i\}}\subset \Interval^{\{i\}}$ with $0<\mu(B^{\{i\}})<1$, for $i\in [3]$. Let $T$ be the set of the five types.
For notational simplicity, let $A_{4,5}$ stand for $A_4\cup A_5$.


For any $i,j\in [d]$, we observe the following facts. 

\begin{description}
\item[ Fact 1:]  For any $k\in [3]$, $\tau^{A_k}_{i,j}$ have either $e$-type, $f$-type, or $k$-type.
\item[ Fact 2:] There is at most one $k\in [3]$ such that $\tau^{A_k}_{i,j}$ does not have $e$-type. This follows from the exclusiveness of $\EventSet$ and the property that for any $k \neq k'\in[3]$, $\mu(\tau^{A_{k}}_{i,j}\cap \tau^{A_{k'}}_{i,j})\neq 0$ if neither $\tau^{A_{k}}_{i,j}$ nor $\tau^{A_{k'}}_{i,j}$ have $e$-type.
\item[ Fact 3:]  $\tau^{A_{4,5}}_{i,j}$ must have one of the five types in $T$. It follows from Fact 2, the exclusiveness of $\EventSet$, and the property that $\tau^{A_{4,5}}_{i,j}\cup_{k\in[3]}\tau^{A_{k}}_{i,j}=\Interval^3$.
\item[ Fact 4:]  Given $k\in [3]$, if $\tau^{A_{4,5}}_{i,j}$ has $k$-type, so does $\tau^{A_k}_{i,j}$.
\end{description} 

We now focus on $\tau^{A_{4,5}}_{i,j}$ and proceed case by case.

\begin{description}
\item[ Case 1:] There is an $i_0\in [d]$ such that $\tau^{A_{4,5}}_{i_0,j}$ has $e$-type for any $j\in [d]$. Because $\tau^{A_{4,5}}_{i_0,j}=\tau^{A_4}_{i_0,j}\cup \tau^{A_5}_{i_0,j}$,  $\mu(\tau^{A_5}_{i_0,j})=0$ for any $j\in [d]$. Recalling that $A_5$ is independent of $X_4$, $\mu(\tau^{A_5}_{i,j})=0$ for any $i,j\in [d]$. Hence, $\mu(A_5)=0$, contradictary to the choice of $\vec{p}$. Symmetrically, we also reach a contradiction if there is $j_0\in [d]$ such that $\tau^{A_{4,5}}_{i,j_0}$ has $e$-type for any $i\in [d]$.
\item[ Case 2:] There exist $i_0,i_1, j_0,j_1\in [d]$ such that $\tau^{A_{4,5}}_{i_0,j_0}$ has $e$-type while both $\tau^{A_{4,5}}_{i_0,j_1}$ and $\tau^{A_{4,5}}_{i_1,j_0}$ have other types. Without loss of generality, we assume that $i_0=1$, $j_0=1$, $\tau^{A_{4,5}}_{1,j}$ has $e$-type if and only if $1\leq j< j_1$, and $\tau^{A_{4,5}}_{i,1}$ has $e$-type if and only if $1\leq i< i_1$.

Since $A_4$ is independent of $X_5$ and $A_5$  is independent of $X_4$,  $\tau^{A_4}_{i,j}=\tau^{A_4}_{i,j'}$ and $\tau^{A_5}_{i,j}=\tau^{A_5}_{i',j}$ for any $i,i',j,j'\in [d]$. Hence, for any $i,j\in [d]$, we have $\tau^{A_{4,5}}_{i,j}=\tau^{A_4}_{i,j}\cup \tau^{A_5}_{i,j}=\tau^{A_4}_{i,1}\cup \tau^{A_5}_{1,j}$, and $\mu(\tau^{A_4}_{i,1}\cap \tau^{A_5}_{1,j})=0$ since $A_4$ and  $A_5$ are disjoint. In addition, for any $i\in [d]$, $\tau^{A_{4,5}}_{i,1}=\tau^{A_4}_{i,1}\cup \tau^{A_5}_{i,1}=\tau^{A_4}_{i,1}\cup \tau^{A_5}_{1,1}$, so $\tau^{A_{4,5}}_{i,1}$ and $\tau^{A_4}_{i,1}$ have the same type and $\mu(\tau^{A_{4,5}}_{i,1})=\mu(\tau^{A_4}_{i,1})$. Symmetrically, for any $j\in [d]$, $\tau^{A_{4,5}}_{1,j}$ and $\tau^{A_5}_{1,j}$ have the same type and $\mu(\tau^{A_{4,5}}_{1,j})=\mu(\tau^{A_5}_{1,j})$.

Now consider any $i\geq i_1$ and $j\geq j_1$. Since $\mu(\tau^{A_4}_{i,1})+\mu(\tau^{A_5}_{1,j})=\mu(\tau^{A_{4,5}}_{i,j})\leq 1$, the assumption $\mu(\tau^{A_{4,5}}_{i,1})>0$ and $\mu(\tau^{A_{4,5}}_{1,j})>0$ implies that $\tau^{A_{4,5}}_{i,1}$ and $\tau^{A_{4,5}}_{1,j}$ are neither $e$-type nor $f$-type.
Assume that $\tau^{A_{4,5}}_{i,1}$ is $1$-type and $\tau^{A_{4,5}}_{1,j}$ is $2$-type. Then $\tau^{A_4}_{i,1}$ is $1$-type and $\tau^{A_5}_{1,j}$ is $2$-type, contradictory to the property that $\mu(\tau^{A_4}_{i,1}\cap \tau^{A_5}_{1,j})=0$. 

As a result, without loss of generality, assume that $\tau^{A_4}_{i,1}$ and $\tau^{A_5}_{1,j}$ have $1$-type for any $i\geq i_1$ and $j\geq j_1$. Since $\tau^{A_{4,5}}_{i,j}=\tau^{A_4}_{i,1}\cup \tau^{A_5}_{1,j}$, $\tau^{A_{4,5}}_{i,j}$ have either $1$-type or $f$-type if $i\geq i_1$ or $j\geq j_1$. 
By Fact 2 and Fact 4, both $\tau^{A_2}_{i,j}$ and $\tau^{A_3}_{i,j}$ have $e$-type when $i\geq i_1$ or $j\geq j_1$. Therefore, 
$\mu(A_2)+\mu(A_3)\leq (1-\mu(\Delta^{\{4\}}))(1-\mu(\Delta^{\{5\}}))$, where 
$\Delta^{\{4\}}=\cup_{i_1\leq i\leq d}\Delta^{\{4\}}_i$ and $\Delta^{\{5\}}=\cup_{j_1\leq j\leq d}\Delta^{\{5\}}_j$.  We first prove Claim 1:

\textbf{Claim 1:} $(1-\mu(\Delta^{\{4\}}))(1-\mu(\Delta^{\{5\}}))\leq (1-2\rho)^2$.

Proof of the claim: Let $r_i=\mu(A_4\cap \Delta^{\{4\}}_i\times \Interval^{[5]\setminus\{4\}}), r_{i,j}=\mu(\pi^{A_4}_{i,j}), s_j=\mu(A_5\cap \Delta^{\{5\}}_j\times \Interval^{4}), s_{i,j}=\mu(\pi^{A_5}_{i,j})$ for $i,j\in [d]$. We have $\rho=\sum_{i\geq i_1}r_i=\sum_{i\geq i_1,j\in [d]}r_{i,j}=\sum_{j\geq j_1}s_j=\sum_{i\in [d],j\geq j_1}s_{i,j}$.

Because $A_4$ is independent of $X_5$, it holds that $r_{i,j}=r_i \mu(\Delta^{\{5\}}_j)$, so $\sum_{j_1\leq j\leq d}r_{i,j}=r_i \mu(\Delta^{\{5\}})$ and $\sum_{i_1\leq i\leq d,j_1\leq j\leq d}r_{i,j}=\rho\mu(\Delta^{\{5\}})$. Likewise, 
we have $\sum_{i_1\leq i\leq d,j_1\leq j\leq d}s_{i,j}=\rho\mu(\Delta^{\{4\}})$.

On the other hand, since $A_4$ and $A_5$ are disjoint, $r_{i,j}+s_{i,j}\leq \mu(\Delta^{\{4\}}_i)\mu(\Delta^{\{5\}}_j)$ for any $i_1\leq i\leq d, j_1\leq j\leq d$, which implies that $\sum_{i_1\leq i\leq d,j_1\leq j\leq d}(r_{i,j}+s_{i,j})\leq\mu(\Delta^{\{4\}})\mu(\Delta^{\{5\}})$.

Hence, $\mu(\Delta^{\{4\}})\mu(\Delta^{\{5\}})\geq \rho(\mu(\Delta^{\{4\}})+\mu(\Delta^{\{5\}}))\geq 2\rho \sqrt{\mu(\Delta^{\{4\}})\mu(\Delta^{\{5\}})}$, which in turn means that $\sqrt{\mu(\Delta^{\{4\}})\mu(\Delta^{\{5\}})}\geq 2\rho$. We further have
\begin{eqnarray*}
   (1-\mu(\Delta^{\{4\}}))(1-\mu(\Delta^{\{5\}}))&= &   1-(\mu(\Delta^{\{4\}})+\mu(\Delta^{\{5\}}))+\mu(\Delta^{\{4\}})\mu(\Delta^{\{5\}})\\
   &\leq &1-2\sqrt{\mu(\Delta^{\{4\}})\mu(\Delta^{\{5\}})}+\mu(\Delta^{\{4\}})\mu(\Delta^{\{5\}})\\
   &= &(1-\sqrt{\mu(\Delta^{\{4\}})\mu(\Delta^{\{5\}})})^2\leq (1-2\rho)^2.
\end{eqnarray*}
\textbf{Claim 1 is proven.}

By Claim 1, one has $\mu(A_2)+\mu(A_3)\leq (1-2\rho)^2$.  Since $\EventSet$ is exclusive, it holds that $\mu(A_1)=1-\sum_{2\leq k\leq 5} \mu(A_k) \geq 1-(1-2\rho)^2-2\rho=2\rho-4\rho^2$.
\item[ Case 3:] There is $k\in[3]$ such that $\tau^{A_{4,5}}_{i,j}$ is neither $e$-type nor $k$-type for any $i,j\in[d]$. Suppose $k=1$ satisfies the condition. Then, for any $i,j\in[d]$, among the candidate $e$-type, $f$-type, or $1$-type of $\tau^{A_{1}}_{i,j}$, the only possibility is $e$-type. Hence $\mu(A_1)=0$, which is a contradiction.

\item[ Case 4:] $\tau^{A_{4,5}}_{i,j}$ is not $e$-type for any $i,j\in [d]$, and for each $k\in [3]$, there are $i,j\in [d]$ such that $\tau^{A_{4,5}}_{i,j}$ has $k$-type. 

\textbf{Claim 2:} There are $i_0,j_0,j_1\in[d]$ and $k_1\neq k_2\in [3]$ such that $\tau^{A_{4,5}}_{i_0,j_0}$ has $k_1$-type and $\tau^{A_{4,5}}_{i_0,j_1}$  has $k_2$-type, or there are $i_0,i_1,j_0\in[d]$ and $k_1\neq k_2\in [3]$ such that $\tau^{A_{4,5}}_{i_0,j_0}$ has $k_1$-type and $\tau^{A_{4,5}}_{i_1,j_0}$  has $k_2$-type.

Proof of the claim: Suppose for contradiction that Claim 2 does not hold in Case 4. There must be $i_0,i_1,j_0,j_1\in [d]$ and $k_1\neq k_2\in [3]$ such that $\tau^{A_{4,5}}_{i_0,j_0}$ has $k_1$-type, $\tau^{A_{4,5}}_{i_1,j_1}$  has $k_2$-type, and both $\tau^{A_{4,5}}_{i_1,j_0}$  and $\tau^{A_{4,5}}_{i_0,j_1}$ have $k_3$-type or $f$-type. Without loss of generality, assume that both $\tau^{A_{4,5}}_{i_1,j_0}$  and $\tau^{A_{4,5}}_{i_0,j_1}$ have $f$-type.  and both $\tau^{A_{4,5}}_{i_1,j_0}$  and $\tau^{A_{4,5}}_{i_0,j_1}$ has $f$-type. Note that for any $i,i',j,j'\in[d]$,  
\begin{eqnarray}\label{eq:diagonal}
\begin{array}{rcl}
   \tau^{A_{4,5}}_{i,j}\cup\tau^{A_{4,5}}_{i',j'}&= & \tau^{A_4}_{i,j}\cup \tau^{A_5}_{i,j}\cup \tau^{A_4}_{i',j'}\cup \tau^{A_5}_{i',j'}=\tau^{A_4}_{i,j'}\cup \tau^{A_5}_{i',j}\cup \tau^{A_4}_{i',j}\cup \tau^{A_5}_{i,j'}\\
   &= &\tau^{A_5}_{i',j}\cup \tau^{A_4}_{i',j}\cup\tau^{A_4}_{i,j'}\cup \tau^{A_5}_{i,j'}=\tau^{A_{4,5}}_{i',j}\cup\tau^{A_{4,5}}_{i,j'}.
\end{array}
\end{eqnarray}
Hence $\tau^{A_{4,5}}_{i_0,j_0}\cup\tau^{A_{4,5}}_{i_1,j_1}=\tau^{A_{4,5}}_{i_1,j_0}\cup \tau^{A_{4,5}}_{i_0,j_1}$. Namely, an $f$-type set equals the union of a $k_1$-type set and  a $k_2$-type set, which is impossible. The cases where $\tau^{A_{4,5}}_{i_1,j_0}$ and $\tau^{A_{4,5}}_{i_0,j_1}$ have other types can be proved similarly. 

\textbf{Claim 2 is proven.}

By Claim 2, without loss of generality, assume that $\tau^{A_{4,5}}_{1,1}$ has $1$-type and $\tau^{A_{4,5}}_{1,2}$  has $2$-type. 

\textbf{Claim 3:} For any $i,j\in [d]$, $\tau^{A_{4,5}}_{i,j}$ and $\tau^{A_{4,5}}_{1,j}$ have the same type in $T$. 

Proof of the claim: We first show that for any $i,j\in [d]$, if $\tau^{A_{4,5}}_{1,j}$ does not have $3$-type, $\tau^{A_{4,5}}_{i,j}$ can't have $3$-type. Suppose for contradiction that there are $i,j\in [d]$ such that $\tau^{A_{4,5}}_{i,j}$ has $3$-type while $\tau^{A_{4,5}}_{1,j}$ does not. If $\tau^{A_{4,5}}_{1,j}$ has $1$-type, by formula (\ref{eq:diagonal}), $\tau^{A_{4,5}}_{1,j}\subset \tau^{A_{4,5}}_{1,j}\cup\tau^{A_{4,5}}_{i,2}=\tau^{A_{4,5}}_{1,2}\cup\tau^{A_{4,5}}_{i,j}$, meaning that a $1$-type set is inside the union of a $2$-type set and $3$-type set, which is impossible. Likewise, we also reach a contradiction if $\tau^{A_{4,5}}_{1,j}$ has $2$-type or $f$-type. As a result, under the condition of Case 4, there must be $j\in [d]$ such that $\tau^{A_{4,5}}_{1,j}$ has $3$-type. Without loss of generality, assume that $\tau^{A_{4,5}}_{1,3}$ has $3$-type.

Now for contradiction, suppose that there is $i,j\in [d]$ such that $\tau^{A_{4,5}}_{i,j}$ and $\tau^{A_{4,5}}_{1,j}$ have different types in $T$. If $\tau^{A_{4,5}}_{1,j}$ has $1$-type and $\tau^{A_{4,5}}_{i,j}$ has $2$-type,  by $\tau^{A_{4,5}}_{1,j}\subset \tau^{A_{4,5}}_{1,j}\cup\tau^{A_{4,5}}_{i,3}=\tau^{A_{4,5}}_{1,3}\cup\tau^{A_{4,5}}_{i,j}$, we again reach a contradiction. Likewise, there is a contradiction whenever $\tau^{A_{4,5}}_{i,j}$ and $\tau^{A_{4,5}}_{1,j}$ have different types.

\textbf{Claim 3 is proven.}

We further show that for any $i,j\in [d]$, $\tau^{A_{4,5}}_{i,j}=\tau^{A_{4,5}}_{1,j}$. To see this, again use formula (\ref{eq:diagonal}). Take $j=1$ as an example. For any $i\neq 1$,  we have $\tau^{A_{4,5}}_{1,1}\cup\tau^{A_{4,5}}_{i,2}=\tau^{A_{4,5}}_{1,2}\cup\tau^{A_{4,5}}_{i,1}$. Since $\tau^{A_{4,5}}_{1,1}$ and $\tau^{A_{4,5}}_{i,1}$ have $1$-type while $\tau^{A_{4,5}}_{i,2}$ and $\tau^{A_{4,5}}_{1,2}$ have $2$-type, the equality holds only if both $\tau^{A_{4,5}}_{i,1}=\tau^{A_{4,5}}_{1,1}$ and $\tau^{A_{4,5}}_{i,2}=\tau^{A_{4,5}}_{1,2}$.

As a result, $A_4\cup A_5$ is independent of $X_4$. This, together with the fact that $A_5$ is independent of $X_4$, implies that $A_4$ is independent of $X_4$. 

Furthermore, for each $j\in [d]$ such that $\tau^{A_{4,5}}_{1,j}$ has $1$-type, $\tau^{A_{4,5}}_{i,j}$ also has $1$-type for any $i\in [d]$, so $\tau^{A_3}_{i,j}=\Interval^{3}\setminus \tau^{A_{4,5}}_{i,j}$. Since $\tau^{A_{4,5}}_{i,j}$ is independent of $i$, so is $\tau^{A_3}_{i,j}$. Consequently, $A_3$ is independent of $X_4$. Likewise, both $A_1$ and $A_2$ are also independent of $X_4$.

Altogether, in Case 4, all cylinders are independent of $X_4$.

\end{description}

The case study above indicates that only Case 2 and Case 4 are possible. 

Now consider the probability vector $\vec{p}=(\frac{2}{9}-\frac{2}{3}\epsilon,\frac{2}{9}-\frac{2}{3}\epsilon,\frac{2}{9}-\frac{2}{3}\epsilon,\frac{1}{6}+\epsilon,\frac{1}{6}+\epsilon)\in\partial_a(\BipartiteGraph^*)$, where $\epsilon>0$ is constant to be determined. Arbitrarily choose an exclusive cylinder set $\EventSet=\{A_1,...,A_5\}\sim \BipartiteGraph^*$ such that $\mu(\EventSet)=\vec{p}$. Because $p_1<2p_4-4p^2_4$ when $\epsilon$ is small enough, only Case 4 is possible for $\vec{p}$. Assume that these events are independent of $X_4$. We can choose $B_1\subset \Interval^{\{1,5\}},B_2\subset \Interval^{\{2,5\}},B_3\subset \Interval^{\{3,5\}},B_4\subset \Interval^{\{1,2,3\}},B_5\subset \Interval^{\{1,2,3,5\}}$ as bases of $A_1,...,A_5$, respectively. Choose the minimum sets $\Gamma_1,\Gamma_2,\Gamma_3\subseteq\Interval^{\{5\}}$ and $\Lambda_i\subseteq\Interval^{\{i\}}$ for $i\in [3]$ such that $\mu(B_i\setminus (\Gamma_i\times \Lambda_i))=0$ for $i\in [3]$.  Since the cylinder set $\EventSet$ is exclusive, $\mu(\Gamma_i\cap \Gamma_j)=0$ for any $i,j\in[3]$. Let $\gamma_i=\mu(\Gamma_i), \lambda_i=\mu(\Lambda_i)$ for $i\in\{1,2,3\}$. The inequalities must hold simultaneously:
\begin{itemize}
\item $\gamma_i \lambda_i\geq \frac{2}{9}-\frac{2}{3}\epsilon, i\in\{1,2,3\} $
\item $\gamma_1+\gamma_2+\gamma_3\leq 1$
\item $(1-\lambda_1)(1-\lambda_2)(1-\lambda_3)\geq \frac{1}{6}+\epsilon$
\end{itemize}
where the last inequality is because $B_4$ must lies inside of  $(\Interval^{\{1\}}\setminus \Lambda_1)\times (\Interval^{\{2\}}\setminus \Lambda_2)\times(\Interval^{\{3\}}\setminus \Lambda_3)$.
However, these inequalities can't hold simultaneously when $\epsilon$ is small enough.

As a result, there is no cylinder set $\EventSet$ which is exclusive with respect to  $\BipartiteGraph^*$ and $\mu(\EventSet)=\vec{p}$. Since $\vec{p}\in\partial_a(\BipartiteGraph^*)\subset \partial(\BipartiteGraph^*)\cup\Interior(\BipartiteGraph^*)$, $\BipartiteGraph^*$ is gapful due to Theorem \ref{Conj:GapGeom}.
\end{proof}
\LongVersionEnd

By Delete-Event and the inverse operation of Duplicate-Variable, it is not difficult to reduce $H_{7,5}$ to $H^*$. Because $H^*$ is gapful, $H_{7,5}$ is also gapful. From Corollary \ref{cor:Sparsifiedgraphgapful}, we have the following corollary.
\begin{corollary}\label{cor:7c5cgapful} For any integer $c \geq 1$, every sparsification of $H_{7c,5c}$ is gapful.
\end{corollary}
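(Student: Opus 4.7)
The plan is to reduce this to Corollary~\ref{cor:Sparsifiedgraphgapful} applied with $(n,m)=(7,5)$, so the entire task is to verify that $H_{7,5}$ is gapful. The paragraph preceding the corollary already asserts this reduction to $H^*$; I will spell out the operations.

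First I would single out five events in $H_{7,5}$, namely the $5$-subsets
$$e_1=\{1,4,5,6,7\},\ e_2=\{2,4,5,6,7\},\ e_3=\{3,4,5,6,7\},\ e_4=\{1,2,3,4,6\},\ e_5=\{1,2,3,5,7\},$$
and apply Delete-Event sixteen times to discard all other events of $H_{7,5}$, leaving a bigraph $H'$ with left part $\{e_1,\ldots,e_5\}$ and right part $[7]$. A direct check of neighborhoods inside $H'$ gives $\Neighbor(6)=\{e_1,e_2,e_3,e_4\}=\Neighbor(4)$ and $\Neighbor(7)=\{e_1,e_2,e_3,e_5\}=\Neighbor(5)$, so vertices $6$ and $7$ are redundant in the sense required by the inverse of Duplicate-Variable. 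Applying that inverse operation twice, deleting vertex $6$ and then vertex $7$, collapses $H'$ to a bigraph whose event neighborhoods are $\{1,4,5\}, \{2,4,5\}, \{3,4,5\}, \{1,2,3,4\}, \{1,2,3,5\}$, i.e.\ precisely $H^*$.

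Next I would invoke the preservation lemmas: Theorem~\ref{fromgaplesstogapless} says that Delete-Event preserves gaplessness, and Theorem~\ref{reductioneffect} says that the inverse of Duplicate-Variable preserves both gapful and gapless. Consequently, if $H_{7,5}$ were gapless then $H^*$ would be gapless as well, contradicting Theorem~\ref{gapwithoutcycle}. Hence $H_{7,5}$ is gapful. A single application of Corollary~\ref{cor:Sparsifiedgraphgapful} with $(n,m)=(7,5)$ then yields that every sparsification of $H_{7c,5c}$ is gapful for each integer $c\geq 1$, which is exactly the statement to be proved.

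The only place that requires any care is the combinatorial choice of the five events: they must be picked so that, after restriction, two of the seven variables have their neighborhoods contained in the neighborhoods of two remaining variables, and the quotient bigraph lands on $H^*$. Once this explicit choice is exhibited, every other step is a direct appeal to a previously established preservation lemma, so I expect no genuine obstacle.
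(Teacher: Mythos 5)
Your proposal is correct and follows exactly the route the paper sketches in the paragraph preceding the corollary (Delete-Event to isolate five suitable $5$-subsets, inverse Duplicate-Variable to collapse the two redundant variables onto $H^*$, then Corollary~\ref{cor:Sparsifiedgraphgapful}); your explicit choice of $e_1,\dots,e_5$ and the neighborhood checks $\Neighbor(6)=\Neighbor(4)$, $\Neighbor(7)=\Neighbor(5)$ are verified and land precisely on $H^*$. The only difference is that you supply the combinatorial details the paper leaves as "not difficult," which is a welcome elaboration rather than a deviation.
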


In summary, we get some instances of gapful/gapless bigraphs, listed in Table \ref{tab:consecutive}.

\begin{table}
\centering
\caption{Instances of gapful/gapless bigraphs}\label{tab:consecutive}
\begin{tabular}{ | p{4cm}<{\centering} | p{3.5cm}<{\centering} |}
  \hline
   Gapful & Gapless    \\
  \hline
   $H^*$ &  $H_{n,n-c}$ for large $n$   \\
   Sparsifications of $H_{7c,5c}$ & $H_{n,n-1}$ for $n\ge 4$    \\
  cyclic bigraphs & treelike bigraphs \\
  \hline
\end{tabular}
\end{table}

\LongVersion

\subsection{Characterizing a-gapful and strongly a-gapful graphs}
Another interesting perspective of gaps is dependency-graph-oriented: we say that a graph $\DependencyGraph$ is a-gapful if there is a gapful bigraph whose base graph is $\DependencyGraph$, otherwise it's called a-gapless. Kolipaka et al. \cite{kolipaka2011moser} considered another closely-related concept: a graph $\DependencyGraph$ is strongly a-gapful if any bigraph with $\DependencyGraph$ as base graph is gapful, otherwise it's called strongly a-gapless.

One can easily observe that a bigraph $\BipartiteGraph$ is gapless if $\DependencyGraph_\BipartiteGraph$ is a-gapless, while it is gapful if $\DependencyGraph_\BipartiteGraph$ is strongly a-gapful. With the above mentioned results, we can completely characterize a-gapless or strongly a-gapful bigraphs, solving the 6-year open problem proposed by Kolipaka et al. \cite{kolipaka2011moser}. 
\treegraph*

\begin{proof}
It immediately follows from Theorems \ref{dependencygaplessdecision} and \ref{cyclesaregapful}.
\end{proof}

For strong a-gapfulness, we need the following definition, where $Cliq(\DependencyGraph)$ is the set of maximal cliques of the graph $\DependencyGraph$.

\begin{definition}
Given a graph $\DependencyGraph=([n],E)$ with $Cliq(\DependencyGraph)=\{C_1,...,C_m\}$, its canonical bigraph, denoted by $\BipartiteGraph_\DependencyGraph$, is the bigraph $([n],[m],E')$ where $E'=\{(i,j)\in [n]\times[m]: i\in C_j\}$. 
\end{definition}

Intuitively, $\BipartiteGraph_\DependencyGraph$ models the variable generated event system where each maximal clique has a distinct variable and an event depends on a variable if it is in the corresponding maximal clique.

We claim that among the bigraphs whose base graph is $\DependencyGraph$, $\BipartiteGraph_\DependencyGraph$ has the minimum interior. This means that $\DependencyGraph$ is strongly a-gapful if and only if $\BipartiteGraph_\DependencyGraph$ is gapful.

\begin{lemma}\label{canonicalbigraphworst}
Given a graph $\DependencyGraph$, for any bigraph $\BipartiteGraph$ with $\DependencyGraph_\BipartiteGraph=\DependencyGraph$, we have $\Interior(\BipartiteGraph)\supseteq \Interior(\BipartiteGraph_\DependencyGraph)$.
\end{lemma}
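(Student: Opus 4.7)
The lemma is equivalent to the containment $\mathcal{E}(\BipartiteGraph)\subseteq\mathcal{E}(\BipartiteGraph_\DependencyGraph)$, so my plan is to convert any witness of $\vec{p}\in\mathcal{E}(\BipartiteGraph)$ into a witness of $\vec{p}\in\mathcal{E}(\BipartiteGraph_\DependencyGraph)$. Fix $\EventSet=\{\Event_1,\ldots,\Event_n\}\sim\BipartiteGraph$ with $\mu(\EventSet)=\vec{p}$ and $\mu(\bigcup_i\Event_i)=1$. The key structural fact is that for every variable $j\in R(\BipartiteGraph)$, the neighborhood $\Neighbor_\BipartiteGraph(j)$ is a clique in $\DependencyGraph_\BipartiteGraph=\DependencyGraph$, so it is contained in some maximal clique $C(j)\in Cliq(\DependencyGraph)$. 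Choosing one such $C(j)$ for every $j$ partitions $R(\BipartiteGraph)=\bigsqcup_{C\in Cliq(\DependencyGraph)} R_C$, where $R_C\triangleq\{j\in R(\BipartiteGraph):C(j)=C\}$.

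Next, I would invoke the standard fact that for every nonempty finite index set $S$ there is a measure-preserving bijection $\phi_S:\Interval^S\to\Interval$ up to null sets (e.g.\ via interleaving binary expansions of the coordinates). Applying $\phi_{R_C}$ to the $R_C$-block of coordinates for every $C$ with $R_C\neq\emptyset$, and attaching a free dummy coordinate for every $C$ with $R_C=\emptyset$, yields a measure-preserving map $\Phi:\Interval^{R(\BipartiteGraph)}\to\Interval^{Cliq(\DependencyGraph)}$ (after an innocuous enlargement of the source by unused coordinates). Define $\Event'_i\triangleq\Phi(\Event_i)$. Measure-preservation immediately gives $\mu(\Event'_i)=p_i$ and $\mu(\bigcup_i\Event'_i)=\mu(\bigcup_i\Event_i)=1$, so the only remaining task is to check $\{\Event'_i\}\sim\BipartiteGraph_\DependencyGraph$.

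Writing $\Event_i=\EventB_i\times\Interval^{R(\BipartiteGraph)\setminus\Neighbor_\BipartiteGraph(i)}$, membership of $\vec{x}$ in $\Event_i$ is determined solely by $(x_j)_{j\in\Neighbor_\BipartiteGraph(i)}$. Under $\Phi^{-1}$ each $x_j$ is a function of $y_{C(j)}$ alone, so membership of $\vec{y}\in\Event'_i$ is determined by $(y_{C(j)})_{j\in\Neighbor_\BipartiteGraph(i)}$. For every such $j$ we have $i\in\Neighbor_\BipartiteGraph(j)\subseteq C(j)$, so every clique $C(j)$ that appears is a maximal clique containing $i$. Hence $\Event'_i$ depends only on $\{Y_C:C\ni i\}=\Neighbor_{\BipartiteGraph_\DependencyGraph}(i)$, which is exactly conformance with $\BipartiteGraph_\DependencyGraph$.

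The main obstacle is the analytic handling of the bijection $\phi_S$ together with the verification that $\Phi(\Event_i)$ is a measurable cylinder rather than merely a measurable set. Since a measurable subset of $\Interval^{Cliq(\DependencyGraph)}$ that depends only on a specified subset of coordinates is a cylinder by definition, the cylinder property follows from the dependency analysis above; existence of $\phi_S$ is a standard Borel-isomorphism fact. A clean alternative that avoids these subtleties is to first discretize $\EventSet$ along the lines of Theorem \ref{thm:boundaryfills}, whereupon $\Phi$ becomes an explicit bijection between finite products of rectangles and every verification reduces to bookkeeping over a finite grid.
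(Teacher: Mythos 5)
Your proof is correct, and it reaches the conclusion by a genuinely different mechanism than the paper's. The paper splits the argument into two steps driven by its reduction rules: it first adjoins to $\BipartiteGraph$ one fresh variable per maximal clique, extending each witness cylinder by $\times\,\Interval$ in the new dimensions (this direction is free), and then deletes every original variable by the inverse of Duplicate-Variable, whose interior-preservation was established in Theorem \ref{reductioneffect} by a discretize-then-recombine argument that merges two dimensions at a time through a common partition into rectangles. You instead perform all the merging in one shot, with a measure-preserving bijection $\Interval^{R_C}\to\Interval$ for each maximal clique $C$; the dependency bookkeeping ($\Neighbor_\BipartiteGraph(j)$ is a clique of the base graph, hence contained in some maximal clique $C(j)$ which contains every $i\in\Neighbor_\BipartiteGraph(j)$) is the same in both arguments. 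What your route buys is self-containedness: you need neither the reduction rules nor Theorem \ref{thm:boundaryfills}. The price is the standard but slightly fussy facts that $\phi_S$ exists only modulo null sets and that a measurable set depending (a.e.) only on a subset of coordinates must be replaced, up to a null set, by an honest product cylinder before one can speak of a base; since every quantity in play is a Lebesgue measure, these null-set adjustments are harmless, and your closing remark that one may discretize first and turn $\Phi$ into an explicit bijection of finite grids essentially recovers the paper's mechanism. One cosmetic point: conformance with $\BipartiteGraph_\DependencyGraph$ only requires each $\Event'_i$ to depend on a \emph{subset} of $\Neighbor_{\BipartiteGraph_\DependencyGraph}(i)$ (one takes a fatter base), so it is immaterial that $\{C(j):j\in\Neighbor_\BipartiteGraph(i)\}$ may be a proper subset of the maximal cliques containing $i$ — your argument implicitly and correctly relies on this.
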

\begin{proof} We prove the lemma in two steps.

Step 1: For any bigraph $\BipartiteGraph=([n],[m],E)$ with $S$ being a clique in $\DependencyGraph_\BipartiteGraph$, define the bigraph $\BipartiteGraph'=([n],[m+1],E')$ such that for any $j\in R(\BipartiteGraph')$, $\Neighbor_{\BipartiteGraph'}(j)=S$ if $j= m+1$, otherwise $\Neighbor_{\BipartiteGraph'}(j)=\Neighbor_{\BipartiteGraph}(j)$. Arbitrarily fix $\vec{p}\in \mathcal{E}(\BipartiteGraph)$. There is a set $\EventSet$ of cylinders in $\Interval^m$ such that $\EventSet\sim \BipartiteGraph$, $\mu(\EventSet)=\vec{p}$, and $\mu(\cup_{\Event\in\EventSet}\Event)=1$. Let $\EventSet'=\{\Event\times \Interval^{\{m+1\}}: \Event\in \EventSet\}$. Then $\EventSet'$ is a set of cylinders in $\Interval^{m+1}$, $\EventSet'\sim \BipartiteGraph'$, $\mu(\EventSet')=\vec{p}$, and $\mu(\cup_{\Event\in\EventSet'}\Event)=1$. Hence, $\mathcal{E}(\BipartiteGraph')\subseteq \mathcal{E}(\BipartiteGraph)$.

As a result, given $\BipartiteGraph$ with $\DependencyGraph_\BipartiteGraph=\DependencyGraph$, for each of the maximal clique in $\DependencyGraph_\BipartiteGraph$, modify $\BipartiteGraph$ as in Step 1.  Let $\overline{\BipartiteGraph}$ be the resulting bigraph. We have $\Interior(\BipartiteGraph)\supseteq \Interior(\overline{\BipartiteGraph})$.

Step 2. $\overline{\BipartiteGraph}$ is the same as $\BipartiteGraph_\DependencyGraph$ except that there might be $j\neq k\in R(\overline{\BipartiteGraph})$ such that $\Neighbor_{\overline{\BipartiteGraph}}(j)\subseteq \Neighbor_{\overline{\BipartiteGraph}}(k)$. Apply the inverse operation of Variable-Duplicate to $\overline{\BipartiteGraph}$ as many times as possible, and the final bigraph is exactly $\BipartiteGraph_\DependencyGraph$. Recall that in proving that Variable-Duplicate preserves gapful (see Theorem \ref{reductioneffect}), we actually prove that Variable-Duplicate preserves boundary, hence also preserving interior. This meaning that $\Interior(\overline{\BipartiteGraph})=\Interior(\BipartiteGraph_\DependencyGraph)$.

Altogether, we know that $\Interior(\BipartiteGraph)\supseteq \Interior(\BipartiteGraph_\DependencyGraph)$.
\end{proof}

A graph is called chordal, if it has no induced cycle of length greater than three. A well known property of chordal graphs is that it has a vertex which lies in exactly one maximal clique. Now we have the following result for chordal graphs.

\begin{lemma}\label{chordalgapless}
Any chordal graph is strongly a-gapless.
\end{lemma}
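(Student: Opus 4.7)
The plan is to combine Lemma~\ref{canonicalbigraphworst} with Theorem~\ref{Conj:GapGeom}: by the former, it suffices to show that the canonical bigraph $\BipartiteGraph_\DependencyGraph$ of any chordal $\DependencyGraph$ is gapless; by the latter (applied in each direction $\vec{p}$), this reduces to exhibiting, for every $\vec{p}\in\partial_a(\DependencyGraph)$, an exclusive cylinder set $\EventSet\sim\BipartiteGraph_\DependencyGraph$ with $\mu(\EventSet)=\vec{p}$ and $\mu(\cup_{\Event\in\EventSet}\Event)=1$. Such a set forces $\vec{p}\in\mathcal{E}(\BipartiteGraph_\DependencyGraph)$, while $(1-\epsilon)\vec{p}\in\Interior_a(\DependencyGraph)\subseteq\Interior(\BipartiteGraph_\DependencyGraph)$ for every $\epsilon\in(0,1)$ places $\vec{p}$ on $\partial(\BipartiteGraph_\DependencyGraph)$; so $\partial_a(\DependencyGraph)\subseteq\partial(\BipartiteGraph_\DependencyGraph)$ direction-wise and hence $\partial_a=\partial$.

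I would construct the desired cylinder set by induction on $|V(\DependencyGraph)|$, the single-vertex base case being trivial. For the inductive step, use the cited property of chordal graphs to pick a simplicial vertex $v$ in its unique maximal clique $C=\{v\}\cup N(v)$, and set $\DependencyGraph'=\DependencyGraph-v$, which is again chordal. For $\vec{p}\in\partial_a(\DependencyGraph)$, define $\vec{p}^{\,*}$ on $V(\DependencyGraph')$ by $p^{*}_u=p_u/(1-p_v)$ for $u\in N(v)$ and $p^{*}_u=p_u$ otherwise. Since $v$ is simplicial, $N(v)$ is a clique of $\DependencyGraph'$, and a direct expansion of Shearer's polynomial along the simplicial split and along the clique $N(v)$ yields the key identity
\[
Q_\DependencyGraph(\vec{p}) \;=\; (1-p_v)\,Q_{\DependencyGraph'}(\vec{p}^{\,*}),
\]
together with analogous identities for the sub-polynomials $Q_S$. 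For chordal graphs each $Q_S$ factors as $\prod_{i\in S}p_i\cdot Q_\emptyset^{\DependencyGraph-N[S]}$ with $\DependencyGraph-N[S]$ again chordal, so only $Q_\emptyset$ can vanish on $\partial_a$; combined with the identity, this places $\vec{p}^{\,*}$ exactly on $\partial_a(\DependencyGraph')$. The induction hypothesis then supplies an exclusive cylinder set $\EventSet'=\{A''_u:u\in V(\DependencyGraph')\}$ in $\BipartiteGraph_{\DependencyGraph'}$ with $\mu(\EventSet')=\vec{p}^{\,*}$ and $\mu(\cup_u A''_u)=1$.

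To extend $\EventSet'$ to $\BipartiteGraph_\DependencyGraph$, I would distinguish Case~A, in which $N(v)$ remains maximal in $\DependencyGraph'$ and the variable $X_{N(v)}$ of $\BipartiteGraph_{\DependencyGraph'}$ is reused as $X_C$, from Case~B, in which $N(v)\subsetneq C'$ for some maximal clique $C'$ of $\DependencyGraph'$ and $X_C$ is fresh in $\BipartiteGraph_\DependencyGraph$. In both cases set $A_v=\{X_C\in[0,p_v]\}\times\Interval^{\text{others}}$; for $u\in N(v)$ place $A''_u$ on the slab $\{X_C\in[p_v,1]\}$, either by the affine rescaling $x\mapsto p_v+(1-p_v)x$ of the $X_{N(v)}$-axis in Case~A, or via the product $A''_u\times\{X_C\in[p_v,1]\}$ in Case~B; and for $u\notin N(v)\cup\{v\}$ take $A_u=A''_u\times\Interval^{X_C}$. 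A routine verification yields $\mu(A_u)=p_u$ for every $u$, exclusivity with respect to $\BipartiteGraph_\DependencyGraph$ (in particular $A_v$ is disjoint from every $A_u$ with $u\in N(v)$ because their $X_C$-projections are disjoint), and $\mu(\cup_u A_u)=p_v+(1-p_v)\cdot 1=1$.

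The main obstacle is establishing the polynomial identity and verifying that $\vec{p}^{\,*}$ sits exactly on $\partial_a(\DependencyGraph')$, rather than strictly inside or outside. Chordality is essential here: it is the simpliciality of $v$ that makes $N(v)$ a clique, which in turn lets the independent sets of $\DependencyGraph'$ split cleanly into those avoiding $N(v)$ and those containing a single vertex of $N(v)$, producing the factorization. The Case~A/B split is routine bookkeeping needed only to identify the variable $X_C$ correctly between the two bigraphs.
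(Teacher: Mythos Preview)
Your approach is correct and genuinely different from the paper's. Both proofs induct on $|V(\DependencyGraph)|$, peel off a simplicial vertex $v$, and lift an exclusive cylinder set from $\BipartiteGraph_{\DependencyGraph'}$ to $\BipartiteGraph_\DependencyGraph$ via the same $X_C$-rescaling (your Case~A/B is exactly the paper's ``$S\setminus\{n\}$ remains maximal or not'' split). The divergence is in how the inductive probability vector is pinned to the boundary. The paper works on the \emph{variable} boundary $\partial(\BipartiteGraph_\DependencyGraph)$: it chooses an arbitrary $\lambda$ with $\lambda\vec{p}'\in\partial(\BipartiteGraph')$, builds the lifted set with measure $\vec q=(\lambda p_1,\dots,\lambda p_{n-1},p_n)$, and then forces $\lambda=1$ \emph{a posteriori} using Corollary~\ref{exclusivenonexterior} and Lemma~\ref{le:boundaryisunique}, never touching Shearer's polynomials. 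You instead work on the \emph{abstract} boundary $\partial_a(\DependencyGraph)$ and prove directly that $\vec p^{\,*}\in\partial_a(\DependencyGraph')$ via the Shearer-polynomial identity; in fact the stronger uniform identity $Q_S^{\DependencyGraph'}(\vec p^{\,*})=\frac{1}{1-p_v}\,Q_S^{\DependencyGraph}(\vec p)$ holds for every $S\in Ind(\DependencyGraph')$, which is what you need.

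What each buys: the paper's $\lambda$-trick is fully self-contained within the machinery already developed and sidesteps any analysis of which $Q_S$ vanish on $\partial_a$. Your route is more transparent combinatorially but leans on the standard (Shearer/Scott--Sokal) fact that for a \emph{connected} graph only $Q_\emptyset$ vanishes on $\partial_a$; your stated justification (``$\DependencyGraph-N[S]$ again chordal, so only $Q_\emptyset$ can vanish'') is not quite the right reason---the factorization $Q_S=\prod_{i\in S}p_i\cdot Q_\emptyset^{\DependencyGraph-N[S]}$ holds for all graphs, and the relevant ingredient is connectivity of $\DependencyGraph$ and $\DependencyGraph'$ (which does follow since removing a simplicial vertex keeps a connected chordal graph connected). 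One minor point: invoking Lemma~\ref{canonicalbigraphworst} is unnecessary here, since ``strongly a-gapless'' only asks for \emph{some} gapless bigraph with base $\DependencyGraph$, and exhibiting $\BipartiteGraph_\DependencyGraph$ as gapless already suffices.
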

\begin{proof} 
Let $G=([n],E)$ be a chordal graph. We prove by induction on $n$.

Basis: $n=1$. It is trivial that $G$ is strongly a-gapless.

Hypothesis: The lemma holds when $n< N$.

Induction: Now consider the case $n=N$. Let $\BipartiteGraph=\BipartiteGraph_\DependencyGraph=([n],[m],E)$. Without loss of generality, assume that the vertex $n$ of $\DependencyGraph$ lies in exactly one maximal clique $S=\{n-k+1,...,n\}$. That is, $n\in L(\BipartiteGraph)$ has only one neighbor, say $m$, in $\BipartiteGraph$, and $\Neighbor_\BipartiteGraph(m)=S$.
Let $\BipartiteGraph'$ be the bigraph obtained from $\BipartiteGraph$ by deleting the vertex $n\in L(\BipartiteGraph)$, and $\DependencyGraph'$ be the chordal graph obtained by deleting the vertex $n$ from $\DependencyGraph$. Obviously, if $S\setminus\{n\}$ remains a maximal clique in $\DependencyGraph'$, $\BipartiteGraph'=\BipartiteGraph_{\DependencyGraph'}$; otherwise, $\BipartiteGraph_{\DependencyGraph'}$ can be obtained by applying the inverse operation of Variable-Duplicate to $\BipartiteGraph'$. We always have that $\BipartiteGraph'$ is gapless if and only if so is $\BipartiteGraph_{\DependencyGraph'}$. 

 Arbitrarily fix $\vec{p}\in\partial(\BipartiteGraph)$. Let $\vec{p}'=(p_1,...,p_{n-k},\frac{p_{n-k+1}}{1-p_n},...,\frac{p_{n-1}}{1-p_n})$. Choose $\lambda>0$ such that $\lambda\vec{p}'\in \partial(\BipartiteGraph')$.
Applying the induction hypothesis to $\DependencyGraph'$, by Theorem \ref{Conj:GapGeom}, there is a set $\EventSet'$ of cylinders $\Event'_1,...,\Event'_{n-1}\subseteq \Interval^m$ such that $\EventSet'$ is exclusive with respect to $\BipartiteGraph'$, $\mu(\EventSet')=\lambda\vec{p}'$, and $\mu(\cup_{\Event'\in\EventSet'}\Event')=1$. Define $\Event_i=\Event'_i$ for $1\leq i\leq n-k$, $\Event_i=\{(x_1,...,x_{m-1},x_m(1-p_n)): (x_1,...,x_{m-1},x_m)\in\Event'_i\}$ for $n-k<i<n$, $\Event_n=\{(x_1,...,x_m)\in \Interval^m: x_m\geq 1-p_n\}$. Since $\mu(\cup_{\Event'\in\EventSet'}\Event')=1$ and $\Event'_1,...,\Event'_{n-k}$ are independent of $X_m$, we have $\mu((\Interval^{m-1}\times [0,1-p_n])\cap (\cup_{1\leq i\leq n-1}\Event_i))=1-p_n$. As a result, $\mu(\cup_{\Event\in\EventSet}\Event)=1$, $\mu(\EventSet)=\vec{q}\triangleq(\lambda p_1,...,\lambda p_{n-1},p_n)$, and $\EventSet$ is exclusive with respect to $\BipartiteGraph$, where $\EventSet=\{\Event_1,...,\Event_n\}$. By 
Corollary \ref{exclusivenonexterior}, we know that $\vec{q}\in \partial(\BipartiteGraph)$, which in turn means that $\vec{q}=\vec{p}$ by Lemma \ref{le:boundaryisunique}. 

Altogether, for any $\vec{p}\in\partial(\BipartiteGraph)$, there is a cylinder set $\EventSet$ which is exclusive with respect to $\BipartiteGraph$ and satisfies $\mu(\EventSet)=\vec{p}$. By Theorem \ref{Conj:GapGeom}, $\BipartiteGraph$ is gapless, implying that $\DependencyGraph$ is strongly a-gapless.
\end{proof}

We are ready to present an exact characterization of strongly a-gapful graphs.

\chordal*
\begin{proof}
Arbitrarily fix a graph $\DependencyGraph$. 

If it is not chordal, there must be an induced cycle of length at least four.  By Corollary \ref{cor:Cyclegapful}, $\BipartiteGraph_\DependencyGraph$ is gapful, so $\DependencyGraph$ is strongly a-gapful.

On the other hand, if $\DependencyGraph$ is chordal, it is strongly a-gapless by Lemma \ref{chordalgapless}.
\end{proof}

Theorem \ref{dependencygaplessdecision} is an immediately corollary of Theorem \ref{stronglyagapfuliffchordal}. Its constructive proof is retained since the construction leads to the explicit equation of boundary vectors. 

\LongVersionEnd

\section{Hardness Results}
We define some computational problems that are closely related to the variable-LLL problem and show that they are difficult to solve. 

\begin{definition}[\Mup ~ Problem]
Given a bigraph $\BipartiteGraph=([n], [m], E)$ and vector $\vec{p} \in (0,1]^n$,  
compute $\Psi(\BipartiteGraph,\vec{p})\triangleq \max_{\EventSet \sim \BipartiteGraph, \mu(\EventSet)=\vec{p} } \mu\left(\cup_{\Event\in \EventSet} \Event \right)$, where $\EventSet$ ranges on sets of cylinders in $\mathbb{I}^m$ and $\mu$ is Lebesgue measure.
\end{definition}

\begin{definition}[\Int ~Problem]
Given a bigraph $\BipartiteGraph$ and a vector $\vec{p}$ on $(0,1)$, decide whether $\vec{p}\in\mathcal{I}(\BipartiteGraph)$.
\end{definition}

\begin{theorem}{\label{thm: hardness1}}
\Mup ~is \#P-hard.
\end{theorem}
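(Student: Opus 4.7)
The plan is a polynomial-time Turing reduction from $\#\mathrm{IS}$, the problem of counting independent sets in a graph, which is classically $\#P$-complete even when restricted to 3-regular graphs. Given a 3-regular graph $G=(V,E)$ with $n=|V|$, I would construct the bigraph $\BipartiteGraph$ with $L(\BipartiteGraph)=V$, $R(\BipartiteGraph)=E$, and an edge between $i\in V$ and $e\in E$ in $\BipartiteGraph$ iff $i$ is an endpoint of $e$. By construction $\DependencyGraph_\BipartiteGraph=G$.

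The central claim I would prove is that for every $p\in(0,1/8)$, $\Psi(\BipartiteGraph,(p,\ldots,p))=1-I(G,-p)$, where $I(G,x)=\sum_{S\in Ind(G)}x^{|S|}$ is the independence polynomial of $G$. The upper bound follows from Lemma~\ref{WorstPlacement}: every $\EventSet\sim\BipartiteGraph$ has $G$ as a dependency graph, so $\mu(\cup_{\Event\in\EventSet}\Event)$ is at most the union probability attained by an exclusive event set with the same marginals, which by inclusion–exclusion equals $\sum_{S\in Ind(G),\,S\neq\emptyset}(-1)^{|S|+1}p^{|S|}=1-I(G,-p)$. For the matching lower bound I would exhibit an exclusive cylinder set $\EventSet^{*}\sim\BipartiteGraph$ with $\mu(\EventSet^{*})=(p,\ldots,p)$. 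Fixing an arbitrary linear order on $V$, for each edge $e=\{i,j\}\in E$ with $i<j$ I let the base of $\Event^{*}_i$ along $X_e$ be $[0,p^{1/3}]$ and the base of $\Event^{*}_j$ along $X_e$ be $[1-p^{1/3},1]$. Since $p<1/8$ forces $p^{1/3}<1/2$, these two intervals are disjoint, so $\mu(\Event^{*}_i\cap\Event^{*}_j)=0$ for every edge; each $\mu(\Event^{*}_i)=(p^{1/3})^3=p$, so $\EventSet^{*}$ conforms to $\BipartiteGraph$, is exclusive, and its union probability realizes the upper bound.

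Given the claim, $\#\mathrm{IS}(G)$ can be extracted by polynomial interpolation. I would query the MUP oracle at $n+1$ distinct rationals $p_1,\ldots,p_{n+1}\in(0,1/8)$, obtaining the values of the degree-$n$ polynomial $Q(p)=1-I(G,-p)=1-\sum_{k=0}^{n}I_k(G)(-p)^k$ at these points, and invert the resulting Vandermonde system in polynomial time to recover every coefficient $I_k(G)$ (the number of independent sets of size $k$); then $\#\mathrm{IS}(G)=\sum_{k=0}^{n}I_k(G)$.

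The main obstacle will be the equality $\Psi(\BipartiteGraph,(p,\ldots,p))=1-I(G,-p)$. It relies on transferring the ``exclusive-is-worst'' principle of Shearer (Lemma~\ref{le:exclusionisworst}, Lemma~\ref{WorstPlacement}) from abstract event sets to cylinder systems conforming with $\BipartiteGraph$; this transfer is only legitimate because the explicit construction above produces an \emph{exclusive cylinder set} that attains Shearer's abstract bound. The restriction to 3-regular graphs is deliberate: it keeps the disjointness threshold at the absolute constant $1/8$, guaranteeing that $n+1$ well-separated evaluation points fit inside the admissible range and that the reduction runs in polynomial time. Extending the reduction to graphs of unbounded degree would require a cleverer exclusive construction whose feasibility threshold shrinks only polynomially in the maximum degree.
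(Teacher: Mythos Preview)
Your proposal is correct and takes a genuinely different route from the paper. The paper uses a single oracle call: given a $(3,2)$-regular bigraph $\BipartiteGraph$ and $\vec p=(\tfrac18,\dots,\tfrac18)$, it builds an explicit exclusive cylinder set by cutting each coordinate at $\tfrac12$, so that $\mathbb{I}^m$ decomposes into $2^m$ dyadic blocks; the number $N$ of blocks not covered by the union is then identified with the number of satisfying assignments of a Rtw-Opp-\#3SAT instance (clauses on three variables, each variable appearing once positively and once negatively) whose constraint--variable graph is $\BipartiteGraph$. Thus $\Psi(\BipartiteGraph,\vec p)=1-N/2^m$, and \#P-hardness is inherited from a Holant result of Cai--Lu--Xia.

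Your reduction instead starts from \#IS on $3$-regular graphs, builds the same $(3,2)$-regular bigraph, and establishes the identity $\Psi(\BipartiteGraph,(p,\dots,p))=1-I(G,-p)$ for every $p\in(0,\tfrac18)$: the explicit exclusive cylinder set (intervals of length $p^{1/3}$ at opposite ends of each edge-coordinate) both witnesses the lower bound and, via Lemma~\ref{WorstPlacement}, supplies the exclusive reference set needed for the upper bound; the value $1-I(G,-p)$ then drops out of inclusion--exclusion since intersections over independent sets factor and intersections over non-independent sets vanish. You then recover all coefficients $I_k(G)$ by Lagrange interpolation from $n+1$ oracle calls at distinct rationals in $(0,\tfrac18)$. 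What you gain is self-containment---only the classical hardness of \#IS is needed, not the Holant dichotomy---and a transparent formula linking $\Psi$ to the independence polynomial valid on an interval rather than at a single point. What the paper gains is a one-query reduction with no interpolation bookkeeping. Both are legitimate polynomial-time Turing reductions.
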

\begin{proof}
It is enough to show that \Mup \ is \#P-hard even if $\BipartiteGraph$ is a $(3,2)$-regular 
bigraph and $\vec{p}=(\frac{1}{8},\frac{1}{8},...,\frac{1}{8})$.

Arbitrarily fix a $(3,2)$-regular bigraph $\BipartiteGraph=([n], [m], E)$. We will construct a set  $\EventSet$ of cylinders in $\mathbb{I}^m$ such that $\mu(\EventSet)=\vec{p}$, $\EventSet$ is exclusive with respect to $\BipartiteGraph$, and the probability of the union is maximized.

Arbitrarily choose a function $f: [m]\rightarrow [n]$ which maps each vertex in $[m]$ to one of its neighbors. For each $i\in [n]$, a cylinder $\Event_i\subset \mathbb{I}^m$ is defined in this way: for each neighbor $j$ of $i$, $0\leq X_j \leq 1/2$ if $f(j)=i$, otherwise $1/2< X_j \leq 1$. Let $\EventSet=\{\Event_1,...,\Event_n\}$.

Since each $i\in [n]$ has exactly three neighbors in $\BipartiteGraph$ and each $j\in [m]$ has exactly two neighbors, we observe that $\mu(\EventSet)=\vec{p}$ and $\EventSet$ is exclusive with respect to $\BipartiteGraph$. Hence $\mu(\cup_{i\in [n]} \Event_i)=\Psi(\BipartiteGraph,\vec{p})$, by Lemma \ref{WorstPlacement}.

The construction actually partitions $\mathbb{I}^m$ into $2^m$ blocks each having measure $2^{-m}$. Any cylinder in $\EventSet$ consists of some blocks. Let $B_{k_1, k_2,...,k_m}$, $k_j\in\{0,1\}$ for any $j$, denote the block defined by $0\leq X_j\leq 1/2$ if $k_j=0$ or $1/2< X_j\leq 1$ if $k_j=1$, for any $j\in[m]$. Given $k_1, k_2,...,k_m\in\{0,1\}$ and $i\in [n]$, the following two conditions are equivalent.

1. $B_{k_1,k_2,...,k_m}\subseteq \Event_i$.

2.  For each neighbor $j$ of $i$ in $\BipartiteGraph$, $k_j=0$ if and only if $f(j)=i$.

Let $N$ be the number of blocks outside of $\cup_{i\in [n]} \Event_i$. Then we have $\mu(\cup_{i\in [n]} \Event_i)=1-N/2^m$, so computing $\Psi(\BipartiteGraph,\vec{p})$ is equivalent to computing $N$.

On the other hand, computing $N$ is related to the 3SAT problem. Let $\{y_1,...,y_m\}$ be a set of boolean variables. For each $i\in [n]$, assume $j_1,j_2,j_3$ are its neighbors in $\BipartiteGraph$; define a 3SAT clause $\phi_i=z_{j_1}\vee z_{j_2} \vee z_{j_3} $ where the literal $z_{j_k}\triangleq y_{j_k}$ if $f(j_k)=i$, otherwise $z_{j_k}\triangleq \overline{y_{j_k}}$ , for $k=1,2,3$. The constraint-variable graph of $\phi\triangleq \phi_1\land ...\land \phi_n $ is $\BipartiteGraph$. Note that each variable appears twice oppositely, so $\phi$ is a Holant$([0,1,0]|[0,1,1,1])$ or Rtw-Opp-\#3SAT instance.

Now consider an assignment $y_j=k_j,j\in [m]$.  It is straightforward to check that $\phi$ is satisfied if and only if the block $B_{k_1,k_2,...,k_m}$ is outside $\cup_{i\in [n]} \Event_i$. Thus, $N$ is the number of satisfying assignments of $\phi$, which is \#P-hard to compute even if $\BipartiteGraph$ is (3,2)-regular, by \cite[Theorem 8.1]{CaiLX08}. 
\end{proof}

\LongVersion
\begin{remark}
The proof is inspired by but substantially different from the proofs in Section C of \cite{HarveySV16}.
\end{remark}

\begin{remark}
Given $\BipartiteGraph$ and $\vec{p}$ as in Theorem \ref{thm: hardness1}, $\Psi(\BipartiteGraph,\vec{p})$ is a proper fraction whose denominator is $2^m$. This fact will be used in proving the next theorem.
\end{remark}
\LongVersionEnd

By Theorem \ref{thm: hardness1}, one can prove the following result.
\begin{theorem}{\label{thm: hardness2}}
\Int \ is \#P-hard.
\end{theorem}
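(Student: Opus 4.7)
The plan is to reduce \Mup to \Int by a polynomial-time Turing reduction that performs binary search, thereby transferring the \#P-hardness established in Theorem~\ref{thm: hardness1}. Given an instance $(\BipartiteGraph,\vec{p})$ of \Mup with $\BipartiteGraph=([n],[m],E)$ a $(3,2)$-regular bigraph and $\vec{p}=(1/8,\ldots,1/8)$ as produced by the hardness proof of Theorem~\ref{thm: hardness1}, I construct an auxiliary bigraph $\BipartiteGraph'=([n+1],[m],E\cup(\{n+1\}\times[m]))$ by attaching a fresh event vertex $n+1$ adjacent to every variable; for $q\in(0,1)$ let $\vec{p}^{(q)}\triangleq(\vec{p},q)$.

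The key step is to establish the identity
$$\Psi(\BipartiteGraph',\vec{p}^{(q)})\;=\;\min\bigl\{1,\ \Psi(\BipartiteGraph,\vec{p})+q\bigr\}.$$
The upper bound is immediate from subadditivity: for any cylinder set $\EventSet'=\EventSet\cup\{A_{n+1}\}\sim\BipartiteGraph'$ with $\mu(\EventSet')=\vec{p}^{(q)}$, one has $\mu(\cup_{i\le n+1}A_i)\le\mu(\cup_{i\le n}A_i)+q\le\Psi(\BipartiteGraph,\vec{p})+q$. For achievability, take a cylinder set $\EventSet=\{A_1,\dots,A_n\}\sim\BipartiteGraph$ witnessing $\Psi(\BipartiteGraph,\vec{p})$; because $n+1$ is adjacent to all variables, $A_{n+1}$ may be any measurable subset of $\mathbb{I}^m$ of measure $q$. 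Placing it inside $\mathbb{I}^m\setminus\cup_{i\le n}A_i$ when $q\le 1-\Psi(\BipartiteGraph,\vec{p})$, or so as to contain that complement otherwise, attains the claimed value. Since $\vec{p}'\in\mathcal{I}(\BipartiteGraph')$ iff $\Psi(\BipartiteGraph',\vec{p}')<1$, the identity yields
$$\vec{p}^{(q)}\in\mathcal{I}(\BipartiteGraph')\iff q<1-\Psi(\BipartiteGraph,\vec{p}).$$

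Finally I would binary-search for $\Psi(\BipartiteGraph,\vec{p})$. Recall from the proof of Theorem~\ref{thm: hardness1} that $\Psi(\BipartiteGraph,\vec{p})=1-N/2^m$ for some integer $N\in\{0,\ldots,2^m\}$, so $1-\Psi(\BipartiteGraph,\vec{p})=N/2^m$ is a dyadic fraction with known denominator $2^m$. Querying the \Int oracle on inputs $(\BipartiteGraph',\vec{p}^{(k/2^m)})$ for $k\in\{1,\dots,2^m-1\}$, a binary search pins down $N$ exactly using $O(m)$ oracle calls, each of size polynomial in $n+m$. Since $(3,2)$-regularity forces $m=3n/2$, the whole reduction is polynomial; any polynomial-time decision procedure for \Int would thus solve \Mup, establishing that \Int is \#P-hard.

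The construction is conceptually simple and the equivalence lemma is the only real content; there is no significant obstacle beyond verifying it carefully. The one minor point to check is that each probe $q=k/2^m$ with $1\le k\le 2^m-1$ is a valid coordinate in $(0,1)$, so that $\vec{p}^{(q)}\in(0,1)^{n+1}$ lies in the legal input domain of \Int, which is immediate.
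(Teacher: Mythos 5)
Your proposal is correct and follows essentially the same route as the paper: both attach a new event adjacent to all variables to form $\BipartiteGraph'$, observe that $\vec{p}^{(q)}\in\mathcal{I}(\BipartiteGraph')$ exactly when $q<1-\Psi(\BipartiteGraph,\vec{p})=N/2^m$, and binary-search over dyadic $q$ with $O(m)$ oracle calls. You spell out the identity $\Psi(\BipartiteGraph',\vec{p}^{(q)})=\min\{1,\Psi(\BipartiteGraph,\vec{p})+q\}$ that the paper leaves as "not hard to see," which is a welcome addition but not a different argument.
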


\LongVersion
\begin{proof}
Given a  $(3,2)$-regular bigraph $\BipartiteGraph$ and $\vec{p}=(\frac{1}{8}, \frac{1}{8}, \ldots, \frac{1}{8})$, suppose that $\Psi(\BipartiteGraph,\vec{p})=1- \frac{N}{2^m}$.

Let's construct $\BipartiteGraph'=([n+1], [m], E')$ and $ \vec{p}^{(r)}$, where $E'=E \cup \{(n+1,1),(n+1,2), \ldots, (n+1, m)\}$ and $\vec{p'}=(\frac{1}{8}, \frac{1}{8}, \ldots, \frac{1}{8}, r)$ with $r\in[0,1]$. 

It is not hard to see $\vec{p}^{(r)} \in \mathcal{I}(\BipartiteGraph')$ if and only if $1- \frac{N}{2^m}+r < 1$. Obviously, $r_{max}=\frac{N-1}{2^m}$, where $r_{max}$ is the maximum $r$, among all proper fractions whose denominator is $2^m$ , such that $\vec{p}^{(r)} \in \mathcal{I}(\BipartiteGraph')$. 

Using binary search and solving \Int  ~on \ poly$(m)$ instances of the form $(\BipartiteGraph',\vec{p}^{(r)})$, we can find out $r_{max}$ and in turn get $\Psi(\BipartiteGraph,\vec{p})$. By Theorem {\ref{thm: hardness1}}, \Int \ is \#P-hard.
\end{proof}
\LongVersionEnd

\section*{Acknowledgement}
We are grateful to the anonymous referees for detailed corrections and suggestions.
We also thank Prof.\ Zhiwei Xu, Prof.\ Yungang Bao,  Prof.\ Xiaoming Sun and Prof.\ Roger Wattenhofer for their encouragement and support.


Kun He's work is supported by the National Key Research and Development Program of China (Grant No. 2016YFB1000200) and the National Natural Science Foundation of China (Grant No. 61433014, 61502449, 61602440) and the China National Program for support of Top-notch Young Professionals. Xingwu Liu's work is partially supported by the National Natural Science Foundation of China (Grant No. 61420106013). Yuyi Wang's work is partially supported by the National Natural Science Foundation of China (Grant no. 11601375). Mingji Xia's work is supported by China National 973 Program (Grant No. 2014CB340300). 

\bibliographystyle{abbrv}
\bibliography{DiscretizationAndGap}



\end{document}